\newtheorem{theorem}{Theorem}[section]
\newtheorem{axiom}[theorem]{Axiom}
\newtheorem{conjecture}[theorem]{Conjecture}
\newtheorem{corollary}[theorem]{Corollary}
\newtheorem{definition}[theorem]{Definition}
\newtheorem{example}[theorem]{Example}
\newtheorem{exercise}[theorem]{Exercise}
\newtheorem{lemma}[theorem]{Lemma}
\newtheorem{proposition}[theorem]{Proposition}
\newtheorem{remark}[theorem]{Remark}
\newenvironment{proof}[1][Proof]{\textbf{#1.} }{\ \rule{0.5em}{0.5em}}
\let\pdfoutput=\undefined\fi
\chardef\@x10\chardef\@xv60
\def\tcitime{
\def\@time{%
  \@minute\time\@hour\@minute\divide\@hour\@xv
  \ifnum\@hour<\@x 0\fi\the\@hour:%
  \multiply\@hour\@xv\advance\@minute-\@hour
  \ifnum\@minute<\@x 0\fi\the\@minute
  }}%
\def\x@hyperref#1#2#3{%
   \catcode`\~ = 12
   \catcode`\$ = 12
   \catcode`\_ = 12
   \catcode`\# = 12
   \catcode`\& = 12
   \y@hyperref{#1}{#2}{#3}%
}
\def\y@hyperref#1#2#3#4{%
   #2\ref{#4}#3
   \catcode`\~ = 13
   \catcode`\$ = 3
   \catcode`\_ = 8
   \catcode`\# = 6
   \catcode`\& = 4
}
\def\QCTOpt[#1]#2{%
  \def\QCTOptB{#1}
  \def\QCTOptA{#2}
}
\def\QCTNOpt#1{%
  \def\QCTOptA{#1}
  \let\QCTOptB\empty
}
\def\Qct{%
  \@ifnextchar[{%
    \QCTOpt}{\QCTNOpt}
}
\def\QCBOpt[#1]#2{%
  \def\QCBOptB{#1}%
  \def\QCBOptA{#2}%
}
\def\QCBNOpt#1{%
  \def\QCBOptA{#1}%
  \let\QCBOptB\empty
}
\def\Qcb{%
  \@ifnextchar[{%
    \QCBOpt}{\QCBNOpt}%
}
\def\PrepCapArgs{%
  \ifx\QCBOptA\empty
    \ifx\QCTOptA\empty
      {}%
    \else
      \ifx\QCTOptB\empty
        {\QCTOptA}%
      \else
        [\QCTOptB]{\QCTOptA}%
      \fi
    \fi
  \else
    \ifx\QCBOptA\empty
      {}%
    \else
      \ifx\QCBOptB\empty
        {\QCBOptA}%
      \else
        [\QCBOptB]{\QCBOptA}%
      \fi
    \fi
  \fi
}
\def\GRAPHICSPS#1{%
 \ifcase\GRAPHICSTYPE
   \special{ps: #1}%
 \or
   \special{language "PS", include "#1"}%
 \fi
}%
\def\graffile#1#2#3#4{%
    \bgroup
	   \@inlabelfalse
       \leavevmode
       \@ifundefined{bbl@deactivate}{\def~{\string~}}{\activesoff}%
        \raise -#4 \BOXTHEFRAME{%
           \hbox to #2{\raise #3\hbox to #2{\null #1\hfil}}}%
    \egroup
}%
\def\draftbox#1#2#3#4{%
 \leavevmode\raise -#4 \hbox{%
  \frame{\rlap{\protect\tiny #1}\hbox to #2%
   {\vrule height#3 width\z@ depth\z@\hfil}%
  }%
 }%
}%
\let\nographics=\@msidraft
\newif\ifwasdraft
\def\GRAPHIC#1#2#3#4#5{%
   \ifnum\@msidraft=\@ne\draftbox{#2}{#3}{#4}{#5}%
   \else\graffile{#1}{#3}{#4}{#5}%
   \fi
}
\def\addtoLaTeXparams#1{%
    \edef\LaTeXparams{\LaTeXparams #1}}%
\newif\ifBoxFrame \BoxFramefalse
\newif\ifOverFrame \OverFramefalse
\newif\ifUnderFrame \UnderFramefalse
\def\BOXTHEFRAME#1{%
   \hbox{%
      \ifBoxFrame
         \frame{#1}%
      \else
         {#1}%
      \fi
   }%
}
\def\doFRAMEparams#1{\BoxFramefalse\OverFramefalse\UnderFramefalse\readFRAMEparams#1\end}%
\def\readFRAMEparams#1{%
 \ifx#1\end%
  \let\next=\relax
  \else
  \ifx#1i\dispkind=\z@\fi
  \ifx#1d\dispkind=\@ne\fi
  \ifx#1f\dispkind=\tw@\fi
  \ifx#1t\addtoLaTeXparams{t}\fi
  \ifx#1b\addtoLaTeXparams{b}\fi
  \ifx#1p\addtoLaTeXparams{p}\fi
  \ifx#1h\addtoLaTeXparams{h}\fi
  \ifx#1X\BoxFrametrue\fi
  \ifx#1O\OverFrametrue\fi
  \ifx#1U\UnderFrametrue\fi
  \ifx#1w
    \ifnum\@msidraft=1\wasdrafttrue\else\wasdraftfalse\fi
    \@msidraft=\@ne
  \fi
  \let\next=\readFRAMEparams
  \fi
 \next
 }%
\def\IFRAME#1#2#3#4#5#6{%
      \bgroup
      \let\QCTOptA\empty
      \let\QCTOptB\empty
      \let\QCBOptA\empty
      \let\QCBOptB\empty
      #6%
      \parindent=0pt
      \leftskip=0pt
      \rightskip=0pt
      \setbox0=\hbox{\QCBOptA}%
      \@tempdima=#1\relax
      \ifOverFrame
          \typeout{This is not implemented yet}%
          \show\HELP
      \else
         \ifdim\wd0>\@tempdima
            \advance\@tempdima by \@tempdima
            \ifdim\wd0 >\@tempdima
               \setbox1 =\vbox{%
                  \unskip\hbox to \@tempdima{\hfill\GRAPHIC{#5}{#4}{#1}{#2}{#3}\hfill}%
                  \unskip\hbox to \@tempdima{\parbox[b]{\@tempdima}{\QCBOptA}}%
               }%
               \wd1=\@tempdima
            \else
               \textwidth=\wd0
               \setbox1 =\vbox{%
                 \noindent\hbox to \wd0{\hfill\GRAPHIC{#5}{#4}{#1}{#2}{#3}\hfill}\\%
                 \noindent\hbox{\QCBOptA}%
               }%
               \wd1=\wd0
            \fi
         \else
            \ifdim\wd0>0pt
              \hsize=\@tempdima
              \setbox1=\vbox{%
                \unskip\GRAPHIC{#5}{#4}{#1}{#2}{0pt}%
                \break
                \unskip\hbox to \@tempdima{\hfill \QCBOptA\hfill}%
              }%
              \wd1=\@tempdima
           \else
              \hsize=\@tempdima
              \setbox1=\vbox{%
                \unskip\GRAPHIC{#5}{#4}{#1}{#2}{0pt}%
              }%
              \wd1=\@tempdima
           \fi
         \fi
         \@tempdimb=\ht1
         \advance\@tempdimb by -#2
         \advance\@tempdimb by #3
         \leavevmode
         \raise -\@tempdimb \hbox{\box1}%
      \fi
      \egroup%
}%
\def\DFRAME#1#2#3#4#5{%
  \vspace\topsep
  \hfil\break
  \bgroup
     \leftskip\@flushglue
	 \rightskip\@flushglue
	 \parindent\z@
	 \parfillskip\z@skip
     \let\QCTOptA\empty
     \let\QCTOptB\empty
     \let\QCBOptA\empty
     \let\QCBOptB\empty
	 \vbox\bgroup
        \ifOverFrame 
           #5\QCTOptA\par
        \fi
        \GRAPHIC{#4}{#3}{#1}{#2}{\z@}%
        \ifUnderFrame 
           \break#5\QCBOptA
        \fi
	 \egroup
  \egroup
  \vspace\topsep
  \break
}%
\def\FFRAME#1#2#3#4#5#6#7{%
  \@ifundefined{floatstyle}
    {
     \begin{figure}[#1]%
    }
    {
	 \ifx#1h
      \begin{figure}[H]%
	 \else
      \begin{figure}[#1]%
	 \fi
	}
  \let\QCTOptA\empty
  \let\QCTOptB\empty
  \let\QCBOptA\empty
  \let\QCBOptB\empty
  \ifOverFrame
    #4
    \ifx\QCTOptA\empty
    \else
      \ifx\QCTOptB\empty
        \caption{\QCTOptA}%
      \else
        \caption[\QCTOptB]{\QCTOptA}%
      \fi
    \fi
    \ifUnderFrame\else
      \label{#5}%
    \fi
  \else
    \UnderFrametrue%
  \fi
  \begin{center}\GRAPHIC{#7}{#6}{#2}{#3}{\z@}\end{center}%
  \ifUnderFrame
    #4
    \ifx\QCBOptA\empty
      \caption{}%
    \else
      \ifx\QCBOptB\empty
        \caption{\QCBOptA}%
      \else
        \caption[\QCBOptB]{\QCBOptA}%
      \fi
    \fi
    \label{#5}%
  \fi
  \end{figure}%
 }%
\def\makeactives{
  \catcode`\"=\active
  \catcode`\;=\active
  \catcode`\:=\active
  \catcode`\'=\active
  \catcode`\~=\active
}
   \gdef\activesoff{%
      \def"{\string"}%
      \def;{\string;}%
      \def:{\string:}%
      \def'{\string'}%
      \def~{\string~}%
    }
\def\FRAME#1#2#3#4#5#6#7#8{%
 \bgroup
 \ifnum\@msidraft=\@ne
   \wasdrafttrue
 \else
   \wasdraftfalse%
 \fi
 \def\LaTeXparams{}%
 \dispkind=\z@
 \def\LaTeXparams{}%
 \doFRAMEparams{#1}%
 \ifnum\dispkind=\z@\IFRAME{#2}{#3}{#4}{#7}{#8}{#5}\else
  \ifnum\dispkind=\@ne\DFRAME{#2}{#3}{#7}{#8}{#5}\else
   \ifnum\dispkind=\tw@
    \edef\@tempa{\noexpand\FFRAME{\LaTeXparams}}%
    \@tempa{#2}{#3}{#5}{#6}{#7}{#8}%
    \fi
   \fi
  \fi
  \ifwasdraft\@msidraft=1\else\@msidraft=0\fi{}%
  \egroup
 }%
\def\TEXUX#1{"texux"}
\def\NEG#1{\leavevmode\hbox{\rlap{\thinspace/}{$#1$}}}%
\def\limfunc#1{\mathop{\rm #1}}%
\def\func#1{\mathop{\rm #1}\nolimits}%
\long\def\QQQ#1#2{%
     \long\expandafter\def\csname#1\endcsname{#2}}%
\long\def\QQA#1#2{}%
\def\QTR#1#2{{\csname#1\endcsname {#2}}}%
\def\EXPAND#1[#2]#3{}%
\def\NOEXPAND#1[#2]#3{}%
\def\LaTeXparent#1{}%
\def\ChildStyles#1{}%
\def\ChildDefaults#1{}%
\def\QTagDef#1#2#3{}%
  \providecommand{\UNICODE}[2][]{\protect\rule{.1in}{.1in}}
  \providecommand{\U}[1]{\protect\rule{.1in}{.1in}}
\def\QQfnmark#1{\footnotemark}
 \def\abstract{%
  \if@twocolumn
   \section*{Abstract (Not appropriate in this style!)}%
   \else \small 
   \begin{center}{\bf Abstract\vspace{-.5em}\vspace{\z@}}\end{center}%
   \quotation 
   \fi
  }%
   \def\registered{\relax\ifmmode{}\r@gistered
                    \else$\m@th\r@gistered$\fi}%
 \def\r@gistered{^{\ooalign
  {\hfil\raise.07ex\hbox{$\scriptstyle\rm\text{R}$}\hfil\crcr
  \mathhexbox20D}}}}{}%
\newdimen\theight
\def\newfmtname{LaTeX2e}
  \DeclareOldFontCommand{\rm}{\normalfont\rmfamily}{\mathrm}
  \DeclareOldFontCommand{\sf}{\normalfont\sffamily}{\mathsf}
  \DeclareOldFontCommand{\tt}{\normalfont\ttfamily}{\mathtt}
  \DeclareOldFontCommand{\bf}{\normalfont\bfseries}{\mathbf}
  \DeclareOldFontCommand{\it}{\normalfont\itshape}{\mathit}
  \DeclareOldFontCommand{\sl}{\normalfont\slshape}{\@nomath\sl}
  \DeclareOldFontCommand{\sc}{\normalfont\scshape}{\@nomath\sc}
\def\alpha{{\Greekmath 010B}}%
\def\beta{{\Greekmath 010C}}%
\def\gamma{{\Greekmath 010D}}%
\def\delta{{\Greekmath 010E}}%
\def\epsilon{{\Greekmath 010F}}%
\def\zeta{{\Greekmath 0110}}%
\def\eta{{\Greekmath 0111}}%
\def\theta{{\Greekmath 0112}}%
\def\iota{{\Greekmath 0113}}%
\def\kappa{{\Greekmath 0114}}%
\def\lambda{{\Greekmath 0115}}%
\def\mu{{\Greekmath 0116}}%
\def\nu{{\Greekmath 0117}}%
\def\xi{{\Greekmath 0118}}%
\def\pi{{\Greekmath 0119}}%
\def\rho{{\Greekmath 011A}}%
\def\sigma{{\Greekmath 011B}}%
\def\tau{{\Greekmath 011C}}%
\def\upsilon{{\Greekmath 011D}}%
\def\phi{{\Greekmath 011E}}%
\def\chi{{\Greekmath 011F}}%
\def\psi{{\Greekmath 0120}}%
\def\omega{{\Greekmath 0121}}%
\def\varepsilon{{\Greekmath 0122}}%
\def\vartheta{{\Greekmath 0123}}%
\def\varpi{{\Greekmath 0124}}%
\def\varrho{{\Greekmath 0125}}%
\def\varsigma{{\Greekmath 0126}}%
\def\varphi{{\Greekmath 0127}}%
\def\nabla{{\Greekmath 0272}}
\def\FindBoldGroup{%
   {\setbox0=\hbox{$\mathbf{x\global\edef\theboldgroup{\the\mathgroup}}$}}%
}
\def\Greekmath#1#2#3#4{%
    \if@compatibility
        \ifnum\mathgroup=\symbold
           \mathchoice{\mbox{\boldmath$\displaystyle\mathchar"#1#2#3#4$}}%
                      {\mbox{\boldmath$\textstyle\mathchar"#1#2#3#4$}}%
                      {\mbox{\boldmath$\scriptstyle\mathchar"#1#2#3#4$}}%
                      {\mbox{\boldmath$\scriptscriptstyle\mathchar"#1#2#3#4$}}%
        \else
           \mathchar"#1#2#3#4%
        \fi 
    \else 
        \FindBoldGroup
        \ifnum\mathgroup=\theboldgroup 
           \mathchoice{\mbox{\boldmath$\displaystyle\mathchar"#1#2#3#4$}}%
                      {\mbox{\boldmath$\textstyle\mathchar"#1#2#3#4$}}%
                      {\mbox{\boldmath$\scriptstyle\mathchar"#1#2#3#4$}}%
                      {\mbox{\boldmath$\scriptscriptstyle\mathchar"#1#2#3#4$}}%
        \else
           \mathchar"#1#2#3#4%
        \fi     	    
	  \fi}
\newif\ifGreekBold  \GreekBoldfalse
\let\SAVEPBF=\pbf
\def\pbf{\GreekBoldtrue\SAVEPBF}%
  \newcounter{equationnumber}  
  \def\mathletters{%
     \addtocounter{equation}{1}
     \edef\@currentlabel{\theequation}%
     \setcounter{equationnumber}{\c@equation}
     \setcounter{equation}{0}%
     \edef\theequation{\@currentlabel\noexpand\alph{equation}}%
  }
    \def\BibTeX{{\rm B\kern-.05em{\sc i\kern-.025em b}\kern-.08em
                 T\kern-.1667em\lower.7ex\hbox{E}\kern-.125emX}}}{}%
\def\AmS{{\protect\usefont{OMS}{cmsy}{m}{n}%
                A\kern-.1667em\lower.5ex\hbox{M}\kern-.125emS}}}{}%
\def\@@eqncr{\let\@tempa\relax
    \ifcase\@eqcnt \def\@tempa{& & &}\or \def\@tempa{& &}%
      \else \def\@tempa{&}\fi
     \@tempa
     \if@eqnsw
        \iftag@
           \@taggnum
        \else
           \@eqnnum\stepcounter{equation}%
        \fi
     \fi
     \global\tag@false
     \global\@eqnswtrue
     \global\@eqcnt\z@\cr}
\def\TCItag{\@ifnextchar*{\@TCItagstar}{\@TCItag}}
\def\@TCItag#1{%
    \global\tag@true
    \global\def\@taggnum{(#1)}}
\def\@TCItagstar*#1{%
    \global\tag@true
    \global\def\@taggnum{#1}}
\def\tint{\mathop{\textstyle \int}}%
\def\tsum{\mathop{\textstyle \sum }}%
\def\ExitTCILatex{\makeatother }
\if@compatibility\message{amsmath already loaded}\fi\aftergroup\ExitTCILatex}
\if@compatibility\message{amstex already loaded}\fi\aftergroup\ExitTCILatex}
\if@compatibility\message{amsgen already loaded}\fi\aftergroup\ExitTCILatex}
\let\DOTSI\relax
\def\RIfM@{\relax\ifmmode}%
\def\FN@{\futurelet\next}%
\def\iint{\DOTSI\intno@\tw@\FN@\ints@}%
\def\iiint{\DOTSI\intno@\thr@@\FN@\ints@}%
\def\iiiint{\DOTSI\intno@4 \FN@\ints@}%
\def\idotsint{\DOTSI\intno@\z@\FN@\ints@}%
\def\ints@{\findlimits@\ints@@}%
\newif\iflimtoken@
\newif\iflimits@
\def\findlimits@{\limtoken@true\ifx\next\limits\limits@true
 \else\ifx\next\nolimits\limits@false\else
 \limtoken@false\ifx\ilimits@\nolimits\limits@false\else
 \ifinner\limits@false\else\limits@true\fi\fi\fi\fi}%
\def\multint@{\int\ifnum\intno@=\z@\intdots@                          
 \else\intkern@\fi                                                    
 \ifnum\intno@>\tw@\int\intkern@\fi                                   
 \ifnum\intno@>\thr@@\int\intkern@\fi                                 
 \int}
\def\multintlimits@{\intop\ifnum\intno@=\z@\intdots@\else\intkern@\fi
 \ifnum\intno@>\tw@\intop\intkern@\fi
 \ifnum\intno@>\thr@@\intop\intkern@\fi\intop}%
\def\intic@{%
    \mathchoice{\hskip.5em}{\hskip.4em}{\hskip.4em}{\hskip.4em}}%
\def\negintic@{\mathchoice
 {\hskip-.5em}{\hskip-.4em}{\hskip-.4em}{\hskip-.4em}}%
\def\ints@@{\iflimtoken@                                              
 \def\ints@@@{\iflimits@\negintic@
   \mathop{\intic@\multintlimits@}\limits                             
  \else\multint@\nolimits\fi                                          
  \eat@}
 \else                                                                
 \def\ints@@@{\iflimits@\negintic@
  \mathop{\intic@\multintlimits@}\limits\else
  \multint@\nolimits\fi}\fi\ints@@@}%
\def\intkern@{\mathchoice{\!\!\!}{\!\!}{\!\!}{\!\!}}%
\def\plaincdots@{\mathinner{\cdotp\cdotp\cdotp}}%
\def\intdots@{\mathchoice{\plaincdots@}%
 {{\cdotp}\mkern1.5mu{\cdotp}\mkern1.5mu{\cdotp}}%
 {{\cdotp}\mkern1mu{\cdotp}\mkern1mu{\cdotp}}%
 {{\cdotp}\mkern1mu{\cdotp}\mkern1mu{\cdotp}}}%
\def\RIfM@{\relax\protect\ifmmode}
\def\text{\RIfM@\expandafter\text@\else\expandafter\mbox\fi}
\let\nfss@text\text
\def\text@#1{\mathchoice
   {\textdef@\displaystyle\f@size{#1}}%
   {\textdef@\textstyle\tf@size{\firstchoice@false #1}}%
   {\textdef@\textstyle\sf@size{\firstchoice@false #1}}%
   {\textdef@\textstyle \ssf@size{\firstchoice@false #1}}%
   \glb@settings}
\def\textdef@#1#2#3{\hbox{{%
                    \everymath{#1}%
                    \let\f@size#2\selectfont
                    #3}}}
\newif\iffirstchoice@
\def\Let@{\relax\iffalse{\fi\let\\=\cr\iffalse}\fi}%
\def\vspace@{\def\vspace##1{\crcr\noalign{\vskip##1\relax}}}%
\def\multilimits@{\bgroup\vspace@\Let@
 \baselineskip\fontdimen10 \scriptfont\tw@
 \advance\baselineskip\fontdimen12 \scriptfont\tw@
 \lineskip\thr@@\fontdimen8 \scriptfont\thr@@
 \lineskiplimit\lineskip
 \vbox\bgroup\ialign\bgroup\hfil$\m@th\scriptstyle{##}$\hfil\crcr}%
\def\Sb{_\multilimits@}%
\def\endSb{\crcr\egroup\egroup\egroup}%
\def\Sp{^\multilimits@}%
\newdimen\ex@
\def\rightarrowfill@#1{$#1\m@th\mathord-\mkern-6mu\cleaders
 \hbox{$#1\mkern-2mu\mathord-\mkern-2mu$}\hfill
 \mkern-6mu\mathord\rightarrow$}%
\def\leftarrowfill@#1{$#1\m@th\mathord\leftarrow\mkern-6mu\cleaders
 \hbox{$#1\mkern-2mu\mathord-\mkern-2mu$}\hfill\mkern-6mu\mathord-$}%
\def\leftrightarrowfill@#1{$#1\m@th\mathord\leftarrow
\mkern-6mu\cleaders
 \hbox{$#1\mkern-2mu\mathord-\mkern-2mu$}\hfill
 \mkern-6mu\mathord\rightarrow$}%
\def\overrightarrow{\mathpalette\overrightarrow@}%
\def\overrightarrow@#1#2{\vbox{\ialign{##\crcr\rightarrowfill@#1\crcr
 \noalign{\kern-\ex@\nointerlineskip}$\m@th\hfil#1#2\hfil$\crcr}}}%
\def\overleftarrow{\mathpalette\overleftarrow@}%
\def\overleftarrow@#1#2{\vbox{\ialign{##\crcr\leftarrowfill@#1\crcr
 \noalign{\kern-\ex@\nointerlineskip}$\m@th\hfil#1#2\hfil$\crcr}}}%
\def\overleftrightarrow{\mathpalette\overleftrightarrow@}%
\def\overleftrightarrow@#1#2{\vbox{\ialign{##\crcr
   \leftrightarrowfill@#1\crcr
 \noalign{\kern-\ex@\nointerlineskip}$\m@th\hfil#1#2\hfil$\crcr}}}%
\def\underrightarrow{\mathpalette\underrightarrow@}%
\def\underrightarrow@#1#2{\vtop{\ialign{##\crcr$\m@th\hfil#1#2\hfil
  $\crcr\noalign{\nointerlineskip}\rightarrowfill@#1\crcr}}}%
\def\underleftarrow{\mathpalette\underleftarrow@}%
\def\underleftarrow@#1#2{\vtop{\ialign{##\crcr$\m@th\hfil#1#2\hfil
  $\crcr\noalign{\nointerlineskip}\leftarrowfill@#1\crcr}}}%
\def\underleftrightarrow{\mathpalette\underleftrightarrow@}%
\def\underleftrightarrow@#1#2{\vtop{\ialign{##\crcr$\m@th
  \hfil#1#2\hfil$\crcr
 \noalign{\nointerlineskip}\leftrightarrowfill@#1\crcr}}}%
\def\qopnamewl@#1{\mathop{\operator@font#1}\nlimits@}
\let\nlimits@\displaylimits
\def\setboxz@h{\setbox\z@\hbox}
\def\varlim@#1#2{\mathop{\vtop{\ialign{##\crcr
 \hfil$#1\m@th\operator@font lim$\hfil\crcr
 \noalign{\nointerlineskip}#2#1\crcr
 \noalign{\nointerlineskip\kern-\ex@}\crcr}}}}
 \def\rightarrowfill@#1{\m@th\setboxz@h{$#1-$}\ht\z@\z@
  $#1\copy\z@\mkern-6mu\cleaders
  \hbox{$#1\mkern-2mu\box\z@\mkern-2mu$}\hfill
  \mkern-6mu\mathord\rightarrow$}
\def\leftarrowfill@#1{\m@th\setboxz@h{$#1-$}\ht\z@\z@
  $#1\mathord\leftarrow\mkern-6mu\cleaders
  \hbox{$#1\mkern-2mu\copy\z@\mkern-2mu$}\hfill
  \mkern-6mu\box\z@$}
\def\projlim{\qopnamewl@{proj\,lim}}
\def\injlim{\qopnamewl@{inj\,lim}}
\def\varinjlim{\mathpalette\varlim@\rightarrowfill@}
\def\varprojlim{\mathpalette\varlim@\leftarrowfill@}
\def\varliminf{\mathpalette\varliminf@{}}
\def\varliminf@#1{\mathop{\underline{\vrule\@depth.2\ex@\@width\z@
   \hbox{$#1\m@th\operator@font lim$}}}}
\def\varlimsup{\mathpalette\varlimsup@{}}
\def\varlimsup@#1{\mathop{\overline
  {\hbox{$#1\m@th\operator@font lim$}}}}
\def\align{\@verbatim \frenchspacing\@vobeyspaces \@alignverbatim
You are using the "align" environment in a style in which it is not defined.}
\let\csname endalign*\endcsname =\endtrivlist
\def\alignat{\@verbatim \frenchspacing\@vobeyspaces \@alignatverbatim
You are using the "alignat" environment in a style in which it is not defined.}
\let\csname endalignat*\endcsname =\endtrivlist
\def\xalignat{\@verbatim \frenchspacing\@vobeyspaces \@xalignatverbatim
You are using the "xalignat" environment in a style in which it is not defined.}
\let\csname endxalignat*\endcsname =\endtrivlist
\def\gather{\@verbatim \frenchspacing\@vobeyspaces \@gatherverbatim
You are using the "gather" environment in a style in which it is not defined.}
\let\csname endgather*\endcsname =\endtrivlist
\def\multiline{\@verbatim \frenchspacing\@vobeyspaces \@multilineverbatim
You are using the "multiline" environment in a style in which it is not defined.}
\let\csname endmultiline*\endcsname =\endtrivlist
\def\arrax{\@verbatim \frenchspacing\@vobeyspaces \@arraxverbatim
You are using a type of "array" construct that is only allowed in AmS-LaTeX.}
\def\tabulax{\@verbatim \frenchspacing\@vobeyspaces \@tabulaxverbatim
You are using a type of "tabular" construct that is only allowed in AmS-LaTeX.}
\let\csname endarrax*\endcsname =\endtrivlist
\let\csname endtabulax*\endcsname =\endtrivlist
 \def\endequation{%
     \ifmmode\ifinner 
      \iftag@
        \addtocounter{equation}{-1} 
        $\hfil
           \displaywidth\linewidth\@taggnum\egroup \endtrivlist
        \global\tag@false
        \global\@ignoretrue   
      \else
        $\hfil
           \displaywidth\linewidth\@eqnnum\egroup \endtrivlist
        \global\tag@false
        \global\@ignoretrue 
      \fi
     \else   
      \iftag@
        \addtocounter{equation}{-1} 
        \eqno \hbox{\@taggnum}
        \global\tag@false%
        $$\global\@ignoretrue
      \else
        \eqno \hbox{\@eqnnum}
        $$\global\@ignoretrue
      \fi
     \fi\fi
 } 
 \newif\iftag@ \tag@false
 \def\TCItag{\@ifnextchar*{\@TCItagstar}{\@TCItag}}
 \def\@TCItag#1{%
     \global\tag@true
     \global\def\@taggnum{(#1)}}
 \def\@TCItagstar*#1{%
     \global\tag@true
     \global\def\@taggnum{#1}}
     \def\tag{\@ifnextchar*{\@tagstar}{\@tag}}
     \def\@tag#1{%
         \global\tag@true
         \global\def\@taggnum{(#1)}}
     \def\@tagstar*#1{%
         \global\tag@true
         \global\def\@taggnum{#1}}
\begin{document}

\title{Relativistic point dynamics and Einstein formula as a property of
localized solutions of a nonlinear Klein-Gordon equation.}
\author{Anatoli Babin and Alexander Figotin \\
University of California at Irvine}
\maketitle

\begin{abstract}
Einstein's relation $E=Mc^{2}$ between the energy $E$ and the mass $M$ is
the cornerstone of the relativity theory. This relation is often derived in
a context of the relativistic theory for closed systems which do not
accelerate. By contrast, Newtonian approach to the mass is based on an
accelerated motion. We study here a particular neoclassical field model of a
particle governed by a nonlinear Klein-Gordon (KG) field equation. We prove
that if a solution to the nonlinear KG equation and its energy density
concentrate at a trajectory, then this trajectory and the energy must
satisfy the relativistic version of Newton's law with the mass satisfying
Einstein's relation.\ Therefore the internal energy of a localized wave
affects its acceleration in an external field as the inertial mass does in
Newtonian mechanics. We demonstrate that the "concentration" assumptions
hold for a wide class of rectilinear accelerating motions.
\end{abstract}

\section{Introduction}

One of the goals of this paper is to demonstrate that a certain field model
of a particle provides for a deeper understanding of the Einstein
energy-mass relation $E=M\mathrm{c}^{2}$. This field model is an integral
part of our recently introduced neoclassical electromagnetic (EM)\ theory 
\cite{BF4}-\cite{BF7} of charges, and the relativistic aspects of that
theory has been explored in \cite{BF8}. An idea that a particle can be
viewed as a field excitation carrying a certain amount of energy is a rather
old one. Einstein \ and Infeld wrote in \cite[p. 257]{Einstein Infeld}: "
What impresses our senses as matter is really a great concentration of
energy into a comparatively small space. We could regard matter as the
regions in space where the field is extremely strong." But implementation of
this idea in a mathematically sound theory is a challenging problem.
Einstein remarks in his letter to Ernst Cassirer in March 16, 1937, \cite[%
pp. 393-394]{Stachel EBZ}: "One must always bear in mind that up to now we
know absolutely nothing about the laws of motion of material points from the
standpoint of "classical field theory." For the mastery of this problem,
however, no special physical hypothesis is needed, but "only" the solution
of certain mathematical problems".

We start with recalling basic facts from the relativistic mechanics
including the relativistic dynamics of a mass point and the relativistic
field theory, see for instance, \cite{Anderson}, \cite{Barut}, \cite{Moller}%
, \cite{Pauli RT}, \cite{Sexl}. In a relativistic field theory the
relativistic field dynamics is derived from a relativistic covariant
Lagrangian. The field equations, the energy and the momentum, the forces and
their densities are naturally defined in terms of the Lagrangian both in the
cases of closed and non closed (with external forces) systems. For a closed
system the total energy-momentum $\left( E,\mathbf{P}\right) $ transforms as
4-vector, \cite[Sec. 7.1-7.5]{Anderson}, \cite[Sec. 3.1-3.3, 3.5]{Moller}, 
\cite[Sec. 37]{Pauli RT}, \cite[Sec. 4.1]{Sexl}. In particular, the total
momentum $\mathbf{P}$ has a simple form $\mathbf{P}=M\mathbf{v}$ where the
constant velocity $\mathbf{v}$ originates from the corresponding parameter
of the Lorentz group. Then one can naturally define and interpret the mass $%
M $ for a closed relativistic system as the coefficient of proportionality
between the momentum $\mathbf{P}$ and the velocity $\mathbf{v}$. As to the
energy of a closed system, the celebrated Einstein energy-mass relation holds%
\begin{equation}
E=M\mathrm{c}^{2},\quad M=m_{0}\gamma ,\quad \gamma =\left( 1-\mathbf{v}^{2}/%
\mathrm{c}^{2}\right) ^{-1/2},  \label{fre3a}
\end{equation}%
where $m_{0}$ is the rest mass and $\gamma $ is the Lorentz\textbf{\ }%
factor. Observe that the above definition of mass is based on the
relativistic argument for a uniform motion with a constant velocity $\mathbf{%
v}$ without acceleration. An immediate implication of Einstein's mass-energy
relation (\ref{fre3a}) is that the rest mass of a closed system is
essentially equivalent to the internal energy of the system.

Let us turn now to the relativistic dynamics of a mass point which
accelerates under the action of a force $\mathbf{f}\left( t,\mathbf{r}%
\right) $. This dynamics\ is governed by the relativistic version of
Newton's equation of the form, \cite{Barut}, \cite{Pauli RT}, \cite{Moller}: 
\begin{equation}
\frac{\mathrm{d}}{\mathrm{d}t}\left( M\mathbf{v}\left( t\right) \right) =%
\mathbf{f}\left( t,\mathbf{r}\right) ,\quad \mathbf{v}\left( t\right) =\frac{%
\mathrm{d}\mathbf{r}\left( t\right) }{\mathrm{d}t},\quad M=m_{0}\gamma .
\label{fre1}
\end{equation}%
Note that in (\ref{fre1}) the rest mass $m_{0}$ is prescribed as an
intrinsic property of the mass point, for the mass point does not have
internal degrees of freedom.\ The equation (\ref{fre1}), just as in the case
of classical Newtonian mechanics, suggests that the mass $M$\ is a measure
of inertia, that is the coefficient that relates the acceleration to the
known force $\mathbf{f}$. To summarize, according to the relativity
principles the rest mass is naturally defined for a uniform motion, whereas
in the Newtonian mechanics the concept of inertial mass is introduced
through an accelerated motion. Note that the relativistic and
non-relativistic\ masses are sometimes considered to be "rival and
contradictory", \cite{EriksenV76}.

A principal problem we want to take on here is as follows. We would like to
construct a field model of a charge where the internal energy of a localized
wave affects its acceleration in an external field the same way the inertial
mass does in Newtonian mechanics. We could not find such a model in
literature and we introduce and study it here. The model allows, in
particular, to consider the uniform motion in the absence of external forces
and the accelerated motion in the same framework. Hence within the same
framework (but in different regimes) we can determine the mass either from
the analysis of the 4-vector of the energy-momentum or using the Newtonian
approach. When both the approaches are relevant, they agree, see Remark \ref%
{Rfreemass}.

The proposed model is based on our manifestly relativistic Lagrangian field
theory \cite{BF4}-\cite{BF7}. In the case of a single charge its state is
described by a complex-valued scalar field (wave function) $\psi \left( t,%
\mathbf{x}\right) $ of the time $t$ and the position vector $\mathbf{x}\in 
\mathbb{R}^{3}$ and its time evolution is governed by the following
nonlinear Klein-Gordon (KG) equation%
\begin{equation}
-\frac{1}{\mathrm{c}^{2}}\tilde{\partial}_{t}^{2}\psi +\tilde{\nabla}%
^{2}\psi -G^{\prime }\left( \left\vert \psi \right\vert ^{2}\right) \psi -%
\frac{m^{2}\mathrm{c}^{2}}{\chi ^{2}}\psi =0,  \label{KGex}
\end{equation}%
where $m$ is a positive mass parameter and $\chi $ is a constant which in
physical applications coincides with (or is close to) the Planck constant $%
\hbar $. The covariant derivatives in equation (\ref{KGex}) are defined by 
\begin{equation}
\tilde{\partial}_{t}=\partial _{t}+\frac{\mathrm{i}q}{\chi }\varphi ,\quad 
\tilde{\nabla}=\nabla -\frac{\mathrm{i}q}{\chi \mathrm{c}}\mathbf{A},
\label{dtex}
\end{equation}%
where $q$ is the value of the charge and $\varphi \left( t,\mathbf{x}\right)
,\mathbf{A}\left( t,\mathbf{x}\right) $ are the potentials of the\emph{\
external EM field. }The external EM field acts upon the charge resulting
ultimately in the Lorentz force when Newton's point mass equation is
relevant. In this paper, for simplicity, we treat the case where only
electric forces are present and the magnetic potential is set to be zero,
that is 
\begin{equation}
\mathbf{A=0,}\quad \tilde{\nabla}=\nabla .  \label{Aeq0}
\end{equation}%
The physical treatment of the case with a non-zero external magnetic field
is provided in \cite{BF8}.

The nonlinear term $G^{\prime }$\ in the KG (\ref{KGex}) provides for the
existence of localized solutions for resting or uniformly moving charges.
The nonlinearity $G$ and its properties are considered in Section \ref%
{snonlin}. Importantly, the nonlinearity $G$ involves a size parameter\ $a$
that determines the spatial scale of the charge when at rest. Notice that $%
\left\vert \psi \right\vert ^{2}$ is interpreted as a charge distribution
and not as a probability density.

A sketch of our line of argument is a follows. From the Lagrangian of the KG
equation introduced below we derive a usual expression for the energy
density which allows to define the energy involved in Einstein's formula. To
be able to apply the Newtonian approach it is necessary to relate point
trajectories $\mathbf{r}\left( t\right) $ to wave solutions $\psi $ and
trace point dynamics to the field equations. The latter is accomplished
based on a concept of solutions "concentrating at a trajectory".

Roughly speaking (see Section \ref{SConctraj} for the exact definitions)\
solutions concentrate at a given trajectory $\mathbf{\hat{r}}\left( t\right) 
$ if their energy densities $\mathcal{E}\left( \mathbf{x},t\right) $
restricted to $R_{n}$-neighborhoods of $\mathbf{\hat{r}}\left( t\right) $
locally converge to $E\left( t\right) \delta \left( \mathbf{x}-\mathbf{\hat{r%
}}\left( t\right) \right) $ as $R_{n}\rightarrow 0$.\ A concise formulation
of the main result of the paper, Theorem \ref{TconcEinstein}, is as follows.
We prove that \emph{if a sequence of solutions of the KG\ equation
concentrates at a trajectory }$\mathbf{\hat{r}}\left( t\right) $\emph{\ then
the restricted energy }$\mathcal{\bar{E}}_{n}$\emph{\ of the solutions
converges to a function }$\mathcal{\bar{E}}\left( t\right) $\emph{\ so that
the following relativistic version of Newton's equation holds:} 
\begin{equation}
\frac{\mathrm{d}}{\mathrm{d}t}\left( \frac{\mathcal{\bar{E}}}{\mathrm{c}^{2}}%
\mathbf{\hat{v}}\right) =\mathbf{f}\left( t,\mathbf{\hat{r}}\right) ,\quad 
\mathbf{\hat{v}=}\frac{\mathrm{d}}{\mathrm{d}t}\mathbf{\hat{r}.}
\label{Newt0}
\end{equation}%
The electric force $\mathbf{f}$\ in (\ref{Newt0}) is defined by formula 
\begin{equation}
\mathbf{f}\left( t,\mathbf{\hat{r}}\right) =-\bar{\rho}\nabla \varphi \left(
t,\mathbf{\hat{r}}\right)  \label{fLor0}
\end{equation}%
where $\varphi $ is the electric potential as in the KG\ equation (\ref{dtex}%
), $\bar{\rho}$ is a constant describing the limit charge. The limit
restricted energy $E\left( t\right) $ satisfies relation 
\begin{equation}
\frac{\mathcal{\bar{E}}\left( t\right) }{\mathrm{c}^{2}}=M_{0}\gamma ,\quad
\gamma =\left( 1-\mathbf{\hat{v}}^{2}/\mathrm{c}^{2}\right) ^{-1/2},
\label{En0}
\end{equation}%
where $M_{0}$ is a constant which plays the role of a generalized rest mass.
Observe that \ \emph{equation (\ref{Newt0}) takes the form of the
relativistic version of Newton's law if the mass is defined by Einstein's
formula}\ $M\left( t\right) =\frac{1}{\mathrm{c}^{2}}\mathcal{\bar{E}}\left(
t\right) $. The relation between the generalized rest mass $M_{0}$ and the
rest mass $m_{0}$ of resting solutions is discussed in Remark \ref{Rfreemass}%
. Note that the same KG equation (\ref{KGex}) with the same value of the
mass parameter $m$ has rest solutions with different energies, and
consequently different rest masses,\ see Sections (\ref{Snonlinrest}) and %
\ref{Suniform}.\ 

Therefore we can make the following conclusion: \emph{in the framework of
the KG field theory, the relativistic material point dynamics is represented
by concentrating solutions of the KG equation with the mass determined by
Einstein's formula from the limit restricted energy of the solutions.}

We can add to the above outline a few guiding points to the mathematical
aspects of our approach. Equation (\ref{Newt0}) in Theorem \ref%
{TconcEinstein} produces a \emph{necessary condition} for solutions of KG\
equation to concentrate to a trajectory and reveals their asymptotic
point-like dynamics. To obtain this necessary condition we have to derive
the point dynamics governed by an ordinary differential equation (\ref{Newt0}%
)\ from the dynamics of waves governed by a partial differential equation.
The concept of "\emph{concentration of functions at a given trajectory} $%
\mathbf{\hat{r}}\left( t\right) $", see Definition \ref{Dlocconverge} below,
is the first step in relating spatially localized fields $\psi $ to point
trajectories. The definition of concentration of functions has a sufficient
flexibility to allow for general regular trajectories $\mathbf{\hat{r}}%
\left( t\right) \ $and plenty of functions localized about the trajectory,
see Example \ref{Rboundint}. But if a sequence of functions concentrating at
a given trajectory \emph{are also solutions of the KG equation then,
according to Theorem \ref{TconcEinstein}, the trajectory and the limit
energy must satisfy the relativistic Newton's equations together with
Einstein's formula}. To derive the equations (\ref{Newt0}) and (\ref{En0}),
we introduce the energy $\mathcal{\bar{E}}_{n}\left( t\right) $ restricted
to a narrow tubular neighborhood of the trajectory with radius $R_{n}$ \ and
adjacent ergocenters $\mathbf{r}_{n}\left( t\right) $ of the concentrating
solutions; then\ we infer integral equations for the restricted energy and
adjacent ergocenters from the energy and momentum conservation laws and the
continuity equation, \ and then we pass to the limit as $R_{n}\rightarrow 0$%
, see Theorems \ref{Ldiscrepancy1} and \ref{Ldiscrepancy2} \ and the proof
of Theorem \ref{TconcEinstein}.\ The intrinsic length scales $a$ and $a_{%
\mathrm{C}}=\frac{\chi }{m\mathrm{c}}$ of concentrating solutions \ are much
smaller than the radius $R_{n}$, namely $R_{n}/a\rightarrow \infty ,$ $%
R_{n}/a_{\mathrm{C}}\rightarrow \infty $. The determination of restricted
energy and charge and adjacent ergocenters involves integration over a large
relative to $a$ and $a_{\mathrm{C}}$ spatial domain of radius $R_{n}$.
Therefore, equation (\ref{Newt0}) inherits integral, non-local
characteristics of concentrating solutions \ which cannot be reduced to
their behavior on the trajectory, namely the limit restricted energy $%
\mathcal{\bar{E}}\left( t\right) $ and restricted charge $\bar{\rho}_{\infty
}$. Therefore it is natural to call our method of determination of the point
trajectories \emph{semi-local}. This semi-local feature allows to capture
Einstein's relation between mass and energy.

Note that in many problems of physics and mathematics a common way to
establish a relation between point dynamics and wave dynamics is by means of
the WKB method, see for instance \cite{MaslovFedorjuk}, \cite[Sec. 7.1]%
{Nayfeh}.\ We remind that the WKB\ method is based on the quasiclassical
ansatz for solutions to a hyperbolic partial differential equation and their
asymptotic expansion. The leading term of the expansion results in the
eikonal equation; wavepackets and their energy propagate along its
characteristics. The characteristics represent point dynamics and are
determined from a system of ODE which can be interpreted as a law of motion
or a law of propagation. The construction of the characteristics involves
only local data. The proposed here approach also relates waves governed by
certain PDE's in asymptotic regimes to the point dynamics but it differs
signficantly from the WKB method. In particular, our approach is not based
on any specific ansatz and it is not not entirely local but rather it is
semi-local.

\ A simple and important example of concentrating solutions is provided by a
resting or uniformly moving charge in Example \ref{Euniform}. Examples of 
\emph{accelerating} charges are constructed in Section \ref{secAR}, where we
show that all the assumptions on concentrating solutions imposed in the
Theorem \ref{TconcEinstein} are satisfied. For a given accelerating
rectilinear translational motion and a fixed Gaussian shape of $\left\vert
\psi \right\vert $ we construct an electric potential $\varphi $ consisting
of (i) an explicitly written principal component yielding desired
acceleration and (ii) an additional vanishingly small "balancing" component
allowing for the shape $\left\vert \psi \right\vert $ to be exactly
preserved. The construction of the balancing component is reduced to solving
a system of characteristic ODE which allows for a detailed analysis.

The rest of the paper is structured as follows. In Section \ref{snonlin} we
describe the nonlinearities $G$. In Section \ref{sreldis}\ we introduce the
Lagrangian for the KG equation and related densities of charge, energy and
momentum, and verify then the corresponding conservation laws. In Sections %
\ref{Snonlinrest} and \ref{Suniform} solutions for resting and uniformly
moving charge are considered and their energies are found. In Section \ref%
{SReldyn} we define our key concepts and derive then the equations (\ref%
{Newt0}), (\ref{En0})\ as necessary conditions for solutions of the KG
equation to concentrate at a trajectory. In Section \ref{secAR} an example
is provided for a relativistically accelerating charge which satisfies all
the assumptions imposed in the theorem on concentrating solutions. In
Section \ref{Sconclin}\ we discuss application of the results of the paper
to linear KG equations. Note that the generalization of the results of the
paper to higher than $3$ space dimensions is straightforward.

\section{Basic properties of the Klein-Gordon equation\label{Cbas}}

\subsection{Nonlinearity, its basic properties and examples\label{snonlin}}

In this section we describe the nonlinearity $G^{\prime }\left( \left\vert
\psi \right\vert ^{2}\right) $\ which enters KG equation (\ref{KGex}). We
assume that\ $G^{\prime }\left( s\right) $ is of class $C^{1}\left( \mathbb{%
R\setminus }0\right) $, it may have a\ mild singularity point at $s=0.$
Namely, we assume that\ the function $G^{\prime }\left( \left\vert \psi
\right\vert ^{2}\right) \psi ,$ $\psi \in \mathbb{C}$, can be extended to a
function which belongs to the Holder class $C^{\alpha }\left( \mathbb{C}%
\right) $\ with any $\alpha $ such that $1>\alpha >0$\ and the
antiderivative $G\left( \left\vert \psi \right\vert ^{2}\right) $ belongs to 
$C^{1+\alpha }\left( \mathbb{C}\right) $. Below we give several explicit
examples of the function $G^{\prime }\left( |\psi |^{2}\right) $ which
allows for resting, time-harmonic localized solutions\ of (\ref{KGex}) and
satisfy the above assumptions. In the examples we define the nonlinearity $G$
and its dependence on the size parameter $a$ based on the ground\ state $%
\mathring{\psi}\geq 0$.\ The dependence of the ground state $\mathring{\psi}$
on the size parameter $a>0$ is as follows: 
\begin{equation}
\mathring{\psi}\left( r\right) =\mathring{\psi}_{a}\left( r\right) =a^{-3/2}%
\mathring{\psi}_{1}\left( a^{-1}r\right) ,r=\left\vert x\right\vert \geq 0
\label{nrac5}
\end{equation}%
where $\mathring{\psi}_{1}\left( r\right) $ is a given function. The
dependence on $a$ is chosen so that $L^{2}$-norm $\Vert \mathring{\psi}%
_{a}\left( |x|\right) \Vert $ does not depend on $a,$ hence the function $%
\mathring{\psi}_{a}\left( r\right) $ satisfies the normalization condition 
\begin{equation*}
\Vert \mathring{\psi}_{a}\left( |x|\right) \Vert =\nu ,\nu >0
\end{equation*}%
with a fixed $\nu $\ for every $a>0$.\ \ The function $\mathring{\psi}%
_{a}\left( r\right) $ is assumed to be a smooth (three times continuously
differentiable) positive monotonically decreasing function of $r\geq 0$
which is square integrable with weight $r^{2},$ we assume that its
derivative $\mathring{\psi}_{a}^{\prime }\left( r\right) $ is negative for $%
r>0$ and we assume it to satisfy a charge normalization condition.

We assume that $\mathring{\psi}_{a}$ satisfies the charge equilibrium\
equation: 
\begin{equation}
\nabla ^{2}\mathring{\psi}_{a}=G_{a}^{\prime }(\mathring{\psi}_{a}^{2})%
\mathring{\psi}_{a}.  \label{stp}
\end{equation}%
This equation is obtained from (\ref{KGex})\ by substitution\ $\psi
=e^{-i\omega _{0}t}\mathring{\psi}\left( |x|\right) ,$ $\omega _{0}=m\mathrm{%
c}^{2}/\chi $; this equation can be used to define the nonlinearity.
Obviously, it is sufficient to define $G_{a}^{\prime }$\ for $a=1$\ and then
set $G_{a}^{\prime }\left( s\right) =a^{-2}G_{1}^{\prime }\left(
a^{3}s\right) $ . From the equation (\ref{stp}) we express $G_{1}^{\prime
}\left( s\right) $ with $s=\mathring{\psi}_{1}^{2}\left( r\right) ;$ since $%
\mathring{\psi}_{1}^{2}\left( r\right) $ is a \emph{monotonic} function,\ we
can find its inverse $r=r\left( s\right) ,\ $and we obtain an\ explicit
expression 
\begin{equation}
G_{1}^{\prime }\left( s\right) =\nabla ^{2}\mathring{\psi}_{1}\left( r\left(
s\right) \right) /\mathring{\psi}_{1}\left( r\left( s\right) \right) ,\ 0=%
\mathring{\psi}_{1}^{2}\left( \infty \right) \leq s\leq \mathring{\psi}%
_{1}^{2}\left( 0\right) .  \label{intps}
\end{equation}%
Since $\mathring{\psi}_{1}\left( r\right) $ is smooth and $\partial _{r}%
\mathring{\psi}_{1}<0$, $G^{\prime }(|\psi |^{2})$ is smooth for $0<|\psi
|^{2}<\mathring{\psi}_{1}^{2}\left( 0\right) $.\ We assume that $\mathring{%
\psi}_{a}\left( r\right) $ is a smooth function of class $C^{3}$, and we
assume that we define an extension of $G^{\prime }\left( s\right) $ for $%
s\geq \mathring{\psi}_{1}^{2}\left( 0\right) $ as a function of class $C^{1}$
for all $r>0$. Though $G^{\prime }\left( s\right) $ may have a singular
point at $s=0,$ we assume that\ the function $G^{\prime }\left( \left\vert
\psi \right\vert ^{2}\right) \psi $ can be extended to a function of Holder
class $C^{\alpha }\left( \mathbb{C}\right) $\ for any $\alpha ,$ $0<\alpha
<1 $\ and the antiderivative $G_{1}\left( \left\vert \psi \right\vert
^{2}\right) ,$%
\begin{equation*}
G_{1}\left( s\right) =\int_{0}^{s}G_{1}^{\prime }\left( s^{\prime }\right)
ds^{\prime },\ G_{a}\left( s\right) =a^{-5}G_{1}\left( a^{3}s\right) ,
\end{equation*}%
is a function of class $C^{1+\alpha }\left( \mathbb{C}\right) $ for any $%
\alpha <1$ with respect to the variable $\psi \in \mathbb{C}$.\ In
particular, we assume that\ for every $C_{1}$ there exists such constant $C$
that the following inequalities hold: 
\begin{equation}
\left\vert G_{1}(\left\vert \psi \right\vert ^{2})\right\vert \leq
C\left\vert \psi \right\vert ^{1+\alpha }\text{,\ \ }\left\vert
G_{1}^{\prime }(\left\vert \psi \right\vert ^{2})\right\vert \left\vert \psi
\right\vert \leq C\left\vert \psi \right\vert ^{\alpha }\text{\ \ for\ \ }%
\left\vert \psi \right\vert \leq C_{1}.  \label{Gless}
\end{equation}

\begin{example}
\label{Ex1} Let the form factor $\mathring{\psi}_{1}\left( r\right) $ decay
as a power law, namely\ $\mathring{\psi}_{1}\left( r\right) =c_{\mathrm{pw}%
}\left( 1+r^{2}\right) ^{-p},$ $p>3/4$.\ Then 
\begin{equation}
G_{1}^{\prime }\left( \mathring{\psi}_{1}^{2}\right) =-2p\left( \mathring{%
\psi}_{1}/c_{\mathrm{pw}}\right) ^{2/p}\left( \left( \mathring{\psi}_{1}/c_{%
\mathrm{pw}}\right) ^{1/p}+\left( 2p+2\right) \left( \mathring{\psi}_{1}/c_{%
\mathrm{pw}}\right) ^{2/p}\right)  \label{expsi1}
\end{equation}%
where $c_{\mathrm{pw}}$ is the normalization factor. The same formula (\ref%
{expsi1}) can be used to obtain extension $G^{\prime }\left( s\right) $ for
all $s\geq 0$. In this example\ $G_{1}^{\prime }\left( s\right) $ is
differentiable for all $s\geq 0$ if $3/p\geq 2$.
\end{example}

\begin{example}
\label{Ex2}Consider an exponentially decaying form factor $\mathring{\psi}%
_{1}$ of the form 
\begin{equation}
\mathring{\psi}_{1}\left( r\right) =c_{\mathrm{e}}\mathrm{e}^{-\left(
r^{2}+1\right) ^{p}}.  \label{grs}
\end{equation}%
A direct computation yields%
\begin{eqnarray}
G^{\prime }\left( \mathring{\psi}^{2}\right) &=&-2p\left( \left( 2p+1\right)
\ln ^{\left( p-1\right) /p}\left( c_{\mathrm{e}}/\mathring{\psi}\right)
-2p\ln ^{\left( 2p-1\right) /p}\left( c_{\mathrm{e}}/\mathring{\psi}\right)
\right)  \label{grs1} \\
&&-2p\left( 2p\ln ^{\left( 2p-2\right) /p}\left( c_{\mathrm{e}}/\mathring{%
\psi}\right) +\left( 2-2p\right) \ln ^{\left( p-2\right) /p}\left( c_{%
\mathrm{e}}/\mathring{\psi}\right) \right) .  \notag
\end{eqnarray}%
In particular, for $p=1/2$ we obtain an exponentially decaying ground state\ 
$\mathring{\psi}_{1}\left( r\right) =c_{\mathrm{e}}\mathrm{e}^{-\left(
r^{2}+1\right) ^{1/2}}$\ and for $s\leq c_{\mathrm{e}}^{2}\mathrm{e}^{-2}$ 
\begin{equation}
G_{1}^{\prime }\left( s\right) =1-\frac{4}{\ln \left( c_{\mathrm{e}%
}^{2}/s\right) }-\frac{4}{\ln ^{2}\left( c_{\mathrm{e}}^{2}/s\right) }-\frac{%
8}{\ln ^{3}\left( c_{\mathrm{e}}^{2}/s\right) }.\text{ }  \label{ggkap}
\end{equation}%
We can extend it for larger $s$ as follows:%
\begin{equation}
G_{1}^{\prime }\left( s\right) =G_{1}^{\prime }\left( c_{\mathrm{e}}^{2}%
\mathrm{e}^{-2}\right) =-3\text{ if }s\geq 2c_{\mathrm{e}}^{2}\mathrm{e}%
^{-2}.  \label{ggkap1}
\end{equation}%
and in the interval $c_{\mathrm{e}}^{2}\mathrm{e}^{-2}\leq s\leq 2c_{\mathrm{%
e}}^{2}\mathrm{e}^{-2}$\ interpolate to obtain a smooth function for $s>0$.
The function $G_{1}^{\prime }\left( s\right) $ defined by (\ref{ggkap})\ has
a limit at $s=0$ and can be extended to a continuous function, but is not
differentiable at $s=0$.
\end{example}

\begin{example}
\label{Ex3}The \emph{Gaussian} ground state is given by the formula 
\begin{equation}
\mathring{\psi}\left( r\right) =C_{g}\mathrm{e}^{-r^{2}/2},\quad C_{g}=\pi
^{-3/4}.  \label{Gaussp}
\end{equation}%
Such a ground state is called \emph{gausson} in \cite{Bialynicki}. The
expression for $G^{\prime }$ can be derived from (\ref{grs1}) in the
particular case $p=1$, or can be evaluated directly: 
\begin{equation*}
\nabla ^{2}\mathring{\psi}\left( r\right) /\mathring{\psi}\left( r\right)
=r^{2}-3=-\ln \left( \mathring{\psi}^{2}\left( r\right) /C_{g}^{2}\right) -3.
\end{equation*}%
Hence, we define the nonlinearity by the formula%
\begin{equation}
G^{\prime }\left( |\psi |^{2}\right) =-\ln \left( |\psi
|^{2}/C_{g}^{2}\right) -3,  \label{Gaussg}
\end{equation}%
and refer to it as the \emph{logarithmic nonlinearity. }Dependence on the
size parameter $a>0$ is given by the formula 
\begin{equation}
G_{a}^{\prime }\left( |\psi |^{2}\right) =-a^{-2}\ln \left( a^{3}|\psi
|^{2}/C_{g}^{2}\right) -3a^{-2}  \label{Gpa}
\end{equation}%
with the antiderivative 
\begin{equation}
G\left( s\right) =G_{a}\left( s\right) =-a^{-2}s\left[ \ln \left(
a^{3}s\right) +\ln \pi ^{3/2}+2\right] ,\quad s\geq 0.  \label{paf3}
\end{equation}
\end{example}

\subsection{Conservation laws for Klein-Gordon equation\label{sreldis}}

The Lagrangian for the KG equation (\ref{KGex}) is the following
relativistic and gauge invariant expression 
\begin{equation}
\mathcal{L}_{1}\left( \psi \right) =\frac{\chi ^{2}}{2m}\left\{ \frac{1}{%
\mathrm{c}^{2}}\tilde{\partial}_{t}\psi \tilde{\partial}_{t}^{\ast }\psi
^{\ast }-\tilde{\nabla}\psi \cdot \tilde{\nabla}\psi ^{\ast }-\kappa
_{0}^{2}\psi ^{\ast }\psi -G\left( \psi ^{\ast }\psi \right) \right\}
\label{paf1}
\end{equation}%
where $\psi \left( t,\mathbf{x}\right) $ is a complex valued wave function,
and $\psi ^{\ast }$ is its complex conjugate. In the expression (\ref{paf1}) 
$\mathrm{c}$ is the speed of light, 
\begin{equation}
\kappa _{0}=m\mathrm{c}/\chi ,  \label{paf2}
\end{equation}%
and the covariant derivatives in (\ref{paf1}) are defined by (\ref{dtex}).\
The nonlinear KG equation (\ref{KGex}) is the Euler-Lagrange field equation
for the Lagrangian (\ref{paf1}).

Using the Noether theorem and the invariance of the Lagrangian (\ref{paf1})
with respect to Lorentz and gauge transformations one can derive expressions
for a number of conserved quantities, \cite{BF4}-\cite{BF7}. In particular,
the charge and the current densities (if $\mathbf{A=0}$) are as follows:%
\begin{gather}
\rho =-\frac{\chi q}{m\mathrm{c}^{2}}\left\vert \psi \right\vert ^{2}\func{Im%
}\frac{\tilde{\partial}_{t}\psi }{\psi }=-\left( \frac{\chi q}{m\mathrm{c}%
^{2}}\func{Im}\frac{\partial _{t}\psi }{\psi }+\frac{q^{2}}{m\mathrm{c}^{2}}%
\varphi \right) \left\vert \psi \right\vert ^{2},  \label{paf6} \\
\mathbf{J}=\frac{\chi q}{2m}\left( \psi ^{\ast }\nabla \psi -\psi \nabla
\psi ^{\ast }\right) =\frac{\chi q}{m}\left\vert \psi \right\vert ^{2}\func{%
Im}\frac{\nabla \psi }{\psi },  \label{paf6J}
\end{gather}%
and the above expressions are exactly the charge and the current sources in
the Maxwell equations, \cite{BF4}-\cite{BF7}. The expression for the energy,
momentum and external force densities\ are respectively 
\begin{equation}
\mathcal{E}\left( \psi \right) =\frac{\chi ^{2}}{2m}\left[ \frac{1}{\mathrm{c%
}^{2}}\tilde{\partial}_{t}\psi \tilde{\partial}_{t}^{\ast }\psi ^{\ast
}+\nabla \psi \cdot \nabla \psi ^{\ast }+G\left( \psi ^{\ast }\psi \right)
+\kappa _{0}^{2}\psi \psi ^{\ast }\right] ,  \label{emtn3}
\end{equation}%
\begin{equation}
\mathbf{P}=-\frac{\chi ^{2}}{2m\mathrm{c}^{2}}\left( \tilde{\partial}%
_{t}\psi \tilde{\nabla}^{\ast }\psi ^{\ast }+\tilde{\partial}_{t}^{\ast
}\psi ^{\ast }\tilde{\nabla}\psi \right) ,  \label{emtn4}
\end{equation}%
\begin{equation}
\mathbf{F}=\frac{\partial }{\partial x}\mathcal{L}_{1}=\frac{\chi q}{2m%
\mathrm{c}^{2}}\mathrm{i}\left( -\psi ^{\ast }\tilde{\partial}_{t}\psi +%
\tilde{\partial}\psi _{t}^{\ast }\psi ^{\ast }\right) \nabla \varphi =-\rho
\nabla \varphi .  \label{FLor}
\end{equation}%
Consequently, the total charge $\bar{\rho}$ and the energy $\mathcal{\bar{E}}
$ are 
\begin{equation}
\bar{\rho}=\int_{\mathbb{R}^{3}}\rho \left( t,\mathbf{x}\right) \,\mathrm{d}%
^{3}x,\quad \mathcal{\bar{E}}=\frac{\chi ^{2}}{2m}\int_{\mathbb{R}^{3}}%
\mathcal{E}\,\mathrm{d}^{3}x.  \label{Etot}
\end{equation}%
The charge conservation/continuity equation 
\begin{equation}
\partial _{t}\rho +\nabla \cdot \mathbf{J}=0  \label{contex}
\end{equation}%
is readily verified by multiplying the both sides of equation (\ref{KGex})
by $\psi ^{\ast }\ $and\ taking the imaginary part. The equation (\ref%
{contex}) in turn implies the total charge conservation:%
\begin{equation}
\int_{\mathbb{R}^{3}}\rho \left( t,\mathbf{x}\right) \,\mathrm{d}^{3}x=\bar{%
\rho}=\limfunc{const}=q.  \label{rhoq1}
\end{equation}%
The value $\bar{\rho}$ $=q$ is taken to ensure that Coulomb's potential with
this density has asymptotics $q/\left\vert \mathbf{x}\right\vert $ as $%
\left\vert \mathbf{x}\right\vert \rightarrow \infty $, \cite{BF4}--\cite{BF6}%
.

The energy conservation equation%
\begin{equation}
\partial _{t}\mathcal{E}=-\mathrm{c}^{2}\nabla \cdot \mathbf{P-}\nabla
\varphi \cdot \mathbf{J}  \label{equen2}
\end{equation}%
can be verified by multiplying (\ref{KGex}) by $\tilde{\partial}_{t}^{\ast
}\psi ^{\ast }$ defined by (\ref{dtex}), taking the real part and carrying
out elementary transformations, see Appendix \ref{apprelmath}. Similarly,
the the momentum conservation law takes the form%
\begin{equation}
\partial _{t}\mathbf{P}-\mathbf{F}+\nabla \mathcal{L}_{1}\left( \psi \right)
+\frac{\chi ^{2}}{2m}\tsum\nolimits_{j}\partial _{j}\left( \partial _{j}\psi
\nabla \psi ^{\ast }+\partial _{j}\psi ^{\ast }\nabla \psi \right) =0,
\label{momeq1}
\end{equation}%
where\ $\mathcal{L}_{1}\left( \psi \right) \ $ and $\mathbf{F}$ are defined
respectively by (\ref{paf1}) and (\ref{FLor}). It can be verified by
multiplying (\ref{KGex}) by $\tilde{\nabla}^{\ast }\psi ^{\ast }$, taking
the real part and carrying out elementary transformations, see Appendix \ref%
{apprelmath}. Integrating (\ref{equen2})\ and (\ref{momeq1}) with respect to 
$\mathbf{x}$\ we obtain equations for the total energy and momentum, namely 
\begin{equation}
\partial _{t}\mathcal{\bar{E}}=\mathbf{-}\int_{\mathbb{R}^{3}}\nabla \varphi
\cdot \mathbf{J}\,\mathrm{d}^{3}x,\;\partial _{t}\int_{\mathbb{R}^{3}}%
\mathbf{P}\,\mathrm{d}^{3}x=\int_{\mathbb{R}^{3}}\mathbf{F}\,\mathrm{d}^{3}x.
\label{enint1}
\end{equation}%
Obviously, the total energy and momentum are preserved if the external force
field $\nabla \varphi =\mathbf{0}$.

\subsection{Rest states and their energies \label{Snonlinrest}}

When the external EM field vanishes, that is $\varphi =0$, $\mathbf{A}=0$,
we define the rest states $\psi $ as time harmonic solutions to the KG
equation (\ref{KGex})%
\begin{equation}
\psi \left( t,\mathbf{x}\right) =\mathrm{e}^{-\mathrm{i}\omega t}\breve{\psi}%
\left( \mathbf{x}\right)  \label{psiom}
\end{equation}%
where $\breve{\psi}\left( \mathbf{x}\right) $ is assumed to be
central-symmetric. The substitution of (\ref{psiom}) in the KG equation (\ref%
{KGex}) yields the following \emph{nonlinear eigenvalue problem} 
\begin{equation}
\nabla ^{2}\breve{\psi}=G_{a}^{\prime }(|\breve{\psi}|^{2})\breve{\psi}+%
\mathrm{c}^{-2}\left( \omega _{0}^{2}-\omega ^{2}\right) \breve{\psi}=0.
\label{eiga}
\end{equation}%
The solution $\breve{\psi}$ must also satisfy the charge normalization
condition (\ref{rhoq1}) which takes the form 
\begin{equation}
\int |\breve{\psi}|^{2}\,\mathrm{d}^{3}x=\frac{\omega _{0}}{\omega },\qquad
\omega _{0}=\frac{m\mathrm{c}^{2}}{\chi }.  \label{normom}
\end{equation}%
The energy defined by (\ref{emtn3}), (\ref{Etot}) yields for the standing
wave (\ref{psiom}) the following expression 
\begin{equation}
\mathcal{\bar{E}}=\frac{\chi ^{2}}{2m}\int_{\mathbb{R}^{3}}\left[ \frac{1}{%
\mathrm{c}^{2}}\omega ^{2}\breve{\psi}\breve{\psi}^{\ast }+\kappa _{0}^{2}%
\breve{\psi}\breve{\psi}^{\ast }+\nabla \breve{\psi}\nabla \breve{\psi}%
^{\ast }+G_{a}(\breve{\psi}\breve{\psi}^{\ast })\right] \,\mathrm{d}^{3}x.
\label{Estand}
\end{equation}%
The problem (\ref{eiga}), (\ref{normom}) has a sequence of solutions with
the corresponding sequence of frequencies $\omega $. Their energies $%
\mathcal{\bar{E}}_{0\omega }$ are related to the frequency $\omega $ by the
formula 
\begin{equation}
\mathcal{\bar{E}}_{0\omega }=\chi \omega \left( 1+\Theta \left( \omega
\right) \right) ,  \label{enomega}
\end{equation}%
\begin{equation}
\Theta \left( \omega \right) =\Theta _{0}\frac{a_{\mathrm{C}}^{2}}{a^{2}}%
\frac{\omega _{0}^{2}}{\omega ^{2}},\qquad a_{\mathrm{C}}=\frac{\chi }{m%
\mathrm{c}}.  \label{Thomega}
\end{equation}%
Here the coefficient $\Theta _{0}=\frac{1}{3}\left\Vert \nabla \mathring{\psi%
}_{1}\right\Vert ^{2}$ depends on the shape of the rest charge, the\
parameter $a_{\mathrm{C}}=\frac{\hbar }{m\mathrm{c}}=\frac{\lambda _{\mathrm{%
C}}}{2\pi }$ coincides in the physical applications with the \emph{reduced
Compton wavelength} of a particle with a mass $m$ if $\chi =\hbar $. For the
case\ of the logarithmic nonlinearity the ground state Gaussian shape is
given by (\ref{Gaussp}) and\ $\Theta _{0}=1/2$.

In the case of the logarithmic nonlinearity the original nonlinear
eigenvalue problem (\ref{eiga}) can be reduced by a change of variables to\
the following nonlinear eigenvalue problem with only one eigenvalue
parameter $\xi $ and the parameter-independent constraint: 
\begin{equation}
\nabla ^{2}\breve{\psi}_{1}=G_{1}^{\prime }(|\breve{\psi}_{1}|^{2})\breve{%
\psi}_{1}-\xi \breve{\psi}_{1},\quad \int_{\mathbb{R}^{3}}|\breve{\psi}%
_{1}|^{2}\,\mathrm{d}^{3}x=1.  \label{eipro1}
\end{equation}%
The parameter $\xi $ is related to the parameters in (\ref{eiga}) by the
formula 
\begin{equation}
\xi =\frac{a^{2}}{a_{\mathrm{C}}^{2}}\left( \frac{\omega ^{2}}{\omega
_{0}^{2}}-1\right) -\frac{1}{2}\ln \frac{\omega ^{2}}{\omega _{0}^{2}}.
\label{ksi}
\end{equation}%
The eigenvalue problem (\ref{eipro1}) has infinitely many solutions $\left(
\xi _{n},\psi _{1n}\right) $, $n=0,1,2...$, representing localized charge
distributions. The energy of $\psi _{n}$, $n>0$, is higher than the energy
of the Gaussian ground state which corresponds to $\xi =\xi _{0}=0$ and has
the lowest possible energy. These solutions coincide with critical points of
the energy\ functional under the constraint, for mathematical details see 
\cite{Cazenave83}, \cite{BerestyckiLions83I}, \cite{BerestyckiLions83II}.
The next two values of $\xi $ for the radial rest states are approximately $%
2.17$ and $3.41$ according to \cite{Bialynicki1}.

Applying to the rest solution the Lorentz transformation, one can easily
obtain a solution which represents the charge moving with a constant
velocity $\mathbf{v}$ (see \cite{BF4}, \cite{BF5} and the following Section %
\ref{Suniform}).

\subsection{Uniform motion of a charge\label{Suniform}}

The free motion of a charge is governed by the KG equation (\ref{KGex}) with
vanishing external EM field, that is $\varphi =0$, $\mathbf{A}=0$. Since the
KG equation is relativistic covariant, the solution can be obtained from a
rest solution defined by (\ref{psiom}), (\ref{eiga}) by applying the Lorentz
boost transformation as in \cite{BF4}, \cite{BF6}. Consequently, the
solution to the KG equation (\ref{KGex}) for a free charged particle that
moves with a constant velocity $\mathbf{v}$ takes the form 
\begin{equation}
\psi \left( t,\mathbf{x}\right) =\psi _{\mathrm{free}}\left( t,\mathbf{x}%
\right) =\mathrm{e}^{-\mathrm{i}\left( \gamma \omega t-\mathbf{k}\cdot 
\mathbf{x}\right) }\breve{\psi}\left( \mathbf{x}^{\prime }\right) ,
\label{mvch1}
\end{equation}%
with $\breve{\psi}\left( \mathbf{x}^{\prime }\right) $ satisfying equation (%
\ref{eiga}) and%
\begin{equation}
\breve{\psi}\left( \mathbf{x}^{\prime }\right) =\breve{\psi}_{a}\left( 
\mathbf{x}^{\prime }\right) =a^{-3/2}\breve{\psi}_{1}\left( \mathbf{x}%
^{\prime }/a\right) ,  \label{psia}
\end{equation}%
\begin{equation}
\mathbf{x}^{\prime }=\mathbf{x}+\frac{\left( \gamma -1\right) }{v^{2}}\left( 
\mathbf{v}\cdot \mathbf{x}\right) \mathbf{v}-\gamma \mathbf{v}t,\quad 
\mathbf{k}=\gamma \omega \frac{\mathbf{v}}{\mathrm{c}^{2}},  \label{mvch4}
\end{equation}%
where $\gamma $ is the Lorentz factor 
\begin{equation}
\gamma =\left( 1-\mathbf{\beta }^{2}\right) ^{-1/2},\quad \mathbf{\beta }=%
\mathrm{c}^{-1}\mathbf{v}.  \label{gam}
\end{equation}%
Consequently, all quantities of interest for a free charge can be written
explicitly. Namely, the charge density $\rho $ defined by the relation (\ref%
{paf6}), the total charge $\bar{\rho}$ and the total energy $\mathsf{E}$
equal respectively 
\begin{equation}
\rho =\gamma q|\breve{\psi}\left( \mathbf{x}^{\prime }\right) |^{2},\quad 
\bar{\rho}=\tint \rho \left( \mathbf{x}\right) \,\mathrm{d}^{3}x=q,\quad 
\mathcal{\bar{E}}=\gamma m\mathrm{c}^{2}\left( 1+\Theta \left( \omega
\right) \right) ,  \label{rhofree}
\end{equation}%
where $\Theta \left( \omega \right) $ is given by (\ref{Thomega}). The
current density and the total current $\mathbf{\bar{J}}$ for the free charge
equal respectively%
\begin{equation}
\mathbf{J}=\frac{q}{m}\chi \func{Im}\frac{\nabla \psi }{\psi }\left\vert
\psi \right\vert ^{2}=\gamma \chi \frac{q}{m}\omega \frac{1}{\mathrm{c}^{2}}%
\mathbf{v|}\breve{\psi}|^{2}\left( \mathbf{x}^{\prime }\right) ,  \label{Pnr}
\end{equation}%
\begin{equation}
\mathbf{\bar{J}}=\int_{\mathbb{R}^{3}}\mathbf{J}\left( \mathbf{x}\right) \,%
\mathrm{d}^{3}x=q\mathbf{v.}  \label{Jfree}
\end{equation}%
The total momentum is given by the following formula 
\begin{equation}
\mathbf{\bar{P}}=\gamma \mathbf{v}m\left( 1+\Theta \left( \omega \right)
\right) =M\mathbf{v},\quad M=\gamma m\left( 1+\Theta \left( \omega \right)
\right) .  \label{momfree}
\end{equation}%
Given the above kinematic representation $\mathbf{\bar{P}}=M\mathbf{v}$ of
the momentum, it is natural, \cite[Sec. 3.3]{Moller}, \cite[Sec. 37]{Pauli
RT}, \cite{BenciF}, to identify the coefficient $M$ as the mass and to
define the \emph{rest mass} $m_{0}$\ of the charge by the formula 
\begin{equation}
m_{0}=m\left( 1+\Theta \left( \omega \right) \right) ,\quad \Theta \left(
\omega \right) =\Theta _{0}a_{\mathrm{C}}^{2}/a^{2},  \label{Mfree}
\end{equation}%
and the expression (\ref{rhofree}) for the energy takes the form of
Einstein's mass-energy relation $\mathcal{\bar{E}}=M\mathrm{c}^{2}$.

A direct comparison shows that the above definition based on the Lorentz
invariance of a uniformly moving free charge is fully consistent with the
definition of the inertial mass which is derived from the analysis of the
accelerated motion of localized charges in an external EM field in the
following section, see Remark \ref{Rfreemass} for a more detailed
discussion.\ 

\section{Relativistic dynamics of localized solutions\label{SReldyn}}

The point dynamics described by (\ref{fre1}) involves only fields exactly at
the location of the point. Therefore it is natural to make assumptions of
localization of KG equations and their solutions which should involve only
behavior of the fields in a vicinity of the trajectory.\ \ The speed of
light $\mathrm{c}$, the charge $q$\ and the mass parameter $m$ are assumed
to be fixed.\ There are two intrinsic length scales relevant for the
nonlinear KG equation: the size parameter $a$ that enters the nonlinearity
and the quantity $a_{\mathrm{C}}=\frac{\chi }{m\mathrm{c}}$ known as the
reduced Compton wavelength (if $\chi $ equals the Planck constant $\hbar $).
In our analysis we suppose the parameter $a_{\mathrm{C}}$ and the size
parameter $a$ to become vanishingly small by taking a sequence of values $%
\chi _{n}\rightarrow 0$, $a_{n}\rightarrow 0$, $n=1,2,...$,\ when the
corresponding values of the potential $\varphi _{n}$\ and solutions $\psi
_{n}\left( t,\mathbf{x}\right) $ of the KG\ equation are defined in
contracting neighborhoods of a trajectory $\mathbf{\hat{r}}\left( t\right) $.

\subsection{ Concentration at a trajectory\label{SConctraj}}

In this section we define and give examples of \emph{solutions concentrating
at a trajectory}.\ \ A trajectory $\mathbf{\hat{r}}\left( t\right) $, $%
T_{-}\leq t\leq T_{+}$ is a twice continuously differentiable function with
values in $\mathbb{R}^{3}$ satisfying 
\begin{equation}
\left\vert \mathbf{\hat{v}}\right\vert ,\left\vert \partial _{t}\mathbf{\hat{%
v}}\right\vert \leq C,\text{\ where }\mathbf{\hat{v}}\left( t\right)
=\partial _{t}\mathbf{\hat{r}}\left( t\right) ,\text{\quad }T_{-}\leq t\leq
T_{+}.  \label{trbound}
\end{equation}%
In this section the trajectory $\mathbf{\hat{r}}\left( t\right) $ is assumed
to be fixed.\ Being given a trajectory $\mathbf{\hat{r}}\left( t\right) $,
we consider an associated with it family of neighborhoods contracting to it.
Namely, we introduce the ball of radius $R$ centered at $r=\mathbf{\hat{r}}%
\left( t\right) $:\ 
\begin{equation}
\Omega \left( \mathbf{\hat{r}}\left( t\right) ,R\right) =\left\{ \mathbf{x}%
:\left\vert \mathbf{x}-\mathbf{\hat{r}}\left( t\right) \right\vert ^{2}\leq
R^{2}\right\} \subset \mathbb{R}^{3},\text{\quad }R>0  \label{Omball}
\end{equation}%
and then for a sequence of positive $R_{n}\rightarrow 0$ we consider the
sequence 
\begin{equation}
\Omega _{n}=\Omega _{n}\left( t\right) =\Omega \left( \mathbf{\hat{r}}%
(t),R_{n}\right) .  \label{Omn}
\end{equation}%
In what follows we often make use of the following elementary identity%
\begin{equation}
\int_{\Omega _{n}}\partial _{t}f\left( t,\mathbf{x}\right) \,\mathrm{d}%
^{3}x=\partial _{t}\int_{\Omega _{n}}f\left( t,\mathbf{x}\right) \,\mathrm{d}%
^{3}x-\int_{\partial \Omega _{n}}f\left( t,\mathbf{x}\right) \mathbf{\hat{v}}%
\cdot \mathbf{\bar{n}}\,\mathrm{d}^{3}x  \label{eldt}
\end{equation}%
where $\mathbf{\bar{n}}$ is the external normal\ to\ $\partial \Omega _{n}$, 
$\mathbf{\hat{v}}=\partial _{t}\mathbf{\hat{r}}$.

\begin{definition}[concentrating neighborhoods]
\label{Dconcne}\emph{\ Concentrating neighborhoods} of a trajectory $\mathbf{%
\hat{r}}\left( t\right) $ are\ defined as a family of tubular domains 
\begin{equation}
\hat{\Omega}\left( \mathbf{\hat{r}}\left( t\right) ,R_{n}\right) =\left\{
\left( t,\mathbf{x}\right) :\left\vert x-\mathbf{\hat{r}}\left( t\right)
\right\vert ^{2}\leq R_{n}^{2}\right\} \subset \left[ T_{-},T_{+}\right]
\times \mathbb{R}^{3}  \label{Omhat}
\end{equation}%
where $R_{n}$ satisfy contraction condition:%
\begin{equation}
R_{n}\rightarrow 0\text{\ as\ }n\rightarrow \infty .  \label{Rnto0}
\end{equation}
\end{definition}

We will consider the KG equations and their solutions in concentrating
neighborhoods of a trajectory $\mathbf{\hat{r}}\left( t\right) $. We will
make certain regularity assumptions on the behavior of the potentials and
solutions restricted to the domain $\hat{\Omega}\left( \mathbf{\hat{r}}%
,R_{n}\right) $ around the trajectory\ $\mathbf{\hat{r}}\left( t\right) $.\
The assumptions relate the "microscopic" scales $a$ and $a_{\mathrm{C}}$ to
the "macroscopic" length scale of the order $1$ relevant to the trajectory $%
\mathbf{\hat{r}}(t)$ and the electric potential $\varphi \left( t,\mathbf{x}%
\right) $.

Below we often make use of the following notations%
\begin{equation}
\partial _{0}=\mathrm{c}^{-1}\partial _{t},  \label{dt0}
\end{equation}%
\begin{gather}
\nabla _{\mathbf{x}}\varphi =\nabla \varphi =\left( \partial _{1}\varphi
,\partial _{2}\varphi ,\partial _{3}\varphi \right) ,\text{\quad }\left\vert
\nabla _{\mathbf{x}}\varphi \right\vert ^{2}=\left\vert \nabla \varphi
\right\vert ^{2}=\left\vert \partial _{1}\varphi \right\vert ^{2}+\left\vert
\partial _{2}\varphi \right\vert ^{2}+\left\vert \partial _{3}\varphi
\right\vert ^{2},  \label{grad0} \\
\nabla _{0,\mathbf{x}}\varphi =\left( \partial _{0}\varphi ,\partial
_{1}\varphi ,\partial _{2}\varphi ,\partial _{3}\varphi \right) ,\text{\quad 
}\left\vert \nabla _{0,\mathbf{x}}\varphi \right\vert ^{2}=\left\vert
\partial _{0}\varphi \right\vert ^{2}+\left\vert \partial _{1}\varphi
\right\vert ^{2}+\left\vert \partial _{2}\varphi \right\vert ^{2}+\left\vert
\partial _{3}\varphi \right\vert ^{2}.  \notag
\end{gather}

\begin{definition}[localized KG equations]
\label{DKGseq}Let $\mathbf{\hat{r}}(t)$ be a trajectory with its
concentrating neighborhoods $\hat{\Omega}\left( \mathbf{\hat{r}}%
,R_{n}\right) $, and let $\chi =\chi _{n}$, $a=a_{n}$, $\varphi =\varphi
_{n} $\ be a sequence of parameters defining the KG equation (\ref{KGex}).
We call a sequence of the KG equations localized in $\hat{\Omega}\left( 
\mathbf{\hat{r}},R_{n}\right) $ if the following conditions are satisfied.
First of all, $\chi $ and $a$ become vanishingly small\ 
\begin{equation}
a=a_{n}\rightarrow 0,\text{\quad }\chi =\chi _{n}\rightarrow 0,
\label{anto0}
\end{equation}%
the ratio $\zeta =a_{\mathrm{C}}/a$ remains bounded, 
\begin{equation}
\zeta =\frac{a_{\mathrm{C}}}{a}\leq C,\text{\ where }a_{\mathrm{C}}=\frac{%
\chi }{m\mathrm{c}},  \label{acan}
\end{equation}%
and the ratio $\theta =R/a$\ grows to infinity 
\begin{equation}
\theta _{n}=\frac{R_{n}}{a_{n}}\rightarrow \infty \ \text{as\ }n\rightarrow
\infty .  \label{theninf}
\end{equation}%
Second of all, the electric potentials $\varphi _{n}\left( t,\mathbf{x}%
\right) $ are twice continuously differentiable in $\hat{\Omega}\left( 
\mathbf{\hat{r}},R_{n}\right) $ and satisfy the following constraints:

\begin{enumerate}
\item[(i)] the following uniform in $n$ estimate holds 
\begin{equation}
\max_{T_{-}\leq t\leq T_{+},x\in \Omega _{n}}(|\varphi _{n}\left( t,\mathbf{x%
}\right) \mathbf{|}+\mathbf{|}\nabla _{0,\mathbf{x}}\varphi _{n}\left( t,%
\mathbf{x}\right) \mathbf{\mathbf{|})}\leq C;  \label{Abound}
\end{equation}

\item[(ii)] the potentials $\varphi _{n}\left( t,\mathbf{x}\right) $ locally
converge to a linear potential $\varphi _{\infty }\left( t,\mathbf{x}\right) 
$, namely%
\begin{equation}
\max_{T_{-}\leq t\leq T_{+},x\in \Omega _{n}}(|\varphi \left( t,\mathbf{x}%
\right) -\varphi _{\infty }\left( t,\mathbf{x}\right) \mathbf{|}+|\nabla
_{0,x}\varphi \left( t,\mathbf{x}\right) -\nabla _{0,x}\varphi _{\infty
}\left( t,\mathbf{x}\right) \mathbf{\mathbf{\mathbf{|}})}\rightarrow 0\text{%
\ \ as\ \ }n\rightarrow \infty ,  \label{Alocr}
\end{equation}%
\ where 
\begin{equation}
\varphi _{\infty }\left( t,\mathbf{x}\right) =\varphi _{\infty }\left( t,%
\mathbf{\hat{r}}\right) +\left( \mathbf{x}-\mathbf{\hat{r}}\right) \nabla
\varphi _{\infty }\left( t\right) ,  \label{fiinf}
\end{equation}%
\ and the coefficients $\varphi _{\infty }\left( t,\mathbf{\hat{r}}\right) $
and $\nabla \varphi _{\infty }\left( t\right) $ are bounded 
\begin{equation}
\left\vert \varphi _{\infty }\left( t,\mathbf{\hat{r}}\right) \right\vert ,%
\text{\ }\left\vert \nabla \varphi _{\infty }\right\vert ,\text{\ }%
\left\vert \partial _{t}\nabla \varphi _{\infty }\right\vert \leq C\text{ for%
}\;T_{-}\leq t\leq T_{+}.  \label{fihatb}
\end{equation}
\end{enumerate}
\end{definition}

\ Since the parameters $m,\mathrm{c}$, $q$ are fixed, the following
inequalities always hold 
\begin{equation}
C_{0}\chi \leq a_{\mathrm{C}}\leq C_{1}\chi ,\text{\quad }\chi _{n}\leq
C_{2}a_{n}.  \label{acan1}
\end{equation}%
Based on the above inequalities we often replace $\chi $ with $a_{\mathrm{C}%
} $ in estimates without special comments. Throughout the paper we denote
constants which do not depend on $\chi _{n},$ $a_{n},$ and $\theta _{n}$ by
the letter $C\ $with\ different\ indices. Sometimes the same letter $C$ with
the same indices may denote in different formulas different constants. Below
we often omit index $n$ in $\chi _{n}$, $a_{n}$, $\varphi _{n}$\ etc.

Obviously, if the potentials $\varphi _{n}$\ are the restrictions of a fixed
twice continuously differentiable function $\varphi $ to $\hat{\Omega}\left( 
\mathbf{\hat{r}}(t),R_{n}\right) $ then conditions (\ref{Abound}) and (\ref%
{Alocr}) are satisfied with $\varphi _{\infty }\left( t,\mathbf{x}\right)
=\varphi \left( t,\mathbf{\hat{r}}\right) +\left( \mathbf{x}-\mathbf{\hat{r}}%
\right) \nabla \varphi \left( t,\mathbf{\hat{r}}\right) $.

\begin{definition}[concentrating solutions]
\label{Dlocconverge}Let $\mathbf{\hat{r}}(t)$ be a trajectory. We say that
solutions $\psi $ of\ the KG\ equation (\ref{KGex}) concentrate\ at\ the
trajectory $\mathbf{\hat{r}}(t)$ if the following conditions hold. First of
all, a sequence of concentrating neighborhoods $\hat{\Omega}\left( \mathbf{%
\hat{r}},R_{n}\right) $ and parameters $a=a_{n},$ $\chi =\chi _{n},$\
potentials$\ \varphi =\varphi _{n},\ $are selected to form a sequence of the
KG equations (\ref{KGex}) localized in $\hat{\Omega}_{n}\left( \mathbf{\hat{r%
}},R_{n}\right) $ as in Definition \ref{DKGseq}. Second of all, for every
domain $\hat{\Omega}\left( \mathbf{\hat{r}},R_{n}\right) $ there exists a
function $\psi =\psi _{n}\in C^{2}\left( \hat{\Omega}\left( \mathbf{\hat{r}}%
,R_{n}\right) \right) $ which is a solution to the KG equation (\ref{KGex})
in $\hat{\Omega}\left( \mathbf{\hat{r}},R_{n}\right) $, and this solution
satisfies the following constraints:

\begin{enumerate}
\item[(i)] there exists a constant $C$ which does not depend on $n$\ but may
depend on the sequence such that\ 
\begin{equation}
\max_{T_{-}\leq t\leq T_{+}}\int_{\Omega _{n}}\left[ a_{\mathrm{C}%
}^{2}\left\vert \nabla _{0,\mathbf{x}}\psi \left( t,\mathbf{x}\right)
\right\vert ^{2}+a_{\mathrm{C}}^{2}|G(\left\vert \psi \left( t,\mathbf{x}%
\right) \right\vert ^{2})|+\left\vert \psi \left( t,\mathbf{x}\right)
\right\vert ^{2}\right] \,\mathrm{d}^{3}x\leq C,  \label{locpsibound}
\end{equation}%
where $G=G_{a}$ is the nonlinearity in (\ref{paf1});

\item[(ii)] the restriction of functions $\psi $ to the boundary\ $\partial
\Omega _{n}=\left\{ \left\vert \mathbf{x}-\mathbf{\hat{r}}\right\vert
=R_{n}\right\} $ vanishes asymptotically: 
\begin{equation}
\max_{T_{-}\leq t\leq T_{+}}\int_{\partial \Omega _{n}}(a_{\mathrm{C}%
}^{2}\left\vert \nabla _{0,\mathbf{x}}\psi \left( t,\mathbf{x}\right)
\right\vert ^{2}+a_{\mathrm{C}}^{2}|G(\left\vert \psi \left( t,\mathbf{x}%
\right) \right\vert ^{2})|+\left\vert \psi \left( t,\mathbf{x}\right)
\right\vert ^{2})\,\mathrm{d}^{2}\sigma \rightarrow 0;  \label{psiouta}
\end{equation}

\item[(iii)] the restricted to $\Omega \left( \mathbf{\hat{r}}%
(t),R_{n}\right) $ energy 
\begin{equation}
\mathcal{\bar{E}}_{n}\left( t\right) =\int_{\Omega \left( \mathbf{\hat{r}}%
(t),R_{n}\right) }\mathcal{E}_{n}\,\mathrm{d}^{3}x,  \label{Endef0}
\end{equation}%
with $\mathcal{E}_{n}$ being the energy density defined by expression (\ref%
{emtn3}), is bounded from below\ for sufficiently large $n$: 
\begin{equation}
\mathcal{\bar{E}}_{n}\left( t\right) \geq c_{0}>0\text{ for}\;n\geq
n_{0},\;T_{-}\leq t\leq T_{+};  \label{Egrc}
\end{equation}

\item[(iv)] there exists $t_{0}\in \left[ T_{-},T_{+}\right] $ such that the
sequence of restricted energies $\mathcal{\bar{E}}_{n}\left( t_{0}\right) $\
converges: 
\begin{equation}
\lim_{n\rightarrow \infty }\mathcal{\bar{E}}_{n}\left( t_{0}\right) =%
\mathcal{\bar{E}}_{\infty }\left( t_{0}\right) .  \label{Econv}
\end{equation}
\end{enumerate}
\end{definition}

Notice that condition (i) provides for the boundedness of the restricted
energy over domains $\Omega _{n}$. Condition (ii) provides for a proper
confinement of $\psi $ to $\Omega _{n}$ and condition (iii) ensures that\
the sequence is non-trivial. According to (\ref{Eless}) and (\ref%
{locpsibound}), $\mathcal{\bar{E}}_{n}\left( t_{0}\right) $ is a bounded
sequence, and, consequently, it always contains a converging subsequence.
Hence, condition (iv) is not really an additional constraint but rather it
assumes that such a subsequence is selected. The choice of a particular \
subsequence limit $\mathcal{\bar{E}}_{\infty }\left( t_{0}\right) $ then can
be interpreted as a normalization, see Remark \ref{Runiq}. Obviously, this
condition describes the amount of energy which concentrates at the
trajectory at the time $t_{0}$.

\begin{example}
\label{Rboundint}The conditions of the Definition \ref{Dlocconverge} can be
easily verified for rather general sequences of functions which are
localized around an arbitrary trajectory $\mathbf{\hat{r}}(t)$ (if we do not
assume that the functions are solutions of the KG equation). In particular,
let a sequence of functions be defined by the formula 
\begin{equation}
\psi \left( t,\mathbf{x}\right) =a^{-3/2}\psi _{\mathrm{0}}\left( \left( 
\mathbf{x}-\mathbf{\hat{r}}(t)\right) /a\right) ,\quad a=a_{n},
\label{psiex}
\end{equation}%
$a$\ is the same parameter as in $G_{a}$, the function $\psi _{\mathrm{0}%
}\left( z\right) $ is a smooth fixed function which decays together with its
derivatives as $\left\vert \mathbf{z}\right\vert \rightarrow \infty $: 
\begin{equation}
\max_{\left\vert z\right\vert =\theta }\left( \left\vert \nabla _{z}\psi _{%
\mathrm{0}}\left( \mathbf{z}\right) \right\vert ^{2}+\left\vert \psi _{%
\mathrm{0}}\left( \mathbf{z}\right) \right\vert ^{2}\right) \leq Y^{2}\left(
\theta \right) ,\   \label{Yz}
\end{equation}%
where $Y\left( \theta \right) $ is a continuous function which decays fast
enough: 
\begin{equation}
Y\left( \theta \right) \leq C_{0}\theta ^{-N}\text{\ \ if\ }\theta \geq
1;\quad Y\left( \theta \right) \leq C_{0}\;\ \text{if}\quad \theta \leq 1,
\label{Ythe}
\end{equation}%
with sufficiently large $N>3/2$. We assume that $N$ and the sequences $%
\theta _{n}\rightarrow \infty $, $a=a_{n}\rightarrow 0$\ satisfy the
condition 
\begin{equation}
3-\left( 1+\alpha \right) N<0,\quad a^{-1}\theta _{n}^{2-\left( 1+\alpha
\right) N}\rightarrow 0  \label{anthen}
\end{equation}%
with $1>\alpha >0$, $\alpha $ close to $1$ as in (\ref{Gless}).\ To verify (%
\ref{locpsibound}) we change variables $\left( \mathbf{x}-\mathbf{\hat{r}}%
(t)\right) /a=\mathbf{z}$, take into account that according to (\ref{acan}) $%
\chi ^{2}/a^{2}\leq C$, and obtain inequalities 
\begin{gather*}
a_{\mathrm{C}}^{2}\int_{\Omega \left( \mathbf{\hat{r}}(t),R_{n}\right)
}\left\vert \nabla _{x}\psi \left( t,\mathbf{x}\right) \right\vert ^{2}+%
\frac{1}{\mathrm{c}^{2}}\left\vert \partial _{t}\psi \left( t,\mathbf{x}%
\right) \right\vert ^{2}\,\mathrm{d}^{3}x\leq C\int_{\mathbf{z}\in \Omega
\left( 0,\theta _{n}\right) }\left\vert \nabla _{z}\psi _{\mathrm{0}}\left(
t,\mathbf{z}\right) \right\vert ^{2}\left( 1+\left\vert \partial _{t}\mathbf{%
\hat{r}}\right\vert ^{2}\right) \,\mathrm{d}^{3}z\leq \\
\leq C_{1}+C_{1}\int_{1}^{\infty }r^{2}r^{-2N}dr\leq C_{1}^{\prime },
\end{gather*}%
\begin{equation*}
\int_{\Omega \left( \mathbf{\hat{r}}(t),R_{n}\right) }\left\vert \psi \left(
t,\mathbf{x}\right) \right\vert ^{2}\,\mathrm{d}^{3}x=\int_{\mathbf{z}\in
\Omega \left( 0,\theta _{n}\right) }\left\vert \psi _{\mathrm{0}}\left( t,%
\mathbf{z}\right) \right\vert ^{2}\,\mathrm{d}^{3}z\leq C_{2}.
\end{equation*}%
Now we estimate the integral which involves the nonlinearity%
\begin{equation*}
G_{a}(\left\vert \psi \left( t,\mathbf{x}\right) \right\vert
^{2})=a^{-5}G_{1}(\left\vert \psi _{\mathrm{0}}\left( \left( \mathbf{x}-%
\mathbf{\hat{r}}(t)\right) /a\right) \right\vert
^{2})=a^{-5}G_{1}(\left\vert \psi _{\mathrm{0}}\left( \mathbf{z}\right)
\right\vert ^{2}).
\end{equation*}%
According to (\ref{Gless}) and (\ref{anthen}) 
\begin{gather}
a_{\mathrm{C}}^{2}\int_{\Omega \left( \mathbf{\hat{r}}(t),R_{n}\right)
}\left\vert G_{a}(\left\vert \psi \left( t,\mathbf{x}\right) \right\vert
^{2})\right\vert \,\mathrm{d}^{3}x=\zeta ^{2}\int_{\Omega \left( 0,\theta
_{n}\right) }\left\vert G_{a}(\left\vert \psi _{\mathrm{0}}\left( t,\mathbf{z%
}\right) \right\vert ^{2})\right\vert \,\mathrm{d}^{3}z  \label{Gen} \\
\leq C\zeta ^{2}\int_{\Omega \left( 0,\theta _{n}\right) }\left\vert \psi _{%
\mathrm{0}}\left( t,\mathbf{z}\right) \right\vert ^{1+\alpha }\,\mathrm{d}%
^{3}z\leq \zeta ^{2}C_{1}+\zeta ^{2}C_{2}\int_{1}^{\theta _{n}}\theta
^{-N\left( 1+\alpha \right) +2}d\theta \leq \zeta ^{2}C_{3}.  \notag
\end{gather}%
Hence condition (\ref{locpsibound}) is fulfilled. To verify (\ref{psiouta})
we estimate integrals over the boundary using (\ref{anthen}) and (\ref{Gless}%
): 
\begin{gather*}
\int_{\partial \Omega \left( \mathbf{\hat{r}}(t),R_{n}\right) }a_{\mathrm{C}%
}^{2}\left\vert \nabla _{x}\psi \left( t,\mathbf{x}\right) \right\vert
^{2}+a_{\mathrm{C}}^{2}\frac{1}{\mathrm{c}^{2}}\left\vert \partial _{t}\psi
\left( t,\mathbf{x}\right) \right\vert ^{2}+\left\vert \psi \left( t,\mathbf{%
x}\right) \right\vert ^{2}\,\mathrm{d}^{2}\sigma \\
\leq Ca^{-1}\int_{\partial \Omega \left( 0,\theta _{n}\right) }\left\vert
\nabla _{z}\psi _{\mathrm{0}}\left( t,\mathbf{z}\right) \right\vert
^{2}+\left\vert \psi _{\mathrm{0}}\left( t,\mathbf{z}\right) \right\vert
^{2}\,\mathrm{d}^{2}\sigma \leq C_{1}\theta _{n}^{2}Y^{2}\left( \theta
_{n}\right) a^{-1}\leq C,
\end{gather*}%
\begin{gather*}
a_{\mathrm{C}}^{2}\int_{\partial \Omega \left( \mathbf{\hat{r}}%
(t),R_{n}\right) }|G_{a}(\left\vert \psi \left( t,\mathbf{x}\right)
\right\vert ^{2})|\,\mathrm{d}^{2}\sigma =\zeta ^{2}a^{-1}\int_{\partial
\Omega \left( 0,\theta _{n}\right) }|G_{1}(\left\vert \psi _{\mathrm{0}%
}\left( t,\mathbf{z}\right) \right\vert ^{2})|\,\mathrm{d}^{2}\sigma \\
\leq C\zeta ^{2}a^{-1}\int_{\partial \Omega \left( 0,\theta _{n}\right)
}\left\vert \psi _{\mathrm{0}}\left( t,\mathbf{z}\right) \right\vert
^{1+\alpha }\,\mathrm{d}^{2}\sigma \leq C_{1}\zeta ^{2}a^{-1}\theta
_{n}^{2}Y^{1+\alpha }\left( \theta _{n}\right) \leq C_{2}a^{-1}\theta
_{n}^{2}\theta _{n}^{-N\left( 1+\alpha \right) }\leq C_{2}^{\prime }.
\end{gather*}%
Consequently, we obtain\ the desired (\ref{psiouta}). The energy $\mathcal{%
\bar{E}}_{n}$ involves the positive term 
\begin{equation*}
\frac{\chi ^{2}}{2m}\kappa _{0}^{2}\int_{\Omega \left( \mathbf{\hat{r}}%
(t),R_{n}\right) }\left\vert \psi \left( t,\mathbf{x}\right) \right\vert
^{2}=\frac{m\mathrm{c}^{2}}{2}\int_{\Omega \left( 0,\theta _{n}\right)
}\left\vert \psi _{\mathrm{0}}\right\vert ^{2}d\mathbf{z}\rightarrow \frac{m%
\mathrm{c}^{2}}{2}\int_{\mathbb{R}^{3}}\left\vert \psi _{\mathrm{0}%
}\right\vert ^{2}d\mathbf{z\geq }\frac{1}{C_{4}}>0.
\end{equation*}%
According to (\ref{emtn3}), \ the only negative contribution to the energy
can come from the nonlinearity $G$, and, according to (\ref{Gen}), this
contribution cannot be greater than $\zeta ^{2}C_{3},$ \ hence condition (%
\ref{Egrc}) is satisfied if $\zeta $ is small enough.
\end{example}

In the following important example we consider uniform motion of charges
from Section \ref{Suniform} where the relativistic argument is used to
determine their mass. The uniformly moving solutions are also an example of
concentrating solutions.

\begin{example}
\label{Euniform}As a simple example of solutions which concentrate\ at\ a
trajectory $\mathbf{\hat{r}}(t)$ we take solutions defined by (\ref{mvch1})
in Section \ref{Suniform}.\ Then\ the trajectory $\mathbf{\hat{r}}(t)=%
\mathbf{v}t$ is a straight line, $\varphi =0$. We assume that the function $%
\breve{\psi}_{1}=\psi _{0}$ from (\ref{mvch1}) satisfies (\ref{Yz}), (\ref%
{Ythe}) (ground states from Examples \ref{Ex1}-\ref{Ex3} satisfy this
assumption). We take sequences $a_{n}\rightarrow 0,$ $\chi _{n}\rightarrow
0, $ $R_{n}\rightarrow 0,$ $\theta _{n}\rightarrow \infty $\ such that
conditions (\ref{anthen}) are satisfied. Since $\theta _{n}\rightarrow
\infty ,$ the restricted energy and charge defined as integrals over $\Omega
\left( \mathbf{\hat{r}}(t),R_{n}\right) $\ converge to\ the integrals over
the entire space and $\mathcal{\bar{E}}_{\infty },\bar{\rho}_{\infty }\ $%
are\ given by (\ref{rhofree}), namely 
\begin{equation*}
\mathcal{\bar{E}}_{n}\left( t\right) \rightarrow \gamma m\mathrm{c}%
^{2}\left( 1+\Theta \left( \omega \right) \right) ,\quad \bar{\rho}%
_{n}\left( t\right) \rightarrow q.
\end{equation*}%
Therefore (\ref{Egrc}) holds for large $n$.\ The solutions concentrate at $%
\mathbf{\hat{r}}\left( t\right) =\mathbf{v}t$\ and have the following
additional properties:(i) the energy density $\mathcal{E}$\ is
center-symmetric with respect to $\mathbf{\hat{r}}\left( t\right) =\mathbf{v}%
t$, hence the ergocenter $\mathbf{r}\left( t\right) $ defined by (\ref{rnom}%
) coincides with the center $\mathbf{\hat{r}}\left( t\right) $; (ii) the
charge density $\rho $ given by (\ref{rhofree}) and according to (\ref{psia}%
) it converges to delta-function $q\delta \left( \mathbf{x}-\mathbf{r}%
\right) \ $as $a\rightarrow 0$; (iii) the current $\mathbf{J}$ is given by (%
\ref{Pnr}), its components are center-symmetric and converge to the
corresponding components of $q\mathbf{v}\delta \left( \mathbf{x}-\mathbf{r}%
\right) $.
\end{example}

The following statement is the main result of this paper. It describes
trajectories relevant to concentrating solutions to the KG\ equations.

\begin{theorem}[relevant trajectories]
\label{TconcEinstein}Let solutions $\psi $ of\ the KG\ equation (\ref{KGex})
concentrate at $\mathbf{\hat{r}}(t)$. Then the restricted energies $\mathcal{%
\bar{E}}_{n}\left( t\right) $ converge to the limit restricted energy $%
\mathcal{\bar{E}}_{\infty }\left( t\right) $ which satisfies equation (\ref%
{Einf}).\ The limit energy $\mathcal{\bar{E}}_{\infty }\left( t\right) $ and
the trajectory $\mathbf{\hat{r}}\left( t\right) $\ satisfy equation 
\begin{equation}
\partial _{t}\left( \frac{1}{\mathrm{c}^{2}}\mathcal{\bar{E}}_{\infty
}\left( t\right) \partial _{t}\mathbf{\hat{r}}\right) =\mathbf{f}_{\infty }
\label{New2}
\end{equation}%
with the electric force $\mathbf{f}_{\infty }\left( t\right) $ given by the
formula\ 
\begin{equation}
\mathbf{f}_{\infty }\left( t\right) =-\bar{\rho}_{\infty }\nabla \varphi
_{\infty }\left( t\right)  \label{FLorinf}
\end{equation}%
where\ the charge $\bar{\rho}_{\infty }$ does not depend on $t$. If we
identify the coefficient at $\partial _{t}\mathbf{\hat{r}}$ in (\ref{New2})
with\ the mass $M$ by Einstein's formula%
\begin{equation}
M=\frac{1}{\mathrm{c}^{2}}\mathcal{\bar{E}}_{\infty }\left( t\right) ,
\label{EinM}
\end{equation}%
the following formula holds: 
\begin{equation}
M=\gamma M_{0},\quad \gamma =\left( 1-\left( \partial _{t}\mathbf{\hat{r}}%
\right) ^{2}/\mathrm{c}^{2}\right) ^{-1/2}  \label{Massfor}
\end{equation}%
where $M_{0}$ is a constant.
\end{theorem}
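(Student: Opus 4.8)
The plan is to turn the three local conservation laws---charge continuity (\ref{contex}), energy balance (\ref{equen2}) and momentum balance (\ref{momeq1})---into integral identities over the shrinking balls $\Omega _n=\Omega (\mathbf{\hat{r}}(t),R_n)$ and then pass to the limit $n\to \infty $ using the moving-domain identity (\ref{eldt}). The key bookkeeping observation is that every boundary integral produced either by (\ref{eldt}) or by an integration by parts carries precisely the weights $a_{\mathrm{C}}^{2}|\nabla _{0,\mathbf{x}}\psi |^{2}+a_{\mathrm{C}}^{2}|G|+|\psi |^{2}$ appearing in the confinement hypothesis (\ref{psiouta}), so all surface terms tend to zero, while the bulk bound (\ref{locpsibound}) keeps every volume integral (and in particular $\mathcal{\bar{E}}_{n}$ and $\mathbf{\bar{P}}_{n}=\int _{\Omega _{n}}\mathbf{P}\,\mathrm{d}^{3}x$) uniformly bounded. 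I would first integrate (\ref{contex}) over $\Omega _{n}$; the flux terms vanish and $\partial _{t}\bar{\rho}_{n}\to 0$, so the restricted charge converges to a time-independent constant $\bar{\rho}_{\infty }$.

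The decisive kinematic input comes from two \emph{center} computations. Since $|\mathbf{x}-\mathbf{\hat{r}}|\le R_{n}$ on $\Omega _{n}$, the charge-weighted offset $\mathbf{d}_{n}=\int _{\Omega _{n}}(\mathbf{x}-\mathbf{\hat{r}})\rho \,\mathrm{d}^{3}x$ and the energy-weighted offset (adjacent ergocenter) $\mathbf{s}_{n}=\int _{\Omega _{n}}(\mathbf{x}-\mathbf{\hat{r}})\mathcal{E}\,\mathrm{d}^{3}x$ are both $O(R_{n})\to 0$. Differentiating $\mathbf{d}_{n}$ with (\ref{eldt}), substituting $\partial _{t}\rho =-\nabla \cdot \mathbf{J}$ and integrating by parts gives $\partial _{t}\mathbf{d}_{n}=\mathbf{\bar{J}}_{n}-\mathbf{\hat{v}}\bar{\rho}_{n}+o(1)$; the identical manipulation on $\mathbf{s}_{n}$, now using the energy law (\ref{equen2}) and integrating the $\nabla \cdot \mathbf{P}$ term by parts, gives $\partial _{t}\mathbf{s}_{n}=\mathrm{c}^{2}\mathbf{\bar{P}}_{n}-\mathbf{\hat{v}}\mathcal{\bar{E}}_{n}+o(1)$, the force contribution being $O(R_{n})$ because it carries the extra factor $\mathbf{x}-\mathbf{\hat{r}}$. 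Since the left-hand sides are derivatives of quantities tending uniformly to zero, integrating in $t$ and using the uniform bounds forces the time integrals of the right-hand sides to vanish, yielding $\mathbf{\bar{J}}_{\infty }=\bar{\rho}_{\infty }\mathbf{\hat{v}}$ and the relativistic momentum relation $\mathbf{\bar{P}}_{\infty }=\mathrm{c}^{-2}\mathcal{\bar{E}}_{\infty }\mathbf{\hat{v}}$. This last identity is the heart of the matter---it is where the field-theoretic relation $\mathbf{P}=\mathrm{c}^{-2}\mathcal{E}\mathbf{v}$ emerges for \emph{general} concentrating solutions and not merely for the Lorentz-boosted ground state---and I expect it to be the main obstacle, precisely because deducing $\partial _{t}\mathbf{s}_{n}\to 0$ from $\mathbf{s}_{n}\to 0$ is not automatic and must be routed through the time-integrated identity together with an equicontinuity argument.

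With these in hand the balance laws close. Integrating (\ref{equen2}) directly over $\Omega _{n}$, replacing $\nabla \varphi $ by its local limit $\nabla \varphi _{\infty }(t)$ via (\ref{Alocr})--(\ref{fiinf}) and using $\mathbf{\bar{J}}_{\infty }=\bar{\rho}_{\infty }\mathbf{\hat{v}}$, produces the work--energy relation $\partial _{t}\mathcal{\bar{E}}_{\infty }=\mathbf{f}_{\infty }\cdot \mathbf{\hat{v}}$, that is (\ref{Einf}), with $\mathbf{f}_{\infty }$ given by (\ref{FLorinf}). As $\partial _{t}\mathcal{\bar{E}}_{n}$ is thereby uniformly bounded and $\mathcal{\bar{E}}_{n}(t_{0})$ converges by (\ref{Econv}), an Arzel\`a--Ascoli argument upgrades the convergence $\mathcal{\bar{E}}_{n}\to \mathcal{\bar{E}}_{\infty }$ to all $t$. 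Integrating the momentum law (\ref{momeq1}) over $\Omega _{n}$, the divergence terms $\nabla \mathcal{L}_{1}$ and $\sum _{j}\partial _{j}(\cdots )$ become boundary integrals killed by (\ref{psiouta}), while the force term tends to $\mathbf{f}_{\infty }$ as above, giving $\partial _{t}\mathbf{\bar{P}}_{\infty }=\mathbf{f}_{\infty }$. Combining this with $\mathbf{\bar{P}}_{\infty }=\mathrm{c}^{-2}\mathcal{\bar{E}}_{\infty }\mathbf{\hat{v}}$ yields exactly (\ref{New2}).

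For the mass formula I set $M=\mathrm{c}^{-2}\mathcal{\bar{E}}_{\infty }$ as in (\ref{EinM}) and combine (\ref{New2}) with (\ref{Einf}). Taking the scalar product of $\partial _{t}(M\mathbf{\hat{v}})=\mathbf{f}_{\infty }$ with $\mathbf{\hat{v}}$ and substituting $\mathrm{c}^{2}\partial _{t}M=\mathbf{f}_{\infty }\cdot \mathbf{\hat{v}}$ eliminates $\mathbf{f}_{\infty }$ and leaves the scalar ODE $(\mathrm{c}^{2}-\mathbf{\hat{v}}^{2})\partial _{t}M=\frac{1}{2}M\,\partial _{t}(\mathbf{\hat{v}}^{2})$, equivalently $\partial _{t}\ln M=-\frac{1}{2}\partial _{t}\ln (\mathrm{c}^{2}-\mathbf{\hat{v}}^{2})$. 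Integrating shows that $M\sqrt{\mathrm{c}^{2}-\mathbf{\hat{v}}^{2}}$ is constant in $t$, i.e.\ $M=\gamma M_{0}$ with $\gamma =(1-\mathbf{\hat{v}}^{2}/\mathrm{c}^{2})^{-1/2}$ and $M_{0}$ a constant, which is (\ref{Massfor}).
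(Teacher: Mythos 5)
Your proposal is correct, and its backbone coincides with the paper's semi-local strategy: integrate the three conservation laws over the moving balls $\Omega _{n}$ via (\ref{eldt}), kill every surface term with (\ref{psiouta}) and every bulk error with (\ref{locpsibound}), (\ref{Abound})--(\ref{Alocr}), extract the dynamics from first moments weighted by $\mathbf{x}-\mathbf{\hat{r}}$, pass to the limit only in time-integrated form, and finish with the same ODE manipulation (your logarithmic-derivative computation is equivalent to the paper's multiplication of (\ref{New2}) by $2M\partial _{t}\mathbf{\hat{r}}$ leading to (\ref{Mconst})). The genuine difference is organizational, and it buys something. The paper never works with $\mathbf{\bar{P}}_{n}=\int_{\Omega _{n}}\mathbf{P}\,\mathrm{d}^{3}x$ as such: it introduces the adjacent ergocenter $\mathbf{r}_{n}$, proves $\mathbf{r}_{n}\rightarrow \mathbf{\hat{r}}$ (Lemma \ref{Lcentersconverge}) and $\left\vert \partial _{t}\mathbf{r}_{n}\right\vert \leq C$ (Lemma \ref{Lestdtr}), converts the momentum balance into an equation for $\mathcal{\bar{E}}_{n}\partial _{t}\mathbf{r}_{n}$ (Theorem \ref{Ldiscrepancy2}, via the identity (\ref{dtgrex})), and must then integrate twice in time, take limits, and differentiate twice to recover (\ref{New2}). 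You instead keep $\mathbf{\bar{P}}_{n}$ and use the vanishing of the energy first moment $\mathbf{s}_{n}$ to obtain $\mathrm{c}^{2}\mathbf{\bar{P}}_{n}-\mathbf{\hat{v}}\mathcal{\bar{E}}_{n}\rightarrow 0$ in integrated form; note that your $\partial _{t}\mathbf{s}_{n}$ identity is exactly the paper's (\ref{dtxmre0}) written for $\mathbf{\hat{r}}$ in place of $\mathbf{r}_{n}$ (indeed $\mathbf{s}_{n}=\mathcal{\bar{E}}_{n}\left( \mathbf{r}_{n}-\mathbf{\hat{r}}\right) $), so the estimates are identical, but the ergocenter, its velocity bound, and the double integration all disappear. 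You also correctly flagged the one dangerous step --- one cannot differentiate $\mathbf{s}_{n}\rightarrow 0$ --- and routed it through the time-integrated identity, which is precisely how the paper handles the analogous passage. Two points you should still write out: (i) $\bar{\rho}_{n}\left( t_{0}\right) $ and $\mathbf{\bar{P}}_{n}\left( t_{0}\right) $ are a priori only bounded, so $\bar{\rho}_{\infty }$ and the initial momentum limit exist only along subsequences; (ii) your Arzel\`{a}--Ascoli step yields subsequential limits of $\mathcal{\bar{E}}_{n}$, and to conclude convergence of the full sequence (as the theorem asserts) the limit must be shown subsequence-independent --- the paper closes this loop by observing that (\ref{Massfor}) forces $\mathcal{\bar{E}}_{\infty }\left( t\right) =\mathrm{c}^{2}\gamma \left( t\right) M_{0}$ with $M_{0}$ determined solely by the normalized value (\ref{Econv}) and the trajectory, hence independent of which subsequences were used for the charge and the momentum.
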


The proof of Theorem \ref{TconcEinstein} is given in Section \ref{Sproof}.

Note that the formula (\ref{EinM}) defines the mass\ $M$\ based on the
requirement that (\ref{New2}) takes the form of the relativistic version of
Newton's equation (\ref{fre1}). Therefore formula (\ref{New2})\ provides an
alternative Newtonian method of deriving Einstein's formula.

\begin{remark}
\label{Rfreemass}The constant $M_{0}$ in (\ref{Massfor}) equals the mass $M$%
\ if $\partial _{t}\mathbf{\hat{r}}=\mathbf{0}$. This allows to interpret $%
M_{0}$ as the rest mass as in (\ref{fre1}). We would like to stress that the
rest mass $M_{0}$\ in our treatment is not a prescribed quantity, but it is
derived in (\ref{Mconst}) as an integral of motion of an equation obtained
in asymptotic limit from the KG\ equation. As any integral of motion it can
take different values for different "trajectories" of the field. Notice that
the rest mass $M_{0}$ can take different values for different rest states
described in Section \ref{Snonlinrest}. The integral of motion $M_{0}$ can
be related to the mass $m_{0}$ of one of resting charges considered in
Section \ref{Suniform} by the identity%
\begin{equation}
M_{0}=m_{0}  \label{Meqm0}
\end{equation}%
if the velocity vanishes on a time interval or asymptotically as $%
t\rightarrow -\infty $ or $t\rightarrow \infty $. If the velocity $\partial
_{t}\mathbf{\hat{r}}$ vanishes just at a time instant $t_{0}$ then it is, of
course, possible to express the value of $M_{0}$ in terms of $\mathcal{\bar{E%
}}_{\infty }=\mathcal{\bar{E}}_{\infty }\left( \psi \right) $ by formulas (%
\ref{EinM}), (\ref{Mconst}), but the corresponding $\psi =\psi \left(
t_{0}\right) $ may have no relation to the rest solutions of the field
equation with a time independent profile $\left\vert \psi \right\vert ^{2}$.
It is also possible that $\partial _{t}\mathbf{\hat{r}}$\ never equals zero,
and in fact this is a general case since all three components of velocity
may vanish simultaneously only in very special situations. Hence, there is a
possibility of localized regimes where the value of the "rest mass" $M_{0}$
may differ from the rest mass of a free charge. In such regimes the value of
the rest mass cannot be derived based on the analysis of the uniform motion
as in Section \ref{Suniform}. This wide variety of possibilities makes even
more remarkable the fact that the inertial mass is well-defined and that
Einstein's formula (\ref{fre3a}) holds even in such general regimes where
the standard analysis based on the Lorentz invariance of the uniform motion
as in Section \ref{Suniform} does not apply.
\end{remark}

\begin{remark}
If the external electric field $\varphi $ is not zero, the system described
by (\ref{paf1}) is not a closed system, and there are principal differences
between closed and non-closed systems. In particular, the total momentum and
the energy of a closed system are preserved and form a 4-vector. For
non-closed systems the center of energy (also known as center of mass or
centroid) and the total energy-momentum are frame dependent and hence are
not 4-vectors, \cite[Sec. 7.1, 7.2]{Moller}, \cite[Sec. 24]{Lanczos VPM}.
Hence, one cannot identify the velocity parameter $\mathbf{v}$ of the
Lorentz group with the velocity of the system.
\end{remark}

\begin{remark}
\label{Runiq}The sequences $a_{n},R_{n},\theta _{n},\varphi _{n},\chi
_{n},\psi _{n}$\ enter the definition of a concentrating solution. But if we
take two different sequences which fit the definition \ for the same
trajectory, we obtain the same equation (\ref{New2})-(\ref{Massfor}) and the
limit energy $E_{\infty }\left( t\right) $. More than that, as long as the
gradient $\nabla \varphi _{\infty }\left( t\right) $ of the limit potential
is given and at a moment of time $t_{0}$ the position and velocity\ $\ 
\mathbf{\hat{r}}\left( t_{0}\right) \mathbf{,}\partial _{t}\mathbf{\hat{r}}%
\left( t_{0}\right) $ and\ the limit restricted energy and charge $E_{\infty
}\left( t_{0}\right) ,\bar{\rho}_{\infty }$ are fixed, then the\ trajectory $%
\mathbf{\hat{r}}\left( t\right) $ and the energy $E_{\infty }\left( t\right) 
$\ are uniquely defined as solutions of equations (\ref{New2})-(\ref{Massfor}%
), and consequently they\ do not depend on the particular sequences. Note
also that if in Definition \ref{Dlocconverge} we would eliminate point (iv),
we could pick a subsequence for which (\ref{Econv}) holds. For any such a
subsequence we obtain (\ref{New2}),\ but $\mathcal{\bar{E}}_{\infty }\left(
t_{0}\right) $ could be different. Taking into account (\ref{Massfor}) and
Corollary \ref{Cuniq0} we see that the choice of different subsequences
leads to multiplication of equations (\ref{New2}), (\ref{Massfor}) by a
constant and corresponds to simultaneous multiplication of the rest mass\ $%
M_{0}$ and the charge $\bar{\rho}_{\infty }$ by the same constant.
\end{remark}

In Section \ref{secAR} we construct a non-trivial example of sequences of
the KG\ equations and their solutions\ which concentrate at rectilinear
accelerating trajectories.

\subsection{Properties of concentrating solutions}

Everywhere in this section we assume that we have a trajectory $\mathbf{\hat{%
r}}(t)$ with its concentrating neighborhoods $\hat{\Omega}_{n}\left( \mathbf{%
\hat{r}},R_{n}\right) $.\ In this section we often use the following
elementary inequalities for the densities\ defined in Section \ref{sreldis}:%
\begin{equation}
\left\vert \mathbf{P}\right\vert \leq \frac{\chi ^{2}}{m\mathrm{c}^{2}}%
\left\vert \partial _{t}\psi +\frac{\mathrm{i}q}{\chi }\varphi \psi
\right\vert \left\vert \nabla \psi \right\vert \leq C\chi ^{2}\left\vert
\partial _{t}\psi \right\vert ^{2}+C\chi ^{2}\left\vert \nabla \psi
\right\vert ^{2}+C\left\vert \psi \right\vert ^{2},  \label{Pless}
\end{equation}%
\begin{equation}
\left\vert \mathbf{J}\right\vert \leq \frac{\chi q}{m}\left\vert \nabla \psi
\right\vert \left\vert \psi \right\vert \leq C\left( \chi \left\vert \nabla
\psi \right\vert +\left\vert \psi \right\vert \right) \left\vert \psi
\right\vert \leq C^{\prime }\chi ^{2}\left\vert \nabla \psi \right\vert
^{2}+C^{\prime }\left\vert \psi \right\vert ^{2},  \label{Jless}
\end{equation}%
\begin{equation}
\left\vert \rho \right\vert \leq \frac{\chi q}{m\mathrm{c}^{2}}\left\vert 
\tilde{\partial}_{t}\psi \right\vert \left\vert \psi \right\vert \leq C\chi
^{2}\left\vert \partial _{t}\psi \right\vert ^{2}+C\left\vert \psi
\right\vert ^{2}.  \label{rholess}
\end{equation}%
Similarly,%
\begin{equation}
\left\vert \mathcal{E}\right\vert +\left\vert \mathcal{L}\right\vert \leq
C\chi ^{2}\left\vert \partial _{t}\psi \right\vert ^{2}+C\chi ^{2}\left\vert
\nabla \psi \right\vert ^{2}+C\chi ^{2}\left\vert G\left( \psi ^{\ast }\psi
\right) \right\vert +C\left\vert \psi \right\vert ^{2}.  \label{Eless}
\end{equation}%
We define the restricted to $\Omega \left( \mathbf{\hat{r}}(t),R_{n}\right) $%
\ charge $\bar{\rho}_{n}$\ by the formula 
\begin{equation}
\bar{\rho}_{n}\left( t\right) =\int_{\Omega \left( \mathbf{\hat{r}}%
(t),R_{n}\right) }\rho _{n}\,\mathrm{d}^{3}x.  \label{locch}
\end{equation}

\begin{lemma}
\label{Lchargeconst}Let solutions $\psi $ of the KG\ equation (\ref{KGex})\
concentrate at\ $\mathbf{\hat{r}}(t)$. Then there exists a number\ $\bar{\rho%
}_{\infty }$ and a subsequence of the solutions $\psi _{n}$ such that 
\begin{equation}
\bar{\rho}_{n}\left( t\right) \rightarrow \bar{\rho}_{\infty }\ \ \text{%
uniformly\ for}\ \ T_{-}\leq t\leq T_{+}.  \label{rhoinf}
\end{equation}%
\ 
\end{lemma}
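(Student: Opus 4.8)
The plan is to show that the time derivative $\partial _{t}\bar{\rho}_{n}\left( t\right) $ reduces to a flux integral over the moving boundary $\partial \Omega _{n}$ that is controlled precisely by the boundary-smallness hypothesis (\ref{psiouta}), and then to upgrade boundedness at a single instant into uniform convergence to a constant. First I would fix an arbitrary $t_{0}\in \left[ T_{-},T_{+}\right] $ and observe that, by the pointwise estimate (\ref{rholess}) together with the energy bound (\ref{locpsibound}) and the equivalence $\chi \sim a_{\mathrm{C}}$ from (\ref{acan1}), the numbers $\bar{\rho}_{n}\left( t_{0}\right) $ form a bounded sequence; hence a subsequence (not relabelled) converges to some number $\bar{\rho}_{\infty }$. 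It then remains to prove that $\bar{\rho}_{n}\left( t\right) -\bar{\rho}_{n}\left( t_{0}\right) \rightarrow 0$ uniformly on $\left[ T_{-},T_{+}\right] $.

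For the key computation I would differentiate the restricted charge using the moving-domain identity (\ref{eldt}) with $f=\rho _{n}$, replace $\partial _{t}\rho _{n}$ by $-\nabla \cdot \mathbf{J}_{n}$ via the continuity equation (\ref{contex}), and apply the divergence theorem to the resulting volume term. This gives
\[
\partial _{t}\bar{\rho}_{n}\left( t\right) =\int_{\partial \Omega _{n}}\left( \rho _{n}\mathbf{\hat{v}}-\mathbf{J}_{n}\right) \cdot \mathbf{\bar{n}}\,\mathrm{d}^{2}\sigma ,
\]
so that the interior dynamics cancels and only the surface flux survives.

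Next I would estimate this flux. Using $\left\vert \mathbf{\hat{v}}\right\vert \leq C$ from (\ref{trbound}), the pointwise bounds (\ref{rholess}) and (\ref{Jless}), and again $\chi ^{2}\leq Ca_{\mathrm{C}}^{2}$, the integrand is dominated by $C(a_{\mathrm{C}}^{2}\left\vert \nabla _{0,\mathbf{x}}\psi \right\vert ^{2}+\left\vert \psi \right\vert ^{2})$ on $\partial \Omega _{n}$. Hypothesis (\ref{psiouta}) then yields $\max_{T_{-}\leq t\leq T_{+}}\left\vert \partial _{t}\bar{\rho}_{n}\left( t\right) \right\vert \rightarrow 0$ as $n\rightarrow \infty $. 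Finally, writing $\bar{\rho}_{n}\left( t\right) =\bar{\rho}_{n}\left( t_{0}\right) +\int_{t_{0}}^{t}\partial _{t}\bar{\rho}_{n}\left( s\right) \,\mathrm{d}s$ and bounding the integral by $\left( T_{+}-T_{-}\right) \max_{s}\left\vert \partial _{t}\bar{\rho}_{n}\left( s\right) \right\vert $ shows $\bar{\rho}_{n}\left( t\right) \rightarrow \bar{\rho}_{\infty }$ uniformly in $t$, which is the assertion.

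The main obstacle, and the only genuinely non-routine step, is the flux representation: one must correctly combine the two sources of time dependence, namely the explicit $\partial _{t}\rho _{n}$ coming from the field dynamics and the motion of the domain $\Omega _{n}\left( t\right) $, so that the bulk contribution vanishes identically by charge conservation and leaves a boundary term that matches (\ref{psiouta}) term by term. The remaining estimates are then immediate consequences of the concentration hypotheses, and the extraction of the subsequence is forced only by the need to fix the value $\bar{\rho}_{\infty }$ at the single instant $t_{0}$.
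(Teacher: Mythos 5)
Your proposal is correct and follows essentially the same route as the paper: boundedness of $\bar{\rho}_{n}(t_{0})$ via (\ref{rholess}) and (\ref{locpsibound}) to extract the subsequence, then the continuity equation (\ref{contex}) combined with the moving-domain identity (\ref{eldt}) to reduce $\partial_{t}\bar{\rho}_{n}$ to the boundary flux $\int_{\partial\Omega_{n}}(\rho_{n}\mathbf{\hat{v}}-\mathbf{J}_{n})\cdot\mathbf{\bar{n}}\,\mathrm{d}^{2}\sigma$, which is killed uniformly in $t$ by (\ref{rholess}), (\ref{Jless}) and (\ref{psiouta}). The paper merely integrates the flux identity in time before estimating rather than bounding $\max_{t}|\partial_{t}\bar{\rho}_{n}|$ first, which is the same argument in a different order.
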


\begin{proof}
\ According to (\ref{rholess}) and (\ref{locpsibound}), for any $t_{0}\in %
\left[ T_{-},T_{+}\right] $\ $\ \bar{\rho}_{n}\left( t_{0}\right) $ is a
bounded sequence, and hence it always contains a converging subsequence. We
pick such a subsequence, denote its limit by $\bar{\rho}_{\infty }$ and
integrate the continuity equation (\ref{contex}): 
\begin{equation}
\partial _{t}\int_{\Omega _{n}}\rho _{n}\,\mathrm{d}^{3}x-\int_{\partial
\Omega _{n}}\mathbf{\hat{v}}\cdot \mathbf{\bar{n}}\rho _{n}\,\mathrm{d}%
^{2}\sigma +\int_{\partial \Omega _{n}}\mathbf{\bar{n}}\cdot \mathbf{J}\,%
\mathrm{d}^{2}\sigma =0.  \label{intcont}
\end{equation}%
Integrating with respect to time from $t_{0}$ to $t$ and using (\ref{rholess}%
), (\ref{Jless})\ and (\ref{psiouta}), we obtain 
\begin{gather*}
\left\vert \bar{\rho}_{n}\left( t\right) -\bar{\rho}_{\infty }\right\vert
\leq \left\vert \bar{\rho}_{n}\left( t\right) -\bar{\rho}_{n}\left(
t_{0}\right) \right\vert +\left\vert \bar{\rho}_{n}\left( t_{0}\right) -\bar{%
\rho}_{\infty }\right\vert \leq \left( T_{+}-T_{-}\right) \left\vert \mathbf{%
\hat{v}}\right\vert \int_{\partial \Omega _{n}}\left\vert \rho
_{n}\right\vert \,\mathrm{d}^{2}\sigma \\
+\left( T_{+}-T_{-}\right) \int_{\partial \Omega _{n}}\left\vert \mathbf{J}%
\right\vert \,\mathrm{d}^{3}x\leq C\int_{\partial \Omega _{n}}\left( \chi
^{2}\left\vert \nabla _{0,x}\psi \right\vert ^{2}+C_{1}^{\prime }\left\vert
\psi \right\vert ^{2}\right) \,\mathrm{d}^{2}\sigma +\left\vert \bar{\rho}%
_{n}\left( t_{0}\right) -\bar{\rho}_{\infty }\right\vert \rightarrow 0.
\end{gather*}%
Therefore (\ref{rhoinf}) holds.
\end{proof}

\begin{theorem}[restricted energy convergence]
\label{Ldiscrepancy1}Let solutions $\psi $ of\ the KG\ equation (\ref{KGex})
concentrate at $\mathbf{\hat{r}}(t)$. Let $\bar{\rho}_{n}\left( t_{0}\right)
\rightarrow \bar{\rho}_{\infty }$. Then the restricted energy $\mathcal{\bar{%
E}}_{n}\left( t\right) $ converges uniformly to the limit restricted energy $%
\mathcal{\bar{E}}_{\infty }$%
\begin{equation}
\mathcal{\bar{E}}_{n}\left( t\right) \rightarrow \mathcal{\bar{E}}_{\infty
}\left( t\right) \text{\ uniformly on\ }\left[ T_{-},T_{+}\right] ,
\label{Endef}
\end{equation}%
\ where 
\begin{equation}
\mathcal{\bar{E}}_{\infty }\left( t\right) =\mathcal{\bar{E}}_{\infty
}\left( t_{0}\right) -\bar{\rho}_{\infty }\int_{t_{0}}^{t}\partial _{t}%
\mathbf{\hat{r}}\cdot \nabla \varphi _{\infty }\,\mathrm{d}t^{\prime }.
\label{Einf}
\end{equation}
\end{theorem}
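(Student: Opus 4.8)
The plan is to differentiate the restricted energy $\mathcal{\bar E}_n(t)$, convert the result into one bulk term plus boundary terms using the energy conservation law (\ref{equen2}) and the transport identity (\ref{eldt}), discard the boundary terms via the confinement hypothesis (\ref{psiouta}), and then integrate in time. First I would apply (\ref{eldt}) with $f=\mathcal{E}_n$ to write $\partial_t\mathcal{\bar E}_n=\int_{\Omega_n}\partial_t\mathcal{E}_n\,\mathrm{d}^3x+\int_{\partial\Omega_n}\mathcal{E}_n\,\mathbf{\hat v}\cdot\mathbf{\bar n}\,\mathrm{d}^2\sigma$, substitute $\partial_t\mathcal{E}_n=-\mathrm{c}^2\nabla\cdot\mathbf{P}-\nabla\varphi\cdot\mathbf{J}$ from (\ref{equen2}), and apply the divergence theorem to the $\nabla\cdot\mathbf{P}$ term. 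This yields
\begin{equation*}
\partial_t\mathcal{\bar E}_n(t)=-\int_{\Omega_n}\nabla\varphi\cdot\mathbf{J}\,\mathrm{d}^3x-\mathrm{c}^2\int_{\partial\Omega_n}\mathbf{P}\cdot\mathbf{\bar n}\,\mathrm{d}^2\sigma+\int_{\partial\Omega_n}\mathcal{E}_n\,\mathbf{\hat v}\cdot\mathbf{\bar n}\,\mathrm{d}^2\sigma .
\end{equation*}

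Next I would show the two surface integrals tend to $0$ uniformly in $t$. The bounds (\ref{Pless}) and (\ref{Eless}) express $\mathrm{c}^2|\mathbf{P}|$ and $|\mathcal{E}_n|$ (up to constants, using $\chi\sim a_{\mathrm C}$) through $a_{\mathrm C}^2|\nabla_{0,\mathbf{x}}\psi|^2+a_{\mathrm C}^2|G(|\psi|^2)|+|\psi|^2$, which is exactly the integrand controlled by (\ref{psiouta}); since $|\mathbf{\hat v}|\le C$ by (\ref{trbound}), both integrals vanish. In the bulk term I would replace $\nabla\varphi$ by its spatially constant local limit $\nabla\varphi_\infty(t)$: the error is bounded by $\max_{\Omega_n}|\nabla\varphi-\nabla\varphi_\infty|\int_{\Omega_n}|\mathbf{J}|\,\mathrm{d}^3x$, whose first factor $\to0$ by (\ref{Alocr}) and whose second factor is bounded by (\ref{Jless}) and (\ref{locpsibound}). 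Thus $\partial_t\mathcal{\bar E}_n(t)=-\nabla\varphi_\infty(t)\cdot\int_{\Omega_n}\mathbf{J}\,\mathrm{d}^3x+o(1)$ uniformly.

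The crux is identifying $\int_{\Omega_n}\mathbf{J}\,\mathrm{d}^3x$ with $\bar\rho_n(t)\,\partial_t\mathbf{\hat r}(t)$ up to negligible terms. Writing $J_j=\nabla(x_j-\hat r_j)\cdot\mathbf{J}$ (legitimate since $\mathbf{\hat r}$ is constant in $\mathbf{x}$), integrating by parts, and using the continuity equation (\ref{contex}) in the form $\nabla\cdot\mathbf{J}=-\partial_t\rho$, I obtain $\int_{\Omega_n}J_j\,\mathrm{d}^3x=\int_{\Omega_n}(x_j-\hat r_j)\partial_t\rho\,\mathrm{d}^3x$ plus a boundary term; the latter carries a factor $|x_j-\hat r_j|\le R_n$ and is killed by (\ref{psiouta}). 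Since $(x_j-\hat r_j)\partial_t\rho=\partial_t[(x_j-\hat r_j)\rho]+\partial_t\hat r_j\,\rho$, one more use of (\ref{eldt}) gives
\begin{equation*}
\int_{\Omega_n}J_j\,\mathrm{d}^3x=\partial_t\hat r_j\,\bar\rho_n(t)+\partial_t\!\int_{\Omega_n}(x_j-\hat r_j)\rho\,\mathrm{d}^3x+o(1).
\end{equation*}
The decisive observation is that the centered first moment is small: $\bigl|\int_{\Omega_n}(x_j-\hat r_j)\rho\,\mathrm{d}^3x\bigr|\le R_n\int_{\Omega_n}|\rho|\,\mathrm{d}^3x\le C R_n$ uniformly in $t$, by (\ref{rholess}) and (\ref{locpsibound}).

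Finally I would integrate $\partial_t\mathcal{\bar E}_n$ from $t_0$ to $t$. The principal term produces $-\int_{t_0}^t\bar\rho_n\,\partial_t\mathbf{\hat r}\cdot\nabla\varphi_\infty\,\mathrm{d}t'$, while the leftover involves $\int_{t_0}^t\nabla\varphi_\infty\cdot\partial_{t'}\bigl[\int_{\Omega_n}(\mathbf{x}-\mathbf{\hat r})\rho\,\mathrm{d}^3x\bigr]\mathrm{d}t'$, which I would integrate by parts in time: its endpoint contributions and the term carrying $\partial_t\nabla\varphi_\infty$ (bounded by (\ref{fihatb})) are each $O(R_n)$ by the moment bound, hence vanish. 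Passing to the limit, $\bar\rho_n(t)\to\bar\rho_\infty$ uniformly (Lemma \ref{Lchargeconst}, using the hypothesis at $t_0$) and $\mathcal{\bar E}_n(t_0)\to\mathcal{\bar E}_\infty(t_0)$ by condition (iv) of Definition \ref{Dlocconverge}, so $\mathcal{\bar E}_n(t)=\mathcal{\bar E}_n(t_0)+\int_{t_0}^t\partial_{t'}\mathcal{\bar E}_n\,\mathrm{d}t'$ converges uniformly to the right-hand side of (\ref{Einf}). I expect the main obstacle to be precisely the control of $\int_{\Omega_n}\mathbf{J}$: both the integration by parts in space (to bring in the continuity equation) and the integration by parts in time (to eliminate the derivative of the first moment) are essential, and the uniformity in $t$ throughout rests on the $\max_t$ built into the concentration estimates.
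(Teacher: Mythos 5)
Your proposal is correct and follows essentially the same route as the paper's proof: energy conservation (\ref{equen2}) plus the transport identity (\ref{eldt}), boundary terms killed by (\ref{psiouta}), the continuity equation multiplied by $\mathbf{x}-\mathbf{\hat{r}}$ (the paper's identity (\ref{qPr})) to identify $\int_{\Omega_n}\mathbf{J}$ with $\bar{\rho}_n\partial_t\mathbf{\hat{r}}$ up to the time derivative of an $O(R_n)$ first moment, and then time integration using (\ref{fihatb}), Lemma \ref{Lchargeconst} and condition (iv) of Definition \ref{Dlocconverge}. The only minor deviations are that you bound the $\nabla(\varphi-\varphi_\infty)\cdot\mathbf{J}$ error directly from the gradient convergence in (\ref{Alocr}) (the paper instead integrates by parts via the continuity equation, using smallness of $\varphi-\varphi_\infty$ itself), and you are in fact slightly more explicit than the paper in performing the time integration by parts that eliminates $\partial_t\int_{\Omega_n}(\mathbf{x}-\mathbf{\hat{r}})\rho\,\mathrm{d}^3x$; both choices are legitimate under the stated hypotheses.
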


\begin{proof}
Integrating (\ref{equen2}) with respect to $x$ and\ $t$ we obtain 
\begin{equation}
\mathcal{\bar{E}}_{n}\left( t\right) -\mathcal{\bar{E}}_{\infty }\left(
t_{0}\right) =-\int_{t_{0}}^{t}\nabla \varphi _{\infty }\cdot \int_{\Omega
_{n}}\mathbf{J}\,\mathrm{d}^{3}x\,\mathrm{d}t^{\prime }+Q_{1}
\label{dtenhat}
\end{equation}%
where $\varphi _{\infty }$ is defined by (\ref{fiinf}),\ and 
\begin{equation*}
Q_{1}=\int_{t_{0}}^{t}\int_{\partial \Omega _{n}}\left( \mathbf{\hat{v}\cdot 
\bar{n}}\mathcal{E}_{n}-\mathrm{c}^{2}\mathbf{\bar{n}}\cdot \mathbf{P}%
\right) \,\mathrm{d}^{2}\sigma \mathrm{d}t^{\prime
}-\int_{t_{0}}^{t}\int_{\Omega _{n}}\nabla \left( \varphi -\varphi _{\infty
}\right) \cdot \mathbf{J}\,\mathrm{d}^{3}x\mathrm{d}t^{\prime }+\mathcal{%
\bar{E}}_{n}\left( t_{0}\right) -\mathcal{\bar{E}}_{\infty }\left(
t_{0}\right) .
\end{equation*}%
We denote $\varphi -\varphi _{\infty }=\varphi _{\mathrm{b}}$,\ and, using
the continuity equation (\ref{contex}),\ we obtain%
\begin{equation*}
\int_{\Omega _{n}}\nabla \varphi _{\mathrm{b}}\cdot \mathbf{J}\,\mathrm{d}%
^{3}x=\partial _{t}\int_{\Omega _{n}}\varphi _{\mathrm{b}}\rho \,\mathrm{d}%
^{3}x+\int_{\partial \Omega _{n}}\left( \mathbf{\bar{n}}\cdot \mathbf{J}-%
\mathbf{\hat{v}\cdot \bar{n}}\rho \right) \varphi _{\mathrm{b}}\,\mathrm{d}%
^{2}\sigma -\int_{\Omega _{n}}\partial _{t}\varphi _{\mathrm{b}}\rho \,%
\mathrm{d}^{3}x.
\end{equation*}%
Therefore, according to (\ref{Alocr}), (\ref{rholess}), (\ref{Jless}), (\ref%
{locpsibound}) and (\ref{psiouta}) 
\begin{gather}
\left\vert \int_{t_{0}}^{t}\int_{\Omega _{n}}\nabla \varphi _{\mathrm{b}%
}\cdot \mathbf{J}\,\mathrm{d}^{3}x\right\vert \leq 2\max_{T_{-}\leq t\leq
T_{+},x\in \Omega _{n}}\left\vert \varphi _{\mathrm{b}}\right\vert
\max_{T_{-}\leq t\leq T_{+}}\int_{\Omega _{n}}\left\vert \rho \right\vert \,%
\mathrm{d}^{3}x  \label{grfij} \\
+C\max_{T_{-}\leq t\leq T_{+}}\int_{\partial \Omega _{n}}\left( \left\vert 
\mathbf{J}\right\vert +\left\vert \rho \right\vert \right) \varphi _{\mathrm{%
b}}\,\mathrm{d}^{2}\sigma +\max_{T_{-}\leq t\leq T_{+},x\in \Omega
_{n}}\left\vert \partial _{t}\varphi _{\mathrm{b}}\right\vert
\max_{T_{-}\leq t\leq T_{+}}\int_{\Omega _{n}}\left\vert \rho \right\vert \,%
\mathrm{d}^{3}x\rightarrow 0.  \notag
\end{gather}%
Similarly, we estimate the first term in $Q_{1}$ 
\begin{equation*}
\left\vert \int_{t_{0}}^{t}\int_{\partial \Omega _{n}}\mathbf{\hat{v}\cdot 
\bar{n}}\mathcal{E}_{n}-\mathrm{c}^{2}\mathbf{\bar{n}}\cdot \mathbf{P}\,%
\mathrm{d}^{2}\sigma \mathrm{d}t^{\prime }\right\vert \leq C\max_{T_{-}\leq
t\leq T_{+}}\int_{\partial \Omega _{n}}\left( \left\vert \mathcal{E}%
_{n}\right\vert +\left\vert \mathbf{P}\right\vert \right) \,\mathrm{d}%
^{2}\sigma \mathrm{d}t^{\prime }
\end{equation*}%
and conclude that $\left\vert Q_{1}\right\vert \rightarrow 0.$

Multiplication of the continuity equation (\ref{contex})\ by the vector $%
\mathbf{x}-\mathbf{\hat{r}}$\ produces the following expression for $\mathbf{%
J}$: 
\begin{equation}
\partial _{t}\left( \rho \left( \mathbf{x}-\mathbf{\hat{r}}\right) \right)
+\partial _{t}\mathbf{\hat{r}}\rho +\tsum\nolimits_{j}\nabla _{j}\left(
\left( \mathbf{x}-\mathbf{\hat{r}}\right) \mathbf{J}_{j}\right) =\mathbf{J.}
\label{qPr}
\end{equation}%
Integrating we see that 
\begin{equation*}
\int_{\Omega _{n}}\mathbf{J}\,\mathrm{d}^{3}x=\mathbf{\hat{v}}\bar{\rho}%
+\partial _{t}\int_{\Omega _{n}}\left( \rho \left( \mathbf{x}-\mathbf{\hat{r}%
}\right) \right) \,\mathrm{d}^{3}x+\int_{\partial \Omega _{n}}\left( \mathbf{%
x}-\mathbf{\hat{r}}\right) \left( \mathbf{\bar{n}}\cdot \mathbf{J-\hat{v}%
\cdot \bar{n}}\rho \right) \,\mathrm{d}^{2}\sigma
\end{equation*}%
and\ 
\begin{equation}
\int_{t_{0}}^{t}\nabla \varphi _{\infty }\cdot \int_{\Omega _{n}}\mathbf{J}\,%
\mathrm{d}^{3}x\mathrm{d}t^{\prime }=\int_{t_{0}}^{t}\nabla \varphi _{\infty
}\cdot \partial _{t}\mathbf{\hat{r}}\bar{\rho}_{\infty }\mathrm{d}t^{\prime
}+Q_{10}  \label{grfiinf}
\end{equation}%
where%
\begin{equation*}
Q_{10}=\left. \int_{\Omega _{n}}\left( \rho \left( \mathbf{x}-\mathbf{\hat{r}%
}\right) \right) \,\mathrm{d}^{3}x\right\vert
_{t_{0}}^{t}+\int_{t_{0}}^{t}\int_{\partial \Omega _{n}}\left( \mathbf{x}-%
\mathbf{\hat{r}}\right) \left( \mathbf{\bar{n}}\cdot \mathbf{J-\hat{v}\cdot 
\bar{n}}\rho \right) \,\mathrm{d}^{2}\sigma dt^{\prime
}+\int_{t_{0}}^{t}\nabla \varphi _{\infty }\cdot \partial _{t}\mathbf{\hat{r}%
}\left( \bar{\rho}-\bar{\rho}_{\infty }\right) \mathrm{d}t^{\prime }
\end{equation*}%
Combining with (\ref{dtenhat}), we obtain 
\begin{equation}
\mathcal{\bar{E}}_{n}\left( t\right) -\mathcal{\bar{E}}_{\infty }\left(
t_{0}\right) =-\int_{t_{0}}^{t}\nabla \varphi _{\infty }\cdot \partial _{t}%
\mathbf{\hat{r}}\left( t\right) \bar{\rho}_{\infty }dt^{\prime }+Q_{1}-Q_{10}
\label{dtEQ1}
\end{equation}%
Using (\ref{fihatb}), (\ref{rholess}), (\ref{Jless}), (\ref{trbound}), (\ref%
{locpsibound}), we conclude that 
\begin{gather}
\left\vert Q_{10}\right\vert \leq 2R_{n}\sup_{T_{-}\leq t\leq
T_{+}}\int_{\Omega _{n}}\left\vert \rho \right\vert \,\mathrm{d}^{3}x+\left(
T_{+}-T_{-}\right) R_{n}\sup_{T_{-}\leq t\leq T_{+}}\left\vert \mathbf{\hat{v%
}}\right\vert \int_{\partial \Omega _{n}}\left\vert \rho \right\vert \,%
\mathrm{d}^{2}\sigma  \label{Q10to0} \\
+\left( T_{+}-T_{-}\right) R_{n}\int_{\partial \Omega _{n}}\left\vert 
\mathbf{J}\right\vert \,\mathrm{d}^{2}\sigma +\left( T_{+}-T_{-}\right)
\sup_{T_{-}\leq t\leq T_{+}}\left\vert \nabla \varphi _{\infty }\right\vert
\cdot \mathbf{\hat{v}}\left\vert \bar{\rho}-\bar{\rho}_{\infty }\right\vert
\rightarrow 0.  \notag
\end{gather}%
We obtain (\ref{Endef}) and (\ref{Einf}) from (\ref{dtEQ1}).
\end{proof}

\begin{remark}
Equations (\ref{fre1}) for the space components of the relativistic 4-vector
are usually complemented (see, for instance, \cite{Barut}, \cite{Pauli RT}, 
\cite{Moller}) with the time component 
\begin{equation}
\frac{\mathrm{d}}{\mathrm{d}t}\left( M\mathrm{c}^{2}\right) =\mathbf{f}\cdot 
\mathbf{v.}  \label{ftime}
\end{equation}%
Formula (\ref{Einf}), obviously, has the form of integrated equation (\ref%
{ftime}).
\end{remark}

We define the adjacent ergocenter $\mathbf{r}_{n}$ by the formula 
\begin{equation}
\mathbf{r}_{n}\left( t\right) =\frac{1}{\mathcal{\bar{E}}_{n}}\int_{\Omega
\left( \mathbf{\hat{r}}(t),R_{n}\right) }\mathbf{x}\mathcal{E}_{n}\,\mathrm{d%
}^{3}x.  \label{rnom}
\end{equation}

\begin{lemma}
\label{Lcentersconverge}Let solutions of\ the KG\ equation (\ref{KGex})
concentrate at\ $\mathbf{\hat{r}}(t)$. Then the adjacent ergocenters $%
r_{n}\left( t\right) $ of the solutions converge to\ $\mathbf{\hat{r}}(t)$\
uniformly on the time interval $\left[ T_{-},T_{+}\right] $.
\end{lemma}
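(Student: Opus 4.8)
The plan is to exploit the fact that the ergocenter (\ref{rnom}) is an energy-weighted average of positions taken over a ball of radius $R_n$ that shrinks to zero, so that its deviation from $\mathbf{\hat{r}}(t)$ is controlled by $R_n$ times the ratio of the total energy mass in $\Omega_n$ to the restricted energy $\mathcal{\bar{E}}_n$. First I would record the algebraic identity that, because $\mathcal{\bar{E}}_n=\int_{\Omega_n}\mathcal{E}_n\,\mathrm{d}^3x$, one may subtract the (in $\mathbf{x}$ constant) vector $\mathbf{\hat{r}}(t)$ inside the integral:
\begin{equation*}
\mathbf{r}_n(t)-\mathbf{\hat{r}}(t)=\frac{1}{\mathcal{\bar{E}}_n}\int_{\Omega\left(\mathbf{\hat{r}}(t),R_n\right)}\left(\mathbf{x}-\mathbf{\hat{r}}(t)\right)\mathcal{E}_n\,\mathrm{d}^3x.
\end{equation*}
Since every $\mathbf{x}$ in the domain of integration satisfies $\left|\mathbf{x}-\mathbf{\hat{r}}(t)\right|\leq R_n$, this yields the pointwise-in-$t$ bound
\begin{equation*}
\left|\mathbf{r}_n(t)-\mathbf{\hat{r}}(t)\right|\leq\frac{R_n}{\mathcal{\bar{E}}_n(t)}\int_{\Omega_n}\left|\mathcal{E}_n\right|\,\mathrm{d}^3x.
\end{equation*}

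Next I would supply the two ingredients that force the right-hand side to zero uniformly. The denominator is bounded below: by hypothesis (iii) of Definition \ref{Dlocconverge}, that is (\ref{Egrc}), we have $\mathcal{\bar{E}}_n(t)\geq c_0>0$ for $n\geq n_0$ and all $t\in[T_-,T_+]$. For the numerator I would bound the $L^1$ norm of the energy density uniformly in $t$ and $n$ by combining the pointwise estimate (\ref{Eless}) with the uniform integral bound (\ref{locpsibound}). Concretely, (\ref{Eless}) dominates $\left|\mathcal{E}_n\right|$ by $C\chi^2\left|\partial_t\psi\right|^2+C\chi^2\left|\nabla\psi\right|^2+C\chi^2\left|G(\psi^\ast\psi)\right|+C\left|\psi\right|^2$, where the external-potential contributions have already been absorbed using the uniform bound (\ref{Abound}) on $\varphi_n$; and since $a_{\mathrm{C}}=\chi/(m\mathrm{c})$ is comparable to $\chi$ by (\ref{acan1}) while $\left|\nabla_{0,\mathbf{x}}\psi\right|^2=\mathrm{c}^{-2}\left|\partial_t\psi\right|^2+\left|\nabla\psi\right|^2$, the integral in (\ref{locpsibound}) controls exactly this combination. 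Hence $\int_{\Omega_n}\left|\mathcal{E}_n\right|\,\mathrm{d}^3x\leq C$ uniformly in $t$ and $n$.

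Putting the pieces together gives $\left|\mathbf{r}_n(t)-\mathbf{\hat{r}}(t)\right|\leq CR_n/c_0$ for all $t\in[T_-,T_+]$ and $n\geq n_0$, and since $R_n\to 0$ by the contraction condition (\ref{Rnto0}), the ergocenters converge to $\mathbf{\hat{r}}(t)$ uniformly on $[T_-,T_+]$. I do not expect a genuine obstacle here; the only point demanding care is the uniform $L^1$ bound on $\mathcal{E}_n$, where one must remember that the energy density (\ref{emtn3}) is written in terms of the covariant derivative $\tilde{\partial}_t$, so the potential terms must first be reduced to ordinary derivatives plus a $\left|\psi\right|^2$ term via (\ref{Abound}) before (\ref{locpsibound}) applies. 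Once that bookkeeping is done the estimate is immediate, holds for the full sequence with no passage to a subsequence, and delivers the claimed uniform convergence.
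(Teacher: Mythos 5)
Your proof is correct and follows essentially the same route as the paper's own: both express $\mathbf{r}_n-\mathbf{\hat{r}}$ as the $\mathbf{\hat{r}}$-centered first moment of the energy density, bound it by $R_n\int_{\Omega_n}|\mathcal{E}_n|\,\mathrm{d}^3x/\mathcal{\bar{E}}_n$, and invoke (\ref{Eless}), (\ref{locpsibound}) for the numerator, (\ref{Egrc}) for the denominator, and (\ref{Rnto0}) to conclude. Your extra bookkeeping on reducing the covariant derivative $\tilde{\partial}_t$ to ordinary derivatives via (\ref{Abound}) is exactly what the paper's citation of (\ref{Eless}) encapsulates, so there is nothing missing.
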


\begin{proof}
\ We infer from (\ref{Endef0}) and (\ref{rnom}) that 
\begin{equation}
\frac{1}{\mathcal{\bar{E}}_{n}}\int_{\Omega _{n}}\left( \mathbf{x}-\mathbf{r}%
_{n}\right) \mathcal{E}_{n}\,\mathrm{d}^{3}x=\mathbf{r}_{n}-\mathbf{r}_{n}%
\frac{1}{\mathcal{\bar{E}}_{n}}\int_{\Omega _{n}}\mathcal{E}_{n}\,\mathrm{d}%
^{3}x=0.  \label{xrn1}
\end{equation}%
From\ (\ref{locpsibound}) we obtain 
\begin{equation}
|\int_{\Omega _{n}}\left( \mathbf{x}-\mathbf{\hat{r}}\right) \mathcal{E}%
_{n}\,\mathrm{d}^{3}x|\leq R_{n}\int_{\Omega _{n}}|\mathcal{E}_{n}|\,\mathrm{%
d}^{3}x\leq CR_{n}\rightarrow 0.
\end{equation}%
Using (\ref{Egrc}) we conclude that 
\begin{equation}
\left\vert \mathbf{\hat{r}}-\mathbf{r}_{n}\right\vert =\frac{1}{\mathcal{%
\bar{E}}_{n}}\left\vert \int_{\Omega _{n}}\left( \mathbf{x}-\mathbf{r}%
_{n}\right) -\left( \mathbf{x}-\mathbf{\hat{r}}\right) \mathcal{E}_{n}\,%
\mathrm{d}^{3}x\right\vert \leq C_{1}R_{n}\rightarrow 0.  \label{rminrn}
\end{equation}
\end{proof}

\begin{lemma}
\label{Lestdtr}Let solutions $\psi =\psi _{n}$ of\ the KG\ equation (\ref%
{KGex}) concentrate at $\mathbf{\hat{r}}(t)$. Then%
\begin{equation}
\left\vert \mathbf{v}_{n}\right\vert =\left\vert \partial _{t}\mathbf{r}%
_{n}\right\vert \leq C_{4}  \label{vleC}
\end{equation}%
and for any $t_{0}$ there exists a subsequence such that $\mathbf{v}%
_{n}\left( t_{0}\right) \ $converges.
\end{lemma}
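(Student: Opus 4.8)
The plan is to differentiate the defining relation (\ref{rnom}) for the adjacent ergocenter and to bound the outcome using the energy--momentum balance together with the a priori bounds built into the notion of concentration. Since $\psi =\psi _{n}\in C^{2}$ and $\varphi _{n}\in C^{2}$, and since $\bar{\mathcal{E}}_{n}\geq c_{0}>0$ for $n\geq n_{0}$ by (\ref{Egrc}), the map $\mathbf{r}_{n}(t)$ is continuously differentiable. Writing (\ref{rnom}) as $\bar{\mathcal{E}}_{n}\mathbf{r}_{n}=\int_{\Omega _{n}}\mathbf{x}\,\mathcal{E}_{n}\,\mathrm{d}^{3}x$ and differentiating in $t$ gives
\[
\bar{\mathcal{E}}_{n}\,\partial _{t}\mathbf{r}_{n}=\partial _{t}\int_{\Omega _{n}}\mathbf{x}\,\mathcal{E}_{n}\,\mathrm{d}^{3}x-\mathbf{r}_{n}\,\partial _{t}\bar{\mathcal{E}}_{n}.
\]

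First I would evaluate both time derivatives on the right using the moving-domain identity (\ref{eldt}) and the energy conservation law (\ref{equen2}). Applying (\ref{eldt}) to $f=\mathbf{x}\,\mathcal{E}_{n}$ and to $f=\mathcal{E}_{n}$ converts each $\partial _{t}$ into an interior integral of $\partial _{t}\mathcal{E}_{n}$ plus a flux over $\partial \Omega _{n}$; substituting $\partial _{t}\mathcal{E}_{n}=-\mathrm{c}^{2}\nabla \cdot \mathbf{P}-\nabla \varphi \cdot \mathbf{J}$ and integrating the $\nabla \cdot \mathbf{P}$ term by parts componentwise (the gradient of the $i$-th coordinate being the $i$-th unit vector) produces the leading interior term $\mathrm{c}^{2}\int_{\Omega _{n}}\mathbf{P}\,\mathrm{d}^{3}x$. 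The interior $\nabla \varphi \cdot \mathbf{J}$ contributions from the two derivatives combine, and after this cancellation the identity takes the form
\[
\bar{\mathcal{E}}_{n}\,\partial _{t}\mathbf{r}_{n}=\mathrm{c}^{2}\int_{\Omega _{n}}\mathbf{P}\,\mathrm{d}^{3}x-\int_{\Omega _{n}}\left( \mathbf{x}-\mathbf{r}_{n}\right) \left( \nabla \varphi \cdot \mathbf{J}\right) \mathrm{d}^{3}x+B_{n},
\]
where $B_{n}$ collects the surface integrals over $\partial \Omega _{n}$, namely the Leibniz fluxes of $\mathcal{E}_{n}$ and the divergence-theorem flux of $\mathbf{P}$ (each weighted by bounded factors $\mathbf{x}$, $\mathbf{\hat{v}}$, $\mathbf{r}_{n}$).

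The bounds then follow from quantities already controlled in Definition \ref{Dlocconverge}. The density estimates (\ref{Pless}), (\ref{Jless}) together with (\ref{locpsibound}) and the comparability $\chi \sim a_{\mathrm{C}}$ from (\ref{acan1}) give $\int_{\Omega _{n}}\left( \left\vert \mathbf{P}\right\vert +\left\vert \mathbf{J}\right\vert \right) \mathrm{d}^{3}x\leq C$ uniformly in $n$ and $t$, so the first interior term is bounded. On $\Omega _{n}$ one has $\left\vert \mathbf{x}-\mathbf{r}_{n}\right\vert \leq R_{n}+\left\vert \mathbf{\hat{r}}-\mathbf{r}_{n}\right\vert \leq C$ by Lemma \ref{Lcentersconverge} and $\left\vert \nabla \varphi \right\vert \leq C$ by (\ref{Abound}), so the second interior term is bounded as well. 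For $B_{n}$, the density bounds (\ref{Pless}), (\ref{Eless}) reduce its boundary integrands to the integrand of (\ref{psiouta}), whence $B_{n}\rightarrow 0$ once we also use $\left\vert \mathbf{\hat{v}}\right\vert \leq C$ from (\ref{trbound}) and $\left\vert \mathbf{r}_{n}\right\vert \leq C$. Dividing by $\bar{\mathcal{E}}_{n}\geq c_{0}>0$ yields $\left\vert \partial _{t}\mathbf{r}_{n}\right\vert \leq C_{4}$ for $n\geq n_{0}$, and enlarging $C_{4}$ to absorb the finitely many remaining indices establishes (\ref{vleC}).

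Finally, for the convergence claim, the estimate (\ref{vleC}) shows that for each fixed $t_{0}$ the vectors $\mathbf{v}_{n}\left( t_{0}\right) =\partial _{t}\mathbf{r}_{n}\left( t_{0}\right) $ form a bounded sequence in $\mathbb{R}^{3}$, so by the Bolzano--Weierstrass theorem a subsequence converges. I expect the only delicate point to be the bookkeeping of the two families of surface integrals in $B_{n}$ (those generated by the moving domain and those from the integration by parts) and verifying that each is of the type annihilated by (\ref{psiouta}); the physically meaningful interior momentum term $\mathrm{c}^{2}\int_{\Omega _{n}}\mathbf{P}$ is harmless once (\ref{Egrc}) permits division by $\bar{\mathcal{E}}_{n}$.
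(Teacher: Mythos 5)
Your proposal is correct and takes essentially the same route as the paper: differentiating the ergocenter identity $\mathcal{\bar{E}}_{n}\mathbf{r}_{n}=\int_{\Omega _{n}}\mathbf{x}\,\mathcal{E}_{n}\,\mathrm{d}^{3}x$ via (\ref{eldt}) is algebraically equivalent to the paper's step of multiplying (\ref{equen2}) by $\left( \mathbf{x}-\mathbf{r}\right) $ and using that $\int_{\Omega _{n}}\left( \mathbf{x}-\mathbf{r}_{n}\right) \mathcal{E}_{n}\,\mathrm{d}^{3}x=0$, and both yield the same identity $\mathcal{\bar{E}}_{n}\partial _{t}\mathbf{r}_{n}=\mathrm{c}^{2}\int_{\Omega _{n}}\mathbf{P}\,\mathrm{d}^{3}x-\int_{\Omega _{n}}\left( \mathbf{x}-\mathbf{r}_{n}\right) \nabla \varphi \cdot \mathbf{J}\,\mathrm{d}^{3}x$ plus boundary terms. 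Your subsequent estimates (interior terms via (\ref{Pless}), (\ref{Jless}), (\ref{Abound}), (\ref{locpsibound}); boundary terms via (\ref{psiouta}); division by $\mathcal{\bar{E}}_{n}\geq c_{0}$ from (\ref{Egrc}); Bolzano--Weierstrass for the subsequence) are exactly those of the paper's proof.
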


\begin{proof}
Multiplying (\ref{equen2}) by $\left( \mathbf{x}-\mathbf{r}\right) $, we
obtain%
\begin{equation}
\partial _{t}\left( \left( \mathbf{x}-\mathbf{r}\right) \mathcal{E}\right) +%
\mathcal{E}\partial _{t}\mathbf{r}=-\left( \mathbf{x}-\mathbf{r}\right) 
\mathrm{c}^{2}\nabla \cdot \mathbf{P-}\left( \mathbf{x}-\mathbf{r}\right)
\nabla \varphi \cdot \mathbf{J.}  \label{dtxmre0}
\end{equation}%
Integration over $\Omega _{n}$ yields 
\begin{equation*}
\int_{\Omega _{n}}\partial _{t}\left( \left( \mathbf{x}-\mathbf{r}\right) 
\mathcal{E}\right) \,\mathrm{d}^{3}x+\partial _{t}\mathbf{r}\int_{\Omega
_{n}}\mathcal{E}\,\mathrm{d}^{3}x=-\int_{\Omega _{n}}\left( \mathbf{x}-%
\mathbf{r}\right) \mathrm{c}^{2}\nabla \cdot \mathbf{P}\,\mathrm{d}^{3}x%
\mathbf{-}\int_{\Omega _{n}}\left( \mathbf{x}-\mathbf{r}\right) \nabla
\varphi \cdot \mathbf{J}\,\mathrm{d}^{3}x.
\end{equation*}%
From the definition of the ergocenter $\mathbf{r}$\ we infer that%
\begin{equation}
\int_{\Omega _{n}}\partial _{t}\left( \left( \mathbf{x}-\mathbf{r}\right) 
\mathcal{E}\right) \,\mathrm{d}^{3}x+\int_{\partial \Omega _{n}}\left( 
\mathbf{x}-\mathbf{r}\right) \mathbf{\hat{v}}\cdot \mathbf{\bar{n}}\mathcal{E%
}\,\mathrm{d}^{2}\sigma =0.  \label{dtxmre}
\end{equation}%
Therefore 
\begin{equation*}
\partial _{t}\mathbf{r}\mathcal{E}_{n}=\int_{\partial \Omega _{n}}\left( 
\mathbf{x}-\mathbf{r}\right) \mathbf{\hat{v}}\cdot \mathbf{\bar{n}}\mathcal{E%
}\,\mathrm{d}^{2}\sigma -\mathrm{c}^{2}\int_{\Omega _{n}}\left( \mathbf{x}-%
\mathbf{r}\right) \nabla \cdot \mathbf{P}\,\mathrm{d}^{3}x-\int_{\Omega
_{n}}\left( \mathbf{x}-\mathbf{r}\right) \nabla \varphi \cdot \mathbf{J}\,%
\mathrm{d}^{3}x.
\end{equation*}%
Using (\ref{Abound}), (\ref{Egrc}) and (\ref{rminrn})\ we obtain the
inequality 
\begin{equation*}
\left\vert \partial _{t}\mathbf{r}\right\vert \leq C\int_{\partial \Omega
_{n}}\left( \left\vert \mathcal{E}\right\vert +\left\vert \mathbf{P}%
\right\vert \right) \,\mathrm{d}^{2}\sigma +\int_{\Omega _{n}}\left(
\left\vert \mathbf{J}\right\vert +\left\vert \mathbf{P}\right\vert \right) \,%
\mathrm{d}^{3}x
\end{equation*}%
yielding (\ref{vleC}).
\end{proof}

\subsection{ Proof of Theorem \protect\ref{TconcEinstein}\label{Sproof}}

We prove in this section that if solutions to the KG equation concentrate at
a trajectory $\mathbf{\hat{r}}\left( t\right) $ then the limit restricted
energy and the trajectory must satisfy Einstein's formula and the
relativistic version of Newton's law. The proof is based on two facts.
First, according to Lemma \ref{Lcentersconverge} the trajectory $\mathbf{%
\hat{r}}\left( t\right) $ is the limit of adjacent ergocenters $\mathbf{r}%
_{n}\left( t\right) $. Second, the adjacent ergocenters satisfy equations
(see Theorem \ref{Ldiscrepancy2}) which yield in the limit the relativistic
Newton's law and Einstein's formula.

\begin{theorem}
\label{Ldiscrepancy2}Let solutions $\psi _{n}$ of the KG\ equation (\ref%
{KGex}) concentrate at $\mathbf{\hat{r}}(t)$, let $\mathbf{r}\left( t\right)
=\mathbf{r}_{n}\left( t\right) $ be adjacent ergocenters defined by (\ref%
{rnom}). Then for any $t_{0},t\in \left[ T_{-},T_{+}\right] $ 
\begin{equation}
\frac{1}{\mathrm{c}^{2}}\mathcal{\bar{E}}_{n}\left( t\right) \partial _{t}%
\mathbf{r}\left( t\right) -\frac{1}{\mathrm{c}^{2}}\mathcal{\bar{E}}%
_{n}\left( t_{0}\right) \partial _{t}\mathbf{r}\left( t_{0}\right)
=-\int_{t_{0}}^{t}\bar{\rho}_{\infty }\nabla \varphi _{\infty }dt^{\prime }+%
\mathbf{\delta }_{f}\   \label{dtern1a}
\end{equation}%
where$\ \mathbf{\delta }_{f}\rightarrow 0$ uniformly on $\left[ T_{-},T_{+}%
\right] $.
\end{theorem}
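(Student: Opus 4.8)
The plan is to establish two facts and then combine them. The first is a kinematic identity relating the restricted energy, the ergocenter velocity, and the restricted momentum $\mathbf{\bar{P}}_n(t)=\int_{\Omega _n}\mathbf{P}\,\mathrm{d}^3x$, namely
\[
\frac{1}{\mathrm{c}^2}\mathcal{\bar{E}}_n(t)\,\partial _t\mathbf{r}(t)=\mathbf{\bar{P}}_n(t)+\mathbf{e}_n(t),\qquad \mathbf{e}_n\rightarrow 0\text{ uniformly},
\]
which is the restricted-field analogue of the relativistic relation $\mathbf{P}=\mathrm{c}^{-2}E\mathbf{v}$. The second is a momentum balance showing that $\partial _t\mathbf{\bar{P}}_n$ converges uniformly to the limit force $-\bar{\rho}_\infty \nabla \varphi _\infty$. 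Granting both, I would evaluate the first identity at $t$ and at $t_0$, subtract, and substitute the time-integrated momentum balance; the error terms then assemble into the single discrepancy $\mathbf{\delta }_f$, which tends to $0$ uniformly on $\left[ T_-,T_+\right]$.

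For the first identity I would start exactly as in the proof of Lemma \ref{Lestdtr}: multiply the energy conservation law (\ref{equen2}) by $(\mathbf{x}-\mathbf{r})$ and integrate over $\Omega _n$. The defining property (\ref{xrn1}) of the ergocenter kills the volume time-derivative, leaving the expression for $\partial _t\mathbf{r}\,\mathcal{\bar{E}}_n$ obtained there. The new ingredient is to integrate the term $-\mathrm{c}^2\int_{\Omega _n}(\mathbf{x}-\mathbf{r})\nabla \cdot \mathbf{P}\,\mathrm{d}^3x$ by parts; since $\partial _j(x_i-r_i)=\delta _{ij}$, this produces the bulk term $\mathrm{c}^2\int_{\Omega _n}\mathbf{P}\,\mathrm{d}^3x$ plus a boundary integral. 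Dividing by $\mathrm{c}^2$ gives the identity above, with $\mathbf{e}_n$ collecting three error terms: the two surface integrals carrying the weight $(\mathbf{x}-\mathbf{r})=O(R_n)$ against $\mathcal{E}$ and $\mathbf{P}$, and the volume term $\mathrm{c}^{-2}\int_{\Omega _n}(\mathbf{x}-\mathbf{r})\nabla \varphi \cdot \mathbf{J}\,\mathrm{d}^3x$. All three vanish uniformly: the surface terms by (\ref{psiouta}) together with (\ref{Eless}), (\ref{Pless}) (using $\chi \sim a_{\mathrm{C}}$ from (\ref{acan1})), and the last volume term because $|\mathbf{x}-\mathbf{r}|\leq CR_n$ by (\ref{rminrn}), $|\nabla \varphi |\leq C$ by (\ref{Abound}), and $\int_{\Omega _n}|\mathbf{J}|\leq C$ by (\ref{Jless}), (\ref{locpsibound}).

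For the momentum balance I would integrate the momentum conservation law (\ref{momeq1}) over $\Omega _n$, handling the moving domain with the identity (\ref{eldt}). This yields $\partial _t\mathbf{\bar{P}}_n=\int_{\Omega _n}\mathbf{F}\,\mathrm{d}^3x$ plus terms that become surface integrals: the convective term $\int_{\partial \Omega _n}\mathbf{P}(\mathbf{\hat{v}}\cdot \mathbf{\bar{n}})\,\mathrm{d}^2\sigma$, the gradient term $\int_{\partial \Omega _n}\mathcal{L}_1\mathbf{\bar{n}}\,\mathrm{d}^2\sigma$, and the stress term $\tfrac{\chi ^2}{2m}\int_{\partial \Omega _n}\sum_j n_j(\partial _j\psi \nabla \psi ^{\ast }+\partial _j\psi ^{\ast }\nabla \psi )\,\mathrm{d}^2\sigma$. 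Each is controlled by $\int_{\partial \Omega _n}(|\mathbf{P}|+|\mathcal{L}_1|+\chi ^2|\nabla \psi |^2)\,\mathrm{d}^2\sigma$ and hence vanishes uniformly by (\ref{psiouta}). For the force term I would use $\mathbf{F}=-\rho \nabla \varphi$ and split $\nabla \varphi =\nabla \varphi _\infty +\nabla (\varphi -\varphi _\infty )$: the first piece gives $-\nabla \varphi _\infty \,\bar{\rho}_n(t)$, which converges uniformly to $-\bar{\rho}_\infty \nabla \varphi _\infty$ by Lemma \ref{Lchargeconst}, while the remainder is bounded by $\max_{\Omega _n}|\nabla (\varphi -\varphi _\infty )|\int_{\Omega _n}|\rho |\,\mathrm{d}^3x$ and vanishes by (\ref{Alocr}) and (\ref{rholess}), (\ref{locpsibound}). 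Thus $\partial _t\mathbf{\bar{P}}_n=-\bar{\rho}_\infty \nabla \varphi _\infty +\text{(uniformly small)}$, and integrating from $t_0$ to $t$ over the fixed interval preserves smallness. Combining the two steps gives (\ref{dtern1a}) with $\mathbf{\delta }_f=\mathbf{e}_n(t)-\mathbf{e}_n(t_0)+\int_{t_0}^t(\partial _t\mathbf{\bar{P}}_n+\bar{\rho}_\infty \nabla \varphi _\infty )\,\mathrm{d}t^{\prime }\rightarrow 0$ uniformly. I expect the main obstacle to be the first identity rather than the balance: one must ensure that after multiplying by $(\mathbf{x}-\mathbf{r})$ and integrating by parts, the genuinely bulk quantity $\int_{\Omega _n}\mathbf{P}$ survives while every surface and cross term is provably $o(1)$ uniformly in $t$ — this is precisely where the confinement hypothesis (\ref{psiouta}), the ergocenter identity (\ref{xrn1}), and the comparability $\chi \sim a_{\mathrm{C}}$ must all be used together.
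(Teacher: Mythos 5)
Your proposal is correct and is essentially the paper's own argument, repackaged: your kinematic identity $\frac{1}{\mathrm{c}^{2}}\mathcal{\bar{E}}_{n}\partial _{t}\mathbf{r}=\mathbf{\bar{P}}_{n}+\mathbf{e}_{n}$ is exactly the paper's use of (\ref{dtgrex}) in (\ref{PQ02}) and (\ref{PQ012}) (with $Q_{02}$, $Q_{012}$ playing the role of $\mathbf{e}_{n}$), and your momentum balance is the paper's integrated momentum law (\ref{PQF}) with the same surface estimates via (\ref{psiouta}) and the same force splitting (\ref{FQ03}) justified by Lemma \ref{Lchargeconst} and (\ref{Alocr}). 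The only difference is bookkeeping for the time-dependent domain: you differentiate $\mathbf{\bar{P}}_{n}$ along the moving ball using (\ref{eldt}), picking up a convective surface term, whereas the paper integrates over the fixed snapshot $\Omega \left( \mathbf{\hat{r}}(t),R_{n}\right) $ and then corrects the $t_{0}$ endpoint by the term $Q_{01}$ of (\ref{Q01}) --- both corrections vanish by the same bounds (\ref{Pless}), (\ref{psiouta}).
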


\begin{proof}
Integrating the momentum conservation equation (\ref{momeq1}) with respect
to time and over $\Omega \left( \mathbf{\hat{r}}(t),R_{n}\right) =\Omega
_{n} $ we obtain 
\begin{gather*}
\int_{\Omega _{n}}\mathbf{P}\left( t\right) \,\mathrm{d}^{3}x-\int_{\Omega
_{n}}\mathbf{P}\left( t_{0}\right) \,\mathrm{d}^{3}x-\int_{\Omega
_{n}}\int_{t_{0}}^{t}\mathbf{F}dt^{\prime }\,\mathrm{d}^{3}x \\
+\int_{\Omega _{n}}\int_{t_{0}}^{t}\nabla \mathcal{L}_{1}\left( \psi \right)
dt^{\prime }\,\mathrm{d}^{3}x+\frac{\chi ^{2}}{2m}\int_{\Omega
_{n}}\int_{t_{0}}^{t}\tsum\nolimits_{j}\partial _{j}\left( \partial _{j}\psi
\nabla \psi ^{\ast }+\partial _{j}\psi ^{\ast }\nabla \psi \right)
dt^{\prime }\,\mathrm{d}^{3}x=0.
\end{gather*}%
Since $\Omega _{n}=\Omega \left( \mathbf{\hat{r}}(t),R_{n}\right) $ does not
depend on $t^{\prime }$,\ we can change order of integration\ and get%
\begin{equation}
\int_{\Omega _{n}}\mathbf{P}\left( t\right) \,\mathrm{d}^{3}x-\int_{\Omega
_{n}}\mathbf{P}\left( t_{0}\right) \,\mathrm{d}^{3}x+Q_{0}=\int_{t_{0}}^{t}%
\int_{\Omega _{n}}\mathbf{F}\,\mathrm{d}t^{\prime },  \label{PQF}
\end{equation}%
\begin{equation}
Q_{0}=\int_{t_{0}}^{t}\int_{\partial \Omega _{n}}\mathbf{\bar{n}}\mathcal{L}%
_{1}\left( \psi \right) \,\mathrm{d}^{2}\sigma \mathrm{d}t^{\prime }+\frac{%
\chi ^{2}}{2m}\int_{t_{0}}^{t}\int_{\partial \Omega _{n}}\left( \mathbf{\bar{%
n}\cdot }\nabla \psi \nabla \psi ^{\ast }+\mathbf{\bar{n}\cdot }\nabla \psi
^{\ast }\nabla \psi \right) \,\mathrm{d}^{2}\sigma \mathrm{d}t^{\prime }.
\label{Q0}
\end{equation}%
Using (\ref{Eless}) and (\ref{psiouta}), we infer that 
\begin{equation}
\left\vert Q_{0}\right\vert \leq \left\vert t-t_{0}\right\vert
\max_{T_{-}\leq s\leq T_{+}}\int_{\partial \Omega _{n}}\left( \left\vert 
\mathcal{L}_{1}\left( \psi \right) \right\vert +\chi ^{2}m^{-1}\left\vert
\nabla \psi \right\vert ^{2}\right) \,\mathrm{d}^{2}\sigma \rightarrow 0.
\label{Q00}
\end{equation}%
Note that 
\begin{equation}
\int_{\Omega _{n}}\mathbf{P}\left( t_{0}\right) \,\mathrm{d}%
^{3}x=\int_{\Omega \left( \mathbf{\hat{r}}(t_{0}),R_{n}\right) }\mathbf{P}%
\left( t_{0}\right) \,\mathrm{d}^{3}x+Q_{01},  \label{PQ01}
\end{equation}%
\begin{equation}
Q_{01}=\int_{\Omega \left( \mathbf{0},R_{n}\right) }\mathbf{P}\left( t_{0},%
\mathbf{y+\hat{r}}(t)\right) -\mathbf{P}\left( t_{0},\mathbf{y+\hat{r}}%
(t_{0})\right) \,\mathrm{d}^{3}y.  \label{Q01}
\end{equation}%
Obviously 
\begin{gather*}
Q_{01}=\int_{\Omega \left( \mathbf{0},R_{n}\right) }\int_{t_{0}}^{t}\partial
_{s}\mathbf{P}\left( t_{0},\mathbf{y+\mathbf{\hat{r}}}\left( s\right)
\right) \,\mathrm{d}s\mathrm{d}^{3}y=\int_{t_{0}}^{t}\int_{\Omega \left( 
\mathbf{0},R_{n}\right) }\partial _{t}\mathbf{\mathbf{\hat{r}}}\left(
s\right) \cdot \nabla \mathbf{P}\left( t_{0},\mathbf{y+\mathbf{\hat{r}}}%
\left( s\right) \right) \,\mathrm{d}^{3}y\mathrm{d}s \\
=\int_{t_{0}}^{t}\int_{\partial \Omega \left( \mathbf{0},R_{n}\right) }%
\mathbf{\bar{n}\cdot \mathbf{\hat{v}}}\left( s\right) \mathbf{P}\left( t_{0},%
\mathbf{y+\mathbf{\hat{r}}}\left( s\right) \right) \,\mathrm{d}^{2}\sigma 
\mathrm{d}s.
\end{gather*}%
Using (\ref{Pless}) and (\ref{psiouta}) we infer that 
\begin{equation*}
\left\vert Q_{01}\right\vert \leq \left\vert t-t_{0}\right\vert
\max_{T_{-}\leq s\leq T_{+}}\int_{\partial \Omega _{n}}\left\vert \mathbf{P}%
\right\vert \,\mathrm{d}^{2}\sigma \rightarrow 0.
\end{equation*}%
We obtain from the relation (\ref{dtxmre0}) the following expression for $%
\mathbf{P}$: 
\begin{equation}
\mathbf{P}=\frac{1}{\mathrm{c}^{2}}\partial _{t}\left( \left( \mathbf{x}-%
\mathbf{r}\right) \mathcal{E}\right) +\frac{1}{\mathrm{c}^{2}}\mathcal{E}%
\partial _{t}\mathbf{r}+\tsum\nolimits_{j}\partial _{j}\cdot \left( \mathbf{P%
}_{j}\left( \mathbf{x}-\mathbf{r}\right) \right) +\frac{1}{\mathrm{c}^{2}}%
\left( \nabla \varphi \cdot \mathbf{J}\right) \left( \mathbf{x}-\mathbf{r}%
\right) .  \label{dtgrex}
\end{equation}%
We use this expression to write the first term in (\ref{PQF}) as follows:\ 
\begin{equation}
\int_{\Omega _{n}}\mathbf{P}\left( t\right) \,\mathrm{d}^{3}x=\frac{1}{%
\mathrm{c}^{2}}\partial _{t}\mathbf{r}\int_{\Omega _{n}}\mathcal{E}\,\mathrm{%
d}^{3}x+Q_{02}  \label{PQ02}
\end{equation}%
where 
\begin{equation}
Q_{02}=\int_{\Omega _{n}}\left( \frac{1}{\mathrm{c}^{2}}\partial _{t}\left(
\left( \mathbf{x}-\mathbf{r}\right) \mathcal{E}\right)
+\tsum\nolimits_{j}\partial _{j}\cdot \left( \mathbf{P}_{j}\left( \mathbf{x}-%
\mathbf{r}\right) \right) +\frac{1}{\mathrm{c}^{2}}\left( \nabla \varphi
\cdot \mathbf{J}\right) \left( \mathbf{x}-\mathbf{r}\right) \right) \,%
\mathrm{d}^{3}x.  \label{Q02}
\end{equation}%
Using (\ref{dtxmre}) we obtain\ 
\begin{equation*}
Q_{02}=-\frac{1}{\mathrm{c}^{2}}\int_{\partial \Omega _{n}}\left( \mathbf{x}-%
\mathbf{r}\right) \mathbf{\mathbf{\hat{v}}}\cdot \mathbf{\bar{n}}\mathcal{E}%
\,\mathrm{d}^{3}x+\int_{\partial \Omega _{n}}\left( \mathbf{x}-\mathbf{r}%
\right) \mathbf{\bar{n}\cdot P}\,\mathrm{d}^{3}x+\int_{\Omega _{n}}\frac{1}{%
\mathrm{c}^{2}}\left( \nabla \varphi \cdot \mathbf{J}\right) \left( \mathbf{x%
}-\mathbf{r}\right) \,\mathrm{d}^{3}x.
\end{equation*}%
Using (\ref{rminrn}), (\ref{Pless}), (\ref{Eless}) and (\ref{psiouta})\ we
conclude that%
\begin{equation*}
\left\vert Q_{02}\right\vert \leq CR_{n}\int_{\partial \Omega
_{n}}\left\vert \mathcal{E}\right\vert \,\mathrm{d}^{3}x+CR_{n}\int_{%
\partial \Omega _{n}}\left\vert \mathbf{P}\right\vert \,\mathrm{d}%
^{3}x+CR_{n}\max_{\Omega _{n}}\left\vert \nabla \varphi \right\vert
\int_{\Omega _{n}}\left\vert \mathbf{J}\right\vert \,\mathrm{d}%
^{3}x\rightarrow 0.
\end{equation*}%
Quite similarly\ to (\ref{PQ02}), 
\begin{equation}
\int_{\Omega \left( \mathbf{\hat{r}}(t_{0}),R_{n}\right) }\mathbf{P}\left(
t_{0}\right) \,\mathrm{d}^{3}x=\frac{1}{\mathrm{c}^{2}}\partial _{t}\mathbf{r%
}\left( t_{0}\right) \mathcal{\bar{E}}\left( t_{0}\right) +Q_{012}
\label{PQ012}
\end{equation}%
where\ $Q_{012}\rightarrow 0.$ Now we write the last term in (\ref{PQF}) in
the form%
\begin{gather}
\int_{t_{0}}^{t}\int_{\Omega _{n}}\mathbf{F}\,\mathrm{d}^{3}xdt^{\prime
}=-\int_{t_{0}}^{t}\nabla \varphi _{\infty }\int_{\Omega _{n}}\rho _{\infty
}\,\mathrm{d}^{3}xdt^{\prime }+Q_{03},  \label{FQ03} \\
Q_{03}=-\int_{t_{0}}^{t}\int_{\Omega _{n}}\left( \rho _{n}\nabla \varphi
-\rho _{\infty }\nabla \varphi _{\infty }\right) \,\mathrm{d}^{3}x.  \notag
\end{gather}%
Obviously%
\begin{equation*}
\int_{\Omega _{n}}\left( \rho _{n}\nabla \varphi -\rho _{\infty }\nabla
\varphi _{\infty }\right) \,\mathrm{d}^{3}x=\int_{\Omega _{n}}\left( \nabla
\varphi -\nabla \varphi _{\infty }\right) \rho _{n}\,\mathrm{d}^{3}x+\nabla
\varphi _{\infty }\left( \bar{\rho}_{n}-\bar{\rho}_{\infty }\right) ,
\end{equation*}%
and\ by Lemma \ref{Lchargeconst}\ and (\ref{Alocr}) 
\begin{equation*}
\left\vert Q_{03}\right\vert \leq \left\vert T_{+}-T_{-}\right\vert \max_{%
\hat{\Omega}_{n}}\left\vert \nabla \varphi -\nabla \varphi _{\infty
}\right\vert \int_{\Omega _{n}}\left\vert \rho _{n}\right\vert \,\mathrm{d}%
^{3}x+\left\vert T_{+}-T_{-}\right\vert \max_{t}\left\vert \nabla \varphi
_{\infty }\right\vert \left\vert \rho _{n}-\rho _{\infty }\right\vert
\rightarrow 0.
\end{equation*}%
From (\ref{PQF}) using (\ref{PQ02}), (\ref{PQ01}), (\ref{PQ012}), (\ref{FQ03}%
) we obtain%
\begin{equation*}
\frac{1}{\mathrm{c}^{2}}\mathcal{\bar{E}}_{n}\left( t\right) \partial _{t}%
\mathbf{r}\left( t\right) +Q_{02}-\frac{1}{\mathrm{c}^{2}}\mathcal{\bar{E}}%
_{n}\left( t_{0}\right) \partial _{t}\mathbf{r}\left( t_{0}\right) -Q_{01}\
-Q_{012}+Q_{0}=-\int_{t_{0}}^{t}\bar{\rho}_{\infty }\nabla \varphi _{\infty
}dt^{\prime }+Q_{03}
\end{equation*}%
which implies (\ref{dtern1a}) with $\mathbf{\delta }%
_{f}=Q_{03}-Q_{0}+Q_{012}+Q_{01}-Q_{02}$.
\end{proof}

\begin{proof}[Proof of Theorem \protect\ref{TconcEinstein}]
According to Lemma \ref{Lcentersconverge},\ the adjacent ergocenters $%
\mathbf{r}_{n}\left( t\right) $ converge to $\mathbf{\hat{r}}(t)$. We take $%
t_{0}$\ from Definition \ref{Dlocconverge}\ and\ choose\ such a subsequence
that according to Lemma \ref{Lestdtr} and \ref{Lchargeconst} $\partial _{t}%
\mathbf{r}_{n}\left( t_{0}\right) \rightarrow v_{\infty }$, $\bar{\rho}%
_{n}\rightarrow \bar{\rho}_{\infty }$.

We multiply (\ref{dtern1a}) by $\mathrm{c}^{2}/\mathcal{\bar{E}}_{n}$ and
obtain 
\begin{equation}
\partial _{t}\mathbf{r}_{n}\left( t\right) =\frac{\mathrm{c}^{2}}{\mathcal{%
\bar{E}}_{\infty }}\int_{t_{0}}^{t}\mathbf{f}_{\infty }dt^{\prime }+\frac{%
\mathrm{c}^{2}}{\mathcal{\bar{E}}_{\infty }}\mathbf{\delta }_{f}+\frac{1}{%
\mathcal{\bar{E}}_{\infty }}\mathcal{\bar{E}}_{n}\left( t_{0}\right)
\partial _{t}\mathbf{r}_{n}\left( t_{0}\right) .  \label{intdtern1}
\end{equation}%
Integration of the above equation yields 
\begin{equation*}
\mathbf{r}_{n}\left( t\right) -\mathbf{r}_{n}\left( t_{0}\right)
=\int_{t_{0}}^{t}\frac{\mathrm{c}^{2}}{\mathcal{\bar{E}}_{\infty }}\mathbf{%
\delta }_{f}dt^{\prime \prime }+\int_{t_{0}}^{t}\frac{\mathrm{c}^{2}}{%
\mathcal{\bar{E}}_{\infty }}\int_{t_{0}}^{t^{\prime \prime }}\mathbf{f}%
_{\infty }dt^{\prime }dt^{\prime \prime }+\int_{t_{0}}^{t}\frac{1}{\mathcal{%
\bar{E}}_{\infty }\left( t^{\prime \prime }\right) }dt^{\prime \prime }%
\mathcal{\bar{E}}_{n}\left( t_{0}\right) \partial _{t}\mathbf{r}_{n}\left(
t_{0}\right) .
\end{equation*}%
Observe that the sequence $1/\mathcal{\bar{E}}_{n}$ converges uniformly to $%
1/\mathcal{\bar{E}}_{\infty }$\ according to Lemma \ref{Ldiscrepancy1} and (%
\ref{Egrc}), $\mathbf{r}_{n}\left( t\right) $ converges to $\mathbf{\hat{r}}%
\left( t\right) $\ and $\mathbf{\delta }_{f}$ converge to zero.\ Hence
taking the limit of the above we get 
\begin{equation}
\mathbf{\hat{r}}\left( t\right) -\mathbf{\hat{r}}\left( t_{0}\right)
=\int_{t_{0}}^{t}\frac{\mathrm{c}^{2}}{\mathcal{\bar{E}}_{\infty }}%
\int_{t_{0}}^{t^{\prime \prime }}\mathbf{f}_{\infty }dt^{\prime }dt^{\prime
\prime }+\int_{t_{0}}^{t}\frac{1}{\mathcal{\bar{E}}_{\infty }\left(
t^{\prime \prime }\right) }dt^{\prime \prime }\mathcal{\bar{E}}_{\infty
}\left( t_{0}\right) v_{\infty }.  \label{rrtte1}
\end{equation}%
Taking the time derivative of the equality (\ref{rrtte1}), multiplying the
result by $\mathcal{\bar{E}}_{\infty }\left( t\right) /\mathrm{c}^{2}$ and
taking the time derivative once more we find that\ $\mathbf{\hat{r}}\left(
t\right) $ satisfies (\ref{New2})\ and $\mathbf{\hat{r}}\left( t_{0}\right) =%
\mathbf{r}_{\infty }\left( t_{0}\right) $, $\partial _{t}\mathbf{\hat{r}}%
\left( t_{0}\right) =v_{\infty }$.

Notice that equation (\ref{Einf}) implies 
\begin{equation}
\partial _{t}\mathcal{\bar{E}}_{\infty }=\partial _{t}\mathbf{\hat{r}}\cdot 
\mathbf{f}_{\infty }.  \label{Elim}
\end{equation}%
Multiplying (\ref{New2}) by $2M\partial _{t}\mathbf{\hat{r}}$ where $M=%
\mathcal{\bar{E}}_{\infty }\left( t\right) /\mathrm{c}^{2}$\ and using (\ref%
{Elim}),\ we obtain 
\begin{equation}
\partial _{t}\left( M\partial _{t}\mathbf{\hat{r}}\right) ^{2}=2M\partial
_{t}\mathbf{\hat{r}\cdot f}_{\infty }=2\mathrm{c}^{2}M\partial _{t}M.
\label{Elim1}
\end{equation}%
This equation after time integration implies 
\begin{equation}
M^{2}-\mathrm{c}^{-2}M^{2}\left( \partial _{t}\mathbf{\hat{r}}\right)
^{2}=M_{0}^{2},  \label{Mconst}
\end{equation}%
where $M_{0}^{2}$ is a constant of integration; consequently, we obtain (\ref%
{Massfor}). Note that $M_{0}=\gamma ^{-1}\left( t_{0}\right) \mathrm{c}^{-2}%
\mathcal{\bar{E}}_{\infty }\left( t_{0}\right) $ is uniquely determined by
the value $\mathcal{\bar{E}}_{\infty }\left( t_{0}\right) $ from (\ref{Econv}%
). Hence the limit of the restricted energy $\mathcal{\bar{E}}_{\infty
}\left( t\right) $ does not depend on particular\ subsequences $\partial _{t}%
\mathbf{r}_{n}\left( t_{0}\right) $, $\bar{\rho}_{n}$ and on $\bar{\rho}%
_{\infty }$.\ Based on this we conclude that any subsequence $\mathcal{\bar{E%
}}_{n}\left( t\right) $ has\ the same limit and the convergence in (\ref%
{Endef}) holds for the given concentrating sequence.
\end{proof}

\begin{corollary}
\label{Cuniq0}Assume that $\partial _{t}^{2}\mathbf{\hat{r}}$ is not
identically zero on $\left[ T_{-},T_{+}\right] $. Then (\ref{rhoinf}) holds\
for any subsequence of the concentrating sequence and the sequence $\bar{\rho%
}_{n}\left( t\right) $ converges to $\bar{\rho}_{\infty }$.
\end{corollary}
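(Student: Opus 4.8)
The plan is to leverage the relativistic Newton law (\ref{New2}) from Theorem \ref{TconcEinstein}, using that the trajectory $\mathbf{\hat{r}}(t)$, the limit potential gradient $\nabla\varphi_{\infty}(t)$, and the limit energy $\mathcal{\bar{E}}_{\infty}(t)$ are all fixed by the concentrating sequence, to read off $\bar{\rho}_{\infty}$ uniquely whenever the motion is genuinely accelerated. The only quantity appearing in (\ref{New2}) that Lemma \ref{Lchargeconst} leaves a priori subsequence-dependent is $\bar{\rho}_{\infty}$, and the idea is that (\ref{New2}) over-determines it.

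First I would record that $\nabla\varphi_{\infty}(t)$ is part of the data of the localized equations: it is the limit in (\ref{Alocr}) of Definition \ref{DKGseq}, hence the same for every subsequence. By the concluding paragraph of the proof of Theorem \ref{TconcEinstein}, the full sequence $\mathcal{\bar{E}}_{n}(t)$ converges to a limit $\mathcal{\bar{E}}_{\infty}(t)$ that is fixed independently of the chosen subsequence, being determined by the normalization (\ref{Econv}) and the prescribed trajectory through (\ref{Massfor}). Rewriting (\ref{New2}) as
\[
\bar{\rho}_{\infty}\,\nabla\varphi_{\infty}(t)=-\partial_{t}\!\left(\frac{1}{\mathrm{c}^{2}}\,\mathcal{\bar{E}}_{\infty}(t)\,\partial_{t}\mathbf{\hat{r}}\right),
\]
the right-hand side is thus a fixed vector function of $t$, so the product $\bar{\rho}_{\infty}\,\nabla\varphi_{\infty}(t)$ is uniquely determined.

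The crux is to show that the hypothesis $\partial_{t}^{2}\mathbf{\hat{r}}\not\equiv 0$ forces $\nabla\varphi_{\infty}\not\equiv 0$, and I would prove the contrapositive. If $\nabla\varphi_{\infty}\equiv 0$, then (\ref{Einf}) gives $\mathcal{\bar{E}}_{\infty}(t)\equiv\mathcal{\bar{E}}_{\infty}(t_{0})$, a constant, and then (\ref{New2}) with vanishing right-hand side makes $\mathcal{\bar{E}}_{\infty}\,\partial_{t}\mathbf{\hat{r}}$ a constant vector. Since (\ref{Egrc}) passes to the limit as $\mathcal{\bar{E}}_{\infty}(t)\geq c_{0}>0$, this constant energy is nonzero, so $\partial_{t}\mathbf{\hat{r}}$ is constant and $\partial_{t}^{2}\mathbf{\hat{r}}\equiv 0$, contradicting the hypothesis. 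Hence there is a time $t^{\ast}$ with $\nabla\varphi_{\infty}(t^{\ast})\neq\mathbf{0}$.

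Finally I would conclude. Picking a component of $\nabla\varphi_{\infty}(t^{\ast})$ that does not vanish, the displayed identity evaluated at $t^{\ast}$ determines $\bar{\rho}_{\infty}$ outright, so every subsequential limit furnished by Lemma \ref{Lchargeconst} equals one and the same number $\bar{\rho}_{\infty}$. Because each subsequence of $\{\bar{\rho}_{n}\}$ contains, by Lemma \ref{Lchargeconst}, a further subsequence converging uniformly on $\left[T_{-},T_{+}\right]$ to this common value, the standard subsequence-of-subsequence principle yields uniform convergence of the entire sequence $\bar{\rho}_{n}(t)\to\bar{\rho}_{\infty}$, which is (\ref{rhoinf}) for the whole concentrating sequence. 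I expect no real obstacle beyond the contrapositive step isolating $\nabla\varphi_{\infty}\not\equiv 0$; the remainder is soft, relying only on the uniqueness already secured in Theorem \ref{TconcEinstein} and on elementary compactness.
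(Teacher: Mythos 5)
Your proposal is correct and follows essentially the same route as the paper's own proof: both observe that the left-hand side of (\ref{New2}) is subsequence-independent (the trajectory, $\nabla \varphi _{\infty }$, and $\mathcal{\bar{E}}_{\infty }\left( t\right) $ being fixed by Theorem \ref{TconcEinstein}), show that $\partial _{t}^{2}\mathbf{\hat{r}}\not\equiv 0$ forces $\nabla \varphi _{\infty }\not\equiv 0$, read off $\bar{\rho}_{\infty }$ uniquely from the resulting identity, and finish with the subsequence-of-subsequence principle applied to Lemma \ref{Lchargeconst}. The only difference lies in the non-degeneracy step: the paper uses (\ref{Massfor}) to rewrite (\ref{New2}) as $M_{0}\gamma \partial _{t}^{2}\mathbf{\hat{r}}+M_{0}\partial _{t}\gamma \,\partial _{t}\mathbf{\hat{r}}=-\bar{\rho}_{\infty }\nabla \varphi _{\infty }$ and checks by a short Lorentz-factor computation that the left side cannot vanish on an interval unless $\partial _{t}\mathbf{\hat{v}}=\mathbf{0}$ there, whereas your contrapositive via (\ref{Einf}) (constant energy) combined with (\ref{New2}) (constant momentum, hence constant velocity since $\mathcal{\bar{E}}_{\infty }\geq c_{0}>0$) reaches the same conclusion in an equally valid, arguably more economical way.
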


\begin{proof}
We have to prove that every convergent subsequence $\bar{\rho}_{n}\left(
t_{0}\right) $ in Lemma \ref{Lchargeconst} converges to the same limit. If
we have a convergent subsequence $\bar{\rho}_{n}\left( t_{0}\right) $ which
converges to $\bar{\rho}_{\infty }$, we obtain (\ref{Massfor}) and (\ref%
{New2}). According to formula (\ref{Massfor}), we can rewrite (\ref{New2})
in the form 
\begin{equation}
M_{0}\gamma \partial _{t}^{2}\mathbf{\hat{r}}+M_{0}\partial _{t}\gamma
\partial _{t}\mathbf{\hat{r}}=-\bar{\rho}_{\infty }\nabla \varphi _{\infty }.
\label{New21}
\end{equation}%
This relation uniquely determines $\bar{\rho}_{\infty }$ if\ $\nabla \varphi
_{\infty }\left( t\right) $ is not\ identically zero. The left-hand side
vanishes on an interval only if\ $\partial _{t}\mathbf{\hat{v}}+\ \partial
_{t}\ln \gamma \mathbf{\hat{v}=0}$\ which is possible only for $\partial _{t}%
\mathbf{\hat{v}=0}$ on the interval. Therefore, for accelerated motion $%
\nabla \varphi _{\infty }$ is not identically zero on the interval and $\bar{%
\rho}_{\infty }$ is\ \ uniquely determined\ by $\mathbf{\hat{r}}$,$t_{0}$
and $M_{0}$ where\ $M_{0}$ by is uniquely defined by (\ref{Massfor})
according to (\ref{Econv}).
\end{proof}

In the following Section \ref{secAR} we present a class of examples where
solutions of the KG\ equation concentrate at a trajectory which describes an
accelerated motion.

\section{Rectilinear accelerated motion of the wave with a fixed shape \label%
{secAR}}

We proved in the preceding section that if solutions of the KG equation
concentrate at a trajectory, then the trajectory and the energy satisfy
Newton's equation where the mass is defined by Einstein's formula. The
results of the previous section are valid for general concentrating
solutions of the KG equation. In this section we present a specific class of
examples for which the concentration assumptions hold. The constructions are
explicit and provide for examples of relativistic accelerated motion with
the internal energy determined from the Klein-Gordon Lagrangian. To be able
to construct an explicit example we make the following simplifying
assumptions: (i) the motion is rectilinear; (ii) the nonlinearity is
logarithmic\ as in (\ref{Gpa}); (iii)\ the shape $\left\vert \psi
\right\vert $ is fixed, namely it is Gaussian; (iv) the parameter $\zeta
=\zeta _{n}$ in (\ref{acan}) satisfies condition $\zeta _{n}\rightarrow 0$.
For a given rectilinear trajectory we find a sequence of parameters,
potentials and solutions of the KG equation which concentrate at the
trajectory. Now we formulate the assumptions in detail.

Recall that the parameters $\mathrm{c}$, $m$, $q$ are fixed. We assume that
the sequence $a=a_{n},\chi =\chi _{n}$ satisfies the conditions (\ref{anto0}%
), (\ref{acan}) and an additional condition 
\begin{equation}
\zeta =\zeta _{n}=\frac{a_{\mathrm{C}}}{a}=\frac{\chi _{n}}{a_{n}m\mathrm{c}}%
\rightarrow 0.  \label{zetzet0}
\end{equation}%
Trajectories $\mathbf{\hat{r}}\left( t\right) $ which describe \emph{%
rectilinear} accelerated\ motion have the form 
\begin{equation}
\mathbf{\hat{r}}\left( t\right) =(0,0,r\left( t\right) ),\quad -\infty
<t<\infty .  \label{rtinf}
\end{equation}%
We consider a fixed trajectory $r\left( t\right) $ such that corresponding
velocity $v=\partial _{t}r$\ is two times continuously differentiable and
has uniformly bounded derivatives: 
\begin{equation}
\left\vert v\left( t\right) \right\vert +\left\vert \partial _{t}v\left(
t\right) \right\vert +\left\vert \partial _{t}^{2}v\left( t\right)
\right\vert \leq C,\quad -\infty <t<\infty .  \label{betep1}
\end{equation}%
We also impose a weaker version of the above restriction, 
\begin{equation}
\sup_{\tau }\left( \left\vert \partial _{\tau }\beta \right\vert +\left\vert
\partial _{\tau }^{2}\beta \right\vert \right) \leq \hat{\epsilon},\quad 
\text{where }\tau =\frac{\mathrm{c}}{a}t,  \label{betep0}
\end{equation}%
$\beta $ is the \emph{normalized velocity, }namely 
\begin{equation}
\beta =v/\mathrm{c},\quad v=\partial _{t}r,  \label{vv0v1}
\end{equation}%
this version is sufficient in many estimates. Since the parameter $%
a_{n}\rightarrow 0$,\ the assumption (\ref{betep0}) is less restrictive than
(\ref{betep1}).

The velocity $v=\partial _{t}r$ is\ assumed to be smaller than the speed of
light $\mathrm{c}$, namely that 
\begin{equation}
\left\vert \beta \left( t\right) \right\vert \leq \hat{\epsilon}_{1}<1,\quad
-\infty <t<\infty ;  \label{cv10}
\end{equation}%
we assume also that the normalized velocity does not vanish: 
\begin{equation}
\left\vert \beta \left( t\right) \right\vert \geq \check{\beta}>0,\quad
-\infty <t<\infty .  \label{betcheck}
\end{equation}%
Here is the main result of this section.\ 

\begin{theorem}
\label{Ttraex}For any trajectory $\mathbf{\hat{r}}\left( t\right) =\left(
0,0,r\left( t\right) \right) $ where $r$\ satisfies (\ref{betep1}), (\ref%
{betep0}), (\ref{cv10}), (\ref{betcheck}), for any $T_{-},T_{+}$ there
exists a sequence $a_{n},R_{n},\chi _{n},$ and potentials $\varphi _{n}$
such that the KG equations are localized in $\hat{\Omega}\left( \mathbf{\hat{%
r}},R_{n}\right) $. There exists a sequence of solutions of the KG equation
which concentrates at $\mathbf{\hat{r}}$.\ 
\end{theorem}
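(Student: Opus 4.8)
The plan is to produce the concentrating sequence by a semi-explicit WKB-type ansatz in which the modulus $|\psi|$ is frozen to a moving, Lorentz-contracted gausson and the potential is split into a prescribed linear part carrying the dynamics plus a small correction that turns the ansatz into an exact solution. Concretely, writing $\beta=v/\mathrm{c}$, $\gamma=(1-\beta^{2})^{-1/2}$ and introducing the contracted comoving coordinate $z'=\gamma(t)(z-r(t))$, $x'=x$, $y'=y$, I would set $\psi_{n}=u_{n}\mathrm{e}^{\mathrm{i}\theta_{n}}$ with $u_{n}(t,\mathbf{x})=a_{n}^{-3/2}\gamma^{1/2}(t)\,C_{g}\exp(-(|x'|^{2}+|y'|^{2}+|z'|^{2})/(2a_{n}^{2}))$ the Gaussian ground state (\ref{Gpa})--(\ref{eiga}) rigidly transported along $\mathbf{\hat r}$, and $\theta_{n}$ the relativistic phase whose local wavevector $\nabla\theta_{n}=k(t)\hat{\mathbf z}$ and local frequency are dictated by the instantaneous energy--momentum of a gausson of mass $m_{0}=m(1+\Theta)$, cf. (\ref{rhofree})--(\ref{Thomega}). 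The radius is chosen with $a_{n}\ll R_{n}\ll 1$, so that $R_{n}\to0$ and $\theta_{n}=R_{n}/a_{n}\to\infty$ as required by (\ref{theninf}).

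Substituting the ansatz into (\ref{KGex}) and separating real and imaginary parts yields the Madelung pair: an imaginary (continuity) equation $\partial_{t}(u_{n}^{2}\Lambda_{n})=\mathrm{c}^{2}\nabla\cdot(u_{n}^{2}\nabla\theta_{n})$ with $\Lambda_{n}=\partial_{t}\theta_{n}+\tfrac{q}{\chi}\varphi_{n}$, and a real (Hamilton--Jacobi) equation $\tfrac{1}{\mathrm{c}^{2}}\Lambda_{n}^{2}-|\nabla\theta_{n}|^{2}-\kappa_{0}^{2}-G'(u_{n}^{2})+u_{n}^{-1}(\nabla^{2}u_{n}-\tfrac{1}{\mathrm{c}^{2}}\partial_{t}^{2}u_{n})=0$. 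Since $\zeta_{n}=a_{\mathrm{C}}/a_{n}\to0$ by (\ref{zetzet0}), the terms $u_{n}^{-1}\nabla^{2}u_{n}$ and $G'(u_{n}^{2})$ are of relative order $\zeta_{n}^{2}$ against the $O(a_{\mathrm{C}}^{-2})$ mass-shell terms, while by the gausson equation (\ref{stp}) their instantaneous-rest-frame combination is exactly the one solved by the Gaussian. Thus at leading order the real equation reduces to the mass-shell relation, which, combined with the continuity equation and the choice $\mathbf{J}_{n}/\rho_{n}=\mathbf{\hat v}$ built into $\theta_{n}$, forces the center to obey the relativistic Newton law; I would read off from this the explicit linear principal potential $\varphi_{\mathrm{pr},n}$ whose gradient produces precisely the prescribed acceleration of $\gamma m_{0}\mathbf{\hat v}$.

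The heart of the argument, and the step I expect to be hardest, is the exact cancellation of the residual. With $u_{n}$ frozen to the Gaussian there are only two scalar unknowns, the phase $\theta_{n}$ and the potential $\varphi_{n}$, against the two real equations, so the system is determined; I would write $\varphi_{n}=\varphi_{\mathrm{pr},n}+\varphi_{b,n}$ together with a small phase correction and absorb into them all subleading contributions --- the quantum-potential and nonlinear terms of order $\zeta_{n}^{2}$ and, more delicately, the genuinely non-inertial terms generated by $\partial_{t}\gamma$, $\partial_{t}v$ and $\partial_{t}^{2}v$ (controlled by (\ref{betep1}), (\ref{betep0})). Solving the resulting first-order transport equations along the characteristics of the velocity field (here one uses $|\beta|\ge\check\beta>0$ from (\ref{betcheck}) and $|\beta|\le\hat\epsilon_{1}<1$ from (\ref{cv10}) to keep the characteristics regular and $\gamma$ bounded) reduces the balancing component to quadratures and gives the uniform estimate $\varphi_{b,n}=O(\zeta_{n}^{2})\to0$. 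This is precisely what makes the construction work: the correction is nonzero, so $|\psi|$ is preserved exactly and $\psi_{n}$ solves (\ref{KGex}) on $\hat\Omega(\mathbf{\hat r},R_{n})$, yet $\varphi_{b,n}\to0$, so $\varphi_{n}\to\varphi_{\infty}=\varphi_{\mathrm{pr},\infty}$ in the sense of (\ref{Alocr}) with the linear form (\ref{fiinf}) and the bounds (\ref{Abound}), (\ref{fihatb}); together with $a_{n},\chi_{n}\to0$ and (\ref{acan}) this verifies Definition \ref{DKGseq}.

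Finally I would check Definition \ref{Dlocconverge} for this $\psi_{n}$. Because the Gaussian and its derivatives decay faster than any power, the profile meets (\ref{Yz})--(\ref{anthen}) for every $N$, so the energy bound (i) and the boundary decay (ii) follow by the change of variables $(\mathbf{x}-\mathbf{\hat r})/a_{n}=\mathbf{z}$ exactly as in Example \ref{Rboundint}; the lower bound (\ref{Egrc}) holds because the positive term $\tfrac{1}{2}m\mathrm{c}^{2}\int|\psi_{n}|^{2}$ stays of order one while the only possible negative (nonlinear) contribution is $O(\zeta_{n}^{2})\to0$; and the convergence (\ref{Econv}) at a chosen $t_{0}$ holds since the restricted energy tends to the full gausson energy $\gamma(t_{0})m\mathrm{c}^{2}(1+\Theta(\omega))$ as in Example \ref{Euniform}. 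Assembling these, $\{\psi_{n},\varphi_{n}\}$ is a sequence of exact solutions concentrating at $\mathbf{\hat r}$, which proves the theorem.
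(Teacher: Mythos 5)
Your overall strategy is the one the paper actually follows (its Theorem \ref{Ttrajcon}): freeze the modulus, split the potential into a linear accelerating part fixed by the relativistic Newton law (\ref{new11}) plus a balancing correction, solve the determined real/imaginary system for the phase correction and $\varphi _{\mathrm{b}}$ by characteristics, then verify Definitions \ref{DKGseq} and \ref{Dlocconverge}. The genuine gap is your choice of frozen modulus, and it kills the central claim $\varphi _{\mathrm{b},n}=O(\zeta _{n}^{2})$. At leading order the imaginary (continuity) part of the equation, written in the comoving frame, reads $\partial _{t}\left( \gamma (t)\,u_{n}^{2}\right) =0$: since $\rho \approx \mathrm{const}\cdot \gamma \,u_{n}^{2}$, this is just charge conservation, and it forces the modulus to be a shape \emph{rigid} in $y/a$ times the amplitude $\gamma ^{-1/2}(t)$ --- exactly the paper's condition $\func{Im}Q=0$, i.e.\ (\ref{imips}) with (\ref{psi1sig})--(\ref{siggam}). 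Your modulus, an instantaneously Lorentz-contracted Gaussian in $z^{\prime }=\gamma (t)(z-r(t))$ with amplitude $\gamma ^{1/2}(t)$, violates this twice. First, the $\gamma ^{1/2}$ factor makes $\int \rho _{n}\propto \gamma (t)$ non-constant at leading order; the corresponding source in the transport equation for the correction $\Phi $ is then not a perfect $z$-derivative, and since $\Phi $ is amplified along characteristics emanating from $z=0$ by the factor $\Psi ^{-2}\sim \mathrm{e}^{+z^{2}}$, the correction blows up exponentially across the strip $\left\vert z\right\vert \leq \bar{\theta}$ instead of staying $O(\zeta ^{2-\delta })$. Second, even if you repair the amplitude to match the uniform-motion solution (\ref{mvch1}), the \emph{time-dependent} contraction produces a compression-flow source $\sim \partial _{t}\gamma \,(1-2\gamma ^{2}z^{2})\mathrm{e}^{-\gamma ^{2}z^{2}}$ in the continuity equation; this one does integrate (it is $\partial _{t}\gamma \,\partial _{z}(z\Psi ^{2})$), but the resulting phase correction $\partial _{y}S_{\mathrm{corr}}\sim (\omega _{0}\partial _{t}\gamma /\gamma \mathrm{c}^{2})\,y$, fed into the real (Hamilton--Jacobi) part, forces a balancing potential with a \emph{linear-in-}$y$ component whose gradient is of order $(m/q)\gamma ^{2}\beta ^{2}\partial _{t}v$, i.e.\ a fixed fraction of order $\beta ^{2}$ of the accelerating force itself --- and $\beta ^{2}$ is bounded below by (\ref{betcheck}). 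Such a $\varphi _{\mathrm{b}}$ cannot satisfy (\ref{Alocr}); worse, if your sequence nevertheless concentrated at $\mathbf{\hat{r}}$, Theorem \ref{TconcEinstein} would force $\mathbf{\hat{r}}$ to obey Newton's law with the \emph{modified} limit potential, contradicting your own normalization (\ref{new11}).

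The repair is precisely the paper's ansatz: the Gaussian shape must not re-contract with the instantaneous velocity; at most a \emph{fixed} contraction $\gamma _{0}$ is admissible (Remark \ref{Rgamma0}), and all velocity dependence of the modulus sits in the amplitude $\gamma ^{-1/2}$. Remark \ref{RLorcont} makes exactly this point: in accelerated regimes the profile keeps its old contraction factor and is \emph{not} the instantaneous Lorentz boost of the rest solution. With that corrected modulus, your remaining steps --- reduction of the residual to a first-order quasilinear equation, integration along characteristics using (\ref{cv10}), (\ref{betcheck}), the resulting bounds $\Phi =O(\zeta ^{2-\delta })$, $\varphi _{\mathrm{b}}=O(\zeta ^{2-4\delta })$, and the verification of the localization and concentration conditions --- go through and coincide with the paper's proof.
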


\begin{proof}
The statement follows from Theorem \ref{Ttrajcon}\ which is proven at the
end of this section. The potentials and solutions are described in the
following sections.
\end{proof}

\subsection{Reduction to one dimension}

When the external potential $\varphi $ depends only on $t$ and $x_{3}$, the
equation (\ref{KGex}) in the three dimensional space with a logarithmic
nonlinearity (\ref{paf3}) can be reduced to a problem in one dimensional
space by the following substitution 
\begin{equation}
\psi =\pi ^{-1/2}a^{-1}\exp \left( -a^{-2}\left( x_{1}^{2}+x_{2}^{2}\right)
/2\right) \psi _{1D}\left( t,x_{3}\right) ,  \label{psi1d}
\end{equation}%
with $\psi _{1D}$ being dependent only on $x_{3}$ and $t$. The corresponding
reduced $1D$ KG equation\ for $\psi =\psi _{1D}$ with \emph{one spatial
variable} has the form 
\begin{equation}
-\mathrm{c}^{-2}\tilde{\partial}_{t}^{2}\psi +\partial _{3}^{2}\psi
-G_{a}^{\prime }\left( \psi ^{\ast }\psi \right) \psi -\kappa _{0}^{2}\psi
=0,  \label{KG1D}
\end{equation}%
where 
\begin{equation*}
\kappa _{0}=m\mathrm{c}/\chi ,\quad \tilde{\partial}_{t}=\partial _{t}+%
\mathrm{i}q\varphi /\chi ,\quad \varphi =\varphi \left( t,x_{3}\right) ,
\end{equation*}%
and the $1D$ logarithmic nonlinearity takes the form 
\begin{equation}
G_{a}^{\prime }\left( \left\vert \psi \right\vert ^{2}\right)
=G_{a,1D}^{\prime }\left( \left\vert \psi \right\vert ^{2}\right) =-a^{-2} 
\left[ \ln \left( \pi ^{1/2}\left\vert \psi \right\vert ^{2}\right) +1\right]
-a^{-2}\ln a.  \label{logb}
\end{equation}

From now on we write $x$ instead of $x_{3}$ for the notational simplicity.
We look for the external potential $\varphi $\ in the form 
\begin{equation}
\varphi \left( t,x\right) =\varphi _{\mathrm{ac}}\left( t,x\right) +\varphi
_{\mathrm{b}}\left( t,x;\zeta \right) ,  \label{fiex}
\end{equation}%
where the \emph{accelerating potential }$\varphi _{\mathrm{ac}}$ \emph{is
linear in }$y$, namely\ 
\begin{equation}
\varphi _{\mathrm{ac}}=\varphi _{\mathrm{0}}\left( t\right) +\varphi _{%
\mathrm{ac}}^{\prime }y,\quad y=x-r,  \label{fiac}
\end{equation}%
and $\varphi _{\mathrm{b}}\left( t,x;\zeta \right) $ is a small \emph{%
balancing potential}.\ The coefficient $\varphi _{\mathrm{ac}}^{\prime
}=\partial _{x}\varphi _{\mathrm{ac}}$ is determined by the trajectory $%
r\left( t\right) $ according to the formula%
\begin{equation}
\partial _{t}\left( m\gamma v\right) +q\partial _{x}\varphi _{\mathrm{ac}%
}=0,\quad v=\partial _{t}r,  \label{new11}
\end{equation}%
which has the form of\emph{\ relativistic law of motion} (\ref{fre1}).\ The
potential $\varphi _{\mathrm{ac}}$ coincides in our construction with\ the
limit potential $\varphi _{\infty }$ in (\ref{fiinf}).\ The coefficient $%
\varphi _{\mathrm{0}}\left( t\right) $ can be prescribed as an arbitrary
function with bounded derivatives.

According to the above formula, $\varphi _{\mathrm{ac}}$ directly relates to
the acceleration of the charge, and we call it "accelerating" potential, $%
\varphi _{\mathrm{ac}}$ does not depend on the small parameter $\zeta $. The
remaining part of the external potential $\varphi $ in the KG equation (\ref%
{KG1D}) is a small "balancing" potential $\varphi _{\mathrm{b}}$ which
allows the charge to exactly preserve its form as it accelerates. Below we
find such a\ potential $\varphi _{\mathrm{b}}$\ that the Gaussian wave
function with the center $r\left( t\right) $ is a solution of (\ref{KG1D})
in the strip 
\begin{equation}
\Xi \left( \theta _{n}\right) =\left\{ \left( t,x\right) :\left\vert
x-r\left( t\right) \right\vert \leq \theta _{n}a_{n}\right\} ,\quad \theta
_{n}\rightarrow \infty ,\quad \theta _{n}a_{n}=R_{n}\rightarrow 0.
\label{Ksin}
\end{equation}%
The balancing potential vanishes asymptotically, that is $\varphi _{\mathrm{b%
}}\left( t,x;\ \zeta \right) \rightarrow 0$ as $\zeta \rightarrow 0$, and
the forces it produces also become vanishingly small compared with the
electric force $-q\partial _{x}\varphi _{\mathrm{ac}}\left( x\right) $\ in
the strip $\Xi \left( \theta \right) $.\ Note that\ such an accelerated
motion with a preserved shape is possible only for a properly chosen
potential $\varphi _{\mathrm{b}}$.

\subsection{Equation in a moving frame}

As the first step of the construction of the potential $\varphi _{\mathrm{b}%
} $ we rewrite the KG equation (\ref{KG1D}) in a moving frame. We take $%
\mathbf{\hat{r}}\left( t\right) $ as the new origin and make the following
change of variables: 
\begin{equation}
x_{3}=r\left( t\right) +y,\quad \mathbf{x}=\mathbf{\hat{r}}\left( t\right) +%
\mathbf{y}.  \label{xry}
\end{equation}%
The\ 1D KG equation (\ref{KG1D}) then takes the form 
\begin{gather}
-\mathrm{c}^{-2}\left( \partial _{t}+\mathrm{i}q\varphi /\chi -v\mathbf{%
\partial }_{y}\right) \left( \partial _{t}+\mathrm{i}q\varphi /\chi -v%
\mathbf{\partial }_{y}\right) \psi  \label{KGy} \\
+\partial _{y}^{2}\psi -G_{a}^{\prime }\left( \psi ^{\ast }\psi \right) \psi
-a_{C}^{-2}\psi =0,  \notag
\end{gather}%
The 1D logarithmic nonlinearity $G_{a}^{\prime }=G_{a,1D}^{\prime }$ is
defined by (\ref{logb}), and the electric potential $\varphi \ $has the form
(\ref{fiex}).\ We assume that the solution $\psi \left( t,x\right) $ has the
Gaussian form, namely 
\begin{gather}
\psi =\psi _{1D}=\hat{\Psi}\exp (\mathrm{i}\hat{S}),  \label{psipsi} \\
\hat{S}=\omega _{0}\mathrm{c}^{-2}\gamma vy-s\left( t\right) -S\left(
t,y\right) ,\quad \omega _{0}=\mathrm{c}/a_{\mathrm{C}}=m\mathrm{c}^{2}/\chi
\notag
\end{gather}%
where we explicitly define the real valued function $\hat{\Psi}$: 
\begin{equation}
\hat{\Psi}=a^{-1/2}\Psi \left( a^{-1}y\right)  \label{psigam}
\end{equation}%
where%
\begin{equation}
\Psi \left( t,z\right) =\pi ^{-1/4}\mathrm{e}^{\sigma -z^{2}/2},
\label{psi1sig}
\end{equation}%
with 
\begin{equation}
\sigma =\sigma \left( t\right) =\ln \gamma ^{-1/2},\quad \gamma =\left(
1-\beta ^{2}\right) ^{-1/2}.  \label{siggam}
\end{equation}%
We define $\sigma $ by the above formula to satisfy (\ref{imips}).
Substitution of (\ref{psipsi}) into (\ref{KGy}) yields%
\begin{gather}
-\frac{1}{\mathrm{c}^{2}}\left( \partial _{t}+\mathrm{i}\partial _{t}\left(
\gamma v\right) \frac{\omega _{0}}{\mathrm{c}^{2}}y-\mathrm{i}\partial _{t}s+%
\frac{\mathrm{i}q\varphi _{\mathrm{0}}}{\chi }-\mathrm{i}v^{2}\frac{\omega
_{0}}{\mathrm{c}^{2}}\gamma \right.  \notag \\
\left. -\mathrm{i}\partial _{t}S+\mathrm{i}v\partial _{y}S+\frac{\mathrm{i}%
q\varphi _{\mathrm{ac}}^{\prime }}{\chi }y+\frac{\mathrm{i}q\varphi _{%
\mathrm{b}}}{\chi }-v\partial _{y}\right) ^{2}\hat{\Psi}  \notag \\
+\left( \partial _{y}-\mathrm{i}\partial _{y}S+\mathrm{i}v\frac{\omega _{0}}{%
\mathrm{c}^{2}}\gamma \right) ^{2}\hat{\Psi}-G_{a}^{\prime }(|\hat{\Psi}%
|^{2})\hat{\Psi}-\kappa _{0}^{2}\hat{\Psi}=0.  \label{KGy1}
\end{gather}%
To eliminate the leading, independent of the small parameter $\zeta ,$\
terms in the above equation, we require (\ref{new11}) to hold together with
the following equation: 
\begin{equation}
-\partial _{t}s+q\varphi _{\mathrm{0}}/\chi -v^{2}\omega _{0}\mathrm{c}%
^{-2}\gamma =-\gamma \omega _{0}.  \label{sprime}
\end{equation}%
Observe that the expressions in (\ref{KGy1}) eliminated in view of equations
(\ref{sprime}) and equations (\ref{new11}) do not depend on $\zeta $.\ Note
that the constant part $\varphi _{\mathrm{0}}\left( t\right) $ of the
accelerating potential can be prescribed arbitrarily since we always can
choose the phase shift $s\left( t\right) $ so that the equation (\ref{sprime}%
) holds. Equation (\ref{KGy}) combined with equations (\ref{sprime}) and (%
\ref{new11}) can be transformed into 
\begin{gather}
-\frac{1}{\mathrm{c}^{2}}\left( \partial _{t}-\mathrm{i}\gamma \frac{\mathrm{%
c}}{a_{\mathrm{C}}}-\mathrm{i}\partial _{t}S+\mathrm{i}v\partial _{y}S+\frac{%
\mathrm{i}q\varphi _{\mathrm{b}}}{\chi }-v\partial _{y}\right) ^{2}\hat{\Psi}
\notag \\
+\left( \partial _{y}-\mathrm{i}\partial _{y}S+\mathrm{i}\frac{1}{a_{\mathrm{%
C}}}\frac{v}{\mathrm{c}}\gamma \right) ^{2}\hat{\Psi}-G_{a}^{\prime }(|\hat{%
\Psi}|^{2})\hat{\Psi}-\frac{1}{a_{\mathrm{C}}^{2}}\hat{\Psi}=0.  \label{Kpsy}
\end{gather}%
In the following sections,\ we find the quantities $\varphi _{\mathrm{b}}$
and $S$ which are of order $\zeta ^{2}$ and satisfy this equation.

\subsection{Equations for auxiliary phases}

In this subsection we introduce two auxiliary phases and reduce the problem
of determination of the potential\ $\varphi _{\mathrm{b}}$ and the phase $S$
to a first-order partial differential equation for a single unknown phase.
Solving this equation can be reduced to integration along characteristics
allowing for a rather detailed mathematical analysis.

It is convenient to introduce rescaled dimensionless variables $\mathbf{z}%
,\tau $:\ 
\begin{equation}
\tau =\frac{\mathrm{c}}{a}t,\quad \mathbf{z}=\frac{\zeta }{a_{\mathrm{C}}}%
\mathbf{y}=\frac{1}{a}\mathbf{y},  \label{zztau}
\end{equation}%
where we set $z_{3}=z$.\ We introduce now auxiliary phases $Z$ and $\Phi $%
\begin{gather}
Z=\zeta \mathbf{\partial }_{z}S,  \label{denZ} \\
\Phi =-\zeta \mathbf{\partial }_{\tau }S+\zeta \beta \mathbf{\partial }_{z}S+%
\frac{qa_{\mathrm{C}}\varphi _{\mathrm{b}}}{\mathrm{c}\chi },  \label{denZ2}
\end{gather}%
and these phases will be our new unknown variables. Obviously, if we find $Z$
and $\Phi $, we can find $S$ by the integration in $z$ and setting $S=0$ at $%
z=0$, see (\ref{Sfor}). After that $\varphi _{\mathrm{b}}$ can be found from
(\ref{denZ2}). Consequently, to find a small $\varphi _{\mathrm{b}}$ we need
to find small $Z$, $\Phi $.

Equation (\ref{Kpsy}) takes the following form:%
\begin{gather}
-\left( \zeta \partial _{\tau }+\mathrm{i}\Phi -\mathrm{i}\gamma -\beta
\zeta \partial _{z}\right) ^{2}\Psi  \label{eqips} \\
+\left( \zeta \partial _{z}-\mathrm{i}Z+\mathrm{i}\beta \gamma \right)
^{2}\Psi -\zeta ^{2}G_{1}^{\prime }\left( \Psi ^{2}\right) \Psi -\Psi =0, 
\notag
\end{gather}%
where $\hat{\Psi}$ is given by (\ref{psigam}), (\ref{psi1sig}). We look for
a solution of (\ref{eqips}) in the strip $\Xi \left( \theta \right) $ in the
time-space: 
\begin{equation}
\Xi \left( \theta \right) =\left\{ \left( \tau ,z\right) :-\infty <\tau
<\infty ,\quad \left\vert z\right\vert <\theta \right\} .  \label{Ksi}
\end{equation}%
We expand (\ref{eqips}) with respect to $\Phi ,Z$ and rewrite equation (\ref%
{eqips}) in the form 
\begin{gather}
Q\Psi -\mathrm{i}\Phi \left( \zeta \partial _{\tau }-\mathrm{i}\gamma -\beta
\zeta \partial _{z}\right) \Psi -\mathrm{i}\left( \zeta \partial _{\tau }-%
\mathrm{i}\gamma -\beta \zeta \partial _{z}\right) \left( \Phi \Psi \right)
+\Phi ^{2}\Psi  \label{Qeq} \\
+\mathrm{i}Z\left( \zeta \partial _{z}+\mathrm{i}\beta \gamma \right) \Psi +%
\mathrm{i}\left( \zeta \partial _{z}+\mathrm{i}\beta \gamma \right) \left(
Z\Psi \right) -Z^{2}\Psi =0,  \notag
\end{gather}%
where we denote by $Q\Psi $ the term which does not involve $\Phi $ and $Z$
explicitly: 
\begin{equation}
Q=\frac{1}{\Psi }\left( -\left( \zeta \partial _{\tau }-\mathrm{i}\gamma
-\beta \zeta \partial _{z}\right) ^{2}\Psi +\left( \zeta \partial _{z}+%
\mathrm{i}\beta \gamma \right) ^{2}\Psi \right) -\zeta ^{2}G_{1}^{\prime
}\left( \Psi ^{2}\right) -1.  \label{Qdef}
\end{equation}%
Using (\ref{psi1sig}) and (\ref{siggam}) we conclude that the imaginary part
of $Q$ is zero:\ 
\begin{equation}
\func{Im}Q=2\zeta \gamma \partial _{\tau }\sigma +\zeta \partial _{\tau
}\gamma =0.  \label{imips}
\end{equation}%
Hence, \ $Q=\func{Re}Q\ $ can be written in the form 
\begin{equation}
Q=\zeta ^{2}\left( -\partial _{\tau }^{2}\Psi +\partial _{\tau }\beta
\partial _{z}\Psi +2\beta \partial _{\tau }\partial _{z}\Psi +\gamma
^{-2}\partial _{z}^{2}\Psi -G_{1}^{\prime }\left( \Psi ^{2}\right) \Psi
\right) /\Psi .  \label{ReQ2}
\end{equation}%
Now we show explicitly the dependence of $Q$ on $z$. Using (\ref{logb}) we
easily verify that the function $\Psi $ in (\ref{psi1sig}) satisfies an
equation\ similar to (\ref{stp}): 
\begin{equation}
\partial _{z}^{2}\Psi -G_{a}^{\prime }\left( \Psi \Psi ^{\ast }\right) \Psi
=2\sigma \Psi .  \label{G1D}
\end{equation}%
Taking into account (\ref{G1D}) we see that 
\begin{equation}
Q=\zeta ^{2}\left[ -\left( \partial _{\tau }^{2}\sigma +\left( \partial
_{\tau }\sigma \right) ^{2}\right) -z\partial _{\tau }\beta -2\beta
z\partial _{\tau }\sigma -\beta ^{2}\zeta ^{2}\left( z^{2}-1\right) +2\sigma %
\right] .  \label{Qzetz}
\end{equation}%
Now we rewrite the complex equation (\ref{Qeq}) as a system of two real
equations. The real part of (\ref{Qeq}) divided by $\Psi $ yields the
following quadratic equation 
\begin{equation}
Q-2\gamma \Phi +\Phi ^{2}-2\beta \gamma Z-Z^{2}=0.  \label{ReQeq}
\end{equation}%
The solution $Z$\ which is small for small $\Phi ,Q$\ is given by the
formula 
\begin{equation}
Z=\Theta \left( \Phi \right) =-\beta \gamma +\left( \Phi ^{2}-2\gamma \Phi
+\beta ^{2}\gamma ^{2}+Q\right) ^{1/2}\beta /\left\vert \beta \right\vert .
\label{ZfiQ}
\end{equation}%
The imaginary part of (\ref{Qeq}) divided by $\zeta \Psi $ yields equation 
\begin{equation}
-2\Phi \left( \partial _{\tau }-\beta \partial _{z}\right) \ln \Psi -\left(
\partial _{\tau }-\beta \partial _{z}\right) \Phi +\partial _{z}Z+2Z\partial
_{z}\ln \Psi =0  \label{ImQeq0}
\end{equation}%
where the coefficients are expressed in terms of $\Psi $ defined by (\ref%
{psigam}), (\ref{psi1sig}): 
\begin{equation}
\ln \Psi =\sigma -z^{2}/2,\text{\quad }\partial _{z}\ln \Psi =-z,\text{\quad 
}\partial _{\tau }\ln \Psi =\partial _{\tau }\sigma .  \label{dlog}
\end{equation}%
To determine a \emph{small} solution $\Phi $ of (\ref{ImQeq0}), (\ref{ZfiQ})
in the strip $\Xi $ we impose the condition%
\begin{equation}
\Phi =0\text{\ if\ }z=0,\text{\quad }-\infty <\tau <\infty .  \label{Zy0}
\end{equation}%
A solution $\Phi $ of the equation (\ref{ImQeq0}), where $Z\ =\Theta \left(
\Phi \right) $ satisfies (\ref{ZfiQ}),\ is a solution of the following
quasilinear first-order equation 
\begin{equation}
\partial _{\tau }\Phi -\beta \mathbf{\partial }_{z}\Phi -\Theta _{\Phi
}\left( \Phi \right) \mathbf{\partial }_{z}\Phi =-2\Phi \zeta \left(
\partial _{\tau }-\beta \partial _{z}\right) \ln \Psi +2\Theta \partial
_{z}\ln \Psi +\Theta _{z}  \label{Zfieq2}
\end{equation}%
where $\Theta _{\Phi }$ and $\Theta _{z}$\ are the partial derivatives of $%
\Theta \left( \Phi \right) $ (\ $\Theta $ depends on $z$ via $Q$) with $\Phi 
$ subjected to condition (\ref{Zy0}).

To prove the existence of a solution in a wide enough strip $\Xi $ and study
its properties we use the method of characteristics.

\subsection{Construction and properties of the auxiliary potential}

We introduce the characteristic equations for (\ref{Zfieq2}):\ 
\begin{equation}
dz/ds=-\Theta _{\Phi }\left( \Phi \right) -\beta \left( \tau \right) ,
\label{exLC1}
\end{equation}%
\begin{equation}
d\tau /ds=1,  \label{exLC1a}
\end{equation}%
\begin{equation}
d\Phi /ds=-2\Phi \left( \partial _{\tau }-\beta \partial _{z}\right) \ln
\Psi +2\Theta \left( \Phi \right) \partial _{z}\ln \Psi +\Theta _{z},
\label{exLC3}
\end{equation}%
with the initial data%
\begin{equation}
\tau _{s=0}=\tau _{0},\quad z_{s=0}=0.  \label{exLC2}
\end{equation}%
From (\ref{Zy0}) on the line $z=0$\ we derive the initial condition 
\begin{equation}
\Phi _{s=0}=0.  \label{LCin}
\end{equation}%
In the above equations the quantity $\Phi $\ is an independent variable
which coincides with $\Phi =\Phi \left( \tau ,z\right) $ on the
characteristic curves:\ 
\begin{equation}
\Phi =\Phi \left( \tau \left( \tau _{0},s\right) ,z\left( \tau _{0},s\right)
\right) .  \label{Fichar}
\end{equation}%
Equation (\ref{exLC1a}) can be solved explicitly:%
\begin{equation*}
\tau =\tau _{0}+s.
\end{equation*}%
Often, abusing notation, we will write $\Phi \left( \tau _{0},s\right) $ for
the solution of (\ref{exLC1})-(\ref{exLC3}) which is found independently
from the formula (\ref{Fichar}). The function $\Theta \left( \Phi \right) $
and its partial derivatives $\Theta _{z}$\ and $\Theta _{\Phi }$ are
determined by (\ref{ZfiQ}) and are well-defined and smooth\ as long as 
\begin{equation}
D\left( \tau ,z,\Phi \right) =\Phi ^{2}-2\gamma \left( \tau \right) \Phi
+\beta ^{2}\gamma ^{2}\left( \tau \right) +Q\left( \tau ,z\right) >0.
\label{Dgr0}
\end{equation}%
Then formula\ (\ref{ZfiQ}) determines the function $\Theta $ as an analytic
function of\ $Q,\Phi $. Taking into account (\ref{betep0}) we see that the
right-hand side of the system (\ref{exLC1}) (\ref{exLC3}) is a two time
continuously differentiable function of variables $\tau ,z,\Phi $ in the
domain in $\mathbb{R}^{3}$\ defined by the inequality (\ref{Dgr0}).

From the classical theorem on the existence, uniqueness and regular
dependence on the initial data and parameters of solutions of the Cauchy
problem for a system\ of ordinary differential equations (see, for instance, 
\cite{CoddL}), we readily obtain the following statement.

\begin{lemma}
\label{Lcharloc0}Let the initial data 
\begin{equation}
\tau _{s=0}=\tau _{0},\quad z_{s=0}=z_{0},\quad \Phi _{s=0}=\Phi _{0}
\label{inigen}
\end{equation}%
be such that the function $D$ defined by(\ref{Dgr0}) is positive, namely $%
D\left( \tau _{0},z_{0},\Phi _{0}\right) >0$. Then the system (\ref{exLC1}),
(\ref{exLC3}) with the initial condition (\ref{inigen})\ has a unique
solution $\tau \left( s\right) ,z\left( s\right) ,\Phi \left( s\right) $\
defined on a maximal (finite or infinite) interval $s_{-}<s<s_{+}$. If the
value\ $s_{\pm }$\ is finite,\ then%
\begin{equation}
\text{either\ }\lim_{s\rightarrow s_{\pm }}\left( \left\vert \Phi
\right\vert ^{2}+z^{2}\right) =\infty \text{\ \ or\ \ }\lim_{s\rightarrow
s_{\pm }}D\left( \tau ,z,\Phi \right) =0.  \label{Fizinfty}
\end{equation}%
\ The solution is\ a twice continuously differentiable function of $s$, of
the initial data\ $\tau _{0},z_{0},\Phi _{0},\ $and of the parameter $\zeta $%
, and $\tau \left( s\right) =\tau _{0}+s$.
\end{lemma}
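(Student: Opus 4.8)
The plan is to read Lemma~\ref{Lcharloc0} as a direct application of the classical theory of the Cauchy problem for systems of ordinary differential equations (as in \cite{CoddL}): once the governing vector field is shown to be $C^{2}$ on the open set where $D>0$, existence, uniqueness, the maximal interval of definition, and $C^{2}$-dependence on the data and on $\zeta$ are all standard. The only genuine content is therefore (a) the regularity check for the right-hand side of (\ref{exLC1})--(\ref{exLC3}), and (b) the correct formulation of the blow-up alternative (\ref{Fizinfty}) for a flow whose domain is the open set $\{D>0\}$ rather than all of $\mathbb{R}^{3}$.

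First I would dispose of the $\tau$-equation: (\ref{exLC1a}) decouples and integrates immediately to $\tau=\tau_{0}+s$, so I may treat the remaining system for $(z,\Phi)$ as nonautonomous with $\tau$ replaced everywhere by $\tau_{0}+s$. Next I would verify that the right-hand sides of (\ref{exLC1}) and (\ref{exLC3}) are twice continuously differentiable in $(\tau,z,\Phi)$ and in $\zeta$ throughout the region $D>0$; this is essentially the observation recorded just above the lemma. Indeed $\beta$ and $\gamma$ are $C^{2}$ functions of $\tau$ by (\ref{betep1}) (or (\ref{betep0})); the quantity $Q$ from (\ref{Qzetz}) is polynomial in $z$ with $C^{2}$ coefficients in $\tau$; and the function $\Theta$ given by (\ref{ZfiQ}) is an analytic function of its arguments wherever the radicand $D=\Phi^{2}-2\gamma\Phi+\beta^{2}\gamma^{2}+Q$ stays positive, since the square-root function is smooth away from the origin. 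Consequently $\Theta$, its partial derivatives $\Theta_{\Phi}$ and $\Theta_{z}$, and hence the entire vector field, are $C^{2}$ on the open domain $\mathcal{D}=\{(\tau,z,\Phi):D>0\}$.

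With this in hand, the classical local existence-uniqueness theorem gives, through each initial point (\ref{inigen}) with $D(\tau_{0},z_{0},\Phi_{0})>0$, a unique solution that extends to a maximal interval $s_{-}<s<s_{+}$, and the smooth-dependence part of the same theorem yields the asserted $C^{2}$-dependence on $\tau_{0},z_{0},\Phi_{0}$ and on $\zeta$; that the solution is $C^{2}$ in $s$ follows because $s\mapsto(z,\Phi)$ is $C^{1}$ and the $C^{2}$ field then renders its derivative $C^{1}$. The step I expect to require the most care is the dichotomy (\ref{Fizinfty}). The general escape principle states that a maximal orbit must eventually leave every compact subset $K\subset\mathcal{D}$ as $s\to s_{\pm}$ whenever $s_{\pm}$ is finite. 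Since $\tau=\tau_{0}+s$ stays bounded on any finite $s$-interval, leaving all compact subsets of $\mathcal{D}$ can happen in only two ways: either $(z,\Phi)$ escapes to infinity, i.e. $|\Phi|^{2}+z^{2}\to\infty$, or the orbit approaches the boundary $\partial\mathcal{D}$ where the square root in (\ref{ZfiQ}) degenerates, i.e. $D(\tau,z,\Phi)\to0$. This is precisely (\ref{Fizinfty}); the point worth emphasizing is that, because the field is defined only on $\{D>0\}$, one must retain the second possibility alongside the usual escape to infinity.
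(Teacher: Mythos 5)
Your proposal is correct and follows essentially the same route as the paper, which simply records (immediately before the lemma) that the right-hand side of (\ref{exLC1})--(\ref{exLC3}) is twice continuously differentiable on the open set where (\ref{Dgr0}) holds and then invokes the classical existence--uniqueness--regular-dependence theorem for the Cauchy problem from \cite{CoddL}. Your explicit regularity check and the escape-from-compact-sets argument for the dichotomy (\ref{Fizinfty}) are precisely the content the paper leaves implicit in that citation.
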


\subsubsection{Properties of the characteristic curves}

The characteristic system (\ref{exLC1})--(\ref{exLC3}), (\ref{LCin})\
involves the function $Z=\Theta \left( \Phi \right) $\ defined by (\ref{ZfiQ}%
). We give sufficient conditions on the variables which ensure that (\ref%
{Dgr0}) holds.

\begin{proposition}
If%
\begin{equation}
\left\vert \Phi \right\vert \leq \check{\beta}^{2}/4  \label{File14}
\end{equation}%
and 
\begin{equation}
Q\geq -\check{\beta}^{2}/4  \label{Qle14}
\end{equation}%
then%
\begin{equation}
D\left( \tau ,z,\Phi \right) \geq \Phi ^{2}+\ \check{\beta}^{2}/4,
\label{Dgr1b}
\end{equation}%
and (\ref{Dgr0}) holds. Conditions (\ref{File14}) and (\ref{Qle14}) are
fulfilled if 
\begin{gather}
\zeta ^{1/3}\left\vert z\right\vert \leq 1,  \label{zetsm1} \\
\zeta ^{2}\sup_{\tau }\left\vert \partial _{\tau }^{2}\sigma +\left(
\partial _{\tau }\sigma \right) ^{2}-2\sigma -\beta ^{2}\zeta
^{2}\right\vert +\zeta ^{4/3}\sup_{\tau }\left\vert \partial _{\tau }\beta
+2\beta \partial _{\tau }\sigma \right\vert +\zeta ^{10/3}\leq \check{\beta}%
^{2}/8.  \label{zetsm2}
\end{gather}
\end{proposition}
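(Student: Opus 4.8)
The plan is to split the proof into two independent computations matching the two sentences of the statement. First I would establish the lower bound (\ref{Dgr1b}) assuming (\ref{File14}) and (\ref{Qle14}); the engine here is the elementary identity $\gamma^{2}\left(1-\beta^{2}\right)=1$, equivalently $\beta^{2}\gamma^{2}=\gamma^{2}-1$. Second I would verify that the parameter and geometry restrictions (\ref{zetsm1}) and (\ref{zetsm2}) force $Q$ to be small, hence (\ref{Qle14}); this is a matter of regrouping the explicit expression (\ref{Qzetz}) for $Q$ by powers of $z$ and doing careful exponent bookkeeping.

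For the first part, I would read off from the definition in (\ref{Dgr0}) that
\begin{equation*}
D-\Phi^{2}=\beta^{2}\gamma^{2}-2\gamma\Phi+Q ,
\end{equation*}
so that (\ref{Dgr1b}) is equivalent to $\beta^{2}\gamma^{2}-2\gamma\Phi+Q\geq\check{\beta}^{2}/4$. I would then bound the three summands from below separately: $\beta^{2}\gamma^{2}\geq\check{\beta}^{2}\gamma^{2}$ using $\left|\beta\right|\geq\check{\beta}$ from (\ref{betcheck}); $-2\gamma\Phi\geq-2\gamma\left|\Phi\right|\geq-\gamma\check{\beta}^{2}/2$ using (\ref{File14}) and $\gamma\geq1$; and $Q\geq-\check{\beta}^{2}/4$ directly from (\ref{Qle14}). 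Summing gives the lower bound $\check{\beta}^{2}\left(\gamma^{2}-\gamma/2-1/4\right)$, and the proof reduces to the scalar inequality $\gamma^{2}-\gamma/2-1/2\geq0$, which holds because it factors as $\left(\gamma-1\right)\left(\gamma+1/2\right)\geq0$ for every $\gamma\geq1$. This yields (\ref{Dgr1b}), and since $\Phi^{2}\geq0$ and $\check{\beta}>0$ it gives $D\geq\check{\beta}^{2}/4>0$, i.e.\ (\ref{Dgr0}).

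For the second part, I would regroup (\ref{Qzetz}) as a $z$-independent term $-\zeta^{2}\left[\partial_{\tau}^{2}\sigma+\left(\partial_{\tau}\sigma\right)^{2}-2\sigma-\beta^{2}\zeta^{2}\right]$, a term linear in $z$ equal to $-\zeta^{2}z\left[\partial_{\tau}\beta+2\beta\partial_{\tau}\sigma\right]$, and a quadratic term $-\beta^{2}\zeta^{4}z^{2}$. On the region (\ref{zetsm1}), where $\left|z\right|\leq\zeta^{-1/3}$, and using $\zeta\leq1$ together with $\left|\beta\right|\leq\hat{\epsilon}_{1}<1$ from (\ref{cv10}), the middle term is controlled by $\zeta^{2}\left|z\right|\leq\zeta^{5/3}\leq\zeta^{4/3}$ and the last by $\zeta^{4}z^{2}\leq\zeta^{10/3}$. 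Taking suprema in $\tau$, the three resulting bounds are exactly the three terms of the left-hand side of (\ref{zetsm2}), so that $\left|Q\right|\leq\check{\beta}^{2}/8$, whence $Q\geq-\check{\beta}^{2}/4$ and (\ref{Qle14}) holds.

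The main obstacle is the exponent bookkeeping in the second part: the powers $\zeta^{4/3}$ and $\zeta^{10/3}$ in (\ref{zetsm2}) are precisely what one obtains from the cutoff $\left|z\right|\leq\zeta^{-1/3}$, so the argument only closes for this particular balance between (\ref{zetsm1}) and (\ref{zetsm2}), and I would double-check each power and the harmless use of $\zeta^{1/3}\leq1$ to pass from $\zeta^{5/3}$ to $\zeta^{4/3}$. A second point requiring care is that (\ref{File14}) is a bound on the unknown phase $\Phi$ rather than a consequence of the smallness conditions: the restrictions (\ref{zetsm1})--(\ref{zetsm2}) secure only (\ref{Qle14}), while (\ref{File14}) is to be maintained by the characteristic construction, which is initialized at $\Phi=0$ on $z=0$ through (\ref{Zy0}); the slack between the bound $\check{\beta}^{2}/8$ obtained for $\left|Q\right|$ and the threshold $\check{\beta}^{2}/4$ appearing in (\ref{Qle14}) is what leaves room to absorb the $\Phi$-dependence later.
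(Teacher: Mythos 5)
Your proof is correct and takes essentially the same route as the paper's: the paper simply asserts $\Phi^{2}-2\gamma\Phi+Q+\beta^{2}\gamma^{2}\geq\Phi^{2}+\check{\beta}^{2}/4$ ``taking into account $\gamma\geq1$'' and then regroups $Q$ by powers of $z$ exactly as you do, so your explicit chain $\beta^{2}\gamma^{2}\geq\check{\beta}^{2}\gamma^{2}$, $-2\gamma\Phi\geq-\gamma\check{\beta}^{2}/2$, together with the factorization $(\gamma-1)(\gamma+1/2)\geq0$, just supplies the omitted details, and your closing observation that (\ref{zetsm1})--(\ref{zetsm2}) secure only (\ref{Qle14}) while (\ref{File14}) must be maintained separately by the characteristic construction agrees with the paper, whose proof likewise verifies only (\ref{Qle14}). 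One cosmetic point: the identity $\beta^{2}\gamma^{2}=\gamma^{2}-1$ that you advertise as the engine is never actually used in your argument, which runs instead on $\beta^{2}\geq\check{\beta}^{2}$.
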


\begin{proof}
If (\ref{File14}) and (\ref{Qle14}) hold, we take into account that $\gamma
\geq 1$ and conclude that 
\begin{equation}
\Phi ^{2}-2\gamma \Phi +Q+\beta ^{2}\gamma ^{2}\geq \Phi ^{2}+\ \check{\beta}%
^{2}/4>0.  \label{discrgr}
\end{equation}%
According to (\ref{Qzetz}) condition (\ref{Qle14}) takes the form 
\begin{equation*}
\zeta ^{2}\left( \partial _{\tau }^{2}\sigma +\left( \partial _{\tau }\sigma
\right) ^{2}-2\sigma -\beta ^{2}\zeta ^{2}\right) +z\zeta ^{2}\left(
\partial _{\tau }\beta +2\beta \partial _{\tau }\sigma \right) +\zeta
^{4}z^{2}\beta ^{2}\leq \check{\beta}^{2}/4.
\end{equation*}%
Hence (\ref{Qle14})\ is fulfilled if (\ref{zetsm1}), (\ref{zetsm2}) hold.
\end{proof}

Therefore $\Theta \left( \Phi \right) $ is a regular function of $\Phi $
which satisfies inequality (\ref{File14}) in the strip $\Xi \left( \theta
\right) $ defined by (\ref{Ksi}) as long as $\zeta $\ satisfies (\ref{zetsm2}%
).\ 

It is convenient to introduce the following notation. Let $2_{+}$ be a fixed
number which is arbitrary close to $2$ and $1_{+}$\ be arbitrarily\ close to 
$1$\ and satisfy the inequality 
\begin{equation}
1<2_{+}/2<1_{+}<2.  \label{pl2}
\end{equation}%
Below we obtain estimates of solutions of the characteristic equations in
the strip $\Xi \left( \bar{\theta}\right) $ using notation 
\begin{equation}
\bar{\theta}=\bar{\theta}\left( \zeta \right) =\ln ^{1/2_{+}}\zeta ^{-1}.
\label{thebar}
\end{equation}%
It is also\ convenient to introduce the following functions: 
\begin{equation}
b\left( \tau \right) =\beta ^{-1}\left( \tau \right) -\beta \left( \tau
\right) ,\quad \;B\left( \tau _{0},s\right) =\int_{\tau _{0}}^{\tau
_{0}+s}b\left( \tau \right) d\tau .  \label{btaus}
\end{equation}%
Using the inequality%
\begin{equation}
0<\hat{\epsilon}_{1}^{-1}-\hat{\epsilon}_{1}\leq \left\vert \beta
^{-1}-\beta \right\vert =\left( \beta ^{-1}-\beta \right) \beta /\left\vert
\beta \right\vert \leq \check{\beta}^{-1}-\check{\beta}  \label{betbet}
\end{equation}%
we obtain that 
\begin{equation}
\left( \hat{\epsilon}_{1}^{-1}-\hat{\epsilon}_{1}\right) \left\vert
s\right\vert \leq B\left( \tau _{0},s\right) \leq \left\vert s\right\vert
\left( \check{\beta}^{-1}-\check{\beta}\right) .  \label{bbet}
\end{equation}

\begin{lemma}
\label{Lfizet1}Let $\zeta \leq 1/C_{0}$\ where $C_{0}$ is sufficiently
large. Let\ $\tau ,z,\Phi $\ be a solution to (\ref{exLC1})- (\ref{LCin})
and assume that $z\left( s\right) ,\Phi \left( s\right) $\ are defined on an
interval $s_{1-}<s<s_{1+}$ with values in the strip $\Xi \left( \bar{\theta}%
\right) $, and that the following estimate holds 
\begin{equation}
\left\vert \Phi \left( \tau \left( s\right) ,z\left( s\right) \right)
\right\vert \leq \left\vert \zeta \right\vert \ \ \text{for }s_{1-}<s<s_{1+}.
\label{ests1s2}
\end{equation}%
\ Then there exist a constant$\ C_{5}>0$\ such that\ on the same interval\
we have: 
\begin{equation}
0<\left( \hat{\epsilon}_{1}^{-1}-\hat{\epsilon}_{1}\right) /2\leq \left\vert
dz/ds\right\vert \leq 2\left( \check{\beta}^{-1}-\check{\beta}\right) ,
\label{dzdsgr}
\end{equation}%
\begin{equation}
\left( \hat{\epsilon}_{1}^{-1}-\hat{\epsilon}_{1}\right) \left\vert
s\right\vert /2\leq \left\vert z\right\vert \leq 2\left\vert s\right\vert
\left( \check{\beta}^{-1}-\check{\beta}\right) ,  \label{zgrle}
\end{equation}%
\begin{equation}
\left\vert z-B\left( \tau _{0},s\right) \right\vert \leq C\zeta \left\vert
s\right\vert ,  \label{zminb}
\end{equation}%
\begin{equation}
\left\vert \Phi \right\vert \leq C_{5}\zeta ^{2}\left( \left\vert
z\right\vert ^{2}+\left\vert z\right\vert \right) \mathrm{e}^{\left\vert
z\right\vert ^{2}}.  \label{Filec3}
\end{equation}
\end{lemma}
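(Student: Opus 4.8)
The plan is to treat the characteristic system (\ref{exLC1})--(\ref{exLC3}), (\ref{exLC2}), (\ref{LCin}) as a perturbation, in the small parameter $\zeta$, of the leading flow obtained by setting $\Phi=0$ and $Q=0$. As a first step I would record the sizes of the ingredients of $\Theta$. Under the standing hypotheses $|\Phi|\le|\zeta|$, $|z|\le\bar{\theta}$ and $\zeta\le 1/C_{0}$, the bounds $\zeta\le\check{\beta}^{2}/4$ and $\zeta^{1/3}\bar{\theta}\to 0$ (recall $\bar{\theta}=\ln^{1/2_{+}}\zeta^{-1}$) place us in the regime (\ref{File14}), (\ref{zetsm1})--(\ref{zetsm2}) of the preceding Proposition, so $D\ge\check{\beta}^{2}/4$ by (\ref{Dgr1b}) and $\Theta,\Theta_{\Phi},\Theta_{z}$ from (\ref{ZfiQ}) are smooth with denominators bounded below. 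Expanding (\ref{ZfiQ}) about $\Phi=0$, $Q=0$ and using $Q=O(\zeta^{2}(1+|z|))$ from (\ref{Qzetz}) gives $\Theta_{\Phi}=-\beta^{-1}+O(\zeta)$, $\Theta(0)=Q/(2\beta\gamma)+O(Q^{2})=O(\zeta^{2}(1+|z|))$, and $\Theta_{z}=O(\zeta^{2})$.

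The estimates (\ref{dzdsgr})--(\ref{zminb}) then follow quickly. By (\ref{exLC1}), $dz/ds=-\Theta_{\Phi}-\beta=(\beta^{-1}-\beta)+O(\zeta)=b(\tau)+O(\zeta)$, with $b$ as in (\ref{btaus}). Since (\ref{betbet}) bounds $|b|$ between $\hat{\epsilon}_{1}^{-1}-\hat{\epsilon}_{1}$ and $\check{\beta}^{-1}-\check{\beta}$, choosing $C_{0}$ large makes the $O(\zeta)$ error smaller than half the lower bound, which gives (\ref{dzdsgr}). Moreover $b$ inherits the constant sign of $\beta$, which never vanishes by (\ref{betcheck}), so $dz/ds$ keeps a constant sign and $z(s)$ is strictly monotone with $z(0)=0$; integrating $|dz/ds|$ between its two bounds yields (\ref{zgrle}), and integrating $dz/ds-b(\tau_{0}+s')=O(\zeta)$ from $0$ to $s$, with $B(\tau_{0},s)=\int_{0}^{s}b(\tau_{0}+s')\,ds'$, yields (\ref{zminb}).

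The heart of the proof is (\ref{Filec3}). Using (\ref{dlog}), equation (\ref{exLC3}) reads $d\Phi/ds=g(\tau,z,\Phi)$ with $g=-2\Phi(\partial_{\tau}\sigma+\beta z)-2z\Theta(\Phi)+\Theta_{z}$; I would cast this in linear form $d\Phi/ds=\tilde{A}\,\Phi+F$, where $F=g|_{\Phi=0}=-2z\Theta(0)+\Theta_{z}|_{\Phi=0}$ and $\tilde{A}=\int_{0}^{1}\partial_{\Phi}g\,dt$ is the averaged coefficient along the segment from $0$ to $\Phi$. The crucial observation is the algebraic identity $\partial_{\Phi}g=2z\,(dz/ds)-2\partial_{\tau}\sigma+\partial_{\Phi}\Theta_{z}$, obtained by comparing the coefficient of $\Phi$ with (\ref{exLC1}). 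Integrating along the characteristic (where $d\tau/ds=1$), the first term contributes $z(s)^{2}-z(s')^{2}$, the second integrates \emph{exactly} to the bounded boundary term $\ln(\gamma(\tau(s))/\gamma(\tau(s')))$ because $\sigma=\ln\gamma^{-1/2}$, and the remainder — namely $\int\partial_{\Phi}\Theta_{z}=O(\zeta^{2}\bar{\theta})$ together with the $O(z|\Phi|)=O(\zeta z)$ gap between $\tilde{A}$ and $\partial_{\Phi}g|_{\Phi=0}$, which integrates to $O(\zeta\bar{\theta}^{2})$ — tends to zero because $\bar{\theta}$ grows only like a power of $\ln\zeta^{-1}$ (see (\ref{thebar}), (\ref{pl2})), so that $\zeta\bar{\theta}^{2}\to 0$. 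Hence $\exp(\int_{s'}^{s}\tilde{A}\,ds'')\le C\,e^{z(s)^{2}-z(s')^{2}}$.

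Finally I would close with Duhamel's formula, using $\Phi|_{s=0}=0$ from (\ref{LCin}) and the source bound $|F|\le C\zeta^{2}(1+|z|+z^{2})$ from the size estimates of the first paragraph:
\[
|\Phi(s)|\le\int_{0}^{|s|}\exp\Big(\int_{s'}^{s}\tilde{A}\,ds''\Big)\,|F(s')|\,ds'\le C\,e^{z(s)^{2}}\int_{0}^{|s|}e^{-z(s')^{2}}\,|F(s')|\,ds'.
\]
Changing the integration variable from $s'$ to $u=z(s')$ — legitimate since $|dz/ds'|$ is bounded above and below by (\ref{dzdsgr}) — turns the last integral into $C\zeta^{2}\int_{0}^{|z|}e^{-u^{2}}(1+u+u^{2})\,du$, and the Gaussian weight bounds this by $C(|z|+|z|^{2})$ (it is $O(|z|)$ for $|z|\le 1$ and $O(1)$ for $|z|\ge 1$). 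This gives $|\Phi|\le C_{5}\,\zeta^{2}(|z|^{2}+|z|)\,e^{|z|^{2}}$, which is (\ref{Filec3}). The main obstacle is the third paragraph: spotting the cancellation that makes the integrating factor equal $e^{z^{2}-z'^{2}}$ up to a bounded factor, and verifying that the residual terms really are negligible on a strip of width $\bar{\theta}$. This is exactly where the exact integration of the $\partial_{\tau}\sigma$ term and the choice of strip width $\bar{\theta}=\ln^{1/2_{+}}\zeta^{-1}$ enter, while the Gaussian weight $e^{-u^{2}}$ in the source integral is what prevents an extra power of $|z|$ from appearing in the final bound.
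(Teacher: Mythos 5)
Your proof is correct, and it exploits the same key cancellation as the paper, but packages it in a genuinely different way. The paper's route is exact and algebraic: it rewrites (\ref{exLC3}) so that all terms linear in $\Phi $ assemble into $-\Phi \,d(\ln \Psi ^{2})/ds$ along the characteristic (its equation (\ref{dfis})), multiplies by $\Psi ^{2}$, and obtains the exact total derivative $d(\Psi ^{2}\Phi )/ds$ equal to a source bounded by $C\Phi ^{2}\left\vert z\right\vert +C\zeta ^{2}\left\vert z\right\vert +C\zeta ^{2}$, the quadratic remainder being isolated in $\Theta ^{1}=\Theta -\Phi \Theta _{\Phi }-\Theta \left( 0\right) =O\left( \Phi ^{2}\right) $ and estimated via (\ref{Thet2}); integrating with $\Phi |_{s=0}=0$ and $\left\vert z\right\vert \sim \left\vert s\right\vert $ then gives (\ref{Filec3}) with no exponent residuals to control. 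You instead linearize by Hadamard's lemma and run Duhamel, recovering the same Gaussian weight through your identity $\partial _{\Phi }g=2z\,dz/ds-2\partial _{\tau }\sigma +\partial _{\Phi }\Theta _{z}$; note that $2z\,dz/ds-2\partial _{\tau }\sigma =-d(\ln \Psi ^{2})/ds$, so your integrating factor is precisely the paper's $\Psi ^{2}$ up to the residuals you estimate. The cost of your version is that the exponent is only approximately $z^{2}-z^{\prime 2}$, and you must check that the discrepancies --- $O(\zeta \bar{\theta}^{2})$ from the averaging in $\tilde{A}$ and $O(\zeta ^{2}\bar{\theta})$ from $\partial _{\Phi }\Theta _{z}$ --- vanish, which is where the logarithmic strip width (\ref{thebar}) is genuinely needed; the paper's exact rearrangement requires no such control at this stage. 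What you gain is a more systematic scheme and a slightly sharper intermediate bound: keeping the weight $e^{-z(s^{\prime })^{2}}$ inside the source integral yields $C\zeta ^{2}\min \left( \left\vert z\right\vert ,1\right) e^{z^{2}}$, stronger than the paper's $C\zeta ^{2}\left( \left\vert z\right\vert ^{2}+\left\vert z\right\vert \right) e^{z^{2}}$, before both are relaxed to (\ref{Filec3}). One small imprecision to fix: the identity for $\partial _{\Phi }g$ invokes (\ref{exLC1}) and therefore holds with $\Theta _{\Phi }$ evaluated at the endpoint value $\Phi \left( s\right) $, so the gap you must estimate is between $\tilde{A}$ and $\partial _{\Phi }g$ at $\Phi $, not at $\Phi =0$; the magnitude $O\left( \zeta \left\vert z\right\vert \right) $ and the conclusion are unchanged. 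Your treatment of (\ref{dzdsgr})--(\ref{zminb}) coincides with the paper's.
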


\begin{proof}
If (\ref{ests1s2}) holds, conditions (\ref{zetsm1}), (\ref{zetsm2}) are
satisfied for $\zeta \leq 1/C_{0}$, implying that (\ref{Qle14}) and (\ref%
{Dgr1b}) are fulfilled.\ Equation (\ref{exLC1})\ can be written in the form 
\begin{equation}
dz/ds=\beta ^{-1}-\beta -\Theta _{\Phi }^{1}\left( \Phi \right)
\label{exLC11}
\end{equation}%
where 
\begin{gather}
\Theta _{\Phi }^{1}=\Theta _{\Phi }\left( \Phi \right) +\beta ^{-1}
\label{Thetfif} \\
=\frac{\Phi }{\left( \left( \Phi -\gamma \right) ^{2}-1+Q\right) ^{1/2}}%
\frac{\beta }{\left\vert \beta \right\vert }+\frac{\Phi ^{2}-2\gamma \Phi +Q%
}{\left( \left( \left( \Phi -\gamma \right) ^{2}-1+Q\right) ^{1/2}+\gamma
\left\vert \beta \right\vert \right) \left( \left( \Phi -\gamma \right)
^{2}-1+Q\right) ^{1/2}}.  \notag
\end{gather}%
Evidently $\Theta _{\Phi }^{1}$ has the form 
\begin{equation}
\Theta _{\Phi }^{1}=\Phi U_{1}+QU_{2},  \label{Thet1u}
\end{equation}%
where $U_{1}$ and $U_{2}$ are algebraic expressions which are analytic if (%
\ref{Dgr1b}) is fulfilled, and consequently they are bounded. Using (\ref%
{Qzetz})\ with $\left\vert z\right\vert \leq \bar{\theta}\ $and small $\zeta 
$ we conclude that 
\begin{equation}
\left\vert Q\right\vert =\zeta ^{2}\left\vert \partial _{\tau }^{2}\sigma
+\left( \partial _{\tau }\sigma \right) ^{2}-2\sigma -\beta ^{2}\zeta
^{2}+z\left( \partial _{\tau }\beta +2\beta \partial _{\tau }\sigma \right)
+\zeta ^{2}z^{2}\beta ^{2}\right\vert \leq C_{0}\zeta ^{2}\left( \left\vert
z\right\vert +1\right) ,  \label{Qlnz}
\end{equation}%
and obtain an estimate 
\begin{equation}
\left\vert \Theta _{\Phi }^{1}\right\vert \leq C_{1}\zeta \text{\ for }%
s_{1-}<s<s_{1+}.  \label{Tfi}
\end{equation}%
Hence, (\ref{exLC1})\ implies that 
\begin{equation}
\left\vert dz/ds-\beta ^{-1}+\beta \right\vert \leq C\zeta ^{2-\delta }\text{%
.}  \label{Tfi1}
\end{equation}%
\ We take $\zeta $ so small that $2C\zeta \leq \min (\hat{\epsilon}_{1}^{-1}-%
\hat{\epsilon}_{1},\check{\beta}^{-1}-\check{\beta})$, and using (\ref%
{betbet}) we obtain (\ref{dzdsgr}). Then integration implies (\ref{zgrle})\
and (\ref{zminb}).

Observe that equation (\ref{exLC3}) can be written in the form 
\begin{equation*}
\frac{d\Phi }{ds}=-2\Phi \partial _{\tau }\ln \Psi +2\left( \Theta \left(
\Phi \right) -\Phi \Theta _{\Phi }\left( \Phi \right) \right) \partial
_{z}\ln \Psi +2\Phi \left( \Theta _{\Phi }\left( \Phi \right) +\beta \right)
\partial _{z}\ln \Psi +\Theta _{z},
\end{equation*}%
which, in turn, with the use of (\ref{exLC1}) can be rewritten as 
\begin{equation}
\frac{d\Phi }{ds}=-2\Phi \partial _{\tau }\ln \Psi +2\left( \Theta \left(
\Phi \right) -\Phi \Theta _{\Phi }\left( \Phi \right) \right) \partial
_{z}\ln \Psi -2\Phi \frac{dz}{ds}\partial _{z}\ln \Psi +\Theta _{z}.
\end{equation}%
According to (\ref{Qle14})\ $\beta ^{2}\gamma ^{2}+Q>0$,\ and\ we rewrite (%
\ref{exLC3})\ in the form 
\begin{equation}
\frac{d\Phi }{ds}=-\Phi \frac{d\ln \Psi ^{2}}{ds}+2\Theta ^{1}\left( \Phi
\right) \partial _{z}\ln \Psi ^{2}-2\left( \beta \gamma -\left( \beta
^{2}\gamma ^{2}+Q\right) ^{1/2}\beta /\left\vert \beta \right\vert \right)
\partial _{z}\ln \Psi ^{2}+\Theta _{z}  \label{dfis}
\end{equation}%
where%
\begin{equation}
\Theta ^{1}\left( \Phi \right) =\Theta \left( \Phi \right) -\Phi \Theta
_{\Phi }\left( \Phi \right) +\beta \gamma -\left( \beta ^{2}\gamma
^{2}+Q\right) ^{1/2}\beta /\left\vert \beta \right\vert .  \label{Thet1}
\end{equation}%
Multiplying by $\Psi ^{2}$ we obtain 
\begin{equation}
\frac{d\left( \Psi ^{2}\Phi \right) }{ds}=2\Theta _{1}\left( \Phi \right)
z-2\left( \beta \gamma -\left( \beta ^{2}\gamma ^{2}+Q\right) ^{1/2}\right)
z+\Theta _{z}\Psi ^{2}.  \label{Psi2fieq}
\end{equation}%
According to (\ref{ZfiQ}) 
\begin{gather}
\frac{\beta }{\left\vert \beta \right\vert }\Theta ^{1}\left( \Phi \right) =-%
\frac{\Phi ^{2}}{\left( \left( \Phi -\gamma \right) ^{2}-1+Q\right) ^{1/2}}%
\frac{\left( \beta ^{2}\gamma ^{2}+Q\right) ^{1/2}}{\left( \beta ^{2}\gamma
^{2}+Q\right) ^{1/2}+\left( \left( \Phi -\gamma \right) ^{2}-1+Q\right)
^{1/2}}  \label{Thet2} \\
-\gamma \Phi ^{2}\frac{\Phi -2\gamma }{\left[ \left( \beta ^{2}\gamma
^{2}+Q\right) ^{1/2}+\left( \left( \Phi -\gamma \right) ^{2}-1+Q\right)
^{1/2}\right] ^{2}\left( \left( \Phi -\gamma \right) ^{2}-1+Q\right) ^{1/2}}.
\notag
\end{gather}%
Hence, taking into account (\ref{Dgr1b})\ we obtain%
\begin{equation*}
\left\vert \Theta ^{1}\left( \Phi \right) \right\vert \leq C_{1}\Phi ^{2}.
\end{equation*}%
To estimate remaining terms in (\ref{Psi2fieq}) we note that in\ $\Xi \left( 
\bar{\theta}\right) $ 
\begin{equation}
\left\vert \beta \gamma -\left( \beta ^{2}\gamma ^{2}+Q\right)
^{1/2}\right\vert \leq C\left\vert Q\right\vert \leq C^{\prime }\zeta
^{2}\left\vert z\right\vert .  \label{theq}
\end{equation}%
Using (\ref{Dgr1b})\ we obtain\ in\ $\Xi \left( \bar{\theta}\right) $ an
estimate 
\begin{equation}
\left\vert \Theta _{z}\left( \Phi \right) \right\vert =\frac{\left\vert
\zeta ^{2}\left( \partial _{\tau }\beta +2\beta \partial _{\tau }\sigma
\right) +2\zeta ^{4}z\beta ^{2}\right\vert }{2\left( \Phi ^{2}-2\gamma \Phi
+\beta ^{2}\gamma ^{2}+Q\right) ^{1/2}}\leq C_{1}\zeta ^{2}.  \label{Thzfile}
\end{equation}%
The above estimates combined with (\ref{Psi2fieq}) and (\ref{Thzfile}) yield%
\begin{equation}
\left\vert \frac{d\left( \Psi ^{2}\Phi \right) }{ds}\right\vert \leq
C_{2}\Phi ^{2}\left\vert z\right\vert +C^{\prime \prime }\zeta
^{2}\left\vert z\right\vert +C_{3}\zeta ^{2}.  \label{dpsifi}
\end{equation}%
Using (\ref{zgrle}) we obtain 
\begin{equation*}
\left\vert \Psi ^{2}\Phi \right\vert \leq C_{4}\zeta ^{2}s^{2}+C_{3}\zeta
^{2}\left\vert s\right\vert ,
\end{equation*}%
and, using (\ref{zgrle}) and the definition of $\Psi $,\ we obtain (\ref%
{Filec3}).\ 
\end{proof}

\begin{theorem}
\label{Tlfizet2}Let\ $\delta $\ be an arbitrary small fixed number
satisfying $0<\delta <\frac{2_{+}-2}{32}$, and suppose that $\zeta \leq
1/C_{0}$ is sufficiently small. Let $\ s_{2-}<s<s_{2+}$\ be a maximal
interval on which a solution $\tau ,z,\Phi $ \ of (\ref{exLC1})- (\ref{LCin}%
)\ is such\ that $z\left( s\right) ,\tau \left( s\right) $ takes values in
the strip $\Xi \left( \bar{\theta}\right) $ where $\bar{\theta}=\ln
^{1/2_{+}}\zeta ^{-1}$ \ and $\left\vert \Phi \left( s\right) \right\vert
\leq \zeta $. Then 
\begin{equation}
z\left( s_{2+}\right) =\bar{\theta}\beta /\left\vert \beta \right\vert
,\quad z\left( s_{2-}\right) =-\bar{\theta}\beta /\left\vert \beta
\right\vert ,\quad \bar{\theta}=\ln ^{1/2_{+}}\zeta ^{-1},  \label{zs2p}
\end{equation}%
and $z\left( s\right) ,\Phi \left( s\right) $ satisfy\ inequalities (\ref%
{Dgr1b}), (\ref{zgrle}) on this interval\ as well as the inequality%
\begin{equation}
\left\vert \Phi \right\vert \leq \zeta ^{2-\delta },\quad \left\vert \frac{%
d\Phi }{ds}\right\vert \leq C_{7}\zeta ^{2-\delta }.  \label{Filec4}
\end{equation}%
The constants $C_{7}$, $C_{0}$ depend on $\delta $ but do not depend on $%
\tau _{0},\zeta $.
\end{theorem}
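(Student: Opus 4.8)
The plan is to run a continuation (bootstrap) argument on the characteristic system (\ref{exLC1})--(\ref{LCin}), using Lemma \ref{Lfizet1} as the workhorse and the local existence dichotomy of Lemma \ref{Lcharloc0} to control the endpoints of the maximal interval. The whole point is that the strip half-width $\bar\theta=\ln^{1/2_+}\zeta^{-1}$ in (\ref{thebar}) is tuned so that the Gaussian weight $\mathrm{e}^{\bar\theta^2}$ appearing in (\ref{Filec3}) grows strictly slower than any negative power of $\zeta$; this is exactly where the hypothesis $2_+>2$ enters, and it is what lets the $\zeta^2$ prefactor win.

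First I would observe that on the given maximal interval $s_{2-}<s<s_{2+}$ the hypotheses of Lemma \ref{Lfizet1} hold verbatim, since the solution stays in $\Xi(\bar\theta)$ and satisfies $\left\vert\Phi\right\vert\le\zeta$; hence all of its conclusions (\ref{dzdsgr})--(\ref{Filec3}) are available, in particular the discriminant bound (\ref{Dgr1b}) and the transit estimate (\ref{zgrle}). Next I would record that $z$ is strictly monotone in $s$: by (\ref{exLC11}) and (\ref{Thetfif}) one has $\mathrm{d}z/\mathrm{d}s=\beta^{-1}-\beta-\Theta_\Phi^1$ with $\left\vert\Theta_\Phi^1\right\vert\le C_1\zeta$, while $\beta^{-1}-\beta$ keeps the sign of $\beta$ and, by (\ref{betbet}) and (\ref{betcheck}), is bounded away from $0$. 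Since $\left\vert\beta\right\vert\ge\check\beta>0$ forbids $\beta$ from changing sign, $\mathrm{d}z/\mathrm{d}s$ has the constant sign $\beta/\left\vert\beta\right\vert$ for small $\zeta$; starting from $z=0$ at $s=0$, $z$ therefore sweeps monotonically across the strip.

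The core step is the bootstrap of the amplitude. Substituting $\left\vert z\right\vert\le\bar\theta$ into (\ref{Filec3}) gives $\left\vert\Phi\right\vert\le C_5\zeta^2(\bar\theta^2+\bar\theta)\mathrm{e}^{\bar\theta^2}$; writing $L=\ln\zeta^{-1}$ one has $\bar\theta^2=L^{2/2_+}$ with $2/2_+<1$, so $\bar\theta^2$ and the logarithmic factors are $o(L)$ and $C_5(\bar\theta^2+\bar\theta)\mathrm{e}^{\bar\theta^2}\le\zeta^{-\delta/2}$ once $\zeta$ is small, yielding the improved bound $\left\vert\Phi\right\vert\le\zeta^{2-\delta/2}$. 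Because $2-\delta/2>1$, this is \emph{strictly} smaller than the assumed $\zeta$, which is what closes the continuation: the endpoints $s_{2\pm}$ cannot be caused by $\left\vert\Phi\right\vert$ reaching $\zeta$, and (\ref{Dgr1b}) together with the boundedness of $z$ and $\Phi$ rules out the blow-up alternative in (\ref{Fizinfty}). Hence by Lemma \ref{Lcharloc0} and the monotonicity of $z$ the interval terminates exactly when $z=\pm\bar\theta\,\beta/\left\vert\beta\right\vert$, which is (\ref{zs2p}), while (\ref{zgrle}) and (\ref{Dgr1b}) hold throughout as already noted. Finally I would feed the improved amplitude bound back into the $\Phi$-equation written in the form (\ref{dfis}), using $\left\vert\Theta^1(\Phi)\right\vert\le C\Phi^2$, the estimate (\ref{theq}) for $\beta\gamma-(\beta^2\gamma^2+Q)^{1/2}$, the bound (\ref{Thzfile}) for $\Theta_z$, and $\partial_z\ln\Psi^2=-2z$ with $\left\vert z\right\vert\le\bar\theta$; the surviving polylogarithmic factors $\bar\theta$ are absorbed by the slack between $\zeta^{2-\delta/2}$ and $\zeta^{2-\delta}$, giving $\left\vert\mathrm{d}\Phi/\mathrm{d}s\right\vert\le C_7\zeta^{2-\delta}$ and, a fortiori, $\left\vert\Phi\right\vert\le\zeta^{2-\delta}$ as in (\ref{Filec4}).

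I expect the main obstacle to be the bookkeeping of the competition between the Gaussian growth $\mathrm{e}^{\bar\theta^2}$ and the powers of $\zeta$, and in particular carrying a little extra room in the exponent (proving $\zeta^{2-\delta/2}$ rather than $\zeta^{2-\delta}$) so that the polylogarithmic factors $\bar\theta$ produced when differentiating the Gaussian phase $\Psi$ can be swallowed. The precise numerical constraint $\delta<\left(2_+-2\right)/32$ is then the margin chosen so that all these absorptions, here and in the subsequent global construction, go through uniformly in $\tau_0$ and $\zeta$.
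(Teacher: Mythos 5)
Your proposal is correct and follows essentially the same route as the paper's proof: apply Lemma \ref{Lfizet1} on the maximal interval, close the bootstrap by substituting $\left\vert z\right\vert \leq \bar{\theta}$ into (\ref{Filec3}) and using that $\mathrm{e}^{\bar{\theta}^{2}}$ with $\bar{\theta}^{2}=\ln ^{2/2_{+}}\zeta ^{-1}$ is subpolynomial in $\zeta ^{-1}$ (so the improved bound is strictly below $\zeta $), conclude via the continuation/extension argument that the interval can only terminate when $z$ reaches the strip boundary, and read off the derivative bound from the $\Phi $-equation. Your explicit treatment of the monotonicity of $z$, of the dichotomy in (\ref{Fizinfty}), and the two-step exponent bookkeeping ($\zeta ^{2-\delta /2}$ before $\zeta ^{2-\delta }$, to absorb the polylogarithmic factors) are only presentational refinements of the same argument.
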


\begin{proof}
Consider\ an interval\ $\left( s_{1-},s_{1+}\right) $ satisfying the
conditions of Lemma \ref{Lfizet1}.\ According to Lemma\ \ref{Lcharloc0} such
an interval exists. Consider the maximal interval\ $\left(
s_{1-},s_{1+}\right) $\ such that $z\left( s\right) ,\Phi \left( s\right) $
takes values in the strip $\Xi \left( \bar{\theta}\right) $\ and conditions
of Lemma \ref{Lfizet1} are fulfilled, and denote the maximal value of $%
s_{1+} $\ by\ $s_{2+}$\ and the minimal value of $s_{1-}$ by $s_{2-}$.
According to Lemma \ref{Lfizet1} inequality (\ref{Dgr1b})\ is satisfied on
every $\left( s_{1-},s_{1+}\right) $, and hence it is satisfied on $\left(
s_{2-},s_{2+}\right) $. On such an interval (\ref{Filec3}) is fulfilled, and
taking into account that $\left\vert z\right\vert \leq \ln ^{1/2_{+}}\zeta
^{-1}$\ we obtain (\ref{Filec4}). Note that (\ref{Filec3})\ implies that\ 
\begin{equation}
\left\vert \Phi \right\vert \leq 2C_{5}\zeta ^{2}\ln ^{2/2_{+}}\zeta
^{-1}\exp \ln ^{2/2_{+}}\zeta ^{-1}\leq \zeta ^{2-\delta }/2  \label{filezet}
\end{equation}%
if\ $\zeta $\ is small enough to satisfy\ 
\begin{equation*}
C_{5}\ln ^{2/2_{+}}\zeta ^{-1}\exp \left( \ln ^{2/2_{+}}\zeta ^{-1}-\delta
\ln \zeta ^{-1}\right) \leq 1/2.
\end{equation*}%
Hence (\ref{ests1s2}) holds under the above condition on $\zeta $ which
evidently does not depend on $\tau _{0}$. We assert that (\ref{zs2p}) is
fulfilled. Indeed, assume the contrary. If 
\begin{equation*}
\left\vert z\left( s_{2+}\right) \right\vert <\ln ^{1/2_{+}}\zeta ^{-1},
\end{equation*}%
we can extend the solution to a small interval with $\left\vert s\right\vert
>\left\vert s_{2+}\right\vert $ The solution stays in $\Xi \left( \bar{\theta%
}\right) \ $by continuity and (\ref{filezet}) implies (\ref{ests1s2}) by
continuity for small $s-s_{2+}$.\ This contradicts the assumption that $%
s_{2+}$ is maximal. The quantity $s_{2-}$ can be treated similarly yielding (%
\ref{zs2p}).\ The first inequality (\ref{Filec4}) follows from (\ref{filezet}%
). To obtain the inequality for the derivative\ we use (\ref{dpsifi}) and (%
\ref{filezet}) as follows: 
\begin{equation*}
\left\vert d\Phi /ds\right\vert \leq C_{5}s\zeta ^{4-2\delta }+C_{3}\zeta
^{2}+C_{6}\left\vert \Phi \right\vert \leq C_{7}\zeta ^{2-\delta }.
\end{equation*}
\end{proof}

We denote by $s_{2\pm }\left( \tau _{0}\right) $\ the value of $s_{2\pm }$
given by (\ref{zs2p}) which was found in Theorem\ \ref{Tlfizet2}. According
to (\ref{dzdsgr}) and the implicit function theorem the function $s_{2\pm
}\left( \tau _{0}\right) $ is differentiable. Let us introduce a set 
\begin{equation}
\Xi ^{\prime }\left( \bar{\theta}\right) =\left\{ \left( \tau _{0},s\right)
\in \mathbb{R}^{2}:\tau _{0}\in \mathbb{R},\quad s_{2-}\leq s\leq
s_{2+}\right\} ,  \label{Ksip}
\end{equation}%
where according to (\ref{zgrle})%
\begin{equation}
C^{-1}\ln ^{1/2_{+}}\zeta ^{-1}\leq \left\vert s_{2\pm }\right\vert \leq
C\ln ^{1/2_{+}}\zeta ^{-1}.  \label{s2lnz}
\end{equation}

\subsubsection{Properties of the auxiliary potentials}

Based on the properties of solutions to the characteristic equations we
establish here properties of solutions to the quasilinear equation (\ref%
{Zfieq2}). Notice that at $\zeta =0$\ the characteristic equations are
linear and the case of small $\zeta $ can be considered in $\Xi \left( \bar{%
\theta}\right) $ as a small perturbation.

We make use below\ of $C^{l}$ norms of a function of two variables defined
as follows:%
\begin{eqnarray}
\left\Vert u\right\Vert _{C^{l}\left( \Omega \right) } &=&\sup_{y\in \Omega
}\tsum\nolimits_{\left\vert \alpha \right\vert \leq l}\left\vert \partial
^{\alpha }u\left( y\right) \right\vert ,\text{\ where }y=\left(
y_{1},y_{2}\right) ,  \label{Cl} \\
\partial ^{\alpha } &=&\partial _{1}^{\alpha _{1}}\partial _{2}^{\alpha
_{2}},\alpha =\left( \alpha _{1},\alpha _{2}\right) ,\left\vert \alpha
\right\vert =\alpha _{1}+\alpha _{2}.  \notag
\end{eqnarray}%
If $\Omega =\Xi ^{\prime }\left( \bar{\theta}\right) $ we set in the formula
above $y=\left( \tau _{0},s\right) $, whereas if $\Omega =\Xi \left( \bar{%
\theta}\right) $ we set\ $y=\left( \tau ,z\right) $. Let $B\left( \tau
_{0},s\right) $ be defined by (\ref{btaus}). Obviously,%
\begin{equation}
\left\vert \partial ^{\alpha }B\left( \tau _{0},s\right) \right\vert \leq C,%
\text{\quad }1\leq \left\vert \alpha \right\vert \leq 2.  \label{dab}
\end{equation}

\begin{lemma}
\label{Lestzfi}Under conditions of Theorem \ref{Tlfizet2} functions $z\left(
\tau _{0},s;\zeta \right) $ and $\Phi \left( \tau _{0},s;\zeta \right) $
satisfy in $\Xi ^{\prime }\left( \bar{\theta}\right) ,$ $\bar{\theta}=\ln
^{1/2}\zeta ^{-1/2_{+}},$ estimates 
\begin{equation}
\left\Vert z-B\right\Vert _{C^{l}\left( \Xi ^{\prime }\left( \bar{\theta}%
\right) \right) }\leq C\zeta ^{2-\delta l-\delta },\quad l=0,1,2;
\label{zest3}
\end{equation}%
\begin{equation}
\left\Vert \Phi \left( \tau _{0},s\right) \right\Vert _{C^{l}\left( \Xi
^{\prime }\left( \bar{\theta}\right) \right) }\leq C\zeta ^{2-\delta
l-\delta },\quad l=0,1,2.  \label{fiest3}
\end{equation}
\end{lemma}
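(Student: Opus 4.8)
The plan is to read the required $C^l$ bounds off the flow of the characteristic system (\ref{exLC1})--(\ref{exLC3}), viewed as functions of the initial datum $\tau_0$ and the parameter $s$, building on the pointwise estimates already secured in Lemma \ref{Lfizet1} and Theorem \ref{Tlfizet2}. Since $\tau=\tau_0+s$ and the right-hand sides of (\ref{exLC1}), (\ref{exLC3}) are explicit functions of $(\tau,z,\Phi;\zeta)$, the $s$-derivatives of $z$ and $\Phi$ are obtained directly from the equations: by (\ref{exLC11}) one has $\partial_s(z-B)=-\Theta^1_\Phi(\Phi)$, which is $O(\zeta^{2-\delta})$ from the bound $|\Phi|\le\zeta^{2-\delta}$ of (\ref{Filec4}), the bound $|Q|\le C\zeta^2(|z|+1)$ of (\ref{Qlnz}), and the boundedness of the analytic factors $U_1,U_2$ on the region (\ref{Dgr1b}). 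Integrating this over $|s|=O(\bar\theta)$ and absorbing the interval length gives the $C^0$ estimate for $z-B$, sharpening (\ref{zminb}); the $C^0$ estimate for $\Phi$ is (\ref{Filec4}). Thus the only genuinely new work is to control the derivatives in $\tau_0$, after which mixed derivatives $\partial_{\tau_0}^{\alpha_1}\partial_s^{\alpha_2}$ are recovered by applying $\partial_{\tau_0}$ to the (already estimated) right-hand sides.

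First I would write down the \emph{variational system}. Differentiating (\ref{exLC1}), (\ref{exLC3}) in $\tau_0$ gives a linear inhomogeneous ODE in $s$ for the pair $(\partial_{\tau_0}z,\partial_{\tau_0}\Phi)$, whose coefficient matrix is the Jacobian of the right-hand sides in $(z,\Phi)$ and whose forcing is their explicit $\tau$-derivative. To feed Gronwall I need uniform bounds on these coefficients, and these come from ingredients already in place: (i) $\Theta$ and its partials $\Theta_\Phi,\Theta_z,\Theta_{\Phi\Phi}$ and higher are analytic, hence bounded, on the region $D\ge\Phi^2+\check\beta^2/4>0$ of (\ref{Dgr1b}); (ii) $Q$ and its derivatives are $O(\zeta^2(|z|+1))$ and $O(\zeta^2)$ by (\ref{Qzetz}), (\ref{Qlnz}), (\ref{Thzfile}); and (iii) $\partial_z\ln\Psi=-z$, $\partial_z^2\ln\Psi=-1$, $\partial_\tau\ln\Psi=\partial_\tau\sigma$, with $\tau$-derivatives of $\beta,\gamma,\sigma$ bounded through (\ref{cv10}) and (\ref{betep0}). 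The largest coefficient is therefore $O(|z|)=O(\bar\theta)$ on $\Xi(\bar\theta)$.

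Next I would integrate the variational system over the $s$-interval, whose length is $O(\bar\theta)$ by (\ref{s2lnz}). The resulting Gronwall factor is $\exp\!\big(C\int|\text{coeff}|\,\mathrm{d}s\big)\le\exp(C\bar\theta^2)$. The crucial observation — and the reason the strip half-width was fixed as $\bar\theta=\ln^{1/2_+}\zeta^{-1}$ with $2_+>2$ in (\ref{pl2}) — is that $\bar\theta^2=\ln^{2/2_+}\zeta^{-1}$ with $2/2_+<1$, so $\bar\theta^2=o(\ln\zeta^{-1})$ and $\exp(C\bar\theta^2)\le\zeta^{-\delta}$ for $\zeta$ small; this is exactly the absorption already used in (\ref{filezet}). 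Consequently each differentiation in $\tau_0$ costs at most one factor $\bar\theta\cdot\exp(C\bar\theta^2)\le\zeta^{-\delta}$ beyond the pointwise bound, turning the $C^0$ bounds $O(\zeta^{2-\delta})$ into $|\partial_{\tau_0}(z-B)|,|\partial_{\tau_0}\Phi|=O(\zeta^{2-2\delta})$, which gives (\ref{zest3}), (\ref{fiest3}) at $l=1$.

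For $l=2$ I would differentiate the variational system once more: the second-order equations of variation are again linear in $(\partial_{\tau_0}^2z,\partial_{\tau_0}^2\Phi)$ with the same coefficient matrix, and a forcing quadratic in the already-bounded first derivatives plus the second explicit $\tau$-derivative of the right-hand sides. One more Gronwall pass over $\Xi'(\bar\theta)$ supplies another factor $\le\zeta^{-\delta}$ and hence the bound $O(\zeta^{2-3\delta})$; choosing the $\delta$ of Theorem \ref{Tlfizet2} slightly smaller than the target $\delta$ absorbs all log factors and sub-polynomial exponentials, so the stated exponents $2-\delta l-\delta$ hold, and $\partial^\alpha B$ enters harmlessly via (\ref{dab}). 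The hard part will be precisely this control of the Gronwall exponential: the coefficient $-2\beta z$ and the $z\,\partial_z\ln\Psi$ terms grow linearly in $z$, so a crude estimate over an interval of length $\bar\theta$ yields $\exp(C\bar\theta^2)$, and the whole argument closes only because $\bar\theta$ grows slowly enough ($2_+>2$) that this exponential remains sub-polynomial in $\zeta^{-1}$; one must also keep verifying that the variational flow never reaches the singular locus $D=0$, which is guaranteed by (\ref{Dgr1b}) holding throughout $\Xi(\bar\theta)$ under the hypotheses of Theorem \ref{Tlfizet2}.
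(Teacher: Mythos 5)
Your proposal is correct and is essentially the paper's own argument: the paper applies $\partial ^{\alpha }$, $\left\vert \alpha \right\vert \leq 2$, to the characteristic equations written for $z-B$ and $\Phi $, obtaining by induction in $l$ the linear system (\ref{zp12}), (\ref{fip1}) — exactly your variational system — whose dangerous coefficient is $O\left( \bar{\theta}\right) $, and it closes with the same Gronwall factor $\exp \left( C\bar{\theta}^{2}\right) =\exp \left( C\ln ^{2/2_{+}}\zeta ^{-1}\right) \leq \zeta ^{-\delta }$, absorbed precisely because $2/2_{+}<1$ by (\ref{pl2}). Your remark that one runs the argument with a slightly smaller $\delta $ to absorb the logarithmic and sub-polynomial factors, so that each differentiation costs a factor $\zeta ^{-\delta }$ in the exponent, matches the paper's bookkeeping as well.
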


\begin{proof}
The derivation of the above estimates is straightforward but tedious, and we
present only the principal steps. The estimates are derived by induction in $%
l$. For $l=0$ we use Theorem \ref{Tlfizet2}, and then inequality (\ref%
{fiest3}) follows from (\ref{Filec4}).\ According to (\ref{Filec4})
inequality (\ref{ests1s2}) is fulfilled and then inequality (\ref{zest3})
for $l=0$\ follows from (\ref{zminb}) and (\ref{zgrle}). Consider now $l>0$
assuming that (\ref{zest3}), (\ref{fiest3})\ hold for $l-1$. Equations (\ref%
{exLC11}), (\ref{dfis}) can be written in the form%
\begin{equation}
\frac{d\left( z-B\right) }{ds}=-\Phi U_{11}-QU_{12},  \label{dz1}
\end{equation}%
\begin{equation}
\frac{d\Phi }{ds}=-\Phi \left( \partial _{\tau }\sigma -zU_{20}\right) +\Phi
^{2}U_{21}z+QU_{22}z+\zeta ^{2}U_{23}z+\zeta ^{2}U_{24},  \label{dfis1}
\end{equation}%
where $U_{ij}$ are algebraic functions of variables\ $\Phi ,Q,\gamma ,\beta $%
. These variables are bounded, and the derivatives of $\gamma ,\beta ,\sigma 
$ are bounded as well by (\ref{betep0}). The derivatives of the solutions,
namely $z^{\prime }=\partial ^{\alpha }z,\Phi ^{\prime }=\partial ^{\alpha
}\Phi $, $\left\vert \alpha \right\vert =l$, satisfy the equations obtained
by application of $\partial ^{\alpha }$\ to (\ref{exLC11}), (\ref{dfis}).
Since (\ref{Dgr1b}) is fulfilled, the coefficients $U_{ij}$\ and their
derivatives\ with respect to\ $\Phi ,Q,\gamma ,\beta $\ are bounded. The
only unbounded variables are\ $z\ $and $s$ according to (\ref{zgrle}), and
their upper bounds are \ respectively $\bar{\theta}$\ and\ $C\bar{\theta}$.
Observe that $z$\ enters $U_{11}$ and $U_{12},U_{21}$ only through $Q$.\ The
derivatives of $U_{ij}$\ or of $Q$\ up to $l$-th order cannot\ involve
powers of $z$\ higher than $z^{2l}$.\ Since $Q$\ given by (\ref{Qzetz})\
involves\ the factor $\zeta ^{2}$\ and $\left\vert z\right\vert \leq \bar{%
\theta}$\ in $\Xi ^{\prime }\left( \bar{\theta}\right) $,\ the derivatives\
of $Q$ of order $l$ with respect to $z$ or $\tau $\ are smaller than $\zeta
^{2-\delta }$ for small $\zeta $, and derivatives of $U_{ij}$ with respect
to\ $z$ or $\tau $\ are smaller than $\zeta ^{2-l\delta }$. We apply the
Leibnitz formula to the derivatives of (\ref{dz1}) and (\ref{dfis1}).\ Using
the induction assumption 
\begin{equation*}
\left\Vert \Phi \right\Vert _{C^{l-1}\left( \Xi ^{\prime }\left( \bar{\theta}%
\right) \right) }\leq C\zeta ^{2-l\delta },\qquad \left\Vert z-B\right\Vert
_{C^{l-1}\left( \Xi ^{\prime }\left( \bar{\theta}\right) \right) }\leq
C\zeta ^{2-l\delta },
\end{equation*}%
and notation (\ref{btaus}), we can write the equations in the form 
\begin{equation}
\frac{dz^{\prime }}{ds}=\partial ^{\alpha }b+U_{11}^{\prime }\Phi ^{\prime
}+U_{12}^{\prime }z^{\prime }+U_{13,\alpha },  \label{zp1}
\end{equation}%
\begin{equation}
\frac{d\Phi ^{\prime }}{ds}=U_{21}^{\prime }\Phi ^{\prime }+U_{22}^{\prime
}z^{\prime }+U_{23,\alpha },  \label{fip1}
\end{equation}%
where 
\begin{eqnarray}
\left\vert U_{11}^{\prime }\right\vert &\leq &C_{1},\qquad U_{12}^{\prime
}\leq \zeta ^{2-\delta },\qquad U_{13,\alpha }\leq \zeta ^{2-l\delta },
\label{zp11} \\
\left\vert U_{21}^{\prime }\right\vert &\leq &C_{0}\bar{\theta},\qquad
\left\vert U_{22}^{\prime }\right\vert \leq C\zeta ^{2-l\delta },\qquad
U_{23,\alpha }\leq C\zeta ^{2-l\delta }.  \notag
\end{eqnarray}%
\ \ The initial data for $z^{\prime }=\partial ^{\alpha }z$, $\Phi ^{\prime
}=\partial ^{\alpha }\Phi $ can also be analyzed by induction, since $%
\partial _{\tau _{0}}^{i}\Phi _{s=0}=0$, $\partial _{\tau _{0}}^{i}z_{s=0}=0$
and $\partial _{s}^{j}\partial _{\tau _{0}}^{i}\Phi \ $can\ be expressed
from the equation for $\partial _{s}^{j-1}\partial _{\tau _{0}}^{i}\Phi $.
Note that $\partial ^{\alpha }b=\partial _{\tau }^{\left\vert \alpha
\right\vert }b=\partial _{s}^{\left\vert \alpha \right\vert }b$ and (\ref%
{zp1}) can be rewritten in the form 
\begin{equation}
\frac{d}{ds}\left( z^{\prime }-\partial _{s}^{\left\vert \alpha \right\vert
-1}b\right) =U_{11}^{\prime }\Phi ^{\prime }+U_{12}^{\prime }\left(
z^{\prime }-\partial _{s}^{\left\vert \alpha \right\vert -1}b\right)
+U_{12}^{\prime }\partial _{s}^{\left\vert \alpha \right\vert
-1}b+U_{13,\alpha }.  \label{zp12}
\end{equation}%
Hence, we get\ from the induction assumptions estimates of the initial data
for small $\zeta $ 
\begin{equation*}
\left\vert \partial ^{\alpha }z_{s=0}-\partial _{s}^{\left\vert \alpha
\right\vert -1}b_{s=0}\right\vert \leq C_{1}\zeta ^{2-l\delta },\quad
\left\vert \partial ^{\alpha }\Phi _{s=0}\right\vert \leq C_{2}\zeta
^{2-l\delta },\quad \left\vert \alpha \right\vert =l.
\end{equation*}%
From the system (\ref{zp12}), (\ref{fip1}) we easily obtain for small $\zeta 
$ the following estimate 
\begin{equation}
\left\vert z^{\prime }-\partial _{s}^{\left\vert \alpha \right\vert
-1}b\right\vert +\left\vert \Phi ^{\prime }\right\vert \leq C_{3}\zeta
^{2-l\delta }\exp \left( C_{0}^{\prime }\bar{\theta}s\right) \leq C_{3}\zeta
^{2-l\delta }\exp \left( C_{0}^{\prime }\ln ^{2/2_{+}}\zeta ^{-1}\right)
\leq \zeta ^{2-l\delta -\delta }  \label{rough}
\end{equation}%
which implies desired (\ref{zest3}) and (\ref{fiest3}).
\end{proof}

Notice that solutions of the characteristic equations determine the function 
$\Phi \left( \tau \left( \tau _{0},s\right) ,z\left( \tau _{0},s\right)
\right) $ on the characteristic curves as a function of parameters $\tau
_{0},s$. Characteristic equations (\ref{exLC1})-(\ref{LCin}) also determine
a function\ $\ \Pi \left( \tau _{0},s\right) =\left( \tau \left( \tau
_{0},s\right) \text{, }z\left( \tau _{0},s\right) \right) $\ in $\left( \tau
,z\right) $-plane. To obtain the function $\Phi $ of independent variables $%
\left( \tau ,z\right) $ we have to find the inverse of $\Pi $, that is $%
\left( \tau _{0},s\right) =\Pi ^{-1}\left( \tau ,z\right) $ in a strip about
the line $z=0$.

\begin{lemma}
\label{Limageth}The image of the mapping $\Pi $ contains the strip\ $\Xi
\left( \bar{\theta}\right) $ if\ $\zeta \leq 1/C$ for sufficiently large
constant $C$.\ 
\end{lemma}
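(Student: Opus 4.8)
The plan is to show $\Pi$ is surjective onto $\Xi(\bar{\theta})$ by slicing the strip with the vertical lines $\{\tau=\tau^{\ast}\}$ and solving one scalar equation on each slice. Since $\tau(\tau_0,s)=\tau_0+s$ by \ref{exLC1a}, the preimage of such a line is the diagonal $\tau_0=\tau^{\ast}-s$ in the $(\tau_0,s)$-plane, and $\Pi$ restricted to it takes the value $(\tau^{\ast},g(s))$ with $g(s):=z(\tau^{\ast}-s,s)$. Thus it suffices to prove that for every fixed $\tau^{\ast}$ and every $z^{\ast}$ with $|z^{\ast}|<\bar{\theta}$ there is an admissible $s$, that is $(\tau^{\ast}-s,s)\in\Xi'(\bar{\theta})$, with $g(s)=z^{\ast}$. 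I would treat the case $\beta>0$ (so $\beta/|\beta|=1$); the case $\beta<0$ is identical after a sign flip, and $\beta$ keeps a constant sign throughout by \ref{betcheck}.

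The core step is to establish that $g$ is strictly monotone with derivative bounded away from zero. Differentiating along the diagonal gives $g'(s)=\partial_s z-\partial_{\tau_0}z$, which is precisely the Jacobian $\det D\Pi$ at $(\tau^{\ast}-s,s)$. Writing $z=B+(z-B)$ with $B$ as in \ref{btaus}, and using $\partial_s B(\tau_0,s)=b(\tau_0+s)$ together with $\partial_{\tau_0}B(\tau_0,s)=b(\tau_0+s)-b(\tau_0)$, the $B$-contribution collapses to $b(\tau_0)=b(\tau^{\ast}-s)$, so that
\[
g'(s)=b(\tau^{\ast}-s)+\big[\partial_s(z-B)-\partial_{\tau_0}(z-B)\big].
\]
The bracket is controlled by the $C^{1}$ estimate of Lemma \ref{Lestzfi}, namely $\|z-B\|_{C^{1}(\Xi'(\bar{\theta}))}\le C\zeta^{2-2\delta}$, hence it is $O(\zeta^{2-2\delta})$. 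Since $|b|\ge\hat{\epsilon}_1^{-1}-\hat{\epsilon}_1>0$ and $b$ has constant sign by \ref{betbet}--\ref{bbet}, for $\zeta\le 1/C$ the error is dominated and $|g'(s)|\ge\tfrac12(\hat{\epsilon}_1^{-1}-\hat{\epsilon}_1)>0$ with the sign of $b$. Therefore $g$ is a strictly increasing continuous function of $s$.

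It remains to identify the range of $g$ and invoke the intermediate value theorem. At $s=0$ the diagonal passes through $(\tau^{\ast},0)\in\Xi'(\bar{\theta})$ and $g(0)=z(\tau^{\ast},0)=0$, interior to $(-\bar{\theta},\bar{\theta})$. As $s$ moves away from $0$ the point $(\tau^{\ast}-s,s)$ stays at bounded $|s|\le C\bar{\theta}$ by \ref{s2lnz} while $\tau_0=\tau^{\ast}-s$ remains finite, so the diagonal must leave $\Xi'(\bar{\theta})$ through the graphs $s=s_{2\pm}(\tau_0)$ at finite parameters $s_{-}<0<s_{+}$; by the exit identities \ref{zs2p} one has $g(s_{+})=\bar{\theta}$ and $g(s_{-})=-\bar{\theta}$. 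Monotonicity and continuity then give $g\big([s_{-},s_{+}]\big)=[-\bar{\theta},\bar{\theta}]\supseteq\{z^{\ast}\}$, producing the required admissible $s$ and hence a preimage $(\tau^{\ast}-s,s)$ of $(\tau^{\ast},z^{\ast})$ under $\Pi$.

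I expect the one genuinely delicate point to be the control of the off-diagonal sensitivity $\partial_{\tau_0}z$, i.e.\ the fact that nearby characteristics do not shear enough to reverse the sign of the Jacobian; this is exactly supplied by the $C^{1}$ bound $\|z-B\|_{C^{1}}\le C\zeta^{2-2\delta}$ of Lemma \ref{Lestzfi}, and it is why the smallness $\zeta\le 1/C$ is needed. A secondary point requiring care is confirming that the diagonal actually exits through the temporal boundary $s=s_{2\pm}(\tau_0)$ rather than escaping, which follows from the two-sided bound \ref{s2lnz} on $s_{2\pm}$ together with the just-established monotonicity of $g$.
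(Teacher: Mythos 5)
Your proof is correct, but it slices the strip differently from the paper. The paper's own argument fixes a \emph{horizontal} line $z=z_{0}$, $\left\vert z_{0}\right\vert \leq \bar{\theta}$, observes that every characteristic curve $\left( \tau _{00}+s,z\left( \tau _{00},s\right) \right) $ crosses it transversally (by the monotone traversal (\ref{dzdsgr}) together with the exit values (\ref{zs2p})), and then applies the intermediate value theorem in the characteristic label $\tau _{00}$: the crossing abscissa $p\left( \tau _{00},z_{0}\right) $ is continuous and tends to $\pm \infty $ as $\tau _{00}\rightarrow \pm \infty $ simply because $\tau =\tau _{00}+s$ with $\left\vert s\right\vert \leq C\bar{\theta}$, so every point of the line is attained. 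You instead fix a \emph{vertical} line $\tau =\tau ^{\ast }$, restrict $\Pi $ to the anti-diagonal $\tau _{0}=\tau ^{\ast }-s$, and run the IVT in $s$ on $g\left( s\right) =z\left( \tau ^{\ast }-s,s\right) $, using the identity $g^{\prime }=\det \Pi ^{\prime }$ and the $C^{1}$ bound (\ref{zest3}) of Lemma \ref{Lestzfi} to get $g^{\prime }\left( s\right) =b\left( \tau ^{\ast }-s\right) +O\left( \zeta ^{2-2\delta }\right) $, hence strict monotonicity. Both routes are one-dimensional IVT arguments powered by the same two facts, (\ref{dzdsgr}) and (\ref{zs2p}), but yours is heavier: it imports the $C^{1}$ estimates that the paper reserves for the invertibility Lemma \ref{Lpiinverse}, whereas the paper's sweep over characteristics needs nothing beyond continuity of $p$ and the trivial asymptotics of $\tau _{00}+s$. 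What your version buys in exchange is that the Jacobian computation gives monotonicity of $g$, i.e. injectivity of $\Pi $ along each anti-diagonal, so it anticipates (and partially reproves) Lemma \ref{Lpiinverse}; note, however, that this monotonicity is not actually needed for the image statement, since continuity of $g$ together with the exit identities $g\left( s_{\pm }\right) =\pm \bar{\theta}\beta /\left\vert \beta \right\vert $ already yields surjectivity, so the genuinely load-bearing parts of your argument are the exit-time analysis (existence of the first crossings $s_{\pm }$ of the graphs $s=s_{2\pm }\left( \tau _{0}\right) $, which is justified by (\ref{s2lnz}) and the continuity of $s_{2\pm }$) and (\ref{zs2p}).
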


\begin{proof}
The mapping $\Pi $ maps the straight line $\left\{ \tau _{0},s=0\right\} $
onto the straight line $\left\{ \tau ,z=0\right\} $. The straight line $%
\left\{ \tau _{0}=\tau _{00},s\right\} $\ is mapped onto the curve $\tau
=\tau _{00}+s,z=z\left( \tau _{00},s\right) $. This curve intersects
straight line $\left\vert z_{0}\right\vert =\bar{\theta}$\ and,
consequently, any straight line $z=z_{0}$\ with $\left\vert z_{0}\right\vert
\leq \bar{\theta}$. This intersection is\ transversal according to (\ref%
{dzdsgr}). Hence the point of intersection $\tau =p\left( \tau
_{00},z_{0}\right) $\ continuously and differentiably depends on $\tau _{00}$%
.\ Formula $\tau =\tau _{00}+s$\ implies that $p\left( \tau
_{00},z_{0}\right) \rightarrow \pm \infty $\ as $\tau _{00}\rightarrow \pm
\infty $. Since $p\left( \tau _{00},z_{0}\right) $\ is a continuous
function, it takes all intermediate values on the straight line, therefore
the image of the mapping $\Pi $\ contains every straight line $z=z_{0}$ with 
$\left\vert z_{0}\right\vert \leq \bar{\theta}$.
\end{proof}

Now we want to prove that the mapping $\Pi $ is one-to-one on\ $\Xi ^{\prime
}\left( \bar{\theta}\right) $ and that its inverse has uniformly bounded
derivatives in the strip $\Xi \ \left( \bar{\theta}\right) $. The
characteristic system depends on the small parameter $\zeta $. Therefore $%
\Pi \left( \tau _{0},s\right) =\Pi \left( \tau _{0},s;\zeta \right) $ and
its differential is given by the matrix%
\begin{equation}
\Pi ^{\prime }\left( \tau _{0},s;\zeta \right) =\left( 
\begin{array}{cc}
\partial \tau /\partial \tau _{0} & \partial \tau /\partial s \\ 
\partial z/\partial \tau _{0} & \partial z/\partial s%
\end{array}%
\right) =\left( 
\begin{array}{cc}
1 & 1 \\ 
\partial z/\partial \tau _{0} & \partial z/\partial s%
\end{array}%
\right) .  \label{Pipr}
\end{equation}%
If the matrix determinant is not zero, the inverse is given by the formula 
\begin{equation*}
\Pi ^{\prime -1}\left( \tau _{0},s;\zeta \right) =\frac{1}{\partial
z/\partial s-\partial z/\partial \tau _{0}}\left( 
\begin{array}{cc}
\partial z/\partial s & -1 \\ 
-\partial z/\partial \tau _{0} & 1%
\end{array}%
\right) .
\end{equation*}%
For $\zeta =0$ we obtain from (\ref{exLC1})-(\ref{exLC3}) a simpler system
for the resulting approximation $\mathring{\Phi}$: 
\begin{gather}
\frac{d\tau }{ds}=1,\quad dz/ds=\left( \beta ^{-1}-\beta \right) \left( \tau
_{0}+s\right) ,  \label{exLC10} \\
\frac{d\mathring{\Phi}}{ds}=-2\mathring{\Phi}\left( \partial _{\tau }-\beta
\partial _{z}\right) \ln \Psi -2\beta ^{-1}\mathring{\Phi}\partial _{z}\ln
\Psi .  \label{exLC31}
\end{gather}%
The solution of (\ref{exLC10}) is given by the formula 
\begin{equation}
\tau =\tau _{0}+s,\quad z=z_{0}=B\left( \tau _{0},s\right) ,  \label{ztau0}
\end{equation}%
where $B\left( \tau _{0},s\right) ,b\left( \tau \right) $ are given by (\ref%
{btaus}). The differential of $\Pi \left( \tau _{0},s;0\right) $\ is given
by the matrix%
\begin{equation}
\Pi ^{\prime }\left( \tau _{0},s;0\right) =\left( 
\begin{array}{cc}
1 & 1 \\ 
b\left( \tau _{0}+s\right) -b\left( \tau _{0}\right) & b\left( \tau
_{0}+s\right)%
\end{array}%
\right)  \label{Pipr0}
\end{equation}%
with the determinant $\det \Pi \left( \tau _{0},s;0\right) =b\left( \tau
_{0}\right) $.\ The matrix is invertible according to (\ref{betbet}), and
the inverse matrix is uniformly bounded.\ 

When $\zeta >0$ is small, we consider the system (\ref{exLC1})- (\ref{exLC3}%
), (\ref{LCin})\ as a perturbation of the system with $\zeta =0$\ and the
differential $\Pi ^{\prime }\left( \tau _{0},s;\zeta \right) $ is also a
small perturbation of $\Pi ^{\prime }\left( \tau _{0},s;0\right) $.

\begin{lemma}
\label{Lpiinverse}Let conditions of Theorem \ref{Tlfizet2} be satisfied.
Then\ in $\Xi \left( \bar{\theta}\right) $ 
\begin{equation}
\left\vert \Pi ^{\prime }\left( \tau _{0},s;\zeta \right) -\Pi ^{\prime
}\left( \tau _{0},s;0\right) \right\vert \leq C\zeta ^{2-\delta }
\label{Pilez}
\end{equation}%
for $\left\vert \zeta \right\vert \leq 1/C_{0},$ the matrices $\Pi ^{\prime
}\left( \tau _{0},s;\zeta \right) $ are invertible, the inverse matrices $%
\Pi ^{\prime -1}\left( \tau _{0},s;\zeta \right) $ have continuously
differentiable elements and their derivatives are uniformly bounded in $\Xi
\left( \bar{\theta}\right) .$The mapping $\Pi $ is one-to-one from $\Xi
^{\prime }\left( \bar{\theta}\right) $ to $\Xi \left( \bar{\theta}\right) $,
the mappings $\Pi $ and $\Pi ^{-1}$\ are two times differentiable with
uniformly bounded derivatives.
\end{lemma}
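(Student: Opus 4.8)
The plan is to build the inverse of $\Pi$ from three ingredients already in hand: the explicit form (\ref{Pipr}) of the differential, the $\zeta=0$ computation (\ref{Pipr0}) (whose determinant equals $b(\tau_0)$), and the $C^l$ control of $z-B$ furnished by Lemma \ref{Lestzfi}. First I would establish the perturbation bound (\ref{Pilez}). Since the top row of $\Pi'$ is $(1,1)$ independently of $\zeta$ and, by (\ref{ztau0}), $z=B$ when $\zeta=0$, the matrix $\Pi'(\tau_0,s;\zeta)-\Pi'(\tau_0,s;0)$ has vanishing top row and bottom row $(\partial_{\tau_0}(z-B),\partial_s(z-B))$; hence its norm is at most $2\left\Vert z-B\right\Vert_{C^1(\Xi'(\bar\theta))}$, and (\ref{zest3}) with $l=1$ gives a bound of order $\zeta^{2-2\delta}$, which is (\ref{Pilez}) after harmlessly relabelling the arbitrarily small exponent $\delta$.

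Invertibility and the bound on $\Pi'^{-1}$ then follow at once. The determinant of $\Pi'(\tau_0,s;\zeta)$ is $\partial_s z-\partial_{\tau_0}z$, which at $\zeta=0$ equals $b(\tau_0)$; by (\ref{betbet}) this is bounded away from zero, $\left\vert b(\tau_0)\right\vert\geq\hat\epsilon_1^{-1}-\hat\epsilon_1>0$, and it keeps a constant sign because $\beta$ is continuous and, by (\ref{betcheck}), never vanishes. The perturbation estimate just proved shows $\left\vert\det\Pi'(\tau_0,s;\zeta)-b(\tau_0)\right\vert\to0$ uniformly, so for $\zeta\leq1/C_0$ the determinant stays bounded away from zero with the sign of $b(\tau_0)$. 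Feeding this into the explicit inverse formula displayed above and noting that the entries $\partial_{\tau_0}z,\partial_s z$ are, by Lemma \ref{Lestzfi} (for $l\leq2$) together with (\ref{dab}), continuously differentiable with uniformly bounded derivatives, the quotient rule yields that $\Pi'^{-1}$ has $C^1$ entries with uniformly bounded derivatives on $\Xi(\bar\theta)$.

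The substantive step is the global injectivity of $\Pi$ on $\Xi'(\bar\theta)$; local invertibility of the differential is not enough. Here I would exploit that the first component of $\Pi$ is simply $\tau=\tau_0+s$, so all preimages of a point $(\tau,z)$ lie on the line $\tau_0+s=\tau$. Following the construction in the proof of Lemma \ref{Limageth}, each characteristic $s\mapsto z(\tau_0,s)$ crosses a given level $z$ exactly once, because $dz/ds$ has constant sign and is bounded away from zero by (\ref{dzdsgr}), at a value $s=s(\tau_0,z)$; I set $p(\tau_0,z)=\tau_0+s(\tau_0,z)$. Implicit differentiation of $z(\tau_0,s(\tau_0,z))=z$ gives $\partial p/\partial\tau_0=(\partial_s z-\partial_{\tau_0}z)/\partial_s z=\det\Pi'/\partial_s z$, and both numerator and denominator have a fixed sign and are bounded away from zero, so $p(\cdot,z)$ is strictly monotone in $\tau_0$. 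Since the proof of Lemma \ref{Limageth} already shows $p(\tau_0,z)\to\pm\infty$ as $\tau_0\to\pm\infty$ and covers every level, $p(\cdot,z)$ is a strictly monotone bijection of $\mathbb{R}$; thus each $(\tau,z)\in\Xi(\bar\theta)$ has exactly one preimage, i.e.\ $\Pi$ is one-to-one from $\Xi'(\bar\theta)$ onto $\Xi(\bar\theta)$. I expect this monotonicity-along-lines argument, rather than any computation, to be the delicate point, since it is where the sign structure of $b$ and the boundary normalization (\ref{zs2p}) must combine.

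Finally, the $C^2$ regularity with uniform bounds. Lemma \ref{Lcharloc0} gives that $z(\tau_0,s;\zeta)$ is twice continuously differentiable, and Lemma \ref{Lestzfi} with $l\leq2$ together with (\ref{dab}) bounds its derivatives uniformly, so $\Pi$ is $C^2$ with uniformly bounded derivatives. For $\Pi^{-1}$ I would apply the inverse function theorem: $D\Pi^{-1}=(D\Pi)^{-1}\circ\Pi^{-1}$ is uniformly bounded by the bound on $\Pi'^{-1}$ obtained above, and differentiating once more through $D[(D\Pi)^{-1}]=-(D\Pi)^{-1}(D^2\Pi)(D\Pi)^{-1}$ expresses the second derivatives of $\Pi^{-1}$ in terms of the uniformly bounded quantities $(D\Pi)^{-1}$ and $D^2\Pi$, giving uniform bounds on $D^2\Pi^{-1}$ as well.
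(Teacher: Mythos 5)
Your proposal is correct, and for three of its four steps it coincides with the paper's own proof: the bound (\ref{Pilez}) is obtained there exactly as you do it, from (\ref{zest3}) with $l=1$ applied to the bottom row of $\Pi'-\Pi'(0)$ (the paper even writes the exponent as $2-\delta$ rather than $2-2\delta$, with the same implicit relabelling of the arbitrary $\delta$); invertibility is likewise deduced by viewing $\Pi'(\zeta)$ as a uniformly small perturbation of the explicitly invertible matrix (\ref{Pipr0}) with $\det\Pi'(0)=b(\tau_{0})$ controlled by (\ref{betbet}); and the $C^{2}$ bounds on $\Pi^{-1}$ come from the same inverse-function-theorem bookkeeping. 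The genuine divergence is the injectivity step, and there your argument is the stronger one. The paper disposes of injectivity in one sentence: $\Pi$ is one-to-one ``since it is a local diffeomorphism between $\Xi'(\bar{\theta})$ and $\Xi(\bar{\theta})$ and the image is a simply connected domain.'' As a general topological principle this is false --- a local diffeomorphism onto a simply connected set need not be injective unless one also has properness or some covering-map structure --- so the paper's justification is at best incomplete as written. Your monotonicity-along-lines argument supplies precisely the missing structure: preimages of $(\tau,z)$ are confined to the line $\tau_{0}+s=\tau$; each characteristic meets a level $z$ exactly once because $dz/ds$ is of fixed sign and bounded away from zero by (\ref{dzdsgr}); and the crossing time $p(\tau_{0},z)=\tau_{0}+s(\tau_{0},z)$ is strictly monotone in $\tau_{0}$ since $\partial p/\partial\tau_{0}=\det\Pi'/\partial_{s}z$ with numerator and denominator of the same fixed sign (that of $b$, hence of $\beta$, which never vanishes by (\ref{betcheck})). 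Together with the escape-to-infinity observation already in Lemma \ref{Limageth} this pins down a unique preimage. So your route costs one implicit differentiation but buys a self-contained, airtight injectivity proof, whereas the paper's shortcut would need properness (or an argument like yours) to be made rigorous.
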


\begin{proof}
We use Lemma \ref{Lestzfi}\ and infer from (\ref{zest3}) with $l=1$ that 
\begin{equation*}
\left\vert \partial z/\partial \tau _{0}-\partial z_{0}/\partial \tau
_{0}\right\vert +\left\vert \partial z/\partial s-\partial z_{0}/\partial
s\right\vert \leq C\zeta ^{2-\delta }.
\end{equation*}%
This inequality implies inequality (\ref{Pilez}).\ Inequality (\ref{Pilez}),
in turn,\ implies that $\Pi ^{\prime }\left( \zeta \right) =\Pi ^{\prime
}\left( \tau _{0},s;\zeta \right) $ is a uniformly small perturbation of the
invertible matrix (\ref{Pipr0}) $\Pi ^{\prime }\left( 0\right) $, and hence $%
\Pi ^{\prime -1}\left( \zeta \right) $ is a uniformly small perturbation of
the matrix\ $\Pi ^{\prime -1}\left( 0\right) $. The matrices $\Pi ^{\prime
}\left( \zeta \right) $ and $\Pi ^{\prime -1}\left( \zeta \right) $ have
continuously differentiable entries. Since $\left\vert \Pi ^{\prime
-1}\left( 0\right) \right\vert $ is uniformly bounded and the derivatives of
entries of $\Pi ^{\prime }\left( 0\right) $\ are uniformly bounded and
continuous in $\Xi \left( \bar{\theta}\right) $, then the derivatives of
entries of $\Pi ^{\prime -1}\left( 0\right) $ are uniformly bounded and
continuous in $\Xi \left( \bar{\theta}\right) $ as well.\ The mapping $\Pi
\left( \zeta \right) $\ is one-to-one since it is a local diffeomorphism
between $\Xi ^{\prime }\left( \bar{\theta}\right) $\ and $\Xi \left( \bar{%
\theta}\right) $ and\ the image is a simply connected domain. Since $\Pi $
is two times continuously differentiable with uniformly bounded derivatives
and\ $\Pi ^{\prime -1}\left( \zeta \right) $ is uniformly bounded, the
inverse mapping $\Pi ^{-1}\left( \zeta \right) $\ is two times continuously
differentiable\ with uniformly bounded derivatives.
\end{proof}

\begin{theorem}
For any $\delta >0$ there exists such $C_{0}$\ \ that if $\zeta \leq 1/C_{0}$%
\ then there exists a solution $\Phi \left( \tau ,z\right) $ of the
quasilinear equation (\ref{Zfieq2})\ defined in $\Xi \left( \bar{\theta}%
\right) $. This solution is twice continuously differentiable in\ the strip $%
\Xi \left( \bar{\theta}\right) $\ and its derivatives are uniformly bounded
and small for $\zeta \leq 1/C_{0}$, namely%
\begin{equation}
\left\Vert \Phi \left( \tau ,z\right) \right\Vert _{C^{2}\left( \Xi \left( 
\bar{\theta}\right) \right) }\leq C_{1}\zeta ^{2-3\delta }.  \label{Fiztau}
\end{equation}
\end{theorem}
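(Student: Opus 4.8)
The plan is to assemble the solution from the characteristic data already constructed in the preceding lemmas and to transport its estimates through the change of variables $\Pi$. The method of characteristics replaces the quasilinear equation (\ref{Zfieq2}) by the ODE system (\ref{exLC1})--(\ref{exLC3}) with the initial conditions (\ref{exLC2}), (\ref{LCin}); Lemma \ref{Lcharloc0} furnishes the $C^2$ regularity of this system in $(s,\tau_0,\zeta)$, Theorem \ref{Tlfizet2} shows that each characteristic crosses the strip from $z=-\bar\theta$ to $z=+\bar\theta$, and Lemma \ref{Lestzfi} provides the bounds (\ref{zest3}), (\ref{fiest3}) on $z(\tau_0,s)$ and $\Phi(\tau_0,s)$ as functions of the characteristic parameters on $\Xi'(\bar\theta)$. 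First I would define the candidate solution on $\Xi(\bar\theta)$ by pulling back along the characteristic map, namely
\[
\Phi(\tau,z)=\Phi\bigl(\Pi^{-1}(\tau,z)\bigr),\qquad \Pi(\tau_0,s)=\bigl(\tau(\tau_0,s),z(\tau_0,s)\bigr).
\]
By Lemma \ref{Limageth} the image of $\Pi$ covers $\Xi(\bar\theta)$, and by Lemma \ref{Lpiinverse} the map $\Pi$ is one-to-one from $\Xi'(\bar\theta)$ onto $\Xi(\bar\theta)$ with a twice continuously differentiable inverse whose derivatives are uniformly bounded in $\zeta$; hence $\Phi(\tau,z)$ is well defined and single valued on the whole strip, and the initial condition (\ref{LCin}) gives $\Phi(\tau,0)=0$, i.e. the normalization (\ref{Zy0}).

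Second, I would check that this $\Phi$ genuinely solves (\ref{Zfieq2}). This is the standard consistency of the characteristic construction: along each curve one has $\frac{d}{ds}\Phi(\tau(s),z(s))=\partial_\tau\Phi+(dz/ds)\,\partial_z\Phi$, and substituting $d\tau/ds=1$ from (\ref{exLC1a}) and $dz/ds=-\Theta_\Phi(\Phi)-\beta$ from (\ref{exLC1}) turns the ODE (\ref{exLC3}) for $d\Phi/ds$ into exactly equation (\ref{Zfieq2}) evaluated at $(\tau,z)$. Because $\Pi$ is a diffeomorphism (Lemma \ref{Lpiinverse}), the characteristics foliate $\Xi(\bar\theta)$, so this identity holds at every point of the strip and $\Phi$ is a true solution there.

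Third comes the estimate (\ref{Fiztau}). Writing $\Phi=\Phi(\tau_0,s)\circ\Pi^{-1}$ and differentiating by the chain rule, the first and second derivatives of $\Phi$ in $(\tau,z)$ are polynomial combinations of the derivatives up to order two of the characteristic solution $\Phi(\tau_0,s)$ and of the entries of $\Pi^{-1}$. Lemma \ref{Lpiinverse} bounds the $C^2$ norm of $\Pi^{-1}$ by a constant independent of $\zeta$, while Lemma \ref{Lestzfi} with $l=1,2$ gives derivatives of $\Phi(\tau_0,s)$ of size $\zeta^{2-2\delta}$ and $\zeta^{2-3\delta}$ respectively, both dominated by $\zeta^{2-3\delta}$ since $\zeta\le 1$. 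Multiplying these controls yields (\ref{Fiztau}) with $C_1$ depending only on $\delta$, once $C_0$ is taken large enough that the smallness hypotheses of Theorem \ref{Tlfizet2}, Lemma \ref{Lestzfi} and Lemma \ref{Lpiinverse} all hold simultaneously.

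Most of the real work is already contained in the earlier lemmas, so the proof is largely bookkeeping; the exponent $2-3\delta$ in (\ref{Fiztau}) is precisely the $l=2$ output of (\ref{fiest3}). The one point that deserves care is that the chain-rule estimate is carried out where $|z|$ is as large as $\bar\theta=\ln^{1/2_{+}}\zeta^{-1}$ and $|s|=O(\bar\theta)$ by (\ref{s2lnz}); one must confirm that the resulting polynomial-in-$\bar\theta$ and exponential-in-$\bar\theta$ factors are absorbed into the $\zeta^{-\delta}$ slack already built into (\ref{zest3}), (\ref{fiest3}), exactly as in the passage (\ref{rough}). This interplay between the growth in $\bar\theta$ and the smallness of $\zeta$ is the only genuinely delicate step, and the choice $\bar\theta=\ln^{1/2_{+}}\zeta^{-1}$ is precisely what makes it work.
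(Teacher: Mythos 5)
Your proposal is correct and follows essentially the same route as the paper's own proof: both define $\Phi(\tau,z)$ by pulling back the characteristic data through $\Pi^{-1}$, invoke Lemmas \ref{Limageth} and \ref{Lpiinverse} for well-definedness and the uniformly bounded $C^{2}$ inverse, note that the characteristic construction makes $\Phi$ a solution of (\ref{Zfieq2}), and read off (\ref{Fiztau}) from (\ref{fiest3}) with $l=2$. The details you add (the chain-rule check that (\ref{exLC1})--(\ref{exLC3}) reproduce (\ref{Zfieq2}), and the observation that the $\bar{\theta}$-growth is already absorbed into the $\zeta^{-\delta}$ slack of Lemma \ref{Lestzfi}) are precisely the steps the paper leaves implicit.
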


\begin{proof}
The solution $\Phi \left( \tau ,z\right) $ is defined by formula (\ref%
{Fichar})\ which can be written in the form 
\begin{equation}
\Phi \left( \tau ,z;\zeta \right) =\Phi \left( \Pi ^{-1}\left( \tau
_{0},s;\zeta \right) ,\zeta \right) .  \label{Fitz}
\end{equation}%
The function $\Phi \left( \tau ,z;\zeta \right) $\ is well-defined in $\Xi
\left( \bar{\theta}\right) $\ according to Lemmas \ref{Limageth}\ and \ref%
{Lpiinverse}. Its differentiability properties follow from\ properties of $%
\Phi \left( \tau _{0},s\ ;\zeta \right) $ described in (\ref{fiest3}) and\
properties of $\Pi ^{-1}\left( \tau _{0},s;\zeta \right) $ described in
Lemma \ref{Lpiinverse}. It is a solution to (\ref{Zfieq2})\ according to the
construction of $\Phi \left( \tau _{0},s;\zeta \right) $ as a solution of\ (%
\ref{exLC1})-(\ref{exLC3}).
\end{proof}

The second auxiliary phase $Z$\ is given by the formula (\ref{ZfiQ}). Since (%
\ref{Dgr1b}) holds in $\Xi \left( \bar{\theta}\right) $,\ $Z\left( \tau
,z\right) $ is also twice continuously differentiable in\ the strip $\Xi
\left( \bar{\theta}\right) $ and has uniformly bounded derivatives. The
potential $\varphi _{\mathrm{b}}$ and phase $S$ can be found from (\ref{denZ}%
), (\ref{denZ2}), namely 
\begin{equation}
S=\zeta ^{-1}\int_{0}^{z}Z\left( \tau ,z_{1}\right) dz_{1}=\zeta
^{-1}\int_{0}^{z}\Theta \left( \Phi \right) dz_{1},  \label{Sfor}
\end{equation}%
and%
\begin{equation}
\varphi _{\mathrm{b}}=\frac{m\mathrm{c}^{2}}{q}\left( \Phi +\mathbf{\partial 
}_{\tau }\int_{0}^{z}\Theta \left( \Phi \right) dz_{1}-\beta \Theta \left(
\Phi \right) \right) .  \label{fi2for}
\end{equation}

\begin{lemma}
\label{Lsfi2} The phase function $S$ defined by (\ref{Sfor}) and the
potential $\varphi _{\mathrm{b}}$ defined by (\ref{fi2for}) satisfy the
estimates 
\begin{equation}
\left\Vert S\right\Vert _{C^{2}\left( \Xi \left( \bar{\theta}\right) \right)
}\leq C_{1}\zeta ^{1-4\delta },  \label{sest}
\end{equation}%
\begin{equation}
\left\Vert \varphi _{\mathrm{b}}\right\Vert _{C^{1}\left( \Xi \left( \bar{%
\theta}\right) \right) }\leq C_{2}\zeta ^{2-4\delta }.  \label{fi2est}
\end{equation}
\end{lemma}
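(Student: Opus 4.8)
The plan is to reduce everything to the already-established bound (\ref{Fiztau}) on $\Phi$, namely $\Vert\Phi\Vert_{C^{2}(\Xi(\bar{\theta}))}\leq C\zeta^{2-3\delta}$, together with the smallness of $Q$. First I would record the elementary consequences of the formula (\ref{Qzetz}): since $Q$ carries an explicit factor $\zeta^{2}$ and $|z|\leq\bar{\theta}$ throughout $\Xi(\bar{\theta})$, the bounds (\ref{betep0}), (\ref{betep1}) on $\beta,\sigma$ give $\Vert Q\Vert_{C^{2}(\Xi(\bar{\theta}))}\leq C\zeta^{2}\bar{\theta}$, so that $Q$ and its first two derivatives are $O(\zeta^{2-\delta})$ once the slowly growing factor $\bar{\theta}=\ln^{1/2_{+}}\zeta^{-1}$ is absorbed into one power $\zeta^{-\delta}$.

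The central step is an estimate of the composite function $\Theta(\Phi)$ and its $\tau,z$-derivatives up to second order. Because (\ref{Dgr1b}) holds on $\Xi(\bar{\theta})$, the map $\Theta$ of (\ref{ZfiQ}) is an analytic function of $(\Phi,Q,\gamma,\beta)$ there, and expanding the square root about $\beta^{2}\gamma^{2}$ shows that $\Theta$ vanishes to first order in $(\Phi,Q)$, that is $\Theta=(\Phi^{2}-2\gamma\Phi+Q)/(2\beta\gamma)+\dots$, with no term of order $\zeta^{0}$ surviving. Differentiating by the chain rule through $\Phi$ and $Q$, together with the explicit $\tau$-dependence through $\beta,\gamma,\sigma$ (whose derivatives are bounded by (\ref{betep0})), every resulting term is controlled by either a derivative of $\Phi$ (bounded by $\zeta^{2-3\delta}$) or a derivative of $Q$ (bounded by $\zeta^{2-\delta}$), while the analytic coefficients stay bounded thanks to (\ref{Dgr1b}). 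This yields $\Vert\Theta(\Phi)\Vert_{C^{2}(\Xi(\bar{\theta}))}\leq C\zeta^{2-3\delta}$.

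With this in hand the two estimates follow by direct differentiation of (\ref{Sfor}) and (\ref{fi2for}). For $S=\zeta^{-1}\int_{0}^{z}\Theta\,\mathrm{d}z_{1}$ the fundamental theorem of calculus turns each $z$-derivative into a plain factor $\zeta^{-1}\Theta$ with no integral, whereas $\tau$-derivatives pass under the integral sign; the two terms that retain an integration, namely $S$ itself and $\partial_{\tau}^{2}S$, pick up an extra factor $|z|\leq\bar{\theta}$. Combining $\zeta^{-1}$ with the $\zeta^{2-3\delta}$ bound on $\Theta$ and absorbing at most one further $\bar{\theta}\leq\zeta^{-\delta}$ produces $\Vert S\Vert_{C^{2}(\Xi(\bar{\theta}))}\leq C\zeta^{1-4\delta}$, which is (\ref{sest}). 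For $\varphi_{\mathrm{b}}$ I would treat it, up to the constant $m\mathrm{c}^{2}/q$, as the combination $\Phi+\int_{0}^{z}\partial_{\tau}\Theta\,\mathrm{d}z_{1}-\beta\Theta$: the first and third summands are $O(\zeta^{2-3\delta})$ in $C^{1}$ by (\ref{Fiztau}) and the $\Theta$-estimate, and the middle term is $O(\zeta^{2-4\delta})$ in $C^{1}$ by the same integral-plus-$\bar{\theta}$ bookkeeping, giving (\ref{fi2est}).

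The main obstacle will be the second step, the disciplined bookkeeping of the chain rule for $\partial^{\alpha}\Theta(\Phi)$, and in particular verifying that $\Theta$ genuinely vanishes to first order in $(\Phi,Q)$ so that no $O(1)$ contribution survives after dividing by the square root; everything else is a matter of tracking which terms carry the harmless logarithmic factor $\bar{\theta}$ and confirming that it is always swallowed by a single power $\zeta^{-\delta}$.
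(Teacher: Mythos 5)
Your proposal is correct and follows essentially the same route as the paper's own (very terse) proof: both reduce the estimates to the bound (\ref{Fiztau}) on $\Phi$, the smallness of $Q$ in $\Xi\left( \bar{\theta}\right) $, and the boundedness of the derivatives of $\Theta $, with the integration in $z$ contributing the factor $\bar{\theta}\leq \zeta ^{-\delta }$ that accounts for the extra $\delta $ in the exponents. Your explicit observation that $\Theta =\frac{\beta }{\left\vert \beta \right\vert }\left( \Phi ^{2}-2\gamma \Phi +Q\right) /\left( D^{1/2}+\left\vert \beta \right\vert \gamma \right) $ vanishes to first order in $\left( \Phi ,Q\right) $ is exactly the point the paper leaves implicit, and your power bookkeeping reproduces the stated exponents $\zeta ^{1-4\delta }$ and $\zeta ^{2-4\delta }$.
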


\begin{proof}
The above estimates follow from (\ref{Fiztau})\ and the boundedness in $\Xi
^{\prime }\left( \bar{\theta}\right) $ of derivatives of the function $%
\Theta \left( \Phi \right) $ which enters representations (\ref{Sfor}) and (%
\ref{fi2for}).
\end{proof}

\subsection{Verification of the concentration conditions}

In this section we fix an interval $T_{-}\leq t\leq T_{+}$\ and define
sequences $a=a_{n}$, $\zeta =\zeta _{n}$, $R=R_{n}$.\ We verify then that
the KG equation is localized at the trajectory $\mathbf{\hat{r}}\left(
t\right) =\left( 0,0,r\left( t\right) \right) $ and that solutions $\psi $
defined by (\ref{psi1d}), (\ref{psipsi}) concentrate at $\mathbf{\hat{r}}%
\left( t\right) $\ as in Theorem \ref{Ttrajcon}).

\begin{proposition}
\label{Lestnormfi1}Let $\varphi \left( t\right) $\ be defined by (\ref{fiex}%
)-(\ref{new11})\ where $\varphi _{\mathrm{b}}$\ is given by (\ref{fi2for})\
and $\varphi _{\mathrm{0}}\left( t\right) $ satisfy (\ref{fi0est}), $\varphi
_{\infty }=\varphi _{\mathrm{ac}}\left( t\right) $\ given by (\ref{fiac}), (%
\ref{new11}), $R_{n}=\bar{\theta}a_{n}\rightarrow 0$. Let also 
\begin{equation}
a_{\mathrm{C}}\leq Ca^{2}.  \label{aca2}
\end{equation}%
Then $\varphi _{n}$\ satisfy (\ref{Abound})\ \ and (\ref{Alocr}), and
condition (\ref{fihatb}) holds.
\end{proposition}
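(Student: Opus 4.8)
The plan is to split $\varphi_n=\varphi_{\mathrm{ac}}+\varphi_{\mathrm{b}}$ according to (\ref{fiex}) and to treat the two pieces separately: the accelerating part $\varphi_{\mathrm{ac}}$ is affine in $y=x-r$ and coincides exactly with the limit potential $\varphi_{\infty}$, whereas the balancing part $\varphi_{\mathrm{b}}$ will be shown to vanish together with its first derivatives in the physical variables. I would begin with (\ref{fihatb}). Since $\varphi_{\infty}=\varphi_{\mathrm{ac}}=\varphi_{\mathrm{0}}\left(t\right)+\varphi_{\mathrm{ac}}^{\prime}\left(t\right)y$ and $y=0$ at $x=\mathbf{\hat{r}}$, we get $\varphi_{\infty}\left(t,\mathbf{\hat{r}}\right)=\varphi_{\mathrm{0}}\left(t\right)$, which is bounded by (\ref{fi0est}). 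The defining law (\ref{new11}) gives $\nabla\varphi_{\infty}=\varphi_{\mathrm{ac}}^{\prime}=-\frac{m}{q}\partial_{t}\left(\gamma v\right)$; since $\left|v\right|\leq\hat{\epsilon}_{1}\mathrm{c}<\mathrm{c}$ by (\ref{cv10}) the factor $\gamma$ is smooth and bounded, and $v$ has two bounded time derivatives by (\ref{betep1}), so $\partial_{t}\left(\gamma v\right)$ and $\partial_{t}^{2}\left(\gamma v\right)$ are bounded, yielding the bounds on $\left|\nabla\varphi_{\infty}\right|$ and $\left|\partial_{t}\nabla\varphi_{\infty}\right|$ required by (\ref{fihatb}).

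Next, because $\varphi_n-\varphi_{\infty}=\varphi_{\mathrm{b}}$, condition (\ref{Alocr}) reduces to showing that $\varphi_{\mathrm{b}}$ and $\nabla_{0,\mathbf{x}}\varphi_{\mathrm{b}}$ tend to zero uniformly on $\Omega_n$. The central difficulty is that Lemma \ref{Lsfi2} bounds $\varphi_{\mathrm{b}}$ only in the rescaled variables $\left(\tau,z\right)$, namely $\left\Vert\varphi_{\mathrm{b}}\right\Vert_{C^{2}\left(\Xi\left(\bar{\theta}\right)\right)}\leq C\zeta^{2-4\delta}$, while (\ref{Alocr}) asks for the physical derivatives $\partial_{0},\partial_{x}$. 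The chain rule for $\tau=\mathrm{c}a^{-1}t$, $z=a^{-1}y$ reads
\begin{equation*}
\partial_{x}=a^{-1}\partial_{z},\qquad\partial_{0}=\mathrm{c}^{-1}\partial_{t}=a^{-1}\partial_{\tau}-\frac{v}{\mathrm{c}a}\partial_{z},
\end{equation*}
producing a dangerous prefactor $a^{-1}$. This is exactly where the strengthened hypothesis $a_{\mathrm{C}}\leq Ca^{2}$ enters: writing $\zeta=a_{\mathrm{C}}/a$ we obtain
\begin{equation*}
a^{-1}\zeta^{2-4\delta}=\frac{a_{\mathrm{C}}^{2-4\delta}}{a^{3-4\delta}}\leq C^{2-4\delta}a^{1-4\delta}\longrightarrow0
\end{equation*}
for any $\delta<1/4$. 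Hence $\left|\varphi_{\mathrm{b}}\right|\leq C\zeta^{2-4\delta}\to0$ and $\left|\nabla_{0,\mathbf{x}}\varphi_{\mathrm{b}}\right|\leq Ca^{-1}\zeta^{2-4\delta}\to0$ on $\Omega_n$, where the region $\left|y\right|\leq R_n=\bar{\theta}a_n$ corresponds to $\left|z\right|\leq\bar{\theta}$ so that the estimates of Lemma \ref{Lsfi2} apply. This is precisely (\ref{Alocr}).

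Finally, (\ref{Abound}) follows by recombining the pieces. On $\Omega_n$ one has $\left|y\right|\leq R_n\to0$, so $\left|\varphi_{\mathrm{ac}}\right|\leq\left|\varphi_{\mathrm{0}}\right|+\left|\varphi_{\mathrm{ac}}^{\prime}\right|R_n\leq C$, while $\partial_{x}\varphi_{\mathrm{ac}}=\varphi_{\mathrm{ac}}^{\prime}$ and $\partial_{0}\varphi_{\mathrm{ac}}=\mathrm{c}^{-1}\left(\partial_{t}\varphi_{\mathrm{0}}+\partial_{t}\varphi_{\mathrm{ac}}^{\prime}y-\varphi_{\mathrm{ac}}^{\prime}v\right)$ are bounded in $\Omega_n$ using (\ref{fi0est}), (\ref{fihatb}) and $R_n\to0$. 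Adding the balancing contribution, which already tends to zero by the previous step, gives a uniform bound on $\left|\varphi_n\right|+\left|\nabla_{0,\mathbf{x}}\varphi_n\right|$, establishing (\ref{Abound}). The only genuine obstacle is the second paragraph: controlling the $a^{-1}$ blow-up coming from the rescaling of the derivatives, which is exactly what the extra scaling assumption $a_{\mathrm{C}}\leq Ca^{2}$ is designed to neutralize.
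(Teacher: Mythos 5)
Your proof is correct and follows essentially the same route as the paper's: the same decomposition $\varphi_n=\varphi_{\mathrm{ac}}+\varphi_{\mathrm{b}}$ with $\varphi_{\mathrm{ac}}=\varphi_{\infty}$, the same reliance on Lemma \ref{Lsfi2} for $\varphi_{\mathrm{b}}$ in the rescaled variables, and the same use of (\ref{aca2}) (i.e.\ $\zeta\leq Ca$) to absorb the $a^{-1}$ factor produced by the chain rule, giving $a^{-1}\zeta^{2-4\delta}\leq Ca^{1-4\delta}\rightarrow 0$. Your treatment is in fact slightly more explicit than the paper's (e.g.\ writing out $\partial_{0}=a^{-1}\partial_{\tau}-\frac{v}{\mathrm{c}a}\partial_{z}$ and the verification of (\ref{fihatb}) from (\ref{new11}), (\ref{betep1}), (\ref{cv10})), but the substance is identical.
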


\begin{proof}
Note that $\varphi \left( t,x\right) $ does not depend on $x_{1},x_{2}$ and\
estimates in $\hat{\Omega}\left( \mathbf{\hat{r}}(t),R_{n}\right) $ follow
from estimates in $\Xi \left( \bar{\theta}\right) $:%
\begin{gather}
\max_{\hat{\Omega}\left( \mathbf{\hat{r}}(t),R_{n}\right) }|\varphi \left(
t,x\right) \mathbf{|\leq }\sup_{\tau ,z\in \Xi \left( \bar{\theta}\right)
}|\varphi \left( \tau ,z\right) \mathbf{|,}  \label{xtzt} \\
\max_{\hat{\Omega}\left( \mathbf{\hat{r}}(t),R_{n}\right) }|\partial
_{t}\varphi \left( t,x\right) \mathbf{|}\leq \frac{\mathrm{c}}{a}\sup_{\tau
,z\in \Xi \left( \bar{\theta}\right) }|\partial _{\tau }\varphi \mathbf{%
|,\quad }\max_{\hat{\Omega}\left( \mathbf{\hat{r}}(t),R_{n}\right) }|\nabla
\varphi \left( t,x\right) \mathbf{|}\leq \frac{1}{a}\sup_{\tau ,z\in \Xi
\left( \bar{\theta}\right) }|\nabla _{z}\varphi \mathbf{|.}  \notag
\end{gather}%
According to (\ref{fiex})%
\begin{equation*}
|\varphi \left( t,x\right) \mathbf{|}\leq \left\vert \varphi _{\mathrm{0}%
}\right\vert +|\varphi _{\mathrm{ac}}^{\prime }\mathbf{\mathbf{|}}\left\vert
y\right\vert +\left\vert \varphi _{\mathrm{b}}\right\vert \leq |\varphi _{%
\mathrm{0}}\mathbf{|+}\left\vert \varphi _{\mathrm{ac}}^{\prime }\right\vert
R_{n}\mathbf{+}\left\vert \varphi _{\mathrm{b}}\right\vert .
\end{equation*}%
Since $\varphi _{\mathrm{ac}}^{\prime }$ is defined by(\ref{new11}),\ we
conclude using (\ref{fi2est}), (\ref{fi0est}) that $|\varphi \mathbf{|}\leq
C_{1}$\ in $\Omega \left( \mathbf{\hat{r}}(t),R_{n}\right) $. Using (\ref%
{aca2}) we obtain estimate of derivatives%
\begin{equation}
|\partial _{t}\varphi \left( t,x\right) \mathbf{|}\leq |\partial _{t}\varphi
_{\mathrm{0}}|+|\partial _{t}\varphi _{\mathrm{ac}}^{\prime }\mathbf{\mathbf{%
|}}\left\vert y\right\vert +\mathrm{c}a^{-1}\left\vert \partial _{\tau
}\varphi _{\mathrm{b}}\right\vert \leq C+C_{1}a^{-1}\zeta ^{2-4\delta }\leq
C+C_{2}a^{1-4\delta }\leq C_{3},  \label{fic2}
\end{equation}%
\begin{equation}
|\nabla _{x}\varphi \left( t,x\right) \mathbf{|}\leq |\varphi _{\mathrm{ac}%
}^{\prime }\mathbf{\mathbf{|}}+a^{-1}\left\vert \nabla _{z}\varphi _{\mathrm{%
b}}\right\vert \leq C^{\prime }+C_{1}^{\prime }a^{-1}\zeta ^{2-4\delta }\leq
C_{3}^{\prime }  \label{fic3}
\end{equation}%
Hence, (\ref{Abound}) holds.\ According to (\ref{fiinf}), (\ref{fiex}) and (%
\ref{fiac}) $\varphi -\varphi _{\infty }\ =\varphi _{\mathrm{b}}$.\ Using (%
\ref{fi2est}), (\ref{xtzt}),\ and observing that (\ref{aca2}) implies $\zeta
\leq Ca$,\ we conclude that\ 
\begin{equation*}
|\varphi _{\mathrm{b}}\left( t,x\right) \mathbf{|}+|\nabla _{0,x}\varphi _{%
\mathrm{b}}\left( t,x\right) \mathbf{\mathbf{\mathbf{|}}\leq }\sup_{\tau
,z\in \Xi \left( \bar{\theta}\right) }(|\varphi _{\mathrm{b}}\left( \tau
,z\right) +\frac{1}{a}|\partial _{\tau }\varphi _{\mathrm{b}}\mathbf{|}+%
\frac{1}{a}|\nabla _{z}\varphi _{\mathrm{b}}\mathbf{|)}\leq C_{4}a^{-1}\zeta
^{2-4\delta }\rightarrow 0
\end{equation*}%
yielding relations (\ref{Alocr}). To obtain (\ref{fihatb}) we use (\ref%
{betep1})\ and (\ref{new11}).
\end{proof}

The solution of the KG equation\ is given by the formula (\ref{psipsi}), (%
\ref{psi1d}) where $s\left( t\right) $ is given by (\ref{sprime})\ with
condition $s\left( 0\right) =0$.

\begin{proposition}
\label{Lestnormpsi1}Let\ $\psi $ be defined by (\ref{psi1d}), (\ref{psipsi})
where\ $s,S$\ are defined by (\ref{sprime}), (\ref{new11}), let $R_{n}=\bar{%
\theta}a_{n}$. Then in $\hat{\Omega}\left( \mathbf{\hat{r}}(t),R_{n}\right) $
\begin{equation}
\left\vert \partial _{t}\psi \right\vert ^{2}+\left\vert \nabla \psi
\right\vert ^{2}\leq C_{1}a_{\mathrm{C}}^{-2}\mathring{\psi}^{2},
\label{dtpsi0}
\end{equation}%
\begin{equation}
G(\left\vert \psi \right\vert ^{2})\leq C_{2}\zeta ^{2-\delta }a_{\mathrm{C}%
}^{-2}\mathring{\psi}^{2},  \label{Gpsi0}
\end{equation}%
where 
\begin{equation*}
\mathring{\psi}=\mathring{\psi}\left( \mathbf{y}/a\right) =\mathring{\psi}%
\left( \mathbf{z}\right) =\pi ^{-3/4}a^{-3/2}\mathrm{e}^{-\left\vert \mathbf{%
y}\right\vert ^{2}a^{-2}/2}.
\end{equation*}
\end{proposition}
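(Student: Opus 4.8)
The plan is to write the solution in amplitude--phase form and reduce everything to elementary estimates on a real Gaussian amplitude and its phase. From (\ref{psi1d}), (\ref{psipsi}), (\ref{psigam}), (\ref{psi1sig}) and (\ref{siggam}) one computes directly that $\psi=A\,\mathrm{e}^{\mathrm{i}\hat{S}}$ with the \emph{real} amplitude $A=\gamma^{-1/2}\mathring{\psi}$, so that $|\psi|^{2}=\gamma^{-1}\mathring{\psi}^{2}\leq\mathring{\psi}^{2}$ (using $\gamma\geq1$). Because $A$ is real the cross terms drop and $|\nabla\psi|^{2}=|\nabla A|^{2}+A^{2}|\nabla\hat{S}|^{2}$, $|\partial_{t}\psi|^{2}=|\partial_{t}A|^{2}+A^{2}|\partial_{t}\hat{S}|^{2}$. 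I would then bound the four pieces separately, keeping in mind that $\mathrm{c},m,q$ are fixed, that $\gamma\leq(1-\hat{\epsilon}_{1}^{2})^{-1/2}$ by (\ref{cv10}), and that in $\hat{\Omega}(\mathbf{\hat{r}},R_{n})$ one has $|\mathbf{x}-\mathbf{\hat{r}}|\leq R_{n}=\bar{\theta}a$ together with $\zeta a=a_{\mathrm{C}}$.

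For the amplitude the Gaussian factor gives $\nabla A=-\gamma^{-1/2}\mathring{\psi}(\mathbf{x}-\mathbf{\hat{r}})/a^{2}$, whence $|\nabla A|^{2}\leq\mathring{\psi}^{2}|\mathbf{x}-\mathbf{\hat{r}}|^{2}/a^{4}\leq\mathring{\psi}^{2}\bar{\theta}^{2}/a^{2}=\bar{\theta}^{2}\zeta^{2}a_{\mathrm{C}}^{-2}\mathring{\psi}^{2}$ in the neighborhood; since $\bar{\theta}^{2}\zeta^{2}=\zeta^{2}\ln^{2/2_{+}}\zeta^{-1}\to0$ this is $\leq Ca_{\mathrm{C}}^{-2}\mathring{\psi}^{2}$. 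The time derivative $\partial_{t}A$ picks up $\partial_{t}\gamma^{-1/2}$ (bounded, by (\ref{betep1})) and $\partial_{t}\mathring{\psi}$, with $|\partial_{t}\mathring{\psi}|=\mathring{\psi}|y||v|/a^{2}\leq C\bar{\theta}\mathring{\psi}/a$, so $|\partial_{t}A|^{2}\leq C\mathring{\psi}^{2}+C\bar{\theta}^{2}\zeta^{2}a_{\mathrm{C}}^{-2}\mathring{\psi}^{2}\leq Ca_{\mathrm{C}}^{-2}\mathring{\psi}^{2}$, the first term being absorbed because $a_{\mathrm{C}}\leq1$.

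The essential contribution is the fast phase. Since $\hat{S}$ depends only on $t$ and $y=x_{3}-r$, we have $|\nabla\hat{S}|=|\partial_{y}\hat{S}|$ with $\partial_{y}\hat{S}=a_{\mathrm{C}}^{-1}\beta\gamma-\partial_{y}S$, using $\omega_{0}=\mathrm{c}/a_{\mathrm{C}}$ and $v/\mathrm{c}=\beta$. From (\ref{Sfor}) and (\ref{denZ}), $\partial_{y}S=a^{-1}\partial_{z}S=a_{\mathrm{C}}^{-1}\Theta(\Phi)$, and Lemma \ref{Lsfi2} (controlling $\|S\|_{C^{2}}$) gives $|\partial_{y}S|\leq a^{-1}\|S\|_{C^{1}}\leq Ca_{\mathrm{C}}^{-1}\zeta^{2-4\delta}$; hence $|\partial_{y}\hat{S}|\leq Ca_{\mathrm{C}}^{-1}$ and $A^{2}|\nabla\hat{S}|^{2}\leq\gamma^{-1}\mathring{\psi}^{2}\,Ca_{\mathrm{C}}^{-2}\leq Ca_{\mathrm{C}}^{-2}\mathring{\psi}^{2}$. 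For the time phase $\partial_{t}\hat{S}=\omega_{0}\mathrm{c}^{-2}\partial_{t}(\gamma v)\,y-\partial_{t}s-\partial_{t}S$: by (\ref{sprime}) $\partial_{t}s=\gamma\omega_{0}+q\varphi_{\mathrm{0}}/\chi-\beta^{2}\gamma\omega_{0}$ is $O(a_{\mathrm{C}}^{-1})$ (note $\chi=a_{\mathrm{C}}m\mathrm{c}$ and $\varphi_{\mathrm{0}}$ is prescribed with bounded derivatives), while the $y$-linear term and $\partial_{t}S=\tfrac{\mathrm{c}}{a}\partial_{\tau}S$ are $o(a_{\mathrm{C}}^{-1})$ by the same $C^{2}$ bound; thus $|\partial_{t}\hat{S}|\leq Ca_{\mathrm{C}}^{-1}$ and $A^{2}|\partial_{t}\hat{S}|^{2}\leq Ca_{\mathrm{C}}^{-2}\mathring{\psi}^{2}$. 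Collecting the four bounds yields (\ref{dtpsi0}).

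For (\ref{Gpsi0}) I would use the explicit logarithmic $G_{a}$ of (\ref{paf3}). Substituting $a^{3}|\psi|^{2}=\gamma^{-1}\pi^{-3/2}\mathrm{e}^{-|\mathbf{x}-\mathbf{\hat{r}}|^{2}/a^{2}}$ produces a clean cancellation of the $\ln\pi^{3/2}$ terms, leaving $G_{a}(|\psi|^{2})=-a^{-2}\gamma^{-1}\mathring{\psi}^{2}[\,2-\ln\gamma-|\mathbf{x}-\mathbf{\hat{r}}|^{2}/a^{2}\,]$. Bounding the bracket by $C+\bar{\theta}^{2}$ in the strip gives $|G_{a}(|\psi|^{2})|\leq Ca^{-2}\bar{\theta}^{2}\mathring{\psi}^{2}=C\zeta^{2}\bar{\theta}^{2}a_{\mathrm{C}}^{-2}\mathring{\psi}^{2}\leq C\zeta^{2-\delta}a_{\mathrm{C}}^{-2}\mathring{\psi}^{2}$, since $\bar{\theta}^{2}=\ln^{2/2_{+}}\zeta^{-1}\leq\zeta^{-\delta}$ for small $\zeta$. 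The only delicate bookkeeping, and the place I expect to spend the most care, is the third paragraph: verifying that every correction to the leading $a_{\mathrm{C}}^{-1}\beta\gamma$ phase gradient (namely $\partial_{y}S$, $\partial_{t}S$ and the $y$-linear term) is genuinely lower order, which rests entirely on the $\zeta^{2-4\delta}$ smallness of $\varphi_{\mathrm{b}}$ and $S$ from Lemma \ref{Lsfi2} together with the relation $\zeta a=a_{\mathrm{C}}$.
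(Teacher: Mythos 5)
Your proposal is correct and follows essentially the same route as the paper: the paper's proof likewise splits $\psi$ into the Gaussian modulus and the phase $\hat{S}$ (its displayed identities for $\left\vert \partial _{0}\psi \right\vert ^{2}$ and $\left\vert \nabla _{x}\psi \right\vert ^{2}$ are exactly your amplitude/phase decomposition, written in the moving-frame variables $\left( \tau ,z\right) $), bounds the phase contributions via (\ref{sprime}), (\ref{new11}) and the $C^{2}$-smallness of $S$ from Lemma \ref{Lsfi2}, and computes $G$ explicitly from (\ref{paf3}) using $\bar{\theta}^{2}\leq \zeta ^{-\delta }$. The only slip --- your formula for $\partial _{t}\hat{S}$ is the fixed-$y$ partial derivative and omits the convective contributions (e.g. $-\omega _{0}\mathrm{c}^{-2}\gamma v^{2}$ and $v\partial _{y}S$) that arise because $y=x_{3}-r\left( t\right) $ depends on $t$ at fixed $\mathbf{x}$ --- is harmless, since those terms are themselves $O\left( a_{\mathrm{C}}^{-1}\right) $ and $o\left( a_{\mathrm{C}}^{-1}\right) $ respectively, so the bound $\left\vert \partial _{t}\hat{S}\right\vert \leq Ca_{\mathrm{C}}^{-1}$ and hence (\ref{dtpsi0}) are unaffected.
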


\begin{proof}
According to (\ref{psi1d}), (\ref{psipsi}) $\psi =\mathrm{e}^{\mathrm{i}\hat{%
S}\left( t,\left( x-r\right) \right) }\hat{\Psi}$\ with $\hat{\Psi}=\mathrm{e%
}^{\sigma }\mathring{\psi}\left( \mathbf{x}-\mathbf{\hat{r}}\right) $. For
such solutions we use the change of variables (\ref{xry}) and relations (\ref%
{new11}) and (\ref{sprime}),\ and we obtain similarly to (\ref{KGy1})\ and (%
\ref{Kpsy}) 
\begin{equation*}
\left\vert \partial _{0}\psi \right\vert ^{2}=a_{\mathrm{C}}^{-2}\zeta
^{2}\left\vert \partial _{\tau }\mathring{\psi}-\beta \partial \mathring{\psi%
}/\partial z_{3}\right\vert ^{2}+a_{\mathrm{C}}^{-2}\left\vert \gamma -\zeta
\partial _{\tau }S+\zeta \beta \partial _{z}S\right\vert ^{2}\left\vert 
\mathring{\psi}\right\vert ^{2},
\end{equation*}%
\begin{gather*}
\left\vert \nabla _{x}\psi \right\vert ^{2}=a^{-2}\left\vert \partial 
\mathring{\psi}/\partial z_{1}\right\vert ^{2}+a^{-2}\left\vert \partial 
\mathring{\psi}/\partial z_{2}\right\vert ^{2}+a^{-2}\left\vert \partial 
\mathring{\psi}/\partial z_{3}\right\vert ^{2}+a_{\mathrm{C}}^{-2}\left(
-Z+\beta \gamma \right) ^{2}\left\vert \mathring{\psi}\right\vert ^{2} \\
=a^{-2}\left\vert \nabla _{z}\mathring{\psi}\right\vert ^{2}+a_{\mathrm{C}%
}^{-2}\left( -Z+\beta \gamma \right) ^{2}\left\vert \mathring{\psi}%
\right\vert ^{2}.
\end{gather*}%
Note that in $\hat{\Omega}\left( \mathbf{\hat{r}}(t),R_{n}\right) $ 
\begin{equation}
\left\vert \partial \mathring{\psi}/\partial z_{i}\right\vert ^{2}\leq
C\left( 1+R_{n}^{2}\right) \left\vert \mathring{\psi}\right\vert ^{2}.
\label{dpsi0}
\end{equation}%
Using (\ref{sest})\ and (\ref{dpsi0}) we get (\ref{dtpsi0}). According to (%
\ref{paf3}) 
\begin{equation*}
G\left( \left\vert \psi \right\vert ^{2}\right) =-a^{-2}e^{2\sigma }%
\mathring{\psi}^{2}\left( z\right) \left[ -\mathbf{z}^{2}+2\sigma +\ln \pi
^{3/2}+2\right] ,
\end{equation*}%
yielding the desired inequality (\ref{Gpsi0}).
\end{proof}

\begin{proposition}
\label{Lendens1}Let\ $\psi $ be defined by (\ref{psi1d}), (\ref{psipsi})
where\ $s,S$\ are defined by (\ref{sprime}), (\ref{new11}). Then%
\begin{gather}
\mathcal{E}\left( \mathbf{\hat{r}}+a\mathbf{z}\right) =\frac{m\mathrm{c}^{2}%
}{2}\left[ \zeta ^{2}\left( \sigma +\beta z\right) ^{2}+2\zeta ^{2}\mathbf{z}%
^{2}-\zeta ^{2}\left( 2\sigma +\ln \pi ^{3/2}+2\right) \right] e^{2\sigma }%
\mathring{\psi}^{2}\left( z\right)  \label{Edens} \\
+\frac{m\mathrm{c}^{2}}{2}\left[ \left( \Phi -\gamma \right) ^{2}+\left(
\beta \gamma -Z\right) ^{2}+1\right] e^{2\sigma }\mathring{\psi}^{2}\left(
z\right)  \notag
\end{gather}
\end{proposition}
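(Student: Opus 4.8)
The plan is to insert the explicit ansatz (\ref{psi1d})--(\ref{siggam}) into the energy density (\ref{emtn3}) and to evaluate its four summands in the moving frame and scaled variables of (\ref{xry}), (\ref{zztau}). Throughout I would exploit the prefactor identities $\frac{\chi^{2}}{2m}\kappa_{0}^{2}=\frac{\chi^{2}}{2m}a_{\mathrm{C}}^{-2}=\frac{m\mathrm{c}^{2}}{2}$, which follow from $\kappa_{0}=m\mathrm{c}/\chi$ and $a_{\mathrm{C}}=\chi/\left( m\mathrm{c}\right)$, together with the scaling relation $a^{-2}=\zeta^{2}a_{\mathrm{C}}^{-2}$. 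Writing the amplitude as $\hat{\Psi}=\mathrm{e}^{\sigma}\mathring{\psi}$ with $\mathring{\psi}$ real, the global phase $\mathrm{e}^{\mathrm{i}\hat{S}}$ cancels in every modulus and $\mathcal{E}$ reduces to a sum of real squares.

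First I would treat the two algebraically trivial summands. The mass term gives $\frac{\chi^{2}}{2m}\kappa_{0}^{2}\left\vert \psi\right\vert ^{2}=\frac{m\mathrm{c}^{2}}{2}\mathrm{e}^{2\sigma}\mathring{\psi}^{2}$, i.e. the constant $1$ in the second bracket of (\ref{Edens}). For the nonlinearity I would reuse the computation from the proof of Proposition \ref{Lestnormpsi1}: from (\ref{paf3}) one gets $G(\left\vert \psi\right\vert ^{2})=-a^{-2}\mathrm{e}^{2\sigma}\mathring{\psi}^{2}\left[ -\mathbf{z}^{2}+2\sigma+\ln\pi^{3/2}+2\right]$, and since $\frac{\chi^{2}}{2m}a^{-2}=\frac{m\mathrm{c}^{2}}{2}\zeta^{2}$ this contributes $\frac{m\mathrm{c}^{2}}{2}\zeta^{2}\mathrm{e}^{2\sigma}\mathring{\psi}^{2}\left[ \mathbf{z}^{2}-2\sigma-\ln\pi^{3/2}-2\right]$, supplying the term $-\zeta^{2}\left( 2\sigma+\ln\pi^{3/2}+2\right)$ and one copy of $\zeta^{2}\mathbf{z}^{2}$ in the first bracket.

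The heart of the matter is the two derivative terms, where I would recycle the operator factorization behind (\ref{Kpsy}). After imposing (\ref{new11}) and (\ref{sprime}), which remove all $\zeta$-independent terms, the operator $\mathrm{c}^{-1}\tilde{\partial}_{t}$ applied to $\hat{\Psi}$ reduces in the scaled variables to $a_{\mathrm{C}}^{-1}\left[ \zeta\left( \partial_{\tau}-\beta\partial_{z}\right) +\mathrm{i}\left( \Phi-\gamma\right) \right]$, where the combination $-\zeta\partial_{\tau}S+\zeta\beta\partial_{z}S+qa_{\mathrm{C}}\varphi_{\mathrm{b}}/\left( \mathrm{c}\chi\right)$ is precisely $\Phi$ by (\ref{denZ2}); likewise the spatial derivative $\partial_{x_{3}}$ becomes $a_{\mathrm{C}}^{-1}\left[ \zeta\partial_{z}+\mathrm{i}\left( \beta\gamma-Z\right) \right]$ with $Z=\zeta\partial_{z}S$ by (\ref{denZ}), while the transverse gradient is the plain $a^{-1}\nabla_{\perp}$. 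Since $\hat{\Psi}$ is real, each modulus splits cleanly into the square of its real part plus the square of its imaginary part, with no cross term. The imaginary parts produce the squares $\left( \Phi-\gamma\right) ^{2}$ and $\left( \beta\gamma-Z\right) ^{2}$ that, with the mass contribution, fill the second bracket of (\ref{Edens}). The real parts, evaluated by $\partial_{\tau}\hat{\Psi}=\left( \partial_{\tau}\sigma\right) \hat{\Psi}$, $\partial_{z}\ln\hat{\Psi}=-z$ and $\left\vert \nabla_{\mathbf{z}}\mathring{\psi}\right\vert ^{2}=\left\vert \mathbf{z}\right\vert ^{2}\mathring{\psi}^{2}$ (see (\ref{dlog}) and (\ref{psi1sig})), contribute $\zeta^{2}\left( \partial_{\tau}\sigma+\beta z\right) ^{2}$ from the time term and a second copy of $\zeta^{2}\left\vert \mathbf{z}\right\vert ^{2}$ from the spatial term; the latter combines with the copy from $G$ to give $2\zeta^{2}\mathbf{z}^{2}$. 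Collecting the four contributions yields (\ref{Edens}).

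The main obstacle is organizational rather than conceptual: one must keep precise track of which pieces of the squared covariant operators are annihilated by the equations of motion (\ref{new11}), (\ref{sprime}) and which survive, and then assemble the leftover real $\zeta$-order derivative terms, coming partly from differentiating $\hat{\Psi}$ and partly from the $\zeta\left( \partial_{\tau}-\beta\partial_{z}\right)$ and $\zeta\partial_{z}$ pieces, without double-counting against the contributions already folded into $\Phi$, $Z$ and $G$. A secondary point of care is retaining the factor $\mathrm{e}^{2\sigma}$ everywhere; it was legitimately dropped in the order-of-magnitude bounds (\ref{dtpsi0}), (\ref{Gpsi0}) of Proposition \ref{Lestnormpsi1} but is essential for the exact identity (\ref{Edens}).
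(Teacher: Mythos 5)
Your proposal is correct and follows essentially the same route as the paper's own proof: pass to the moving-frame rescaled variables (\ref{xry}), (\ref{zztau}), factor the covariant derivatives through the phases $\Phi$ and $Z$ of (\ref{denZ}), (\ref{denZ2}) exactly as in (\ref{eqips})/(\ref{Kpsy}), exploit that the amplitude $\mathrm{e}^{\sigma}\mathring{\psi}$ is real so each modulus splits into a real square plus an imaginary square with no cross term, and evaluate the Gaussian derivatives and the logarithmic $G$ explicitly. One caveat: your time-derivative contribution $\zeta^{2}\left(\partial_{\tau}\sigma+\beta z\right)^{2}$ is indeed what the computation yields, whereas (\ref{Edens}) prints $\zeta^{2}\left(\sigma+\beta z\right)^{2}$; this appears to be a typo in the paper (elsewhere, e.g.\ in (\ref{Qzetz}), $\sigma$ and $\partial_{\tau}\sigma$ are kept distinct), and since both quantities are uniformly bounded the difference is harmless for the subsequent estimates in Proposition \ref{Ledens1}, but you should flag the discrepancy rather than assert that collecting your terms ``yields (\ref{Edens})'' verbatim.
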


\begin{proof}
Using the change of variables (\ref{xry}), (\ref{zztau}) (\ref{denZ}), (\ref%
{denZ2})\ we obtain similarly to (\ref{KGy1})\ and (\ref{Kpsy}) 
\begin{equation*}
|\tilde{\partial}_{t}\psi |^{2}=\frac{\zeta ^{2}\mathrm{c}^{2}}{a_{\mathrm{C}%
}^{2}}\left\vert \left( \partial _{\tau }-\beta \partial _{z}\right) 
\mathring{\psi}\right\vert ^{2}+\frac{\mathrm{c}^{2}}{a_{\mathrm{C}}^{2}}%
\left( -\gamma +\Phi \right) ^{2}\left\vert \mathring{\psi}\right\vert ^{2}.
\end{equation*}%
The energy density\ in $\tau ,z$ variables has the form 
\begin{gather*}
\mathcal{E}\left( \mathbf{\hat{r}}+a\mathbf{z}\right) =\frac{\chi ^{2}}{2m}%
\left[ \frac{\zeta ^{2}}{a_{\mathrm{C}}^{2}}\left\vert \left( \partial
_{\tau }-\beta \partial _{z}\right) \mathring{\psi}\right\vert ^{2}+\frac{%
\zeta ^{2}}{a_{\mathrm{C}}^{2}}\left( \left\vert \nabla _{z}\mathring{\psi}%
\right\vert ^{2}+G\left( \left\vert \mathring{\psi}\right\vert ^{2}\right)
\right) \right] \\
+\frac{\chi ^{2}}{2m}\left[ \frac{1}{a_{\mathrm{C}}^{2}}\left( -\gamma +\Phi
\right) ^{2}\left\vert \mathring{\psi}\right\vert ^{2}+\frac{1}{a_{\mathrm{C}%
}^{2}}\left( \beta \gamma -Z\right) ^{2}\left\vert \mathring{\psi}%
\right\vert ^{2}+\frac{m^{2}\mathrm{c}^{2}}{\chi ^{2}}\left\vert \mathring{%
\psi}\right\vert ^{2}\right] .
\end{gather*}%
Using\ (\ref{psi1sig}) we obtain (\ref{Edens}).
\end{proof}

\begin{proposition}
\label{Lintbound}Let $\zeta _{n}$ and $a_{n}$ be related by the formula 
\begin{equation}
\ln \zeta ^{-1}=\ln ^{1_{+}}a^{-1}.  \label{azzet}
\end{equation}%
Then conditions\ (\ref{psiouta}) and\ (\ref{locpsibound}) are satisfied.\ 
\end{proposition}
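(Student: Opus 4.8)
The plan is to bound each of the three integrands in (\ref{locpsibound}) and (\ref{psiouta}) pointwise by a constant multiple of the Gaussian $\mathring{\psi}^{2}$, and then to play the unit mass $\int_{\mathbb{R}^{3}}\mathring{\psi}^{2}\,\mathrm{d}^{3}x=1$ against the exponential smallness of $\mathring{\psi}^{2}$ on the sphere $\partial\Omega_{n}$.

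First I would record the pointwise estimates already proved. Since $\psi=\mathrm{e}^{\mathrm{i}\hat{S}}\hat{\Psi}$ with $\hat{\Psi}=\mathrm{e}^{\sigma}\mathring{\psi}$ and, by (\ref{siggam}), $\mathrm{e}^{2\sigma}=\gamma^{-1}\leq 1$, we have $|\psi|^{2}=\gamma^{-1}\mathring{\psi}^{2}\leq\mathring{\psi}^{2}$. Proposition \ref{Lestnormpsi1} gives $a_{\mathrm{C}}^{2}(|\partial_{t}\psi|^{2}+|\nabla\psi|^{2})\leq C\mathring{\psi}^{2}$ by (\ref{dtpsi0}); since $\mathrm{c}$ is fixed and $\partial_{0}=\mathrm{c}^{-1}\partial_{t}$, this yields $a_{\mathrm{C}}^{2}|\nabla_{0,\mathbf{x}}\psi|^{2}\leq C\mathring{\psi}^{2}$, while (\ref{Gpsi0}) gives $a_{\mathrm{C}}^{2}|G(|\psi|^{2})|\leq C\zeta^{2-\delta}\mathring{\psi}^{2}$. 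Thus every integrand is dominated by $C\mathring{\psi}^{2}$ (the nonlinear term even carrying a spare factor $\zeta^{2-\delta}$), uniformly in $t\in[T_{-},T_{+}]$.

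For (\ref{locpsibound}) I would then integrate over $\Omega_{n}\subset\mathbb{R}^{3}$ and use $\int_{\mathbb{R}^{3}}\mathring{\psi}^{2}\,\mathrm{d}^{3}x=\pi^{-3/2}\int_{\mathbb{R}^{3}}\mathrm{e}^{-|\mathbf{z}|^{2}}\,\mathrm{d}^{3}z=1$, which immediately bounds the left side of (\ref{locpsibound}) by a constant independent of $n$ and $t$. The remaining and only delicate task is (\ref{psiouta}). On $\partial\Omega_{n}=\{|\mathbf{x}-\mathbf{\hat{r}}(t)|=R_{n}\}$ one has $|\mathbf{z}|=R_{n}/a=\bar{\theta}$, so $\mathring{\psi}^{2}=\pi^{-3/2}a^{-3}\mathrm{e}^{-\bar{\theta}^{2}}$ is constant on the sphere, whose area is $4\pi R_{n}^{2}=4\pi\bar{\theta}^{2}a^{2}$; hence the boundary integral is bounded by
\[
C\int_{\partial\Omega_{n}}\mathring{\psi}^{2}\,\mathrm{d}^{2}\sigma
=4\pi^{-1/2}C\,\bar{\theta}^{2}\,a^{-1}\mathrm{e}^{-\bar{\theta}^{2}} .
\]
Everything thus reduces to showing $\bar{\theta}^{2}a^{-1}\mathrm{e}^{-\bar{\theta}^{2}}\to 0$.

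This last step is where the coupling (\ref{azzet}) between $\zeta$ and $a$ is essential, and it is the main (though short) obstacle. Writing $L=\ln a^{-1}$ and using $\bar{\theta}^{2}=\ln^{2/2_{+}}\zeta^{-1}$ together with (\ref{azzet}), namely $\ln\zeta^{-1}=L^{1_{+}}$, I obtain $\bar{\theta}^{2}=L^{\kappa}$ with $\kappa=2\cdot 1_{+}/2_{+}$. The inequalities (\ref{pl2}) force $2_{+}/2<1_{+}$, i.e. $\kappa>1$, so $\bar{\theta}^{2}a^{-1}\mathrm{e}^{-\bar{\theta}^{2}}=L^{\kappa}\mathrm{e}^{L-L^{\kappa}}\to 0$ as $a\to 0$: the Gaussian tail $\mathrm{e}^{-\bar{\theta}^{2}}$ outruns the normalization blow-up $a^{-1}=\mathrm{e}^{L}$ precisely because the window half-width $\bar{\theta}$ grows faster than $L^{1/2}$, which is exactly what (\ref{azzet}) guarantees. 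Since the $a_{\mathrm{C}}^{2}|\nabla_{0,\mathbf{x}}\psi|^{2}$ and $a_{\mathrm{C}}^{2}|G|$ contributions obey the same Gaussian bound on $\partial\Omega_{n}$, the entire boundary integral tends to $0$ uniformly in $t$, giving (\ref{psiouta}). Apart from this asymptotic comparison, the argument is routine bookkeeping with the pointwise estimates of Propositions \ref{Lestnormpsi1} and \ref{Lendens1}.
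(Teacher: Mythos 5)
Your proposal is correct and follows essentially the same route as the paper's proof: pointwise domination of all three integrands by $C\mathring{\psi}^{2}$ via Proposition \ref{Lestnormpsi1}, the unit Gaussian mass for the interior bound (\ref{locpsibound}), and for (\ref{psiouta}) the explicit sphere computation reducing everything to $\bar{\theta}^{2}a^{-1}\mathrm{e}^{-\bar{\theta}^{2}}\rightarrow 0$, which follows from (\ref{azzet}) and (\ref{pl2}) exactly as in the paper (your exponent $\kappa =2\cdot 1_{+}/2_{+}>1$ in terms of $L=\ln a^{-1}$ is the same comparison the paper writes as $2/2_{+}>1/1_{+}$ in terms of $\ln \zeta ^{-1}$).
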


\begin{proof}
To obtain\ (\ref{locpsibound})\ we use Proposition \ref{Lestnormpsi1}:%
\begin{gather*}
a_{\mathrm{C}}^{2}\int_{\Omega \left( \mathbf{\hat{r}}(t),R_{n}\right)
}\left\vert \nabla _{0,x}\psi \left( t,x\right) \right\vert
^{2}+|G(\left\vert \psi \left( t,x\right) \right\vert ^{2})|\,\mathrm{d}%
^{3}x+\int_{\Omega \left( \mathbf{\hat{r}}(t),R_{n}\right) }\left\vert \psi
\left( t,x\right) \right\vert ^{2}\,\mathrm{d}^{3}x \\
\leq C_{2}\int_{\Omega \left( \mathbf{\hat{r}}(t),R_{n}\right) }\mathring{%
\psi}^{2}\,\mathrm{d}^{3}x=C_{2}a^{-3}\int_{\Omega \left( 0,R_{n}\right) }%
\mathring{\psi}^{2}\left( y/a\right) \,\mathrm{d}^{3}y\leq C_{4}\int_{%
\mathbb{R}^{3}}\exp \left( -\left\vert z\right\vert ^{2}\right) \mathrm{d}%
z\leq C_{5}.
\end{gather*}%
To obtain (\ref{psiouta}) we once again use Proposition \ref{Lestnormpsi1} 
\begin{gather*}
\int_{\partial \Omega \left( \mathbf{\hat{r}}(t),R_{n}\right) }(a_{\mathrm{C}%
}^{2}\left\vert \nabla _{0,x}\psi \left( t,x\right) \right\vert ^{2}+a_{%
\mathrm{C}}^{2}|G(\left\vert \psi \left( t,x\right) \right\vert
^{2})|+\left\vert \psi \left( t,x\right) \right\vert ^{2}\,\mathrm{d}%
^{2}\sigma \leq \int_{\partial \Omega \left( \mathbf{\hat{r}}%
(t),R_{n}\right) }C_{2}\mathring{\psi}^{2}\,\mathrm{d}^{2}\sigma \\
=\int_{\partial \Omega \left( \mathbf{\hat{r}}(t),R_{n}\right) }C_{2}%
\mathring{\psi}^{2}\,\mathrm{d}^{2}\sigma =C_{3}a^{2}a^{-3}\int_{\left\vert
z\right\vert =\bar{\theta}}e^{-\left\vert z\right\vert ^{2}}\,\mathrm{d}%
^{2}\sigma =C_{4}\exp \left( \ln a^{-1}-\ln ^{2/2_{+}}\zeta ^{-1}\right)
\end{gather*}%
Condition (\ref{azzet}) implies\ that\ $\ln a^{-1}=\ln ^{1/1_{+}}\zeta ^{-1}$
and by (\ref{pl2}) $2/2_{+}>1/1_{+}$. Hence 
\begin{equation*}
\exp \left( \ln a^{-1}-\ln ^{2/2_{+}}\zeta ^{-1}\right) \rightarrow 0,
\end{equation*}%
and (\ref{psiouta}) holds.\ 
\end{proof}

\begin{proposition}
\label{Ledens1}Let\ $\psi ^{\prime }$ be defined by (\ref{psi1d}), (\ref%
{psipsi}) where\ $s,S$\ are defined by (\ref{sprime}), (\ref{new11}), $R_{n}=%
\bar{\theta}a_{n}\rightarrow 0$. Then the following inequality holds\ in $%
\Xi \left( \bar{\theta}\right) $ 
\begin{equation}
\left\vert \mathcal{E}\left( \mathbf{\hat{r}}+a\mathbf{z}\right) -\gamma m%
\mathrm{c}^{2}\mathring{\psi}^{2}\left( \mathbf{z}\right) \right\vert \leq
C\zeta ^{2-\delta }\mathring{\psi}^{2}\left( \mathbf{z}\right) .
\label{Eminmc}
\end{equation}%
Conditions and (\ref{Endef}), (\ref{Egrc}) are satisfied with $\mathcal{\bar{%
E}}_{\infty }\left( t\right) =\gamma m\mathrm{c}^{2}$. The following
estimate holds%
\begin{equation*}
\left\vert \rho -q\mathring{\psi}^{2}\left( \mathbf{z}\right) \right\vert
\leq C\zeta ^{2-\delta }\mathring{\psi}^{2}\left( \mathbf{z}\right) ,
\end{equation*}%
and\ (\ref{rhoinf})\ holds as well with $\bar{\rho}_{\infty }=q$.
\end{proposition}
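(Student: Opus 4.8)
The plan is to read everything off from the closed-form energy density (\ref{Edens}) of Proposition \ref{Lendens1}, combined with the uniform smallness of the auxiliary phases $\Phi$ and $Z$ established in Theorem \ref{Tlfizet2}. First I would record two elementary consequences of (\ref{siggam}): since $\sigma=\ln\gamma^{-1/2}$ one has $\mathrm{e}^{2\sigma}=\gamma^{-1}$, and since $\gamma=(1-\beta^{2})^{-1/2}$ the algebraic identity $\gamma^{2}+\beta^{2}\gamma^{2}+1=\gamma^{2}(1+\beta^{2})+1=2\gamma^{2}$ holds. Substituting these into the second bracket of (\ref{Edens}) in the idealized case $\Phi=Z=0$ gives $\tfrac{m\mathrm{c}^{2}}{2}[\gamma^{2}+\beta^{2}\gamma^{2}+1]\gamma^{-1}\mathring{\psi}^{2}=\gamma m\mathrm{c}^{2}\mathring{\psi}^{2}$, which is exactly the leading term claimed in (\ref{Eminmc}). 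Thus the whole task reduces to showing that the first bracket of (\ref{Edens}) and the deviation of the second bracket from its idealized value are bounded by $C\zeta^{2-\delta}\mathring{\psi}^{2}$, uniformly in the strip $\Xi(\bar{\theta})$.

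The first bracket already carries an explicit factor $\zeta^{2}$ multiplying a polynomial in $z$ and in the bounded quantities $\sigma,\beta$ (recall $\beta$ is bounded away from $0$ and $1$ by (\ref{betcheck}) and (\ref{cv10}), so $\gamma$ and $\sigma$ are bounded). Inside $\Xi(\bar{\theta})$ one has $|z|\leq\bar{\theta}=\ln^{1/2_{+}}\zeta^{-1}$, and since $\ln^{2/2_{+}}\zeta^{-1}\leq C\zeta^{-\delta}$ for every $\delta>0$ and small $\zeta$, this bracket is $O(\zeta^{2-\delta})\mathring{\psi}^{2}$. For the deviation of the second bracket I would use $|\Phi|\leq C\zeta^{2-\delta}$ from (\ref{Filec4}) and (\ref{Fiztau}) (shrinking the auxiliary $\delta$ of Theorem \ref{Tlfizet2} so that the resulting exponent is at least $2-\delta$), together with the expansion of $Z=\Theta(\Phi)$ about $\Phi=Q=0$: from (\ref{ZfiQ}) and the bound $|Q|\leq C\zeta^{2}(|z|+1)$ of (\ref{Qlnz}) one gets $Z=O(\zeta^{2-\delta})$, whence $(\Phi-\gamma)^{2}+(\beta\gamma-Z)^{2}=\gamma^{2}+\beta^{2}\gamma^{2}+O(\zeta^{2-\delta})$ in the strip. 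Combining the two brackets yields the pointwise estimate (\ref{Eminmc}).

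Next I would integrate (\ref{Eminmc}) over $\Omega(\mathbf{\hat{r}}(t),R_{n})$. The change of variables $\mathbf{x}=\mathbf{\hat{r}}+a\mathbf{z}$ turns $\int\mathring{\psi}^{2}\,\mathrm{d}^{3}x$ into $\int_{|\mathbf{z}|\leq\bar{\theta}}\pi^{-3/2}\mathrm{e}^{-|\mathbf{z}|^{2}}\,\mathrm{d}^{3}z$, which converges to $1$ as $\bar{\theta}\to\infty$; since the error term integrates to $O(\zeta^{2-\delta})$ and all estimates are uniform in $t$, this gives $\mathcal{\bar{E}}_{n}(t)\to\gamma m\mathrm{c}^{2}$ uniformly on $[T_{-},T_{+}]$, establishing (\ref{Endef}) with $\mathcal{\bar{E}}_{\infty}(t)=\gamma m\mathrm{c}^{2}$. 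Because $\gamma\geq1$, the limit is bounded below by $m\mathrm{c}^{2}>0$, so (\ref{Egrc}) holds for all large $n$.

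Finally, for the charge I would compute $\rho$ from (\ref{paf6}) in the moving frame. Writing $\psi=\hat{\Psi}\,\mathrm{e}^{\mathrm{i}\hat{S}}$ with $\hat{\Psi}$ real gives $\func{Im}(\tilde{\partial}_{t}\psi/\psi)=\partial_{t}\hat{S}+q\varphi/\chi$; after the cancellations produced by (\ref{new11}) and (\ref{sprime}) that led to (\ref{Kpsy}), the purely multiplicative part of this quantity is $-\gamma\mathrm{c}/a_{\mathrm{C}}-\partial_{t}S+v\partial_{y}S+q\varphi_{\mathrm{b}}/\chi$, and by the definition (\ref{denZ2}) of $\Phi$ this equals $-\tfrac{\mathrm{c}}{a_{\mathrm{C}}}(\gamma-\Phi)$. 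Using $\chi/(m\mathrm{c}\,a_{\mathrm{C}})=1$ and $|\psi|^{2}=\mathrm{e}^{2\sigma}\mathring{\psi}^{2}=\gamma^{-1}\mathring{\psi}^{2}$ then yields $\rho=q(\gamma-\Phi)\gamma^{-1}\mathring{\psi}^{2}=q\mathring{\psi}^{2}-q\gamma^{-1}\Phi\mathring{\psi}^{2}$, so $|\rho-q\mathring{\psi}^{2}|\leq q|\Phi|\mathring{\psi}^{2}\leq C\zeta^{2-\delta}\mathring{\psi}^{2}$. Integrating as before gives $\bar{\rho}_{n}(t)\to q$ uniformly, so $\bar{\rho}_{\infty}=q$ and (\ref{rhoinf}) holds. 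The main obstacle throughout is bookkeeping rather than any single hard estimate: one must check that every polynomial-in-$z$ factor is controlled by the slowly growing $\bar{\theta}$ so that it is absorbed into $\zeta^{2-\delta}$, and that the corrections from $\Phi$, $Z$ and $Q$ are uniform in $\tau$ (equivalently in $t$); the algebra becomes routine once the identity $\gamma^{2}+\beta^{2}\gamma^{2}+1=2\gamma^{2}$ and the phase bounds of Theorem \ref{Tlfizet2} are in hand.
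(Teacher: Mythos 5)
Your proposal is correct and follows essentially the same route as the paper's own proof: starting from the closed-form energy density (\ref{Edens}), using $\mathrm{e}^{2\sigma}=\gamma^{-1}$ and $\gamma^{2}+\beta^{2}\gamma^{2}+1=2\gamma^{2}$ to isolate the leading term $\gamma m\mathrm{c}^{2}\mathring{\psi}^{2}$, absorbing the $\Phi$, $Z$, $Q$ and polynomial-in-$z$ corrections into $C\zeta^{2-\delta}\mathring{\psi}^{2}$ via Theorem \ref{Tlfizet2} and (\ref{Qlnz}), then integrating against the truncated Gaussian and computing $\rho$ explicitly in the moving-frame variables. The only cosmetic difference is that you expand $Z=\Theta(\Phi)$ about $\Phi=Q=0$ where the paper bounds the quadratic deviation directly, and both arguments implicitly rescale the auxiliary $\delta$, which you state explicitly.
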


\begin{proof}
According to (\ref{Edens}), since $\gamma ^{2}+\beta ^{2}\gamma
^{2}+1=2\gamma ^{2},$\ we have 
\begin{gather*}
\mathcal{E}\left( \mathbf{\hat{r}}+a\mathbf{z}\right) =m\mathrm{c}^{2}\gamma
^{2}e^{2\sigma }\mathring{\psi}^{2}\left( z\right) +\frac{m\mathrm{c}^{2}}{2}%
\left( \left( -\gamma +\Phi \right) ^{2}-\gamma ^{2}+\left( \beta \gamma
-Z\right) ^{2}-\beta ^{2}\gamma ^{2}\right) e^{2\sigma }\mathring{\psi}%
^{2}\left( \mathbf{z}\right) \\
+\frac{m\mathrm{c}^{2}}{2}\zeta ^{2}\left[ \left( \sigma +\beta z\right)
^{2}+2\mathbf{z}^{2}-\left( 2\sigma +\ln \pi ^{3/2}+2\right) \right]
e^{2\sigma }\mathring{\psi}^{2}\left( \mathbf{z}\right) .
\end{gather*}%
Using (\ref{ZfiQ}) and (\ref{fiest3})\ we conclude that 
\begin{gather}
\frac{m\mathrm{c}^{2}}{2}\left\vert \left( \left( -\gamma +\Phi \right)
^{2}-\gamma ^{2}+\left( \beta \gamma -Z\right) ^{2}-\beta ^{2}\gamma
^{2}\right) e^{2\sigma }\mathring{\psi}^{2}\left( \mathbf{z}\right)
\right\vert  \notag \\
\leq C_{0}\left( \left\vert \Phi \right\vert \left( \left\vert \Phi
\right\vert +\gamma \right) +\left\vert Z\right\vert \left( \left\vert
Z\right\vert +\left\vert \beta \right\vert \gamma \right) \right) \mathring{%
\psi}^{2}\left( \mathbf{z}\right) \leq C_{0}^{\prime }\left\vert \Phi
\right\vert \mathring{\psi}^{2}\left( \mathbf{z}\right) \leq C_{0}^{\prime
\prime }\zeta ^{2-\delta }\mathring{\psi}^{2}\left( \mathbf{z}\right)
\end{gather}%
One can easily verify that in $\Xi \left( \bar{\theta}\right) $%
\begin{gather*}
\frac{m\mathrm{c}^{2}}{2}\zeta ^{2}\left\vert \left[ \left( \sigma +\beta
z\right) ^{2}+2\mathbf{z}^{2}-\left( 2\sigma +\ln \pi ^{3/2}+2\right) \right]
\mathrm{e}^{2\sigma }\mathring{\psi}^{2}\left( z\right) \right\vert \\
\leq C_{1}\zeta ^{2}\left( \left\vert \mathbf{z}\right\vert ^{2}+1\right) 
\mathring{\psi}^{2}\left( z\right) \leq C_{1}^{\prime }\zeta ^{2-\delta }%
\mathring{\psi}^{2}\left( \mathbf{z}\right) ,
\end{gather*}%
hence (\ref{Eminmc}) holds. Let us estimate now 
\begin{equation*}
\mathcal{\bar{E}}_{n}\left( t\right) =\int_{\Omega \left( \mathbf{\hat{r}}%
(t),R_{n}\right) }\mathcal{E}\,\mathrm{d}^{3}x=a^{3}\int_{\Xi \left( \bar{%
\theta}\right) }\mathcal{E}\left( \mathbf{\hat{r}}+a\mathbf{z}\right) \,%
\mathrm{d}^{3}z.
\end{equation*}%
Note that 
\begin{equation}
a^{3}\int_{\Xi \left( \bar{\theta}\right) }\mathring{\psi}^{2}\left( \mathbf{%
z}\right) d^{3}z=\pi ^{-3/2}\int_{\left\vert z\right\vert \leq \bar{\theta}}%
\mathrm{e}^{-\left\vert z\right\vert ^{2}}\,\mathrm{d}^{3}z=1-\pi
^{-3/2}\int_{\left\vert z\right\vert \geq \bar{\theta}}\mathrm{e}%
^{-\left\vert z\right\vert ^{2}}\,\mathrm{d}^{3}z.  \label{note}
\end{equation}%
Using (\ref{Eminmc}) and the above formula, we conclude that 
\begin{equation}
\mathcal{\bar{E}}_{n}\left( t\right) \rightarrow \gamma m\mathrm{c}^{2}
\label{emmc2}
\end{equation}%
uniformly for all $t$ and we get (\ref{Endef}). Since $\gamma m\mathrm{c}%
^{2}>0$ we get the energy estimate from below (\ref{Egrc}). To express $\rho 
$ we use (\ref{paf6})\ and in $\left( \tau ,z\right) $-variables obtain 
\begin{equation*}
\rho =-\frac{\chi q}{m\mathrm{c}^{2}}\left\vert \psi \right\vert ^{2}\func{Im%
}\frac{\tilde{\partial}_{t}\psi }{\psi }=-\frac{\chi q}{m\mathrm{c}^{2}}%
\mathrm{e}^{2\sigma }\mathring{\psi}^{2}\frac{\mathrm{c}}{a_{\mathrm{C}}}%
\left( \Phi -\gamma \right) =q\mathrm{e}^{2\sigma }\mathring{\psi}^{2}\gamma
-q\mathrm{e}^{2\sigma }\Phi \mathring{\psi}^{2}.
\end{equation*}%
Using (\ref{Fiztau}), (\ref{siggam}) and (\ref{note}) we obtain (\ref{rhoinf}%
).
\end{proof}

\begin{theorem}
\label{Ttrajcon}Let\ trajectory $r\left( t\right) $ satisfy relations (\ref%
{betep0})-(\ref{betcheck}), and let $\varphi \left( t\right) $\ be defined
by (\ref{fiex})-(\ref{new11})\ where $\varphi _{\mathrm{b}}$\ is given by (%
\ref{fi2for})\ and $\varphi _{\mathrm{0}}\left( t\right) $\ is a given
function which satisfies 
\begin{equation}
\left\vert \varphi _{\mathrm{0}}\right\vert +\left\vert \partial _{t}\varphi
_{\mathrm{0}}\right\vert +\left\vert \partial _{t}^{2}\varphi _{\mathrm{0}%
}\right\vert \leq C.  \label{fi0est}
\end{equation}%
Suppose also $\psi $ to be of the form (\ref{psi1d})\ with\ $\psi _{1D}$
defined by (\ref{psipsi}), where the phases $s\left( t\right) $, $S\left(
t,y\right) $\ are given by (\ref{sprime}), (\ref{Sfor}). Let $%
a_{n}\rightarrow 0$, $\zeta _{n}$ satisfy\ relations (\ref{azzet}), and%
\begin{equation}
\theta _{n}=\bar{\theta}=\ln ^{1/2_{+}}\zeta _{n}^{-1},\quad \chi _{n}=m%
\mathrm{c}a_{\mathrm{C,}n}=m\mathrm{c}\zeta _{n}a_{n},\quad R_{n}=\theta
_{n}a_{n}.  \label{defchi}
\end{equation}%
\ Then $\psi $ is a solution of the KG equation which concentrates at the
trajectory $\mathbf{\hat{r}}\left( t\right) =\left( 0,0,r\left( t\right)
\right) $.
\end{theorem}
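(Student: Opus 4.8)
The plan is to obtain Theorem \ref{Ttrajcon} by assembling the propositions proved in this section, since each clause of Definitions \ref{DKGseq} and \ref{Dlocconverge} has been prepared separately; no new estimate should be needed. First I would confirm that $\psi $ given by (\ref{psi1d}), (\ref{psipsi}) is a genuine $C^{2}$ solution of the KG equation (\ref{KGex}) in the tube $\hat{\Omega}\left( \mathbf{\hat{r}},R_{n}\right) $. The Gaussian substitution (\ref{psi1d}) reduces the three-dimensional equation to the one-dimensional equation (\ref{KG1D}), and passing to the moving frame (\ref{xry}) and imposing (\ref{new11}), (\ref{sprime}) turns the latter into (\ref{Kpsy}), equivalently (\ref{eqips}) in the rescaled variables. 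The complex equation (\ref{Qeq}) then splits into the real part (\ref{ReQeq}), which fixes $Z=\Theta \left( \Phi \right) $, and the imaginary part (\ref{ImQeq0}), which is the quasilinear equation (\ref{Zfieq2}) for $\Phi $. The existence theorem yielding (\ref{Fiztau}) supplies a $C^{2}$ solution $\Phi $ on the whole strip $\Xi \left( \bar{\theta}\right) $, and then $Z$, the phase $S$ from (\ref{Sfor}), and the balancing potential $\varphi _{\mathrm{b}}$ from (\ref{fi2for}) are determined so that both parts of the KG equation vanish identically; hence $\psi $ solves (\ref{KGex}) wherever $\left\vert x_{3}-r\right\vert \leq R_{n}$, which contains $\Omega _{n}$.

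Next I would check that the sequence of KG equations is localized in the sense of Definition \ref{DKGseq}. From (\ref{defchi}) and (\ref{azzet}) one reads off $a_{n}\rightarrow 0$ and $\chi _{n}=m\mathrm{c}\zeta _{n}a_{n}\rightarrow 0$, while $\ln \zeta ^{-1}=\ln ^{1_{+}}a^{-1}$ with $1_{+}>1$ forces $\zeta _{n}\rightarrow 0$; this gives (\ref{anto0}), the bound (\ref{acan}), and $\theta _{n}=\bar{\theta}=\ln ^{1/2_{+}}\zeta _{n}^{-1}\rightarrow \infty $, that is (\ref{theninf}). The same relation yields $\zeta _{n}\leq Ca_{n}$, equivalently (\ref{aca2}); invoking Proposition \ref{Lestnormfi1} then delivers (\ref{Abound}), (\ref{Alocr}) with $\varphi _{\infty }=\varphi _{\mathrm{ac}}$, and (\ref{fihatb}).

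Then I would verify the four concentration constraints of Definition \ref{Dlocconverge}. The interior energy bound (\ref{locpsibound}) and the boundary decay (\ref{psiouta}) are exactly Proposition \ref{Lintbound}. Conditions (iii) and (iv) follow from Proposition \ref{Ledens1}, which shows $\mathcal{\bar{E}}_{n}\left( t\right) \rightarrow \gamma m\mathrm{c}^{2}$ uniformly, so that the restricted energy is bounded below by a positive constant and the limit $\mathcal{\bar{E}}_{\infty }\left( t_{0}\right) =\gamma m\mathrm{c}^{2}$ exists; the same proposition gives $\bar{\rho}_{\infty }=q$, which is what the conclusion of Theorem \ref{TconcEinstein} consumes. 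Collecting these facts, $\psi $ concentrates at $\mathbf{\hat{r}}\left( t\right) $.

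The step demanding the most care is confirming that the single relation (\ref{azzet}) between $\zeta _{n}$ and $a_{n}$ simultaneously satisfies every competing requirement of the earlier propositions. I expect the pressure point to be the boundary-decay condition (\ref{psiouta}): integrating the Gaussian tail over $\partial \Omega _{n}$ produces a factor $\exp \left( \ln a^{-1}-\ln ^{2/2_{+}}\zeta ^{-1}\right) $, and (\ref{azzet}) rewrites the exponent as $\ln ^{1/1_{+}}\zeta ^{-1}-\ln ^{2/2_{+}}\zeta ^{-1}$, which tends to $-\infty $ precisely because (\ref{pl2}) forces $2/2_{+}>1/1_{+}$. The very same relation must keep $\theta _{n}\rightarrow \infty $ and enforce $\zeta _{n}\leq Ca_{n}$ so that Proposition \ref{Lestnormfi1} applies, a tension already resolved by the tuning of the auxiliary exponents $1_{+},2_{+}$ through (\ref{pl2}). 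Consequently the theorem reduces to citing the propositions in the order above.
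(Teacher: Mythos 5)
Your overall strategy coincides with the paper's own proof: verify the parameter relations implied by (\ref{azzet}) and (\ref{defchi}), then assemble Propositions \ref{Lestnormfi1}, \ref{Lintbound} and \ref{Ledens1} to check the clauses of Definitions \ref{DKGseq} and \ref{Dlocconverge}. Your opening paragraph, recapping why the construction of $\Phi$, $Z$, $S$ and $\varphi_{\mathrm{b}}$ makes $\psi$ an exact $C^{2}$ solution of (\ref{KGex}) on the whole tube, is a useful expansion of something the paper leaves implicit, and your identification of (\ref{psiouta}) as the place where the exponent inequality (\ref{pl2}) does its work is accurate.

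There is, however, one concrete omission: you never verify the contraction condition (\ref{Rnto0}), i.e. $R_{n}=\theta_{n}a_{n}\rightarrow 0$. This is not a formality. It is required by Definition \ref{Dconcne} (without it there are no concentrating neighborhoods, so the conclusion of the theorem is not even well formed), and it appears as an explicit hypothesis of both Proposition \ref{Lestnormfi1} and Proposition \ref{Ledens1}, so as written you are not entitled to cite them. It is also precisely a third competing requirement in the tension you flag: $\theta_{n}\rightarrow\infty$ pushes $R_{n}$ up while $a_{n}\rightarrow 0$ pushes it down, and one must check which effect wins. From (\ref{azzet}) one has $a_{n}=\exp \left( -\ln^{1/1_{+}}\zeta_{n}^{-1}\right) $, hence
\begin{equation*}
R_{n}=\ln^{1/2_{+}}\zeta_{n}^{-1}\exp \left( -\ln^{1/1_{+}}\zeta_{n}^{-1}\right)
=\exp \left( \ln \ln^{1/2_{+}}\zeta_{n}^{-1}-\ln^{1/1_{+}}\zeta_{n}^{-1}\right) \rightarrow 0,
\end{equation*}
since the iterated logarithm is dominated by $\ln^{1/1_{+}}\zeta_{n}^{-1}$; this is in fact the first displayed computation in the paper's proof. (Your claim that (\ref{azzet}) forces $\zeta_{n}\leq Ca_{n}$, i.e. (\ref{aca2}), is correct but rests on the same kind of one-line exponential comparison, $\zeta_{n}/a_{n}=\exp \left( \ln^{1/1_{+}}\zeta_{n}^{-1}-\ln \zeta_{n}^{-1}\right) \rightarrow 0$, which should likewise be written out.) With these lines added, your argument is complete and matches the paper's.
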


\begin{proof}
According to (\ref{azzet}) $a^{-1}=\exp \ln ^{1/1_{+}}\zeta ^{-1}$, hence\ 
\begin{equation*}
R_{n}=\ln ^{1/2_{+}}\zeta _{n}^{-1}\exp \left( -\ln ^{1/1_{+}}\zeta
_{n}^{-1}\right) =\exp \left( \ln \ln ^{1/2_{+}}\zeta _{n}^{-1}-\ln
^{1/1_{+}}\zeta _{n}^{-1}\right) \rightarrow 0,
\end{equation*}%
implying that the contraction condition (\ref{Rnto0}) holds and we can
define concentrating neighborhood $\hat{\Omega}\left( \mathbf{\hat{r}}%
,R_{n}\right) $ by (\ref{Omhat}). Note that\ 
\begin{equation*}
a_{\mathrm{C}}/a^{2}=\zeta /a=\zeta \exp \ln ^{1/1_{+}}\zeta ^{-1}=\exp
\left( \ln ^{1/1_{+}}\zeta ^{-1}-\ln \zeta ^{-1}\right) \leq C,
\end{equation*}%
implying that inequality (\ref{aca2}) holds. Conditions of Definition \ref%
{DKGseq} are satisfied according to Proposition \ref{Lestnormfi1}.
Conditions of Definition \ref{Dlocconverge} are satisfied as well according
to Propositions \ref{Lintbound} and \ref{Ledens1}. Hence $\psi =\psi _{n}$
is a solution of the KG equation which concentrates at the trajectory $%
\mathbf{\hat{r}}\left( t\right) $.
\end{proof}

\begin{remark}
The accelerating force $-q\nabla \varphi $ defined by (\ref{new11})\ is of
order $1$, whereas by (\ref{fi2est}) the balancing force\ $-q\varphi _{%
\mathrm{b}}$\ is of order $\zeta ^{2-4\delta }$.\ Hence the balancing force,
while preserving the shape $\left\vert \psi \right\vert $,\ is vanishingly
small compared with the accelerating force.
\end{remark}

\begin{remark}
\label{Rgamma0}Similarly to solutions of the form (\ref{psi1d}), (\ref%
{psipsi}), (\ref{psi1sig})\ we can introduce one more parameter $\gamma
_{0}=\gamma \left( t_{0}\right) $ and modify (\ref{psi1sig}), (\ref{siggam})
as follows: 
\begin{equation}
\Psi \left( t,z\right) =\pi ^{-1/4}\mathrm{e}^{\sigma -\gamma
_{0}^{2}z^{2}/2},\quad \sigma =\ln \gamma ^{-1/2}-\ln \gamma _{0}^{-1/2}.
\label{gam0}
\end{equation}%
Such a modification introduces a fixed contraction which coincides with the
Lorentz contraction at $t_{0}=t_{0}$. All the analysis is quite similar and
the corresponding solutions concentrate at $\mathbf{\hat{r}}\left( t\right) $%
.\ In the case where $r\left( t\right) =vt$\ for $t<0$\ such a modification
with $t_{0}=-1$\ allows to obtain solutions which coincide with the free
uniform solutions with zero electric potential from Section \ref{Suniform}\
for $t<0$\ \ and can accelerate at positive times.
\end{remark}

\begin{remark}
\label{RLorcont}Here we discuss the robustness of the Lorentz contraction in
the accelerated motion. Let us consider a rectilinear motion with a
trajectory $r\left( t\right) $ such that the velocity takes a constant value 
$v_{0}$ for $t<0$\ and a different constant value $v_{1}\neq v_{0}$\ for $%
t\geq T_{1}>0$. Consequently there has to be a non-zero acceleration in the
interval $0<t<T_{1}$.\ Let us look at a solution $\psi $ \ of the form (\ref%
{psi1d}), (\ref{psipsi}), (\ref{psi1sig}) \ with $\Psi $ of the form (\ref%
{gam0}) and the parameter $\gamma _{0}=\left( 1-v_{0}^{2}/\mathrm{c}%
^{2}\right) ^{-1/2}$. Such a solution has a Gaussian shape $\left\vert \psi
\right\vert =\pi ^{-3/4}\gamma _{0}^{1/2}\gamma ^{-1/2}\mathrm{e}%
^{-z_{1}^{2}/2-z_{2}^{2}/2-\gamma _{0}^{2}z_{3}^{2}/2}$ with the velocity
dependent factor $\gamma _{0}^{1/2}\gamma ^{-1/2}$ which does not depend on $%
\zeta $.\ When $t<-1$ we have $\varphi =0$, and $\psi $ is given\ in $\Omega
\left( \mathbf{\hat{r}}(t),\theta _{n}a_{n}\right) $ by the same formula (%
\ref{mvch4}) as the solution for a free charge, with a Gaussian shape $%
\left\vert \psi \right\vert =\pi ^{-3/4}\mathrm{e}%
^{-z_{1}^{2}/2-z_{2}^{2}/2-\gamma _{0}^{2}z_{3}^{2}/2}$. When $t\geq T_{1}$,
the motion is uniform with velocity $v_{1}$, the accelerating part $\varphi
_{\mathrm{ac}}$ of the potential is zero and the balancing potential $%
\varphi _{\mathrm{b}}$ though not zero when $t\geq T_{0}$, but it is of
order $\zeta ^{2-4\delta }\ $in\ $\Omega \left( \mathbf{\hat{r}}(t),\theta
_{n}a_{n}\right) $\ and is vanishingly small as $\zeta \rightarrow 0$. For $%
t\geq T_{1}$ the Gaussian $\left\vert \psi \right\vert =\pi ^{-3/4}\gamma
_{0}^{1/2}\gamma _{1}^{-1/2}\mathrm{e}^{-z_{1}^{2}/2-z_{2}^{2}/2-\gamma
_{0}^{2}z_{3}^{2}/2}$ has the same Lorentz contraction \ factor $\gamma _{0}$%
\ but a different amplitude$\ \gamma _{0}^{1/2}\gamma _{1}^{-1/2}$. The
principal part $\omega _{0}\mathrm{c}^{-2}\gamma vy-\omega _{0}t/\gamma $ of
the phase $\hat{S}$ in (\ref{psipsi}) is the same as in (\ref{mvch4}), and
hence it involves Lorentz contraction with factor $\gamma $. Consequently,
the principal part of the solution for $t\geq T_{1}$ involves components
with different values of the contraction factor. Therefore, \emph{the
principal part of solution, while translating with the constant velocity} $%
v_{1}$,\ \emph{cannot be obtained by the Lorentz transformation from the
solution for a free charge with the velocity }$v_{0}$, whereas the phase can
be. Thus, in general, the transition from velocity $v_{0}$ to velocity $v_{1}
$ cannot be reduced to the Lorentz transformation. \ The fixed Lorentz
contraction $\gamma _{0}$ of the Gaussian shape factor\ is preserved while
the velocity changes thanks to the external electric force which causes the
acceleration of the charge and also results in the change of the amplitude
of the charge distribution.\newline
Observe that Theorem \ref{TconcEinstein}\ can be applied to the considered
example, and Einstein's formula is applicable at all times with the same
rest mass $m$. In particular, formula $M=m\gamma $ implies that $M=m\gamma
_{0}\ $for $t<0$ and $M=m\gamma _{1}\ $for $t>T_{1}$. This observation shows
that Einstein's formula holds all the time and fully applies to accelerating
regimes, whereas the Lorentz contraction formula can be applied only to some
characteristics of an accelerating charge distribution. Such a difference is
specific to accelerating regimes and is in a sharp contrast with the case of
a global uniform motion without external forces. 
\end{remark}

\section{Concentration of solutions of a linear KG\ equation\label{Sconclin}}

The results of Section\ \ref{SReldyn} on concentrating solutions\ are
directly applicable to solutions of the linear KG equation by setting $G=0$.
In this case Theorem \ref{TconcEinstein}\ can be applied, and solutions of
the linear KG\ equations can only concentrate at trajectories which satisfy
relativistic point equations.\ The size parameter $a$ now is not involved in
the equation but rather describes the localization of a sequence of
solutions of the linear equations. The significant difference is that\ the
linear KG equations do not have global localized solutions as described in
Sections \ref{Snonlinrest}\ and \ref{Suniform}. Consequently, one cannot
simply apply the relativistic Einstein argument for the 4-vector of the
global energy-momentum\ to the linear KG\ equation.\ As to the results of
Section \ref{secAR}\ they can be modified for the linear case as follows.
For a given trajectory $r\left( t\right) $ wave function $\psi $ is defined
by by the same formulas (\ref{psipsi}), (\ref{psi1d}). The identity (\ref%
{G1D})\ cannot be used, and in $Q$ defined by (\ref{ReQ2}) the term$\ \zeta
^{2}\partial _{z}^{2}\Psi $ will not cancel with $G^{\prime }$, resulting in
replacing the term\ $\beta ^{2}\zeta ^{2}\left( z^{2}-1\right) $\ in (\ref%
{Qzetz}) by the term $\gamma ^{-2}\zeta ^{2}\left( z^{2}-1\right) $; the
term\ $2\zeta ^{2}\sigma $ in (\ref{Qzetz})\ is absent. These arguments show
that $Q$ is a small perturbation\ of order $\zeta ^{2}$. Since we do not use
the structural details of $Q$\ but only its smallness, all the estimates in
Section \ref{secAR}\ can be carried out with this modification. Hence, a
small (of order $\zeta ^{2-4\delta }$) balancing potential $\varphi _{%
\mathrm{b}}$ exists in the linear case as well. Analyzing estimates made in
Proposition \ref{Ledens1}, one finds that the nonlinearity produces a
vanishing contribution to $E_{\infty }$. Thus the following modification of
Theorem \ref{Ttrajcon} holds.

\begin{theorem}
\label{Ttrajcon1}Let\ trajectory $r\left( t\right) $ satisfy (\ref{betep0})-(%
\ref{betcheck}), and let the KG equation be linear, namely $G^{\prime }=0$
in (\ref{KGex}) and (\ref{paf1}), (\ref{emtn3}).\ Suppose $\psi $ to be of
the form (\ref{psi1d})\ with\ $\psi _{1D}$ defined by (\ref{psipsi}), where
the phases $s\left( t\right) $, $S\left( t,y\right) $\ are given by (\ref%
{sprime}) (\ref{Sfor}). Let $a_{n}\rightarrow 0$, $\zeta _{n}$ satisfy\ (\ref%
{azzet}), $\chi _{n}$, $R_{n},\theta _{n}\ $be defined by (\ref{defchi}),
and $\varphi \left( t\right) $\ be defined by (\ref{fiex})-(\ref{new11})\
where $\varphi _{\mathrm{b}}$\ is given by (\ref{fi2for})\ and $\varphi _{%
\mathrm{0}}\left( t\right) $\ is a given function satisfying (\ref{fi0est}%
).Then $\psi $ is a solution of the linear KG equation which concentrates at
the trajectory $\mathbf{\hat{r}}\left( t\right) =\left( 0,0,r\left( t\right)
\right) $.\ 
\end{theorem}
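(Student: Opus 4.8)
The plan is to observe that the entire construction of Section \ref{secAR} carries over to the linear case almost verbatim, the only substantive adjustment being in the explicit form of the auxiliary quantity $Q$ defined by (\ref{ReQ2}), (\ref{Qdef}). First I would note that the solution $\psi$ given by (\ref{psi1d}), (\ref{psipsi}) and the potential $\varphi$ given by (\ref{fiex})--(\ref{new11}), (\ref{fi2for}) are defined by exactly the same formulas as before; the phases $s$ and $S$ are determined by (\ref{sprime}) and (\ref{Sfor}), and the reduction to the moving-frame equation (\ref{Kpsy}) and then to the system (\ref{Qeq}), (\ref{ReQeq}), (\ref{ImQeq0}) for the auxiliary phases $\Phi$ and $Z$ proceeds identically, since none of these manipulations uses the specific form of the nonlinearity.

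The single genuine change occurs when simplifying $Q$. In the nonlinear case the identity (\ref{G1D}) was used to cancel $\zeta^2\partial_z^2\Psi$ against $\zeta^2 G_1'(\Psi^2)\Psi$, producing (\ref{Qzetz}). With $G'=0$ the term $\zeta^2 G_1'(\Psi^2)$ is absent while the term $\zeta^2\partial_z^2\Psi/\Psi$ survives; evaluating (\ref{ReQ2}) directly on the Gaussian $\Psi$ of (\ref{psi1sig}) then replaces the summand $\beta^2\zeta^2(z^2-1)$ in (\ref{Qzetz}) by $\gamma^{-2}\zeta^2(z^2-1)$ and removes the summand $2\zeta^2\sigma$. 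What matters is that the modified $Q$ still obeys the bound $|Q|\le C\zeta^2(|z|+1)$ of (\ref{Qlnz}) uniformly in the strip $\Xi(\bar\theta)$, since $\gamma^{-2}$, $\sigma$, $\beta$ and their $\tau$-derivatives are all bounded by (\ref{betep0})--(\ref{cv10}). This is the only property of $Q$ invoked in Lemmas \ref{Lcharloc0}, \ref{Lfizet1}, Theorem \ref{Tlfizet2}, and Lemmas \ref{Lestzfi}, \ref{Limageth}, \ref{Lpiinverse}, so I would re-run those results unchanged: the characteristic system (\ref{exLC1})--(\ref{LCin}) retains the same solvability and the estimates (\ref{Filec4}), (\ref{zest3}), (\ref{fiest3}), the mapping $\Pi$ is again a diffeomorphism onto $\Xi(\bar\theta)$, and the phase $\Phi$ and potential $\varphi_{\mathrm{b}}$ exist with $\|\varphi_{\mathrm{b}}\|_{C^1(\Xi(\bar\theta))}\le C\zeta^{2-4\delta}$ as in (\ref{fi2est}).

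Next I would transfer the verification of the concentration conditions. The potential estimates of Proposition \ref{Lestnormfi1} use only (\ref{fi2est}), (\ref{fi0est}) and the relation $\varphi-\varphi_\infty=\varphi_{\mathrm{b}}$, so they give (\ref{Abound}), (\ref{Alocr}), (\ref{fihatb}) verbatim and Definition \ref{DKGseq} holds. Proposition \ref{Lintbound} is likewise unaffected: with $G'=0$ the energy density (\ref{emtn3}) merely loses its $G$ contribution, the pointwise bounds of Proposition \ref{Lestnormpsi1} only simplify, and (\ref{locpsibound}), (\ref{psiouta}) follow from (\ref{azzet}) exactly as before. For the energy itself, the term carrying the nonlinearity in (\ref{Edens}) was already of order $\zeta^2$; dropping it leaves the estimate (\ref{Eminmc}) intact, whence $\mathcal{\bar{E}}_n(t)\to\gamma m\mathrm{c}^2$ uniformly, giving (\ref{Endef}), (\ref{Egrc}) with $\mathcal{\bar{E}}_\infty(t)=\gamma m\mathrm{c}^2$. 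Since the charge density formula (\ref{paf6}) never involved $G$, the bound $\rho\to q\mathring\psi^2$ and $\bar\rho_\infty=q$ of Proposition \ref{Ledens1} persist.

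Assembling these observations exactly as in the proof of Theorem \ref{Ttrajcon} — contraction (\ref{Rnto0}) from (\ref{defchi}), condition (\ref{aca2}) from (\ref{azzet}), then Definitions \ref{DKGseq} and \ref{Dlocconverge} — shows that $\psi$ is a solution of the linear KG equation concentrating at $\mathbf{\hat{r}}(t)=(0,0,r(t))$. The only point demanding real care is the recomputation of $Q$ in the linear case and the verification that its modified form still satisfies $|Q|\le C\zeta^2(|z|+1)$; once that estimate is established, every subsequent step is identical to the nonlinear argument, because throughout Section \ref{secAR} the structural details of $Q$ are never used beyond this smallness.
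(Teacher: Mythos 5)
Your proposal is correct and follows essentially the same route as the paper: the paper's own argument in Section \ref{Sconclin} likewise reruns the Section \ref{secAR} construction, noting that with $G'=0$ the identity (\ref{G1D}) is unavailable, so in (\ref{Qzetz}) the summand $\beta^2\zeta^2(z^2-1)$ becomes $\gamma^{-2}\zeta^2(z^2-1)$ and $2\zeta^2\sigma$ disappears, and that only the $O(\zeta^2)$ smallness of $Q$, never its structure, is used downstream. Your identification of the bound $|Q|\le C\zeta^2(|z|+1)$ as the sole property needed, and of the vanishing contribution of the nonlinearity to $\mathcal{\bar{E}}_\infty$ in Proposition \ref{Ledens1}, matches the paper's reasoning exactly.
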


\section{Appendix \label{apprelmath}}

For completeness, we present here verification of the energy and momentum
conservation equations. To verify the energy conservation equation, we
multiply (\ref{KGex}) by $\tilde{\partial}_{t}^{\ast }\psi ^{\ast }$\ where $%
\tilde{\partial}_{t}$ is defined by (\ref{dtex}). We obtain 
\begin{gather*}
-\mathrm{c}^{-2}\partial _{t}\left( \tilde{\partial}_{t}\psi \tilde{\partial}%
_{t}^{\ast }\psi ^{\ast }\right) +\nabla \left( \nabla \psi \tilde{\partial}%
_{t}^{\ast }\psi ^{\ast }+\nabla \psi ^{\ast }\tilde{\partial}_{t}\psi
\right) \\
-\left( \nabla \psi \nabla \tilde{\partial}_{t}^{\ast }\psi ^{\ast }+\nabla
\psi ^{\ast }\nabla \tilde{\partial}_{t}\psi \right) -\partial _{t}G\left(
\psi ^{\ast }\psi \right) -\kappa _{0}^{2}\partial _{t}\psi \psi ^{\ast }=0
\end{gather*}%
and\ rewrite in the form 
\begin{gather}
\partial _{t}\left( \mathrm{c}^{-2}\left( \tilde{\partial}_{t}\psi \tilde{%
\partial}_{t}^{\ast }\psi ^{\ast }\right) +\nabla \psi \nabla \psi ^{\ast
}+G\left( \psi ^{\ast }\psi \right) +\kappa _{0}^{2}\psi \psi ^{\ast
}\right) =  \label{equen1} \\
\nabla \left( \nabla \psi \tilde{\partial}_{t}^{\ast }\psi ^{\ast }+\nabla
\psi ^{\ast }\tilde{\partial}_{t}\psi \right) -2m\chi ^{-2}\nabla \varphi
\cdot \mathbf{J.}  \notag
\end{gather}%
Finally, using the definition of $\tilde{\partial}_{t},\mathbf{J}$ and $%
\mathbf{P}$, we rewrite it in the form of the energy conservation equation (%
\ref{equen2}).

To obtain the momentum conservation equation (\ref{momeq1}), we multiply
equation (\ref{KGex}) by $\nabla \psi ^{\ast }$, and then\ using the
identity $v\tilde{\partial}_{t}u+u\tilde{\partial}_{t}^{\ast }v=\partial
_{t}\left( uv\right) $,\ we obtain 
\begin{equation*}
-\frac{1}{\mathrm{c}^{2}}\partial _{t}\left( \tilde{\partial}_{t}\psi \nabla
\psi ^{\ast }\right) +\frac{1}{\mathrm{c}^{2}}\tilde{\partial}_{t}\psi 
\tilde{\partial}_{t}^{\ast }\nabla \psi ^{\ast }+\tilde{\nabla}^{2}\psi 
\tilde{\nabla}^{\ast }\psi ^{\ast }-G^{\prime }\left( \psi ^{\ast }\psi
\right) \psi \tilde{\nabla}^{\ast }\psi ^{\ast }-\kappa _{0}^{2}\psi \tilde{%
\nabla}^{\ast }\psi ^{\ast }=0.
\end{equation*}%
Obviously, $\tilde{\partial}_{t}\left( \nabla \psi \right) =\nabla \tilde{%
\partial}_{t}\psi -\frac{\mathrm{i}q}{\chi }\nabla \varphi \psi $.
Therefore, we arrive at the equation 
\begin{gather}
\partial _{t}\mathbf{P}+\frac{\chi ^{2}}{2m}\frac{1}{\mathrm{c}^{2}}\left( 
\tilde{\partial}_{t}\psi \left( \nabla \tilde{\partial}_{t}^{\ast }\psi +%
\frac{\mathrm{i}q}{\chi }\nabla \varphi \psi ^{\ast }\right) +\tilde{\partial%
}_{t}^{\ast }\psi ^{\ast }\left( \nabla \tilde{\partial}_{t}\psi -\frac{%
\mathrm{i}q}{\chi }\nabla \varphi \psi \right) \right)  \label{mom0} \\
+\frac{\chi ^{2}}{2m}\left( \nabla ^{2}\psi \nabla \psi ^{\ast }+\nabla
^{2}\psi ^{\ast }\nabla \psi \right) -\frac{\chi ^{2}}{2m}\nabla \left(
G\left( \psi ^{\ast }\psi \right) +\kappa _{0}^{2}\psi ^{\ast }\psi \right)
=0.  \notag
\end{gather}%
Substituting the identity 
\begin{equation}
\left( \nabla ^{2}\psi \nabla \psi ^{\ast }+\nabla ^{2}\psi ^{\ast }\nabla
\psi \right) =\partial _{j}\cdot \left( \partial _{j}\psi \nabla \psi ^{\ast
}+\partial _{j}\psi ^{\ast }\nabla \psi \right) -\nabla \left( \nabla \psi
\cdot \nabla \psi ^{\ast }\right)  \label{gr2gr}
\end{equation}%
into (\ref{mom0}) we obtain the \emph{momentum conservation law} for\emph{\ }%
$l$-th component of $\mathbf{P}$ in the following form\ 
\begin{gather*}
\partial _{t}\mathbf{P}_{l}+\frac{\chi ^{2}}{2m}\frac{1}{\mathrm{c}^{2}}%
\nabla _{l}\left( \tilde{\partial}_{t}\psi \tilde{\partial}_{t}^{\ast }\psi
\right) +\rho \nabla \varphi -\frac{\chi ^{2}}{2m}\nabla _{l}\left( G\left(
\psi ^{\ast }\psi \right) +\kappa _{0}^{2}\psi ^{\ast }\psi \right) \\
+\frac{\chi ^{2}}{2m}\left( \tsum\nolimits_{j}\nabla _{j}\left( \tilde{\nabla%
}_{j}\psi \tilde{\nabla}_{l}^{\ast }\psi ^{\ast }+\tilde{\nabla}_{j}^{\ast
}\psi ^{\ast }\tilde{\nabla}_{l}\psi \right) -\nabla
_{l}\tsum\nolimits_{j}\left( \tilde{\nabla}_{j}\psi \tilde{\nabla}_{j}^{\ast
}\psi ^{\ast }\right) \right) =0.
\end{gather*}%
Finally, using (\ref{paf1}),\ we rewrite the the above equation in the form\
(\ref{momeq1}).

\textbf{Acknowledgment.} The research was supported through Dr. A. Nachman
of the U.S. Air Force Office of Scientific Research (AFOSR), under grant
number FA9550-11-1-0163.

\end{document}